\newcommand\atom{\delta}
\newcommand\GW{\mathrm{GW}}
\newcommand\thet{\vartheta}
\newcommand{\measurei}{p_{i}(\vec \gamma) \dd \pi_{i, \vec \gamma}(\mu_{\vec \gamma}) }
\newcommand{\measureii}{p_{i,\ell}(\vec \hgamma) \dd \pi_{i, \ell, \vec \hgamma}(\mu_{\vec \hgamma}) }
\newcommand{\hgamma}{\widehat{\gamma}}
\newcommand{\ogamma}{\overline{\gamma}}
\newcommand{\beq}{\begin{equation}} \newcommand{\eeq}{\end{equation}}
\newcommand\dd{{\mathrm d}}
\newcommand\ism{\cong}
\newcommand\G{\vec G}
\newcommand\T{\vec T}
\newcommand\good{tame}
\newcommand\gnp{G(n,p)}
\newcommand\GWP{G_{\mathrm{WP}}}
\newcommand\GGWP{\G_{\mathrm{WP}}}
\newcommand\gnm{G(n,m)}
\numberwithin{equation}{section}
\newcommand\br[1]{\left(#1\right)}
\def\vec#1{\mathchoice{\mbox{\boldmath$\displaystyle#1$}}
{\mbox{\boldmath$\textstyle#1$}}
{\mbox{\boldmath$\scriptstyle#1$}}
{\mbox{\boldmath$\scriptscriptstyle#1$}}}
\newcommand{\Zkc}{Z_{k}}
\newcommand{\Zkb}{Z_{k,\mathrm{bal}}}
\newcommand{\Zkg}{Z_{k,\mathrm{\good}}}
\newcommand{\dc}{d_{k,\mathrm{cond}}}
\newcommand{\dcrit}{d_{k,\mathrm{crit}}}
\newcommand{\dk}{d_{k-\mathrm{col}}}
\DeclareMathOperator{\pr}{\mathbb P}
\newcommand\SIGMA{\vec\sigma}
\newcommand\TAU{\vec\tau}
\newtheorem{definition}{Definition}[section]
\newtheorem{remark}[definition]{Remark}
\newtheorem{theorem}[definition]{Theorem}
\newtheorem{lemma}[definition]{Lemma}
\newtheorem{proposition}[definition]{Proposition}
\newtheorem{corollary}[definition]{Corollary}
\newtheorem{fact}[definition]{Fact}
\newcommand\core{\mathrm{core}}
\newcommand\Forb{\mathrm{Forb}}
\newcommand\Bal{\mathrm{Bal}}
\newcommand\cA{\mathcal{A}}
\newcommand\cB{\mathcal{B}}
\newcommand\cC{\mathcal{C}}
\newcommand\cF{\mathcal{F}}
\newcommand\cG{\mathcal{G}}
\newcommand\cE{\mathcal{E}}
\newcommand\cN{\mathcal{N}}
\newcommand\cQ{\mathcal{Q}}
\newcommand\cH{\mathcal{H}}
\newcommand\cS{\mathcal{S}}
\newcommand\cT{\mathcal{T}}
\newcommand\cL{\mathcal{L}}
\newcommand\cM{\mathcal{M}}
\newcommand\cP{\mathcal{P}}
\newcommand\cX{\mathcal{X}}
\newcommand\cY{\mathcal{Y}}
\newcommand\cV{\mathcal{V}}
\newcommand\cW{\mathcal{W}}
\newcommand\cZ{\mathcal{Z}}
\def\cR{{\mathcal R}}
\def\cC{{\mathcal C}}
\def\cE{{\mathcal E}}
\newcommand\eul{\mathrm{e}}
\newcommand\eps{\varepsilon}
\newcommand\ZZ{\mathbf{Z}}
\newcommand\Erw{\mathbb{E}}
\newcommand{\vecone}{\vec{1}}
\newcommand{\Vol}{\mathrm{Vol}}
\newcommand{\Po}{{\rm Po}}
\newcommand{\Bin}{{\rm Bin}}
\newcommand{\Be}{{\rm Be}}
\newcommand{\bink}[2] {{{#1}\choose {#2}}}
\newcommand\ra{\rightarrow}
\newcommand\bc[1]{\left({#1}\right)}
\newcommand\cbc[1]{\left\{{#1}\right\}}
\newcommand\bcfr[2]{\bc{\frac{#1}{#2}}}
\newcommand\brk[1]{\left\lbrack{#1}\right\rbrack}
\newcommand\norm[1]{\left\|{#1}\right\|}
\newcommand\abs[1]{\left|{#1}\right|}
\newcommand\RR{\mathbb{R}}
\newcommand\FF{\phi}
\newcommand\RRpos{\RR_{\geq0}}
\newcommand{\Whp}{W.h.p.}
\newcommand{\whp}{w.h.p.}
\newcommand{\tensor}{\otimes}
\newcommand{\Erdos}{Erd\H{o}s}
\newcommand{\Renyi}{R\'enyi}
\newcommand\Lem{Lemma}
\newcommand\Prop{Proposition}
\newcommand\Thm{Theorem}
\newcommand\Cor{Corollary}
\newcommand\Sec{Section}
\begin{document}

\title{The condensation phase transition in random graph coloring$^\star$}

\author[Bapst et al.]{Victor Bapst, Amin Coja-Oghlan, Samuel Hetterich, Felicia Ra\ss mann and Dan Vilenchik}
\thanks{$^\star$ The research leading to these results has received funding from the European Research Council under the European Union's Seventh Framework
			Programme (FP/2007-2013) / ERC Grant Agreement n.\ 278857--PTCC}

\address{Victor Bapst, {\tt bapst@math.uni-frankfurt.de}, Goethe University, Mathematics Institute, 10 Robert Mayer St, Frankfurt 60325, Germany.}

\address{Amin Coja-Oghlan, {\tt acoghlan@math.uni-frankfurt.de}, Goethe University, Mathematics Institute, 10 Robert Mayer St, Frankfurt 60325, Germany.}

\address{Samuel Hetterich, {\tt hetteric@math.uni-frankfurt.de}, Goethe University, Mathematics Institute, 10 Robert Mayer St, Frankfurt 60325, Germany.}

\address{Felicia Ra\ss mann, {\tt rassmann@math.uni-frankfurt.de}, Goethe University, Mathematics Institute, 10 Robert Mayer St, Frankfurt 60325, Germany.}

\address{Dan Vilenchik, {\tt dan.vilenchik@weizmann.ac.il}, Faculty of Mathematics \&\ Computer Science, The Weizmann Institute,  Rehovot, Israel.}

\address{Dan Vilenchik, {\tt dan.vilenchik@weizmann.ac.il}, Facutly of Mathematics \&\ Computer Science, The Weizamnn Institute,  Rehovot, Israel.}

\maketitle

\begin{abstract}
\noindent
Based on a non-rigorous formalism called the ``cavity method'', 
physicists have put forward intriguing predictions on phase transitions in discrete structures.
One of the most remarkable ones is that in problems such as random $k$-SAT or random graph $k$-coloring,
very shortly before the threshold for the existence of solutions there occurs another phase transition called {\em condensation}
	[Krzakala et al., PNAS 2007].
The existence of this phase transition appears to be intimately related
to the difficulty of proving precise results on, e.g., the $k$-colorability threshold as well as to the performance of message passing algorithms.
In random graph $k$-coloring, there is a precise conjecture as to the location of the condensation phase transition in terms
of a distributional fixed point problem.
In this paper we prove this conjecture for $k$ exceeding a certain constant $k_0$.

\noindent
\emph{Mathematics Subject Classification:} 05C80 (primary), 05C15 (secondary)
\end{abstract}

\section{Introduction}

\noindent
{\em Let $\gnp$ denote the random graph on the vertex set $V=\cbc{1,\ldots,n}$ obtained by connecting any two vertices with probability $p\in[0,1]$ independently.
Throughout the paper, 
we are concerned with the setting that $p=d/n$ for a number $d>0$ that remains fixed as $n\ra\infty$.
We say that $G(n,d/n)$ has a property {\em with high probability} (`\whp') if its probability converges to $1$ as $n\ra\infty$.}

\medskip\noindent
The study of random constraint satisfaction problems started with experimental work in the 1990s,
which led to two hypotheses~\cite{Cheeseman,MitchellSelmanLevesque}.
First, that in problems such as random $k$-SAT or random graph coloring
there is a {\em satisfiability threshold}, i.e., a critical ``constraint density'' below which the instance admits
a solution and above which it does not \whp\
Second, that this threshold is associated with the algorithmic ``difficulty'' of actually computing a solution, where
	``difficulty'' has been quantified in various ways, albeit not in the formal sense of computational complexity.
These findings have led to a belief that random instances of $k$-SAT or graph $k$-colorability near the threshold for the existence of solutions
are challenging algorithmic benchmarks, at the very least.

These two hypotheses have inspired theoretical work.
Short of establishing the existence of an actual satisfiability threshold, Friedgut~\cite{Ehud} and Achlioptas and Friedgut~\cite{AchFried} proved
that in random $k$-SAT and random graph $k$-coloring there exists a {\em sharp threshold sequence}.
For instance, in the graph $k$-coloring problem,
this is a sequence $\dk(n)$ that marks the point where the probability of being $k$-colorable drops from $1$ to $0$.%
	\footnote{Formally, for any $k\geq3$ there is a sequence $(\dk(n))_{n}$ such that
		for any fixed $\eps>0$, $G(n,p)$ is $k$-colorable \whp\ if $p<(1-\eps)\dk(n)/n$, while 
		$G(n,p)$ fails to be $k$-colorable \whp\ if $p>(1+\eps)\dk(n)/n$.}
The dependence on $n$ 
allows for the possibility that this point might vary with the number of vertices, although this is broadly conjectured not to be the case.
In fact, proving that $(\dk(n))_{n\geq1}$ converges to a single number $\dk$ is a well-known open problem.
So is determining the location of $\dk(n)$ (or its limit), as~\cite{AchFried} is a pure existence result.

In addition, inspired by predictions from statistical physics, the geometry of the set of solutions of random $k$-SAT or $k$-colorability instances 
has been investigated \cite{Barriers,Molloy}.
The result is that at a certain point well before the satisfiability threshold the set of solutions shatters into a multitude of well-separated ``clusters''.
Inside each cluster, all solutions agree on most of the variables/vertices, the so-called ``frozen" ones.
The average degree $d$ at which these ``frozen clusters'' arise (roughly) matches the point up to which efficient algorithms provably find solutions.
Hence, on the one hand it is tempting to think that there is a connection between clustering and the computational ``difficulty'' of finding a solution~\cite{Barriers,Molloy,LenkaPhD}.
On the other hand, physicists have suggested new {\em message passing algorithms} specifically to cope with a clustered geometry~\cite{BMPWZ,MPZ}.
A satisfactory analysis of these algorithms remains elusive.

The physics predictions are not merely circumstantial or experimental findings.
They derive from a non-rigorous but systematic formalism called the {\em cavity method}~\cite{MM}.
This technique yields, among other things, a prediction as to the precise location of the $k$-SAT or $k$-colorability threshold.
But perhaps even more remarkably, according to the cavity method shortly before
the threshold for the existence of solutions there occurs another phase transition called {\em condensation}~\cite{pnas}.
This phase transition marks a further change in the geometry of the solution space.
While prior to the condensation phase transition each cluster contains
only an exponentially small fraction of all solutions, thereafter a sub-exponential number
of clusters contain a constant fraction of the entire set of solutions.
As we will see in \Sec~\ref{Sec_related} below, the condensation phenomenon seems to hold the key to
a variety of problems, including that of finding the $k$-colorability threshold and
of analyzing message passing algorithms rigorously.
More generally, the physicists' cavity method is extremely versatile.
It has been used to put forward tantalizing conjectures in a variety of areas, including
coding theory, probabilistic combinatorics, compressive sensing and, of course, mathematical physics
	(see~\cite{MM} for an overview).
Hence the importance of providing a rigorous foundation for this technique.

\section{Results}\label{Sec_results}

\noindent
In this paper we prove that, indeed, a condensation phase transition occurs in random graph coloring,
	and that it occurs at the {\em precise} location predicted by the cavity method. 
This is the first rigorous result to determine the exact location of the condensation transition in a model of this kind.
Additionally, the proof yields a direct combinatorial explanation of how this phase transition comes about.

\subsection{Catching a sharp threshold}
To state the result, let us denote by $Z_k(G)$ the number of $k$-colorings of a graph $G$. 
We would like to study the ``typical value'' of $Z_k(G(n,d/n))$ in the limit as $n\ra\infty$.
As it turns out, the correct scaling of this quantity (to obtain a finite limit) is%
		\footnote{In the physics literature, one typically considers $n^{-1}\ln Z$ instead of $Z^{1/n}$, where $Z$ is the so-called ``partition function''.
			We work with 
				the $n$th root because our ``partition function'' $Z_k$ may be equal to $0$.}
	$$\Phi_k(d)\equiv \lim_{n\ra\infty}\Erw[Z_k(G(n,d/n))^{1/n}].$$
In physics terminology, a ``phase transition'' is a point $d_0$ where the function $d\mapsto\Phi_k(d)$ is non-analytic.
However,  the limit $\Phi_k(d)$ is not currently known to exists for all $d,k$.%
		\footnote{It seems natural to conjecture that
			the limit $\Phi_k(d)$ exists for all $d,k$, but proving this might be difficult.
			In fact, the existence of the limit for all $d,k$ would imply that $\dk(n)$ converges.} 
Hence, we need to tread carefully.
For a given $k\geq3$ we call $d_0\in(0,\infty)$ \emph{smooth} if there exists $\eps>0$ such that
\begin{itemize}
\item for any $d\in(d_0-\eps,d_0+\eps)$ the limit $\Phi_k(d)$ exists, and
\item the map $d\in(d_0-\eps,d_0+\eps)\mapsto\Phi_k(d)$ 
		has an expansion as an absolutely convergent power series around $d_0$.
\end{itemize}
If $d_0$ fails to be smooth, we say that a  \emph{phase transition} occurs at $d_0$.

For a smooth $d_0$ the sequence of random variables $(Z_k(G(n,d_0/n))^{1/n})_{n}$ converges to $\Phi_k(d_0)$ in probability.
This follows from a concentration result for the number of $k$-colorings from~\cite{Barriers}.
Hence,  $\Phi_k(d)$ really captures the ``typical'' value of $Z_k(G(n,d/n)$ (up to a sub-exponential factor).

The above notion of ``phase transition'' is in line with the intuition held in  combinatorics.
For instance, the classical result of \Erdos\ and \Renyi~\cite{ER} implies that
the function that maps $d$ to the limit as $n\ra\infty$ of the expected fraction of vertices 
that belong to the largest component of $G(n,d/n)$  
	is non-analytic at $d=1$.
Similarly, if there actually is a sharp threshold $\dk$ for $k$-colorability, then $\dk$ is a phase transition in the above sense.
	\footnote{For $d<\dk$, $G(n,d/n)$ has a $k$-coloring \whp,
		and thus the number of $k$-colorings is, in fact, exponentially large in $n$ as there are $\Omega(n)$ isolated vertices \whp\
		Hence, if $\Phi_k(d)$ exists for $d<\dk$, then $\Phi_k(d)>0$.
		By contrast, for $d>\dk$ the random graph $G(n,d/n)$ fails to be $k$-colorable \whp, and therefore $\Phi_k(d)=0$.
		Thus, $\Phi_k(d)$ cannot be analytic at $\dk$.}

As a next step,
we state (an equivalent but slightly streamlined version of) the physics prediction from~\cite{LenkaFlorent} as to the location of the condensation phase transition.
As most predictions based on the ``cavity method'',
this one comes in terms of a distributional fixed point problem.
To be specific, let $\Omega$ be the set of probability measures on the set $\brk k=\cbc{1,\ldots,k}$.
We identify $\Omega$ with the $k$-simplex, i.e., the set of maps $\mu:\brk k\ra\brk{0,1}$ such that
$\sum_{h=1}^k\mu(h)=1$, equipped 
with the topology and Borel algebra induced by $\RR^k$.
Moreover, we define a map 
	$\cB:\bigcup_{\gamma=1}^\infty\Omega^\gamma\ra\Omega$, $(\mu_1,\ldots,\mu_\gamma)\mapsto\cB[\mu_1,\ldots,\mu_\gamma]$
by letting
	\begin{equation}\label{eqBPOperator}
	\cB[\mu_1,\ldots,\mu_\gamma](i)=\begin{cases}
		\qquad\qquad1/k&\mbox{ if }
			\sum_{h\in\brk k}\prod_{j=1}^\gamma1-\mu_j(h)=0,\\
		\frac{\prod_{j=1}^\gamma1-\mu_j(i)}{\sum_{h\in\brk k}\prod_{j=1}^\gamma1-\mu_j(h)}&\mbox{ otherwise,}
		\end{cases}\qquad\mbox{for any }i\in\brk k.
	\end{equation}

Further, let $\cP$ be the set of all probability measures on $\Omega$.
For each $\mu\in\Omega$ let $\atom_\mu\in\cP$ denote the Dirac measure that puts mass one on the single point $\mu$.
In particular, $\atom_{k^{-1}\vecone}\in\cP$ signifies the measure that puts mass one on the uniform distribution $k^{-1}\vecone=(1/k,\ldots,1/k)$.
For $\pi\in\cP$ and $\gamma\geq0$ let
	\begin{equation}\label{eqZgamma}
	Z_\gamma(\pi)=\sum_{h=1}^k\bc{1-\int_\Omega\mu(h)\dd\pi(\mu)}^\gamma.
	\end{equation}
Further, define a map 
$\cF_{d,k}:\cP\ra\cP$, $\pi\mapsto\cF_{d,k}[\pi]$ by letting
	\begin{align}
	\cF_{d,k}[\pi]&=
		\exp(-d)\cdot\atom_{k^{-1}\vecone}+
		\sum_{\gamma=1}^\infty\frac{\gamma^d\exp(-d)}{\gamma!\cdot Z_\gamma(\pi)}
		\int_{\Omega^\gamma}
			\brk{\sum_{h=1}^k \prod_{j=1}^{\gamma} 1 - \mu_j(h) }
				\cdot\atom_{\cB[\mu_1, \dots, \mu_\gamma]}
			\bigotimes_{j=1}^\gamma  \dd \pi(\mu_j).\label{eqFixedPoint2}
	\end{align}
Thus, in~(\ref{eqFixedPoint2}) we integrate a function with values in $\cP$, viewed as
a subset of the Banach space%
		\footnote{
		To be completely explicit, the probability mass that a measurable set $A\subset\Omega$ carries under $\cF_{d,k}\brk\pi$ is
		$$\cF_{d,k}[\pi](A)=\exp(-d)\cdot\vecone_{\frac1k\vecone\in A}+
		\sum_{\gamma\geq1}\frac{\gamma^d\exp(-d)}{\gamma!\cdot Z_\gamma(\pi)}
		\int
			[\sum_{h=1}^k \prod_{j=1}^{\gamma} 1 - \mu_j(h)]
				\cdot\vecone_{\cB[\mu_1, \dots, \mu_\gamma]\in A}
			\bigotimes_{j=1}^\gamma  \dd \pi(\mu_j),$$
		where $\vecone_{\nu\in A}=1$ if $\nu\in A$ and $\vecone_{\nu\in A}=0$ otherwise.
		}
 of signed measures on $\Omega$.
The normalising term $Z_\gamma(\pi)$ ensures that $\cF_{d,k}[\pi]$ 
really is a probability measure on $\Omega$.

The main theorem is in terms of a fixed point of the map $\cF_{d,k}$, i.e., a point $\pi^*\in\cP$ such that $\cF_{d,k}[\pi^*]=\pi^*$.
In general, the map $\cF_{d,k}$ has several fixed points.
Hence, we need to single out the correct one.
For $h\in\brk k$ let $\atom_h\in\Omega$ denote the 
vector whose $h$th coordinate is one and whose other coordinates are $0$
	(i.e., the  Dirac measure on $h$). 
We call a measure $\pi\in\cP$ {\em frozen} if
$\pi(\cbc{\delta_{1},\ldots,\delta_{k}})\geq2/3$;
in words, the total probability mass concentrated on the $k$ vertices
of the simplex $\Omega$ is at least $2/3$.

\begin{figure}
\footnotesize
\begin{align}
	\nonumber
\FF_{d,k}(\pi) &=\FF_{d,k}^e(\pi)+
	\frac{1}{k} \sum_{i \in [k]}\sum_{\gamma_1, \dots, \gamma_k=0}^\infty
		\FF_{d,k}^v(\pi;{i};\gamma_1,\ldots,\gamma_k)
		 \prod_{h\in [k]}\bcfr{d}{k-1}^{\gamma_h}\frac{\exp(-d/(k-1))}{\gamma_h!},&\mbox{where}\\
\FF_{d,k}^e(\pi)&=- \frac{d}{2k(k-1)} \sum_{h_1=1}^k\sum_{h_2\in\brk k\setminus\cbc{h_1}}
	\int_{\Omega^2} \ln \left[  1 - \sum_{h \in [k]} \mu_{1}(h) \mu_{2}(h) \right] \bigotimes_{i=1}^2\dd\pi_{h_i}(\mu_{i}),\label{eqBethee}\\
\FF_{d,k}^v(\pi;i;\gamma_1,\ldots,\gamma_k)&=
	\left\{\begin{array}{cl}
	\ln k&\mbox{ if }\sum_{i=1}^k\gamma_i=0,\\
	\displaystyle\int_{\Omega^{\gamma_1+\cdots+\gamma_k}}
		\ln \left[ \sum_{h =1}^k \prod_{h' \in [k] \setminus \{{i}\}} \prod_{j=1}^{\gamma_{h'}}
	1 - \mu_{h'}^{(j)}(h)  \right] \bigotimes_{h' \in [k]} \bigotimes_{j=1}^{\gamma_{h'}} \dd \pi_{h'}(\mu_{h'}^{(j)})
	&\mbox{ if }\sum_{i=1}^k\gamma_i>0.
		\end{array}
		\right.
	\label{eqBethev}
\end{align}
\caption{The function $\FF_{d,k}$}\label{Fig_FFdk}
\end{figure}

As a final ingredient, we need a function $\FF_{d,k}:\cP\ra\RR$.
To streamline the notation, for $\pi\in\cP$ and $h\in\brk k$ we write $\pi_h$ for the measure $\dd\pi_h(\mu)=k\mu(h)\dd\pi(\mu)$.
With this notation, $\FF_{d,k}$ is  defined in Figure~\ref{Fig_FFdk}.
The integrals in~(\ref{eqBethee}) and~(\ref{eqBethev}) are well-defined because the 
set where the argument of the logarithm vanishes has measure zero.

\begin{theorem}\label{Thm_cond}
There exists a constant $k_0\geq3$ such that for any $k\geq k_0$ the following holds.
If $d\geq(2k-1)\ln k-2$, then $\cF_{d,k}$ has precisely one frozen fixed point $\pi^*_{d,k}$.
Further, the function
	\begin{equation}\label{eqThm_condPhi}
	\Sigma_k:d\mapsto\ln k+\frac d2\ln(1-1/k)-\FF_{d,k}(\pi^*_{d,k})
	\end{equation}
has a unique zero $\dc$ in the interval $[(2k-1)\ln k-2,(2k-1)\ln k-1]$.
For this number $\dc$ the following three statments hold.
\begin{enumerate}
\item[(i)] Any $0<d<\dc$ is smooth and
		$\Phi_k(d)=k\cdot (1-1/k)^{d/2}.$
\item[(ii)] There occurs a phase transition at $\dc$.
\item[(iii)] If $d>\dc$, then
			$$\limsup_{n\ra\infty}\Erw[Z_k(G(n,d/n))^{1/n}]<k\cdot (1-1/k)^{d/2}.$$
		Thus, if $d$ is smooth, then $\Phi_k(d)<k\cdot (1-1/k)^{d/2}.$
\end{enumerate}
\end{theorem}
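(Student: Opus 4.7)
The plan is to decompose the statement into four subtasks: (A) existence and uniqueness of the frozen fixed point $\pi^*_{d,k}$; (B) existence and uniqueness of the zero $\dc$ of $\Sigma_k$ in the prescribed interval; (C) the formula $\Phi_k(d)=k(1-1/k)^{d/2}$ below $\dc$; and (D) the strict inequality above $\dc$. Parts (A) and (B) are analytic and essentially self-contained, while (C) is a second-moment argument and (D) is a cluster-decomposition argument; statement~(ii) then drops out of (C) and (D).

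For (A), a frozen $\pi\in\cP$ puts mass at least $2/3$ on the $k$ Dirac vertices $\atom_1,\dots,\atom_k$ of $\Omega$. I would show that $\cF_{d,k}$ maps the convex set of frozen measures into itself for $d\geq(2k-1)\ln k-2$: whenever any input $\mu_j$ to a Belief-Propagation update $\cB[\mu_1,\dots,\mu_\gamma]$ is an atom $\atom_i$, the output is itself essentially $\atom_i$, and for $d$ in the prescribed range the probability that at least one of the $\gamma\sim\Po(d)$ inputs is a Dirac pushes well above $2/3$. Uniqueness on the frozen subclass would follow from a contraction estimate for $\cF_{d,k}$ in a Wasserstein-like metric adapted to the decomposition of $\pi$ into its $k$ frozen coordinates, with the non-Dirac tail contracting by a factor bounded away from $1$. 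For (B), continuity of $d\mapsto\pi^*_{d,k}$ via the implicit function theorem yields continuity of $\Sigma_k$, and an asymptotic expansion of $\pi^*_{d,k}$ around the totally frozen limit $k^{-1}\sum_i\atom_{\atom_i}$ shows that $\Sigma_k$ is strictly decreasing and changes sign inside the stated interval once $k\geq k_0$; the Intermediate Value Theorem then delivers the unique zero $\dc$.

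For (C), the first-moment calculation gives $\Erw[Z_k(\gnp)]^{1/n}\to k(1-1/k)^{d/2}$. The matching lower bound on $\Phi_k$ would come from a second-moment argument on a restricted variable $\Zgood_k$ counting balanced/tame colorings that capture the full first moment. For $d<\dc$ one has $\Erw[\Zgood_k^2]\leq C\,\Erw[\Zgood_k]^2$ because the overlap distribution between two random tame colorings concentrates on the uniform value $1/k$ and the dominant contribution equals the square of the first moment; the hypothesis $\Sigma_k(d)>0$ is precisely what permits this bound to hold. Combining Paley--Zygmund with Friedgut-type concentration and a small-subgraph correction yields $Z_k^{1/n}\to k(1-1/k)^{d/2}$ in probability and in expectation, so $\Phi_k(d)=k(1-1/k)^{d/2}$ and the formula is analytic on $(0,\dc)$.

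For (D), I would bound $Z_k$ above via a cluster decomposition. Each cluster is indexed by its frozen core of vertices that take a common colour throughout; by local weak convergence of $\gnp$ to the Galton--Watson tree with $\Po(d)$ offspring, the local profile within a typical cluster matches the fixed point $\pi^*_{d,k}$, and a Bethe computation then yields $|\cC|\leq\exp(n\FF_{d,k}(\pi^*_{d,k})+o(n))$ for a typical cluster $\cC$. A first-moment count of clusters bounds their total number by $\exp(n\Sigma_k(d)+o(n))$, which is $o(1)$ for $d>\dc$, so that $Z_k\leq\exp(n\FF_{d,k}(\pi^*_{d,k})+o(n))$ with high probability. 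Since $\FF_{d,k}(\pi^*_{d,k})<\ln k+(d/2)\ln(1-1/k)$ above $\dc$, statement~(iii) follows; and (ii) is then immediate because $\Phi_k$ equals $k(1-1/k)^{d/2}$ just below $\dc$ but is strictly smaller just above, so $\dc$ cannot be smooth. The principal obstacle is the implementation of the cluster decomposition in (D): identifying the correct definition of cluster, relating its local structure to the fixed-point equation via belief-propagation on the limiting tree, and converting the combinatorial count into the exact Bethe integral demand a robust interpolation between the random graph and a local planted analogue, together with careful control of the \emph{wild} colorings falling outside the clean frozen picture.
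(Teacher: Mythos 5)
Your parts (A)--(C) are broadly in the spirit of the paper (the paper pins the ``hard fields'' of any frozen fixed point via the scalar equation $q=(1-\exp(-dq/(k-1)))^{k-1}$ and then proves uniqueness of the conditional ``soft-field'' distributions by a coupling/subcriticality argument for a multi-type Galton--Watson tree, rather than a direct Wasserstein contraction, but that is a difference of implementation). The decisive problems are in (D) and in the missing bridge that (C) and (D) both rely on. First, your mechanism for (iii) is wrong, starting with the sign: for $d>\dc$ one has $\Sigma_k(d)<0$, i.e.\ $\FF_{d,k}(\pi^*_{d,k})>\ln k+\frac d2\ln(1-1/k)$, not $<$ as you assert, so ``cluster size at most the Bethe value, which is below the annealed value'' cannot yield (iii). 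Likewise the step ``a first-moment count of clusters bounds their number by $\exp(n\Sigma_k(d)+o(n))$, which is $o(1)$'' is both unjustified (it presupposes precisely the decomposition $Z_k\approx(\mbox{number of clusters})\times e^{n\FF_{d,k}(\pi^*_{d,k})}$ that you are trying to prove, and no rigorous first-moment count of clusters is available) and untenable: for $d$ slightly above $\dc$ the graph still has $k$-colorings with non-vanishing probability, hence clusters, so their number is not $o(1)$. The actual argument runs in the opposite direction: one shows that in the \emph{planted} model the cluster of the planted coloring is \emph{too large}, namely of size $\exp(n\FF_{d,k}(\pi^*_{d,k})+o(n))>k^n(1-1/k)^{m}$, and this incompatibility with the first moment is converted into statement (iii) by an ``anti-planting'' argument: one exhibits events that hold \whp\ in $G(n,m)$ but have exponentially small probability in $G(n,m,\vec\sigma)$, which forces $\limsup\Erw[Z_k^{1/n}]<k(1-1/k)^{d/2}$. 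Crucially, these events cannot be phrased in terms of the cluster size or $Z_k$ directly (neither is known to be concentrated at the exponential scale); the paper passes to the Potts partition function $Z_{\beta,k}$, whose logarithm is Lipschitz in the edges and hence concentrated, to get the required exponentially small probability.

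Second, both directions hinge on a rigorous identification that your proposal leaves essentially untouched: that the planted-model cluster satisfies $\frac1n\ln|\cC(\G,\SIGMA)|\to\Erw\brk{\ln\cZ(\T_{d,k,\vec q^*})/|\T_{d,k,\vec q^*}|}=\FF_{d,k}(\pi^*_{d,k})$ in probability. ``Local weak convergence to the $\Po(d)$ tree'' does not deliver this: the cluster is a global object, the relevant limit tree is the \emph{decorated, multi-type, subcritical} tree $\T_{d,k,\vec q^*}$ (not the plain Galton--Watson tree), and the equality requires showing that Warning Propagation started from the planted coloring computes the correct color lists for all but $o(n)$ vertices, that a large frozen core pins \whp\ every legal coloring of the reduced graph inside the cluster (lower bound) and conversely (upper bound, via a second WP run initialized from the core), and that the reduced graph is \whp\ a forest whose component statistics converge to the law of $\T_{d,k,\vec q^*}$. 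Relatedly, in (C) the assertion that ``$\Sigma_k(d)>0$ is precisely what permits'' the second-moment bound needs the same bridge: the condition $\Sigma_k(d)>0$ enters by guaranteeing that the planted cluster is smaller than $k^n(1-1/k)^m$, hence that the planted coloring is \emph{tame} \whp, hence that the truncated count $\Zkg$ captures a constant fraction of the first moment, which is the hypothesis under which the second-moment inequality of Coja-Oghlan--Vilenchik applies; without spelling out this transfer (and the Bethe-formula identity $\FF_{d,k}(\pi^*_{d,k})=\Erw[\ln\cZ(\T_{d,k,\vec q^*})/|\T_{d,k,\vec q^*}|]$ that makes $\Sigma_k$ speak about cluster sizes at all), the proof of (i), (iii), and hence (ii), is incomplete.
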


The key strength of \Thm~\ref{Thm_cond} and the main achievement of this work is that we identify the {\em precise} location of the phase transition.
In particular, the result $\dc$ is one 
number rather than a ``sharp threshold sequence'' that might vary with $n$.
Admittedly, this precise answer is not exactly a simple one.
But that seems unsurprising, given the intricate combinatorics of the random graph coloring problem.
That said, the proof of \Thm~\ref{Thm_cond} will illuminate matters.
For instance, the fixed point $\pi^*_{d,k}$ turns out to have a nice combinatorial interpretation and, 
perhaps surprisingly, $\pi^*_{d,k}$ emerges to be a {\em discrete} probability distribution.

The above formulas are derived systematically via the cavity method~\cite{MM}.
For instance, the functional $\phi_{d,k}$ is a special case of a general formula, the so-called ``Bethe free entropy''.
Moreover, the map $\cF_{d,k}$ is the distributional version of the ``Belief Propagation'' operator.
In effect, the predictions as to the condensation phase transitions in other problems look very similar to the above.
Consequently, it can be expected that the proof technique developed in the present work carries over to many other problems.

While the main point of \Thm~\ref{Thm_cond} is that it gives an exact answer,
it is not difficult to obtain a simple asymptotic expansion of $\dc$ in the limit of large $k$. Namely, %
	$\dc=(2k-1)\ln k-2\ln 2+\eps_k,$ 
		where $\eps_k\ra 0$ as $k\ra\infty$. 
This asymptotic formula 
	was obtained in~\cite{Danny} by means of a {\em much}
		simpler argument than the one developed in the present paper.
		However, this simpler argument does not quite get
		to the bottom of the combinatorics behind the condensation phase transition.

\subsection{The cluster size}
The proof of \Thm~\ref{Thm_cond} 
allows us to formalise the physicists' notion that as $d$ tends to $\dc$, 
the cluster size approaches the total number of $k$-colorings.
Of course, we need to formalise what we mean by ``clusters'' first.
Thus, let $G$ be a graph on $n$ vertices.
If $\sigma,\tau$ are $k$-colorings of $G$, we define their {\em overlap} as
the $k\times k$-matrix
$\rho(\sigma,\tau)=(\rho_{ij}(\sigma,\tau))_{i,j\in\brk k}$ with entries
	$$\rho_{ij}(\sigma,\tau)=\frac{|\sigma^{-1}(i)\cap\tau^{-1}(j)|}{n},$$
i.e., $\rho_{ij}(\sigma,\tau)$ is the fraction of vertices colored $i$ under $\sigma$ and $j$ under $\tau$.
Now, define the {\em cluster} of $\sigma$ in $G$ as
	\begin{equation}\label{eqCluster}
	\cC(G,\sigma)=\cbc{\tau:\mbox{$\tau$ is a $k$-coloring of $G$ and $\rho_{ii}(\sigma,\tau)\geq0.51/k$ for all $i\in\brk k$}}.
	\end{equation}
Suppose that $\sigma,\tau$ are such that $|\sigma^{-1}(i)|,|\tau^{-1}(i)|\sim n/k$ for all $i\in\brk k$;
most $k$-colorings of $G(n,d/n)$ have this property \whp~\cite{AchFried,Covers}.
Then $\tau\in\cC(G,\sigma)$ means that a little over $50\%$ of the vertices with color $i$ under
$\sigma$ also have color $i$ under $\tau$.
To this extent, $\cC(G,\sigma)$ comprises of colorings ``similar'' to $\sigma$.
In fact, for the range of $d$ that we are interested in,
this definition coincides \whp\ with that
from~\cite{Molloy} (``colorings that can be reached from $\sigma$ by iteratively altering the colors of $o(n)$ vertices at  time'').

\begin{corollary}\label{Cor_entropyCrisis}
With the notation and assumptions of \Thm~\ref{Thm_cond}, the function
	$\Sigma_k$ 
is continuous, strictly positive and monotonically decreasing 
on $((2k-1)\ln k-2,\dc)$, and $\lim_{d\ra\dc}\Sigma_k(d)=0$.
Further, given that $G(n,d/n)$ is $k$-colorable, let $\TAU$ be a uniformly random $k$-coloring of this random graph.
Then for any $d\in((2k-1)\ln k-2,\dc)$,
	\begin{eqnarray*}
	\lim_{\eps\searrow 0}\lim_{n\ra\infty}&\pr&\brk{\frac1n\ln\frac{|\cC(G(n,d/n),\TAU)|}{Z_k(G(n,d/n))}\leq \Sigma_k(d)+\eps\,
			\big|\,\chi(G(n,d/n))\leq k}=1,
		\quad\mbox{and}\\
	\lim_{\eps\searrow 0}\limsup_{n\ra\infty}&\pr&\brk{\frac1n\ln\frac{|\cC(G(n,d/n),\TAU)|}{Z_k(G(n,d/n))}\geq \Sigma_k(d)-\eps
		\,\big|\,\chi(G(n,d/n))\leq k}>0.
	\end{eqnarray*}
\end{corollary}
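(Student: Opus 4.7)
The plan is to split the corollary into two logically separate pieces: the analytic properties of $\Sigma_k$, and the identification of the cluster-size asymptotics in terms of the Bethe free entropy $\FF_{d,k}(\pi^*_{d,k})$.

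For $\Sigma_k$ I would first establish continuity of $d\mapsto\pi^*_{d,k}$ on $((2k-1)\ln k-2,\dc)$ by a standard soft argument: by compactness of $\cP$ under weak convergence, any subsequential limit of $\pi^*_{d_n,k}$ as $d_n\to d$ must be a frozen fixed point of $\cF_{d,k}$ at the limiting $d$, and by the uniqueness statement of Theorem~\ref{Thm_cond} this limit equals $\pi^*_{d,k}$. Continuity of $\Sigma_k$, and hence $\lim_{d\nearrow\dc}\Sigma_k(d)=0$, follow immediately. For monotonicity I would differentiate $\Sigma_k$ and invoke the Bethe variational principle — BP fixed points of $\cF_{d,k}$ are critical points of $\FF_{d,k}$ — so that, by the envelope theorem, only the explicit $d$-dependence contributes and $\Sigma_k'(d)=\tfrac12\ln(1-1/k)-\partial_d\FF_{d,k}|_{\pi^*_{d,k}}$. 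The Poisson structure in the definition of $\FF_{d,k}$ converts the remaining derivative into an explicit integral against $\pi^*_{d,k}$ whose sign one checks using the (discrete, atomic) combinatorial description of $\pi^*_{d,k}$ revealed during the proof of Theorem~\ref{Thm_cond}. Strict positivity on $((2k-1)\ln k-2,\dc)$ then follows from monotonicity together with the fact that $\dc$ is the unique zero of $\Sigma_k$.

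For the cluster-size assertion I would work in the planted model $\hat\G$, obtained by first drawing a balanced coloring $\sigma^*$ uniformly and then placing edges uniformly among non-monochromatic pairs so as to match the expected degree $d$. Contiguity between $(\G,\TAU)$ conditioned on $\chi(\G)\leq k$ and the planted pair $(\hat\G,\sigma^*)$ — a tool already central to the proof of Theorem~\ref{Thm_cond} — allows estimates to be transferred. In the planted model I would compute $\Erw[|\cC(\hat\G,\sigma^*)|]$ by summing over proper colorings $\tau$ with $\rho_{ii}(\sigma^*,\tau)\geq 0.51/k$; the inner asymptotics are driven by the distributional Belief-Propagation recursion on the Galton--Watson local neighborhoods of planted vertices, whose unique fixed point (in the frozen class) is precisely $\pi^*_{d,k}$. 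The first moment thus equals $\exp(n\FF_{d,k}(\pi^*_{d,k})+o(n))$. A matching second moment combined with Paley--Zygmund gives the lower bound on $|\cC(\hat\G,\sigma^*)|$ with positive probability, and Markov gives the corresponding upper bound with high probability. Combining with Theorem~\ref{Thm_cond}(i), which provides $Z_k(\G)^{1/n}\to k(1-1/k)^{d/2}$ in probability for $d<\dc$, identifies the logarithmic ratio $\frac1n\ln|\cC(\G,\TAU)|/Z_k(\G)$ with $\FF_{d,k}(\pi^*_{d,k})-\ln k-\tfrac d2\ln(1-1/k)=-\Sigma_k(d)$ up to $o(1)$ errors, which after contiguity transfer yields both bounds claimed in the corollary.

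The main obstacle is unambiguously the second-moment computation for $|\cC(\hat\G,\sigma^*)|$. One must decompose the second moment over the overlap profile of a pair $(\tau_1,\tau_2)$ of proper colorings both in $\cC(\hat\G,\sigma^*)$, show that the dominant contribution comes from the ``diagonal'' profile in which the two colorings share the same cluster as $\sigma^*$, and rule out spurious contributions from alternative fixed points of the BP operator $\cF_{d,k}$. The uniqueness of the frozen fixed point from Theorem~\ref{Thm_cond} is exactly what makes this last step tractable. A secondary but delicate difficulty is quantitative control of the contiguity transfer between the planted ensemble $\hat\G$ and the \Erdos--\Renyi model $\G$, so that the $\exp(o(n))$ approximation of $|\cC|$ is preserved end-to-end.
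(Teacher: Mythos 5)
Your plan for the cluster-size statement has a genuine gap at its core step. You propose to compute $\Erw[|\cC(\G,\SIGMA)|]$ in the planted model, claim that this first moment equals $\exp(n\FF_{d,k}(\pi^*_{d,k})+o(n))$, and then use Markov and Paley--Zygmund. But $\FF_{d,k}(\pi^*_{d,k})$ is, by \Prop~\ref{Prop_fix}(3), the \emph{quenched} rate $\Erw[\ln\cZ(\T_{d,k,\vec q^*})/|\T_{d,k,\vec q^*}|]$, i.e.\ the expectation of a logarithm; the annealed rate $\frac1n\ln\Erw[|\cC(\G,\SIGMA)|]$ is generically strictly larger (rare planted graphs with atypically many large free components dominate the expectation), so Markov gives an upper bound at the wrong constant and Paley--Zygmund a lower bound at the wrong constant, and a direct second moment over pairs of colorings in the cluster is exactly the kind of computation that fails in this frozen regime without heavy truncation. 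The paper avoids any moment computation for $|\cC|$: \Prop~\ref{Prop_cluster} proves \emph{concentration}, namely that $\frac1n\ln|\cC(\G,\SIGMA)|$ converges in probability to $\Erw[\ln\cZ(\T_{d,k,\vec q^*})/|\T_{d,k,\vec q^*}|]$, via Warning Propagation, the core, and local convergence of the pruned graph to the Galton--Watson tree, and then identifies the limit with $\FF_{d,k}(\pi^*_{d,k})$ through \Prop~\ref{Prop_fix}(3). A second problem is the transfer: ``quiet planting'' in the regime $\Phi_k(d)=k(1-1/k)^{d/2}$ gives interchangeability only up to $e^{o(n)}$ factors, not contiguity, so a statement that holds \whp\ in the planted model transfers to the null model \whp\ only if the planted failure probability is exponentially small. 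This asymmetry is precisely why the second claim of the corollary is asserted only with non-vanishing probability, and why the paper develops the Potts proxy $Z_{\beta,k}$ with the Lipschitz/McDiarmid concentration of \Sec~\ref{Sec_plantedCluster} when exponential bounds are needed; your ``Markov plus contiguity'' route to the \whp\ statement does not close as written.

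For the analytic properties of $\Sigma_k$, note first that continuity, strict monotonicity, the unique zero $\dc$ and hence positivity on $((2k-1)\ln k-2,\dc)$ and $\lim_{d\ra\dc}\Sigma_k(d)=0$ are already supplied by \Prop~\ref{Prop_fix}(4), so this part needs no new argument. Your alternative route is not sound as sketched: the envelope/Bethe-stationarity step (fixed points of $\cF_{d,k}$ are critical points of $\FF_{d,k}$, differentiability of $d\mapsto\pi^*_{d,k}$, differentiability of $\FF_{d,k}$ in $\pi$) is unproven in this infinite-dimensional setting, and even the soft compactness argument for continuity requires weak continuity of $\pi\mapsto\FF_{d,k}(\pi)$, which is delicate because the logarithmic integrands in~(\ref{eqBethee})--(\ref{eqBethev}) are unbounded near the Dirac atoms that carry most of the mass of $\pi^*_{d,k}$. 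Moreover, even granting the envelope identity, monotonicity requires showing that the explicit $d$-derivative of $\FF_{d,k}$ at $\pi^*_{d,k}$ is too small to offset $\frac12\ln(1-1/k)$; ``checking the sign from the atomic description'' is the entire difficulty, and it is exactly what the paper obtains quantitatively from the branching-process coupling (\Lem~\ref{Lemma_diffTree}): the total $d$-derivative of $\Erw[\ln\cZ(\T_{d,k,\vec q^*})/|\T_{d,k,\vec q^*}|]$ is $\tilde O_k(k^{-2})$, which is dominated by the $\Omega_k(1/k)$ decrease of $\ln k+\frac d2\ln(1-1/k)$.
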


\noindent
We observe that our conditioning on the chromatic number $\chi(G(n,d/n))$ being at most $k$ is necessary to speak of a random $k$-coloring $\TAU$ but otherwise harmless.
For the first part of \Thm~\ref{Thm_cond} implies that $G(n,d/n)$ is $k$-colorable \whp\ for any $d<\dc$.
Indeed, if $d<\dc$, then $\Phi_k(d)=k(1-1/k)^{d/2}>0$ and thus $\Zkc(G(n,d/n))^{1/n}>0$ \whp\
because $(\Zkc(G(n,d/n))^{1/n})$ converges to $\Phi_k(d)$ in probability.

In words, \Cor~\ref{Cor_entropyCrisis} states that there is a certain function $\Sigma_k>0$
such that the total number of $k$-colorings
exceeds the number of $k$-colorings in the cluster of a randomly chosen $k$-coloring by at least a factor of $\exp[n(\Sigma_k(d)+o(1))]$ \whp\
However, as $d$ approaches $\dc$, $\Sigma_k(d)$ tends to $0$, and
with a non-vanishing probability the gap between the total number of $k$-colorings and the size
of a single cluster is upper-bounded by $\exp[n(\Sigma_k(d)+o(1))]$.

\section{Discussion and related work}\label{Sec_related}
\noindent
In this section we discuss some relevant related work and also explain the impact of \Thm~\ref{Thm_cond}
on some questions that have come up in the literature.

\subsection{The $k$-colorability threshold} 
The problem of determining the chromatic number of random graphs has attracted a great deal of attention since it was first posed by \Erdos\ and \Renyi~\cite{ER}
	(see~\cite{JLR} for a comprehensive overview).
In the case that $p=d/n$ for a fixed real $d>0$, 
the problem amounts to calculating the threshold sequence $\dk(n)$.
The best current bounds are
	\begin{equation}\label{eqThrLoc}
	(2k-1)\ln k-2\ln2+\eps_k
	\leq\liminf_{n\ra\infty}\dk(n) \leq\limsup_{n\ra\infty}\dk(n)\leq(2k-1)\ln k-1+\delta_k,
	\end{equation}
where $\eps_k,\delta_k\ra0$ as $k\ra\infty$.
The upper bound is by the ``first moment'' method~\cite{Covers}.
The lower bound rests on a ``second moment'' argument~\cite{Danny}, which improves a
landmark result of Achlioptas and Naor~\cite{AchNaor}.

While \Thm~\ref{Thm_cond} allows for the possibility that $\dc$ is equal to the
$k$-colorability threshold $\dk$ (if it exists), the physics prediction is that these two are different.
More specifically, the cavity method yields a prediction as to the precise value of $\dk$ in terms
of another distributional fixed point problem.
An asymptotic expansion in terms of $k$ leads to the conjecture
$\dk=(2k-1)\ln k-1+\eta_k$ with $\eta_k\ra0$ as $k\ra\infty$.
Thus, the upper bound in~(\ref{eqThrLoc}) is conjectured to be asymptotically tight in the limit $k\ra\infty$.

The present work builds upon the second moment argument from~\cite{Danny}.
Conversely, \Thm~\ref{Thm_cond} yields a small improvement over the lower bound from~\cite{Danny}.
Indeed, as we saw above \Thm~\ref{Thm_cond} implies that $\liminf_{n\ra\infty}\dk(n)\geq\dc$, thereby
determining the precise ``error term'' $\eps_k$ in the lower bound~(\ref{eqThrLoc}).

In fact, $\dc$ is the best-possible lower bound that can be obtained
via a certain ``natural'' type of second moment argument. 
Assume that $Z\geq0$ is a random variable such that $\ln\Erw[Z(G(n,d/n))]\sim\ln\Erw[\Zkc(G(n,d/n))]$;
	think of $Z$ as a random variable that counts $k$-colorings, perhaps excluding some ``pathological cases''.
Then for any $d$ such that the second moment method ``works'', i.e.,
	$$\Erw[Z(G(n,d/n))^2]\leq O(\Erw[Z(G(n,d/n))])^2,$$
a concentration result from~\cite{Barriers} implies that $\Phi_k(d)=k(1-1/k)^{d/2}$.
Consequently, $d\leq\dc$.

\subsection{``Quiet planting?''}
The notion that for $d$ close to the (hypothetical) $k$-colorability threshold $\dk$
it seems difficult to find a $k$-coloring of $G(n,d/n)$ algorithmically could be used to construct a candidate one-way function~\cite{Barriers}
	(see also~\cite{Goldreich}).
This function maps a $k$-coloring $\sigma$ to a random graph $G(n,p',\sigma)$ by linking any two vertices $v,w$ with $\sigma(v)\neq\sigma(w)$
with some $p'$ independently.
The edge probability $p'$ could be chosen such that the average degree of the resulting graph is close to the $k$-colorability threshold.
This distribution on graphs is the so-called \emph{planted model}. 

If the planted distribution is close to $G(n,d/n)$, one might think that
the function $\sigma\mapsto G(n,p',\sigma)$ is difficult to invert.
Indeed, it should be difficult to find {\em any} $k$-coloring of $G(n,p',\sigma)$, not to mention the planted coloring $\sigma$.
As shown in~\cite{Barriers}, the planted distribution and $G(n,d/n)$ are interchangeable (in a certain precise sense) iff $\Phi_k(d)=k(1-1/k)^{d/2}$.
Hence, $\dc$ marks the point where these two distributions start to differ.
In particular, \Thm~\ref{Thm_cond} shows that at the $k$-colorability threshold, the two distributions are {\em not} interchangeable.
In effect, experimental evidence that coloring $G(n,d/n)$ is ``difficult'' at or near $\dk$ is inconclusive with respect to the
problem of finding a $k$-coloring in the planted model (which may, of course, well be difficult for some other reason).

\subsection{Message passing algorithms}
The cavity method has inspired new ``message passing'' algorithms by the name of Belief/Survey Propagation Guided Decimation~\cite{MPZ}.
Experiments on random graph $k$-coloring instances for small values of $k$ indicate an excellent performance of these algorithms~\cite{BMPWZ,LenkaPhD,LenkaFlorent}.
However, whether these experimental results are reliable and/or extend to larger $k$ remains shrouded in mystery.

For instance, Belief Propagation Guided Decimation can most easily be described in terms of list colorings.
Suppose that $G$ is a given input graph.
Initially, the list of colors available to each vertex is the full set $\brk k$.
The algorithm chooses a color for one vertex at a time as follows.
First, it performs a certain fixed point iteration 
	to approximate for each vertex the marginal probability of taking some color $i$ in a randomly chosen proper list coloring of $G$.
Then, a vertex $v$ is chosen, say, uniformly at random and a random color $i$ is chosen from the
	(supposed) approximation to its marginal distribution.
The color list of $v$ is reduced to the singleton $\cbc i$, color $i$ gets removed from the lists of all the neighbors of $v$, and we repeat.
The algorithm terminates when either for each vertex a color has been chosen (``success'') or
	the list of some vertex becomes empty (``failure'').
Ideally, if at each step the algorithm manages to compute precisely the correct marginal distribution,
the result would be a uniformly random $k$-coloring of the input graph.
Of course, generating such a random $k$-coloring is $\#P$-hard in the worst case, and the crux is that
the aforementioned fixed point iteration may or may not produce a good approximation to the actual marginal distribution.

Perhaps the most plausible stab at understanding Belief Propagation Guided Decimation is the non-rigorous contribution~\cite{RTS}.
Roughly speaking, the result of the Belief Propagation fixed point iteration after $t$ iterations can be expected to yield a good approximation to the actual marginal distribution
iff there is no condensation among the remaining list colorings. 
If so, one should expect that the algorithm actually finds a $k$-coloring 
if condensation does not occur at any step $0\leq t\leq n$.
Thus, we look at a two-dimensional ``phase diagram'' parametrised by the average degree $d$ and the time $t/n$.
We need to identify the line that marks the (suitably defined) condensation phase transition in this diagram.
\Thm~\ref{Thm_cond} deals with the case $t=0$, and it would be most interesting to see if the present techniques
extend to $t\in(0,1)$.
Attempts at (rigorously) analysing message passing algorithms  along these lines have been made for random $k$-SAT,
but the current results are far from precise~\cite{BP,Angelica}.

\subsection{The physics perspective}

In physics terminology the random graph coloring problem is an example of a ``diluted mean-field model of a disordered system''.
The term ``mean-field'' refers to the fact that there is no underlying lattice geometry, while ``diluted'' indicates that
the average degree in the underlying graph is bounded.
Moreover, ``disordered systems'' reflects that the model involves randomness
	(i.e., the random graph). 
Diluted mean-field models are considered a better approximation to ``real'' disordered systems (such as glasses) than models 
where the underlying graph is complete, such as 
the Sherrington-Kirkpatrick model~\cite{MM}.
From the viewpoint of physics, the question of whether ``disordered systems'' exhibit a condensation phase transition can be traced back to
Kauzmann's experiments in the  1940s~\cite{Kauzmann48}.
In models where the underlying graph is complete,
physicsts predicted an affirmative answer in the 1980s~\cite{KiTh87},
and this has long been confirmed rigorously~\cite{TalagrandBook}.

With respect to ``diluted'' models,
Coja-Oghlan and Zdeborova~\cite{Lenka} showed that a condensation phase transition {\em exists} in random $r$-uniform hypergraph $2$-coloring.
Furthermore, \cite{Lenka} determines the location of the condensation phase transition up to an error
$\eps_r$ that tends to zero as the uniformity $r$ of the hypergraph becomes large.
By contrast, \Thm~\ref{Thm_cond} is the first result that pins down the {\em exact} condensation phase transition in a diluted mean-field model.

Technically, we build upon some of the techniques that have been developed to study the ``geometry''
of the set of $k$-colorings of the random graph and add to this machinery.
Among the techniques that we harness is the ``planting trick'' from~\cite{Barriers} (which, in a sense, we are going to ``put into reverse''), the notion
of a core~\cite{Barriers,Danny,Molloy}, techniques for proving the existence of ``frozen variables''~\cite{Molloy}, and a concentration argument from~\cite{Lenka}.
Additionally, our proof directly incorporates some of the physics calculations from~\cite[Appendix~C]{LenkaFlorent}.
That said, the cornerstone of the present work is 
a novel argument that allows us to connect the distributional fixed point problem from~\cite{LenkaFlorent}
rigorously with the  geometry of the set of $k$-colorings.

\bigskip
\noindent {\em From here on we tacitly assume that $k\geq k_0$ for some large enough constant $k_0$ and that $n$ is sufficiently large.
We use the standard $O$-notation when referring to the limit $n\ra\infty$.
Thus, $f(n)=O(g(n))$ means that there exist $C>0$, $n_0>0$ such that for all $n>n_0$ we have $|f(n)|\leq C\cdot|g(n)|$.
In addition, we use the standard symbols $o(\cdot),\Omega(\cdot),\Theta(\cdot)$.
In particular, $o(1)$ stands for a term that tends to $0$ as $n\ra\infty$.

Additionally, we use asymptotic notation with respect to the limit of large $k$.
To make this explicit, we insert $k$ as an index.
Thus, $f(k)=O_k(g(k))$ means that there exist $C>0$, $k_0>0$ such that for all $k>k_0$ we have $|f(k)|\leq C\cdot|g(k)|$.
Further, we write $f(k)=\tilde O_k(g(k))$ to indicate that there exist $C>0$, $k_0>0$ such that 
for all $k>k_0$ we have $|f(k)|\leq (\ln k)^C\cdot|g(k)|$.}

\section{Outline}\label{sec:outline}

\medskip
\noindent
The proof of Theorem \ref{Thm_cond} is composed of two parallel threads.
The first thread is to identify an ``obvious'' point where a phase transition occurs or, more specifically,
a critical degree $\dcrit$ where statements (i)-(iii) of the theorem are met.
The second thread is to identify the frozen fixed point $\pi^*_{d,k}$ of $\cF_{d,k}$ and to interpret it combinatorially.
Finally, the two threads intertwine to show that $\dcrit=\dc$, i.e. that the ``obvious'' phase transition
$\dcrit$ is  indeed  the unique zero of equation \eqref{eqThm_condPhi}.
The first thread is an extension of ideas developed in~\cite{Lenka} for random hypergraph $2$-coloring
to the (technically more involved) random graph coloring problem.
The second thread and the intertwining of the two require novel arguments.

\subsection{The first thread}
Because the $n$th root sits inside the expectation, the quantity $$\Phi_k(d)=\lim_{n\ra\infty}\Erw[Z_k(G(n,d/n))^{1/n}]$$
is difficult to calculate for general values of $d$.
However for $d\in [0,1)$, $\Phi_k(d)$ is easily understood.
In fact, the celebrated result of \Erdos\ and \Renyi~\cite{ER} implies that for $d\in[0,1)$ the random graph $G(n,d/n)$ is basically a forest.
Moreover, the number of $k$-colorings of a forest with $n$ vertices and $m$ edges is well-known to be $k^{n}(1-1/k)^{m}$.
Since $G(n,d/n)$ has $m\sim dn/2$ edges \whp, we obtain
	\begin{equation}\label{eqTreeHug1}
	Z_k(G(n,d/n))^{1/n}\sim k(1-1/k)^{d/2}\qquad\mbox{for }d<1.
	\end{equation}
As $Z_k(G)^{1/n}\leq k$ for any graph on $n$ vertices, (\ref{eqTreeHug1}) implies that
	\begin{equation}\label{eqTreeHug2}
	\Phi_k(d)=\lim_{n\ra\infty}\Erw[Z_k(G(n,d/n))^{1/n}]=k(1-1/k)^{d/2}\qquad\mbox{for }d<1.
	\end{equation}
Clearly, the function $d\mapsto k(1-1/k)^{d/2}$ is analytic on all of $(0,\infty)$. 
Therefore, the uniqueness of analytic continuations implies that the least $d>0$ where the limit $\Phi_k(d)$  either fails to exist or
strays away from $k(1-1/k)^{d/2}$ is going to be a phase transition.
Hence, we let
	\begin{equation}\label{eqdcrit}
	\dcrit=\sup\cbc{d\geq0:
		\mbox{the limit $\Phi_k(d)$ exists and }
			\Phi_k(d)=k(1-1/k)^{d/2}}.
	\end{equation}

\begin{fact}
We have $\dcrit\leq(2k-1)\ln k$.
\end{fact}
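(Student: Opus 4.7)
The plan is to exploit the upper bound on the $k$-colorability threshold recorded in~(\ref{eqThrLoc}) together with the sharp threshold property (recalled in the footnote at the start of the introduction). Specifically, I would argue that for every constant $d>(2k-1)\ln k$ the random graph $G(n,d/n)$ fails to be $k$-colorable \whp, so that $\Phi_k(d)$ exists and equals $0$, which cannot match the strictly positive quantity $k(1-1/k)^{d/2}$; no such $d$ then belongs to the set whose supremum defines $\dcrit$ in~(\ref{eqdcrit}).

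In detail, the first step is to note that~(\ref{eqThrLoc}) gives $\limsup_{n\to\infty}\dk(n)\leq(2k-1)\ln k-1+\delta_k$ with $\delta_k\to0$, so for $k\geq k_0$ chosen large enough one has $\limsup_n\dk(n)<(2k-1)\ln k$ with a strictly positive gap. Fix any $d>(2k-1)\ln k$; then $d>\limsup_n\dk(n)$, so I can pick $\eps>0$ so small that $(1+\eps)\dk(n)<d$ for all sufficiently large $n$. The Achlioptas-Friedgut sharp threshold then yields $\pr[G(n,d/n)\text{ is }k\text{-colorable}]\to0$, i.e., $Z_k(G(n,d/n))=0$ \whp, whence $Z_k(G(n,d/n))^{1/n}=0$ \whp. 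Combining this with the deterministic bound $Z_k(G(n,d/n))^{1/n}\leq k$ and applying bounded convergence, I conclude that $\Erw[Z_k(G(n,d/n))^{1/n}]\to0$, so $\Phi_k(d)=0$.

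The final step is simply to observe that $k(1-1/k)^{d/2}>0$ for every $d\geq0$, so $\Phi_k(d)\neq k(1-1/k)^{d/2}$, and therefore $d$ is not a member of the set defining $\dcrit$. Taking the infimum over all $d>(2k-1)\ln k$ yields $\dcrit\leq(2k-1)\ln k$. There is no real obstacle here: the only subtlety is the standard conversion from the sharp threshold sequence $\dk(n)$ to the fixed constant $d$, and this is routine once the strict inequality $\limsup_n\dk(n)<(2k-1)\ln k$ has been secured from~(\ref{eqThrLoc}).
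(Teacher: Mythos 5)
Your proof is correct and follows essentially the same route as the paper: invoke the upper bound~(\ref{eqThrLoc}) to get that $G(n,d/n)$ is not $k$-colorable \whp\ for $d>(2k-1)\ln k$, conclude $\Phi_k(d)=0$ (using $Z_k^{1/n}\leq k$), and compare with $k(1-1/k)^{d/2}>0$. The only difference is that you spell out the reduction from the threshold sequence $\dk(n)$ to a fixed $d$ via the Achlioptas--Friedgut sharp threshold, which the paper leaves implicit.
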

\begin{proof}
The upper bound~(\ref{eqThrLoc}) on the $k$-colorability threshold implies that
for $d>(2k-1)\ln k$, $G(n,d/n)$ fails to be $k$-colorable \whp\
Hence, for such $d$ we have $\Zkc(G(n,d/n))=0$ \whp, and thus $\Phi_k(d)=0$.
By contrast, $k(1-1/k)^{d/2}>0$ for any $d>0$. 
\end{proof}

Thus, $\dcrit$ is a well-defined finite number, and there occurs a phase transition at $\dcrit$.
Moreover, the following proposition 
yields a lower bound on $\dcrit$ and
implies that $\dcrit$ satisfies the first condition in \Thm~\ref{Thm_cond},
see \Sec~\ref{Sec_dcrit} for the proof.

\begin{proposition}\label{Lemma_dcrit}
For any $d>0$ we have $\limsup_{n\ra\infty}\Erw[Z_k(G(n,d/n))^{1/n}]\leq k(1-1/k)^{d/2}$.
Moreover, the number $\dcrit$ satisfies
	\begin{equation}\label{eqdcrit1}
	\dcrit=\sup\cbc{d\geq0:
		\liminf_{n\ra\infty}\Erw[Z_k(G(n,d/n))^{1/n}]\geq k(1-1/k)^{d/2}}\geq(2k-1)\ln k-2.
	\end{equation}
\end{proposition}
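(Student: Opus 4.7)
My plan is to split the proposition into three claims and attack them in turn: (A) the universal upper bound $\limsup_n \Erw[Z_k(G(n,d/n))^{1/n}] \leq k(1-1/k)^{d/2}$ for every $d>0$; (B) the representation $\dcrit = \sup\{d : \liminf_n \Erw[Z_k(G(n,d/n))^{1/n}] \geq k(1-1/k)^{d/2}\}$; and (C) the numerical estimate $\dcrit \geq (2k-1) \ln k - 2$. Part (B) will follow as a near-tautology once (A) is in hand: the upper bound forces ``$\liminf \geq k(1-1/k)^{d/2}$'' to be equivalent to ``$\Phi_k(d)$ exists and equals $k(1-1/k)^{d/2}$'', which is precisely the defining condition in~\eqref{eqdcrit}.

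For (A), I will pass first to the uniform model $G(n,m)$ with $m = \lfloor dn/2 \rfloor$, where the first moment is easy to control sharply. Grouping $\sigma \in [k]^n$ by color-class sizes and exploiting convexity of $x \mapsto \binom{x}{2}$, one checks $\pr[\sigma \text{ proper in } G(n,m)] \leq (1-1/k)^m e^{O(1)}$ with balanced $\sigma$ dominating the sum, yielding $\Erw[Z_k(G(n,m))] \leq (k(1-1/k)^{d/2})^n e^{o(n)}$. Markov then forces $Z_k(G(n,m))^{1/n} \leq k(1-1/k)^{d/2} + \eps$ w.h.p., and combining with the deterministic ceiling $Z_k^{1/n} \leq k$ on the exceptional event gives $\Erw[Z_k(G(n,m))^{1/n}] \leq k(1-1/k)^{d/2} + o(1)$. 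Transferring to $G(n,d/n)$ is a straightforward conditioning argument on the edge count $M \sim \Bin(\binom{n}{2}, d/n)$, whose Chernoff concentration around $dn/2$ means only $m$'s within an $o(n)$-window contribute non-negligibly, while the deterministic ceiling absorbs the tail.

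For (C), my plan is to invoke the second-moment machinery of~\cite{Danny}: for $d \leq (2k-1)\ln k - 2$ with $k \geq k_0$ they construct a truncated counting $\tilde Z_k$ of ``tame'' $k$-colorings satisfying $\Erw[\tilde Z_k^2] = O(\Erw[\tilde Z_k]^2)$ together with $\Erw[\tilde Z_k]^{1/n} \to k(1-1/k)^{d/2}$. Paley-Zygmund then gives $\pr[\tilde Z_k \geq \tfrac12 \Erw[\tilde Z_k]] \geq c > 0$, so $Z_k^{1/n} \geq k(1-1/k)^{d/2}(1 - o(1))$ holds with probability at least $c$. I promote this to a w.h.p.\ statement via the concentration lemma for $n^{-1} \ln Z_k$ from~\cite{Barriers}. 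Taking expectations, and using the ceiling $Z_k^{1/n} \leq k$ on the exceptional event, produces $\liminf_n \Erw[Z_k(G(n,d/n))^{1/n}] \geq k(1-1/k)^{d/2}$, which by (B) amounts to $\dcrit \geq (2k-1)\ln k - 2$.

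The hardest step will be the concentration argument of (C). A naive Azuma estimate on the edge-exposure martingale of $\ln Z_k$ is unusable since a single edge can change $\ln Z_k$ drastically; the argument of~\cite{Barriers} circumvents this by restricting the analysis to a ``typical'' graph event (bounded maximum degree, no dense subgraphs) on which edge-exposure increments are at most $\mathrm{polylog}(n)$. Once that black box is imported, the remaining pieces --- multinomial first moment, Markov, Paley-Zygmund, Chernoff concentration of $M$ --- are routine.
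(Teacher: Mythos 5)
Your parts (A) and (B) are essentially the paper's own route: the first moment bound for $G(n,m)$, transfer to $G(n,d/n)$ by conditioning on the edge count (the paper uses Jensen where you use Markov plus the ceiling $Z_k^{1/n}\leq k$, which amounts to the same thing), and the observation that, given the upper bound, the condition $\liminf_n\Erw[Z_k^{1/n}]\geq k(1-1/k)^{d/2}$ is pointwise equivalent to the defining condition in~(\ref{eqdcrit}), which yields the sup-representation exactly as in the paper.

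The gap is in (C), at precisely the step you flag as the hardest. There is no ``concentration lemma for $n^{-1}\ln Z_k$'' of the kind you describe in~\cite{Barriers}, and the mechanism you sketch cannot work: restricting to bounded maximum degree and no dense subgraphs does not make the edge-exposure increments of $\ln \Zkc$ polylogarithmic, because the effect of a single edge on the number of proper colorings is not locally controlled --- adding one edge between two vertices that are frozen to the same color in every $k$-coloring sends $\Zkc$ to $0$ and $\ln\Zkc$ to $-\infty$, and whether $\Zkc>0$ at these densities is exactly the colorability question you cannot presuppose. (The paper itself notes in \Sec~\ref{Sec_plantedCluster} that $\Zkc$ is not known to be sufficiently concentrated, which is why it switches there to the Potts partition function $Z_{\beta,k}$, whose logarithm genuinely is $\beta$-Lipschitz.) What \cite{Barriers} actually supplies, and what the paper uses, is \Lem~\ref{Lemma_Xfriedgut}: an Achlioptas--Friedgut sharp threshold sequence $d_{k,\xi}(n)$ for the monotone property $\cbc{\Zkc\geq\xi^n}$. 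The boost from the Paley--Zygmund constant-probability bound to a \whp\ statement then proceeds by a density-shift argument (\Lem~\ref{Lemma_Xsmm}): one runs the second moment of \cite{Danny} at all densities $d<\hat d$ for some $\hat d$ strictly above the target density $d_*$, with $\xi<k(1-1/k)^{\hat d/2}$, concludes that $\pr\brk{\Zkc\geq\xi^n}$ stays bounded away from $0$ throughout, infers $\liminf_n d_{k,\xi}(n)\geq\hat d$, and only then applies the sharp threshold at $d_*<\hat d$ to get $\Zkc(G(n,d_*/n))\geq\xi^n$ \whp, whence the bound on $\liminf_n\Erw[Z_k^{1/n}]$. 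So your skeleton for (C) is right in outline, but the black box has to be replaced by the sharp-threshold lemma together with this two-density argument; as written, the proof of $\dcrit\geq(2k-1)\ln k-2$ does not go through.
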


Thus, we know that there {\em exists} a number $\dcrit$ that satisfies conditions (i)--(ii) in \Thm~\ref{Thm_cond}.
Of course, to actually calculate this number we need to unearth its combinatorial ``meaning''.
As we saw in \Sec~\ref{Sec_results}, if $\dcrit$ really is the condensation phase transition,
then the combinatorial interpretation should be as follows.
For $d<\dcrit$, the size of the cluster that a randomly chosen $k$-coloring $\vec\tau$ belongs to
is smaller than $\Zkc(G(n,d/n))$ by an exponential factor $\exp(\Omega(n))$ \whp\
But as $d$ approaches $\dcrit$, the gap between the cluster size and $\Zkc(G(n,d/n))$ diminishes.
Hence, $\dcrit$ should mark the point where the cluster size has the same order of magnitude as $\Zkc(G(n,d/n))$.

But how can we possibly get a handle on the size of the cluster that a randomly chosen $k$-coloring $\vec\tau$ of $G(n,d/n)$ belongs to?
No ``constructive'' argument (or efficient algorithm) is known for obtaining a single $k$-coloring of $G(n,d/n)$ for $d$ anywhere close to $\dk$,
let alone for sampling one uniformly at random.
Nevertheless, as observed in~\cite{Barriers}, in the case that $\Phi_k(d)=k(1-1/k)^{d/2}$, i.e., for $d<\dcrit$,
it is possible to capture the experiment of first choosing the random graph $G(n,d/n)$ and then sampling a $k$-coloring $\vec\tau$
uniformly at random by means of a different, much more innocent experiment.

In this latter experiment, we {\em first} choose a map $\SIGMA:\brk n\ra\brk k$ uniformly at random.
Then, we generate a graph $G(n,p',\SIGMA)$ on $\brk n$ by connecting any two vertices $v,w\in\brk n$ such that $\SIGMA(v)\neq\SIGMA(w)$ with probability $p'$ independently.
If $p'=dk/(k-1)$ is chosen so that the expected number of edges is the same as in $G(n,d/n)$
	and if $\Phi_k(d)=k(1-1/k)^{d/2}$, then this so-called {\em planted model} is a good approximation to the 
``difficult'' experiment of first choosing $G(n,d/n)$ and then picking a random $k$-coloring.
In particular, we expect that 
	$$\Erw[|\cC(G(n,p',\SIGMA),\SIGMA)|^{1/n}]\sim\Erw[|\cC(G(n,d/n),\TAU)|^{1/n}],$$
i.e., that the suitably scaled cluster size in the planted model is about the same as the cluster size in $G(n,d/n)$.
Hence, $\dcrit$ should mark the point where $\Erw[|\cC(G(n,p',\SIGMA),\SIGMA)|^{1/n}]$ equals $k(1-1/k)^{d/2}$.
The following proposition verifies that this is indeed so.
Let us write $\G=G(n,p',\SIGMA)$ for the sake of brevity.

\begin{proposition}\label{Lemma_plantedCluster}
Assume that $(2k-1)\ln k-2\leq d\leq(2k-1)\ln k$ and set
	\begin{equation}\label{eqLemma_plantedCluster1}
	p'=d'/n\quad\mbox{with }d'=\frac{dk}{k-1}.
	\end{equation}
\begin{enumerate}
\item
If
		\begin{equation}\label{eqLemma_plantedCluster2}
		\lim_{\eps\searrow0}\liminf_{n\ra\infty}\pr\brk{|\cC(\G,\vec\sigma)|^{1/n}\leq k(1-1/k)^{d/2}-\eps}=1,
		\end{equation}
	then $d\leq\dcrit$.
\item 
Conversely, if
		\begin{equation}\label{eqLemma_plantedCluster3}
		\lim_{\eps\searrow0}\liminf_{n\ra\infty}\pr\brk{|\cC(\G,\vec\sigma)|^{1/n}\geq k(1-1/k)^{d/2}+\eps}=1,
		\end{equation}
then $\limsup_{n\ra\infty}\Erw[Z_k(G(n,d/n))^{1/n}]<k(1-1/k)^{d/2}$.
In particular, $d\geq\dcrit$.
\end{enumerate}
\end{proposition}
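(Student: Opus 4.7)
The plan is to deploy the planting trick from~\cite{Barriers}, which expresses the planted law as an explicit tilt of $G(n,d/n)$: for ``balanced'' graphs one has, up to subexponential errors,
\begin{equation*}
\pr_{\G}[\G = G_0] \;=\; \frac{Z_k(G_0)}{\Erw[Z_k(G(n,d/n))]}\,\pr_{G(n,d/n)}[G(n,d/n)=G_0],
\end{equation*}
and conditionally on $\G$ the planted coloring $\SIGMA$ is (almost) uniform over the proper colorings of $\G$. The normalisation $p'=dk/((k-1)n)$ is chosen so that the expected edge counts match, whence the first moment gives $\Erw[Z_k(G(n,d/n))]=k^n(1-1/k)^{dn/2+o(n)}$, consistent with the upper bound $\Erw[Z_k(G(n,d/n))^{1/n}]\leq k(1-1/k)^{d/2}+o(1)$ from Proposition~\ref{Lemma_dcrit}.

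For Part~(2), since $|\cC(\G,\SIGMA)|\leq Z_k(\G)$, hypothesis~\eqref{eqLemma_plantedCluster3} implies that $Z_k(\G)^{1/n}\geq k(1-1/k)^{d/2}+\eps$ whp under $\pr_\G$. Using $Z_k(\G)^{1/n}\leq k$ to handle the complementary event and Jensen's inequality in the form $\Erw_\G[Z_k(\G)]\geq\Erw_\G[Z_k(\G)^{1/n}]^n$, one obtains
\begin{equation*}
\frac{\Erw[Z_k^2]}{\Erw[Z_k]^2}\;=\;\frac{\Erw_\G[Z_k(\G)]}{\Erw[Z_k]}(1+o(1))\;\geq\;\bc{1+\frac{\eps}{k(1-1/k)^{d/2}}}^{n}(1-o(1))^n,
\end{equation*}
an exponential second-moment blow-up. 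By Markov, the event $\{Z_k\geq(k(1-1/k)^{d/2}+\eps)^{n}\}$ has only exponentially small probability, yet (after translation via the planting identity) carries essentially all of $\Erw[Z_k]$; combined with the concentration of $Z_k^{1/n}$ from~\cite{Barriers}, this forces the typical value of $Z_k^{1/n}$ to lie below $k(1-1/k)^{d/2}$ by a definite amount, giving $\limsup_n \Erw[Z_k^{1/n}]<k(1-1/k)^{d/2}$ (and hence $d\geq\dcrit$) via a standard subsequence argument.

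For Part~(1), the target is the matching lower bound $\liminf_n\Erw[Z_k(G(n,d/n))^{1/n}]\geq k(1-1/k)^{d/2}$, obtained via a conditional second moment method. Let $\Zgood(G)$ count proper colorings $\sigma$ of $G$ that are balanced, satisfy the cluster bound $|\cC(G,\sigma)|^{1/n}\leq k(1-1/k)^{d/2}-\eps$, and additionally possess the frozen-core regularity of~\cite{Molloy,Danny}. The planting trick applied to~\eqref{eqLemma_plantedCluster2} yields $\Erw[\Zgood]\sim\Erw[Z_k]$. Rewriting $\Erw[\Zgood^2]/\Erw[\Zgood]=\Erw_{\G,\SIGMA}[\Zgood(\G)]$ and splitting $\Zgood(\G)=Z_{\mathrm{close}}+Z_{\mathrm{far}}$ according to whether a counted $\tau$ lies in $\cC(\G,\SIGMA)$ or not, the hypothesis bounds $Z_{\mathrm{close}}\leq|\cC(\G,\SIGMA)|\leq(k(1-1/k)^{d/2}-\eps)^n$ whp, while $\Erw_{\G,\SIGMA}[Z_{\mathrm{far}}]$ is estimated by the overlap-based second moment analysis of~\cite{AchNaor,Danny} combined with the frozen-core restriction of~\cite{Molloy}. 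Assembling yields $\Erw[\Zgood^2]=O(\Erw[\Zgood])^2$, and Paley--Zygmund plus the concentration result of~\cite{Barriers} promote this to $Z_k(G(n,d/n))^{1/n}\geq k(1-1/k)^{d/2}(1-o(1))$ whp, whence $d\leq\dcrit$.

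The main obstacle is the far-overlap estimate in Part~(1): controlling $\Erw_{\G,\SIGMA}[Z_{\mathrm{far}}]$ in the entire range $d\in[(2k-1)\ln k-2,(2k-1)\ln k]$ goes beyond the unaugmented second moment of~\cite{AchNaor,Danny}. One must impose the frozen-core restriction of~\cite{Molloy} simultaneously with the cluster bound, so that overlaps $\rho(\sigma,\tau)$ outside a tiny neighbourhood of the uniform overlap are either eliminated by the core (large-overlap regime) or absorbed into the close-overlap estimate provided by the hypothesis (small-overlap regime); merging these two layers of conditioning, and verifying that the resulting hybrid second moment is bounded in the full range, is the technical heart of the proposition.
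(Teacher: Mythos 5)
The decisive gap is in part (2). From~(\ref{eqLemma_plantedCluster3}) you correctly deduce that the planted measure puts almost all of its mass on graphs with $Z_k\geq\lambda^n$, where $\lambda=k(1-1/k)^{d/2}+\eps$, which via the planting identity says that the event $\{Z_k\geq\lambda^n\}$ carries essentially all of $\Erw[Z_k]$ and that the second moment ratio blows up exponentially. But neither fact forces $\limsup_n\Erw[Z_k(G(n,d/n))^{1/n}]<k(1-1/k)^{d/2}$. Indeed, writing $\Lambda=k(1-1/k)^{d/2}$, the scenario ``$Z_k=\Lambda^n\eul^{-n^{2/3}}$ \whp, while an event of probability $\Theta((\Lambda/\lambda)^n)$ has $Z_k\geq\lambda^n$ and carries the expectation'' is fully consistent with your ingredients, yet in it $Z_k^{1/n}\to\Lambda$ in probability and $\Erw[Z_k^{1/n}]\to\Lambda$. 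Concretely, $\Erw[Z_k^{1/n}]\leq\Erw[Z_k^{1/n}\vecone_{Z_k<\lambda^n}]+k\,\pr[Z_k\geq\lambda^n]$: Markov kills the second term, but the first is only bounded by $\lambda=\Lambda+\eps$, and knowing that $\{Z_k<\lambda^n\}$ contributes negligibly to $\Erw[Z_k]$ says nothing about its contribution to $\Erw[Z_k^{1/n}]$. What is actually needed (see \Lem~\ref{Lemma_antiPlanting}) is an event $\cE_n$ that holds \whp\ in the null model but has probability $\eul^{-\Omega(n)}$ in the planted model, i.e.\ \emph{exponential-rate} concentration; neither $Z_k$ nor $|\cC|$ is known to concentrate at that rate, and the in-probability concentration from~\cite{Barriers} that you invoke is far too weak. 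This is precisely why the paper abandons $Z_k$ and works with the Potts partition function $Z_{\beta,k}$: $\ln Z_{\beta,k}$ is $\beta$-Lipschitz under single edge changes, so McDiarmid gives $\eul^{-\Omega(n)}$ concentration (\Lem~\ref{Cor_ZAzuma}), the annealed bound (\Lem~\ref{Lemma_upperBoundAnnealed}) handles the null model, and a surgery/coupling argument (\Lem s~\ref{Lemma_pickAndChoose} and~\ref{Lemma_sigmaRandom}) transfers the planted lower bound from a fixed balanced coloring to the random $\vec\sigma$. None of this appears in your outline, and without it part (2) does not follow.

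Part (1) follows the same general strategy as the paper (planting to control a constrained first moment, a second moment, then Paley--Zygmund plus a boost), but you explicitly leave the ``hybrid'' far-overlap second-moment estimate as an unresolved obstacle, and that is the whole content of the step. The paper never performs such a computation: the cluster-size bound in hypothesis~(\ref{eqLemma_plantedCluster2}) is exactly condition {\bf T3} in the definition of a \emph{tame} coloring, and \Lem~\ref{Lemma_smm} from~\cite{Danny} is a \emph{conditional} statement---if $\liminf_n\Erw[\Zkg]/(k^n(1-1/k)^m)>0$ then $\Erw[\Zkg^2]=O(\Erw[\Zkg])^2$---valid throughout the relevant range of $d$. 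Hence the only thing to verify is the first-moment condition~(\ref{eqDanny}), which the paper extracts from~(\ref{eqLemma_plantedCluster2}) together with separability of the planted coloring (\Lem~\ref{Lemma_separable}) and balancedness, after passing to a slightly larger density $d^*$ and edge count $m^*$ so that the planted cluster bound implies {\bf T3}; the conclusion $d\leq\dcrit$ then comes from \Lem~\ref{Lemma_Xsmm}, which combines Paley--Zygmund with the sharp-threshold statement \Lem~\ref{Lemma_Xfriedgut} (not plain concentration). So your part (1) is salvageable, but as written it defers its technical heart, whereas the paper's point is that this heart is already available as a black box once the hypothesis is routed through the tame-coloring count.
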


\noindent
The proof of \Prop~\ref{Lemma_plantedCluster} is given in \Sec~\ref{Sec_plantedCluster}.

\subsection{The second thread.}

Our next aim is to ``solve'' the fixed point problem for $\cF_{d,k}$ to an extent that 
gives the fixed point an explicit combinatorial interpretation.
This combinatorial interpretation is  in terms of a certain random tree process, associated with a concept of ``legal colorings''.
Specifically, we consider a multi-type Galton-Watson branching process.
Its set of types is
	$$\cT=\cbc{(i,\ell):i\in\brk k,\, \ell\subset\brk k,\, i\in\ell}.$$
The intuition is that $i$ is a ``distinguished color'' and that $\ell$ is a set of ``available colors''.
The branching process is further parameterized by a vector $\vec q=(q_1,\ldots,q_k)\in[0,1]^k$ such
that $q_1+\cdots+q_k\leq 1$.
Let $d'=dk/(k-1)$ and
	$$q_{i,\ell}=\frac1k\prod_{j\in\ell\setminus\cbc i}\exp(-q_jd')\prod_{j\in\brk k\setminus\ell}
		1-\exp(-q_jd')\qquad\mbox{for }(i,\ell)\in\cT.$$
Then
	$$\sum_{(i,\ell)\in\cT}q_{i,\ell}=1.$$
Further, for each $(i,\ell)\in\cT$ such that $|\ell|>1$ we define $\cT_{i,\ell}$ as the set of all $(i',\ell')\in\cT$ such that
$\ell\cap\ell'\neq\emptyset$ and $|\ell'|>1$.
In addition, for $(i,\ell)\in\cT$ such that $|\ell|=1$ we set $\cT_{i,\ell}=\emptyset$.

The branching process $\GW(d,k,\vec q)$ starts with a single individual, whose type $(i,\ell)\in\cT$ is chosen from
the probability distribution $(q_{i,\ell})_{(i,\ell)\in\cT}$.
In the course of the process, each individual of type $(i,\ell)\in\cT$ spawns a Poisson number $\Po(d'q_{i',\ell'})$
of offspring of type $(i',\ell')$ for each $(i',\ell')\in\cT_{i,\ell}$.
In particular, only the initial individual may have a type $(i,\ell)$ with $|\ell|=1$, in which case it does not have any offspring.
Let $1\leq \cN\leq\infty$ be the progeny of the process (i.e., the total number of individuals created).

We  are going to view $\GW(d,k,\vec q)$ as a distribution over trees endowed with some extra information.
Let us define a {\em decorated graph} as a graph $T=(V,E)$ together with 
a map $\vartheta:V\ra\cT$ such that for each edge $e=\cbc{v,w}\in E$ we have $\vartheta(w)\in\cT_{\vartheta(v)}$.
Moreover, a {\em rooted decorated graph} is a decorated graph $(T,\vartheta)$ together with a distinguished vertex $v_0$, the {\em root}.
Further, an {\em isomorphism} between two rooted decorated graphs $T$ and $T'$ 
is an isomorphism of the underlying graphs that preserves the root and the  types of the vertices. 

Given that $\cN<\infty$,
 the branching process $\GW(d,k,\vec q)$ canonically induces a probability distribution over
isomorphism classes of rooted decorated trees.
Indeed, we obtain a tree whose vertices are all the individuals created in the course of the branching process and
	where there is an edge
	between each individual and its offspring.
The individual from which the process starts is the root.
Moreover, by construction each individual $v$ comes with a type $\thet(v)$.
We denote the (random) isomorphism class of this tree by $\T_{d,k,\vec q}$.
(It is natural to view the branching process as a probability distribution over {\em isomorphism classes}
as the process does not specify the order in which offspring is created.)

To proceed, we define a {\em legal coloring} of a decorated graph $(G,\vartheta)$ as a map $\tau:V(G)\ra\brk k$
such that $\tau$ is a $k$-coloring of $G$ and such that for any type $(i,\ell)\in\cT$ and
for any vertex $v$ with $\vartheta(v)=(i,\ell)$ we have $\tau(v)\in \ell$.
Let $\cZ(G,\vartheta)$ denote the number of legal colorings.

Since $\cZ(G,\vartheta)$ is isomorphism-invariant, we obtain the integer-valued random variable $\cZ(\T_{d,k,\vec q})$.
We have $\cZ(\T_{d,k,\vec q})\geq1$ with certainty
because a legal coloring $\tau$ can be constructed by coloring each vertex with its distinguished color (i.e., setting $\tau(v)=i$ if $v$ has type $(i,\ell)$).
Hence, $\ln\cZ(\T_{d,k,\vec q})$ is a well-defined non-negative random variable.
Additionally, we write $|\T_{d,k,\vec q}|$ for the number of vertices in $\T_{d,k,\vec q}$.

Finally, consider a rooted, decorated tree $(T,\thet,v_0)$ and let
 $\vec\tau$ be a legal coloring of $(T,\thet,v_0)$ chosen uniformly at random.
Then the color $\vec\tau(v_0)$ of the root is a random variable with values in $\brk k$.
Let $\mu_{T,\thet,v_0}\in\Omega$ denote its distribution.
Clearly, $\mu_{T,\thet,v_0}$ is invariant under isomorphisms.
Consequently, the distribution $\mu_{\T_{d,k,\vec q}}$ of the color of the root of a tree in the random isomorphism class 
$\T_{d,k,\vec q}$ is a well-defined $\Omega$-valued random variable.
Let $\pi_{d,k,\vec q}\in\cP$ denote its distribution.
Then we can characterise the frozen fixed point of $\cF_{d,k}$ as follows.

\begin{proposition}\label{Prop_fix}
Suppose that $d\geq(2k-1)\ln k-2$.
\begin{enumerate}
\item  The function
		\begin{equation}\label{eqSimpleFix}
		q\in[0,1]\mapsto(1-\exp(-dq/(k-1)))^{k-1}
		\end{equation}
	has a unique fixed point $q^*$ in the interval $[2/3,1]$.
	Moreover, with
		\begin{equation}\label{eqq*}
		\vec q^*=k^{-1}(q^*,\ldots,q^*)\in[0,1]^k
		\end{equation}
	the branching process $\GW(d,k,\vec q^*)$ is sub-critical.
	Thus, $\pr[\cN<\infty]=1$.
\item 	 The map $\cF_{d,k}$ has precisely one frozen fixed point, namely $\pi_{d,k,\vec q^*}$.
\item We have
	$\FF_{d,k}(\pi_{d,k,\vec q^*})=\Erw\brk{\frac{\ln\cZ(\T_{d,k,\vec q^*})}{|\T_{d,k,\vec q^*}|}}.$
\item The function $\Sigma_k$ from~(\ref{eqThm_condPhi}) 
		is strictly decreasing and continuous on $[(2k-1)\ln k-2,(2k-1)\ln k-1]$ and has
		a unique zero $\dc$ in this interval.
\end{enumerate}
\end{proposition}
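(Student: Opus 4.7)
The plan is to address the four parts of the proposition in order: parts (1) and (2) set up the tree recursion and pin down the frozen fixed point, part (3) identifies $\FF_{d,k}$ at this fixed point with a Bethe expectation over $\T_{d,k,\vec q^*}$, and part (4) uses asymptotic analysis to locate $\dc$.

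For (1), consider the scalar map $f(q)=(1-\exp(-dq/(k-1)))^{k-1}$. Writing $\alpha=d/(k-1)=(2+o_k(1))\ln k$, elementary estimates yield $f(1)<1$, $f(2/3)>2/3$, and $f'(q)=\alpha(k-1)\exp(-\alpha q)(1-\exp(-\alpha q))^{k-2}=\tilde O_k(1/k)$ uniformly on $[2/3,1]$. Hence $f$ is a strict contraction on $[2/3,1]$ with a unique fixed point $q^*$. For sub-criticality of $\GW(d,k,\vec q^*)$, the colour symmetry of $\vec q^*$ lets me collapse the mean offspring matrix according to the sizes $|\ell|,|\ell'|$: a direct computation gives $r_s:=\sum_{(i',\ell'):|\ell'|=s}q_{i',\ell'}=\binom{k-1}{s-1}\beta^{s-1}(1-\beta)^{k-s}$ with $\beta=\exp(-dq^*/(k-1))=\tilde O_k(k^{-2})$, so the child-size distribution concentrates on $s=1$ (terminal) and $s=2$ with $r_2=\tilde O_k(1/k)$. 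Combined with $1-\binom{k-t}{s}/\binom{k}{s}=\Theta_k(st/k)$ for moderate $t$ and $s\geq 2$, the collapsed matrix on non-terminal types has spectral radius $\tilde O_k(\ln k/k)=o_k(1)$, yielding $\pr[\cN<\infty]=1$.

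For (2), that $\pi_{d,k,\vec q^*}$ is a fixed point of $\cF_{d,k}$ follows from the standard tree BP recursion: the root marginal $\mu_{T,\thet,v_0}$ satisfies $\mu_{T,\thet,v_0}=\cB[\mu_{T_1,\thet_1,v_1},\ldots,\mu_{T_\gamma,\thet_\gamma,v_\gamma}]$ where $T_1,\ldots,T_\gamma$ are the subtrees at the children of the root, and the Poisson reweighting in~(\ref{eqFixedPoint2}) matches the planted law of those children weighted by the local partition factor $\sum_h\prod_j 1-\mu_j(h)$. The frozen property is built in: a root of type $(i,\{i\})$ produces $\mu_{T,\thet,v_0}=\atom_i$, and by (1) this happens with total probability $q^*\geq 2/3$. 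For uniqueness, colour symmetry forces any frozen fixed point $\pi$ to satisfy $\pi(\{\atom_h\})=q/k$ for each $h$ and some $q\geq 2/3$; projecting $\cF_{d,k}[\pi]=\pi$ onto the Dirac component reduces to $q=f(q)$, pinning $q=q^*$ by (1). The remaining non-Dirac part of $\pi$ is then determined by iterating $\cF_{d,k}$, whose dynamics contract on the orthogonal complement to the frozen component by the same sub-criticality estimate as in (1).

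For (3), apply the tree Bethe free-entropy identity: on any finite decorated tree $(T,\thet)$, $\ln\cZ(T,\thet)$ telescopes into a sum of local vertex and edge contributions expressed through the exact BP messages, which at the fixed point coincide with the integrands of $\FF_{d,k}^v$ and $\FF_{d,k}^e$ in (\ref{eqBethev}) and (\ref{eqBethee}). Taking expectation over $\T_{d,k,\vec q^*}$ and dividing by $|\T_{d,k,\vec q^*}|$ converts this telescoped sum into expectations of a single vertex and a single uniformly random edge from the stationary distribution on types, yielding exactly $\FF_{d,k}(\pi_{d,k,\vec q^*})$.

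For (4), continuity of $\Sigma_k$ on the interval follows from continuity of $q^*$ in $d$ (by (1)) and continuity of $\FF_{d,k}$ in its arguments. To locate the zero, I expand $\Sigma_k$ asymptotically in $k$ following \cite[Appendix~C]{LenkaFlorent}: at $d=(2k-1)\ln k-2$ one obtains $\Sigma_k(d)>0$ (matching the second-moment lower-bound regime from~\cite{Danny}), and at $d=(2k-1)\ln k-1$ one obtains $\Sigma_k(d)<0$ (matching the first-moment upper bound from~\cite{Covers}). Differentiating $\Sigma_k$ in $d$ and simplifying via the fixed-point equation for $q^*$ yields a strictly negative derivative throughout the interval, so $\Sigma_k$ is strictly decreasing with a unique zero $\dc$. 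The main obstacle will be precisely this step: the asymptotic expansion of $\FF_{d,k}(\pi^*_{d,k})$ must be carried out sharply enough, and controlled uniformly in $d$ over the interval, to pin down the sign of $\Sigma_k$ at both endpoints and of its derivative throughout.
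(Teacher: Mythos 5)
Your parts (1) and (3) follow essentially the paper's route (scalar contraction for $q^*$, a collapsed offspring matrix for sub-criticality, and the Bethe identity on finite trees combined with re-rooting invariance of the Galton--Watson tree), but part (2) contains a genuine gap at exactly the point where the paper has to work hardest: uniqueness of the frozen fixed point. Your argument is that projecting onto the Dirac component pins the hard fields at $q^*/k$, and that "the remaining non-Dirac part of $\pi$ is then determined by iterating $\cF_{d,k}$, whose dynamics contract on the orthogonal complement to the frozen component by the same sub-criticality estimate as in (1)." No such contraction of $\cF_{d,k}$ on the infinite-dimensional space $\cP$ is established (nor is it clear in which metric it would hold), and sub-criticality of a finite-type branching process does not by itself transfer to a contraction statement for the distributional operator. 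The paper needs two further ingredients that your sketch skips. First, one must show that \emph{every} frozen fixed point puts the correct mass not only on the extreme points $\atom_h$ but on every face $\Omega_\ell$, i.e.\ $\rho_{i,\ell}(\pi)=kq^*_{i,\ell}$ for all $(i,\ell)$; this explicit Poisson computation (the paper's Lemma~\ref{Lemma_hardFields}, via Lemma~\ref{Lemma_separateColors} and the colour-balance identity of Lemma~\ref{Cor_balanced_fix}) is what legitimizes decomposing $\pi$ into conditional measures $\pi_{i,\ell}$ on the faces and rewriting the fixed point equation as $\widetilde\cF_{d,k}[(\pi_{i,\ell})]=(\pi_{i,\ell})$ with the \emph{correct} Poisson weights $d'q^*_{i',\ell'}$. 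Second, uniqueness of the fixed point of $\widetilde\cF_{d,k}$ is not obtained by a contraction of the operator but by the broadcast argument of Lemma~\ref{lemma_wtcF_unicity}: any fixed point reproduces itself under the depth-$t$ experiment on $\T_{i,\ell}$ with boundary condition drawn from the fixed point, whence for two fixed points $\norm{\pi_{i,\ell}-\pi'_{i,\ell}}_{TV}\leq 2\pr\brk{|\T_{i,\ell}|\geq t}\to0$ because the tree is a.s.\ finite. This is where sub-criticality actually enters, and it is a different (and weaker) use of it than the spectral-norm contraction you invoke; as stated, your step would need to be replaced by something like this argument.

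In part (4) you correctly identify the bottleneck, but note that the monotonicity claim is not a routine differentiation: the paper's proof hinges on the bound $\frac{\partial}{\partial d}\Erw\brk{\ln\cZ(\T_{d,k,\vec q^*})/|\T_{d,k,\vec q^*}|}=\tilde O_k(k^{-2})$ (Lemma~\ref{Lemma_diffTree}), proved by an explicit coupling of the two decorated trees at nearby values of $d$, together with $\frac{\dd q^*}{\dd d}=\Theta_k(k^{-2})$; the strict decrease then comes from comparing this with the $\Theta_k(1/k)$ derivative of $\ln k+\frac d2\ln(1-1/k)$, and the sign change on the interval follows from the single evaluation $\Erw\brk{\ln\cZ(\T)/|\T|}=(\ln 2+o_k(1))/k$ near $d=(2k-1)\ln k-2\ln2$ rather than from separate endpoint expansions. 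Your plan is compatible with this, but without an argument of the Lemma~\ref{Lemma_diffTree} type (or some other uniform-in-$d$ control of $\FF_{d,k}(\pi^*_{d,k})$), the claimed strict negativity of $\Sigma_k'$ remains unproven.
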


\noindent
The function~(\ref{eqSimpleFix}) and its fixed point also occur in the physics work~\cite{LenkaFlorent}.
The proof of \Prop~\ref{Prop_fix} can be found in \Sec~\ref{Sec_fix}.

\subsection{Tying up the threads}
To prove that $\dc=\dcrit$, we establish a connection between
the random tree $\T_{d,k,\vec q^*}$ and the random graph $\G$ with planted coloring $\SIGMA$.
We start by giving a recipe for computing the cluster size $|\cC(\G,\SIGMA)|$, and then show that the random tree process ``cooks" it.

Computing the cluster size hinges on a close understanding of its combinatorial structure.
As hypothesised in physics work~\cite{MM} and established rigorously in~\cite{Barriers,Covers,Molloy},
typically many vertices $v$ are ``frozen'' in $\cC(\G,\SIGMA)$, i.e., $\tau(v)=\tau'(v)$
for any two colorings $\tau,\tau' \in \cC(\G,\SIGMA)$.
More generally, we consider for each vertex $v$ the set 
	$$\ell(v)=\cbc{\tau(v):\tau\in\cC(\G,\vec\sigma)}$$
of colors that $v$ may take in colorings $\tau$ that belong to the cluster.
Together with the ``planted'' color $\SIGMA(v)$, we can thus assign each vertex $v$ a type $\thet(v)=(\SIGMA(v),\ell(v))$.
This turns $\G$ into a decorated graph $(\G,\thet)$.

By construction, each coloring $\tau\in \cC(\G,\SIGMA)$ is a legal coloring of the decorated graph $\G$.
Conversely, we will see that \whp\ any legal coloring of $(\G,\thet)$ belongs to the cluster $\cC(\G,\SIGMA)$.
Hence, computing the cluster size $|\cC(\G,\SIGMA)|$ amounts to calculating the number $\cZ(\G,\thet)$ of legal colorings of $\G,\thet$.

This calculation is facilitated by the following observation.
Let $\widetilde\G$ be the graph obtained from $\G$ by deleting
all edges $e=\cbc{v,w}$ that join two vertices such that $\ell(v)\cap\ell(w)=\emptyset$.
Then any legal coloring $\tau$ of $\widetilde\G$ is a legal coloring of $\G$,
	because $\tau(v)\in\ell(v)$ for any vertex $v$.
Hence, $\cZ(\G,\thet)=\cZ(\widetilde\G,\thet)$.

Thus, we just need to compute $\cZ(\widetilde\G,\thet)$.
This task is much easier than computing $\cZ(\G,\thet)$ directly
because $\widetilde\G$ turns out to have {\em significantly} fewer edges than $\G$ \whp\
More precisely, \whp\ $\widetilde\G$ (mostly) consists of connected components that are trees of bounded size.
In fact, in a certain sense the distribution of the tree components
converges to that of the decorated random tree $\T_{d,k,\vec q_*}$.
In effect, we obtain

\begin{proposition}\label{Prop_cluster}
Suppose that $d\geq(2k-1)\ln k-2$ and let $p'$ be as in~(\ref{eqLemma_plantedCluster1}).
Let $\vec q^*$ be as in~(\ref{eqq*}).
Then the sequence 
	$\{\frac1n\ln|\cC(\G,\SIGMA)|\}_n$ 
converges  to $\Erw \left[\frac{\ln\cZ(\T_{d,k,\vec q^*})}{|\T_{d,k,\vec q^*}|}\right]$ in probability.
\end{proposition}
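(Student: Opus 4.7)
My plan is to identify $|\cC(\G,\SIGMA)|$ with the number of legal colorings of a decorated version of $\G$, thin this decorated graph to an almost-forest $\widetilde\G$, and then use local convergence of $\widetilde\G$ to the Galton-Watson tree $\T_{d,k,\vec q^*}$ to read off the limit.

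First, for each vertex $v$ of $\G$ I would set $\ell(v)=\{\tau(v):\tau\in\cC(\G,\SIGMA)\}$ and $\thet(v)=(\SIGMA(v),\ell(v))$, calling $v$ frozen when $|\ell(v)|=1$. As in the core/whittling analyses of~\cite{Molloy,Barriers,Danny}, $\ell(v)$ can be computed by the explicit recursion that deletes a color $j\neq\SIGMA(v)$ from $\ell(v)$ as soon as every $\SIGMA$-coloured-$j$ neighbour of $v$ has been certified frozen. For $d\geq(2k-1)\ln k-2$ this recursion stabilizes quickly, and a direct calculation shows that the probability that a fixed vertex of distinguished color $i$ still has some color $j\neq i$ available is exactly the fixed point $q^*/k$ of the map in~(\ref{eqSimpleFix}). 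A local second-moment computation on the recursion then yields $\frac1n\#\{v:\thet(v)=(i,\ell)\}\to q_{i,\ell}$ in probability.

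Next I would prove $|\cC(\G,\SIGMA)|=\cZ(\G,\thet)=\cZ(\widetilde\G,\thet)$ with high probability. The inclusion $\cC(\G,\SIGMA)\subseteq\{\text{legal colorings of }(\G,\thet)\}$ is tautological; the reverse inclusion is the key subtlety. Using the planted-model rigidity from~\cite{Molloy,Barriers}, any legal coloring $\tau$ of $(\G,\thet)$ must agree with $\SIGMA$ on the frozen part of each color class, which is a $(1-o_k(1))$-fraction of $\SIGMA^{-1}(i)$, so $\rho_{ii}(\tau,\SIGMA)\geq0.51/k$ and hence $\tau\in\cC(\G,\SIGMA)$. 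The equality $\cZ(\G,\thet)=\cZ(\widetilde\G,\thet)$ is immediate because every edge $\{v,w\}$ with $\ell(v)\cap\ell(w)=\emptyset$ is automatically satisfied by any coloring that sends $v\in\ell(v)$ and $w\in\ell(w)$.

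Finally, for the local-convergence step, conditional on $\thet$ the set of neighbours in $\widetilde\G$ of a vertex of type $(i,\ell)$ is, asymptotically, a Poisson collection indexed by $\cT_{i,\ell}$ with rates $d'q_{i',\ell'}$, matching the offspring distribution of $\GW(d,k,\vec q^*)$. Sub-criticality from Proposition~\ref{Prop_fix} guarantees that the component $C(\vec v)$ of a uniformly random vertex $\vec v$ is a finite decorated tree w.h.p., whose rooted isomorphism class converges to $\T_{d,k,\vec q^*}$ in distribution. Since $\widetilde\G$ is a forest on $n-o(n)$ vertices and $\cZ$ factorizes over components,
\[
\frac1n\ln|\cC(\G,\SIGMA)|=\frac1n\sum_v\frac{\ln\cZ(C(v),\thet)}{|C(v)|}+o(1),
\]
so local convergence combined with uniform integrability yields convergence in expectation to $\Erw[\ln\cZ(\T_{d,k,\vec q^*})/|\T_{d,k,\vec q^*}|]$. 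The two hardest points I anticipate are the rigidity identity in the second paragraph, which couples the global overlap-based definition of $\cC(\G,\SIGMA)$ with the purely local type decoration, and the upgrade from convergence in expectation to convergence in probability. For the latter I would deploy a Lipschitz/Azuma bound on the edge-exposure martingale of $\G$ in the spirit of~\cite{Lenka}, together with a first-moment estimate on short cycles in $\widetilde\G$ to ensure that non-tree components contribute $o(n)$ to $\ln\cZ$.
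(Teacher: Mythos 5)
Your high-level plan (decorate each vertex with the list $\cL(v)=\{\tau(v):\tau\in\cC(\G,\SIGMA)\}$, prune to an almost-forest, and read off the limit from local convergence to $\T_{d,k,\vec q^*}$) is exactly the paper's outline, and your second paragraph (legal colorings of the decorated graph agree with $\SIGMA$ on the frozen vertices, which are a $1-o_k(1)$ fraction of each color class by the core result, hence lie in $\cC(\G,\SIGMA)$) matches the paper's argument via \Prop~\ref{Prop_core}. The genuine gap is in your first paragraph: you assert that the lists $\cL(v)$ ``can be computed by the explicit recursion'' that removes colors once neighbours are certified frozen, and you treat this as a routine fact from the whittling literature. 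This is precisely the point where the proof is delicate, and the assertion is false as stated. The natural local recursion started from the planted coloring (Warning Propagation with initialization~(\ref{eqWPini})) can strictly overshoot: the paper exhibits a Kempe-chain configuration (an alternating $4$-cycle whose vertices see all other colors in the core) in which the recursion outputs singleton lists although the vertices are not frozen in $\cC(\G,\SIGMA)$; conversely, certifying that a vertex really is frozen cannot be done by the recursion alone and requires seeding from the core. Because of this, the paper does not identify $\cL(v)$ with the output of one recursion. Instead it runs Warning Propagation twice, once from the planted initialization (lists $L(v)$, giving $\cZ(\GGWP(\SIGMA))\leq|\cC(\G,\SIGMA)|$ via the core) and once initialized only from the core (lists $L'(v)\supseteq\cL(v)$ \whp, giving the upper bound $\cZ(\GGWP'(\SIGMA))$), and then proves the two reduced graphs have identical components for all but $o(n)$ vertices (\Lem s~\ref{Lemma_T''}, \ref{Lemma_ab}, \ref{Lemma_UpperMatchesLower}: the components are \whp\ small trees, and on trees the two message sets coincide). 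This two-sided sandwich is the heart of the proof of \Prop~\ref{Lemma_WPupper} and is entirely missing from your proposal. (Your formulation of the recursion is also off: a color $j$ must be deleted from $\ell(v)$ as soon as \emph{some} neighbour is frozen to $j$, not when \emph{every} $\SIGMA$-coloured-$j$ neighbour is frozen.)

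A second, smaller gap is the concentration step. The decoration $\cL(v)$ is a global object, so an edge-exposure Azuma bound for $\frac1n\sum_v\ln\cZ(C(v),\thet)/|C(v)|$ does not come with an $O(1/n)$ Lipschitz constant: one edge flip can, through the lists, affect vertices at bounded distance but in neighbourhoods of polylogarithmic size, and through the cluster definition potentially further. The paper circumvents this by working with bounded-depth Warning Propagation statistics (which are genuinely local functions of an $O(1)$-radius neighbourhood), proving convergence of the type fractions and tree statistics by a fixed-point iteration (\Lem~\ref{Prop_empirical}, \Lem~\ref{Lemma_treeStat}) and applying Warnke's typical bounded-differences inequality (\Lem~\ref{Lemma_conc}) on the event that all neighbourhoods are small, and only afterwards passing to the limit $t\to\infty$ using \Lem~\ref{Lemma_omegaStable}. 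Your Poisson-offspring claim ``conditional on $\thet$'' is the desired conclusion of exactly this analysis, so as written it assumes what needs to be proved.
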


The proof of \Prop~\ref{Prop_cluster}, which can be found in \Sec~\ref{Sec_cluster}, is 
based on the precise analysis of a further message-passing algorithm called {\em Warning Propagation}.
Combining Propositions \ref{Lemma_plantedCluster} and \ref{Prop_cluster}, we see that $\dcrit$ is equal to $\dc$ given by Proposition \ref{Prop_fix}. Theorem~\ref{Thm_cond} then follows from Proposition~\ref{Lemma_dcrit}.

\section{Groundwork: the first and the second moment method
	}\label{Sec_dcrit}

\noindent
In this section we prove \Prop~\ref{Lemma_dcrit} and also lay the foundations for the proof of \Prop~\ref{Lemma_plantedCluster}.
Throughout this section, we always set $m=\lceil dn/2\rceil$ and we let $\gnm$ denote a random graph with vertex
set $V=[n]=\cbc{1,\ldots,n}$ and with precisely $m$ edges chosen uniformly at random.

\subsection{The first moment upper bound}
We start by deriving an upper bound on $\Phi_k(d)$ 
by computing the expected number of $k$-colorings.
To avoid fluctuations of the total number of edges, we work with the $\gnm$ model.

\begin{lemma}\label{Lemma_firstMoment}
We have $\Erw[\Zkc(\gnm)]=\Theta(k^n(1-1/k)^{m})$.
\end{lemma}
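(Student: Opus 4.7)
The plan is a first-moment computation that groups the $k^n$ maps $\sigma:[n]\to[k]$ by their colour-class profile. Set $M=\binom{n}{2}$, and for $\sigma$ with class sizes $(n_1,\ldots,n_k)$ let $F(\sigma)=\sum_{i=1}^{k}\binom{n_i}{2}$ be the number of monochromatic vertex pairs. Since $\sigma$ is a proper colouring of $\gnm$ iff none of the $m$ edges chosen uniformly from the $M$ pairs lies among the $F(\sigma)$ monochromatic ones, $\pr[\sigma\text{ proper}]=\binom{M-F(\sigma)}{m}\big/\binom{M}{m}$, whence
\[
\Erw[\Zkc(\gnm)]=\sum_{n_1+\cdots+n_k=n}\binom{n}{n_1,\ldots,n_k}\,\frac{\binom{M-F(\sigma)}{m}}{\binom{M}{m}}.
\]

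The key technical estimate driving both directions is the following sharp asymptotic for $\binom{qM}{m}/\binom{M}{m}$ when $m=\lceil dn/2\rceil$ and $q\in(0,1)$ is bounded away from the endpoints. Writing this ratio as $\prod_{j=0}^{m-1}(qM-j)/(M-j)$ and performing a second-order Taylor expansion of the logarithm in $j/M$ gives
\[
\ln\frac{\binom{qM}{m}}{\binom{M}{m}}=m\ln q+(1-1/q)\,\frac{m(m-1)}{2M}+O\!\left(\frac{m^3}{M^2}\right).
\]
Since $m^2/M\to d^2/2$, this yields $\binom{qM}{m}/\binom{M}{m}=q^{m}\cdot\Theta(1)$ uniformly in $q$ inside any fixed closed subinterval of $(0,1)$.

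For the upper bound I use convexity of $x\mapsto\binom{x}{2}$: for every profile $F(\sigma)\geq k\binom{\lfloor n/k\rfloor}{2}=M/k-O(n)$, hence $q(\sigma):=(M-F(\sigma))/M\leq 1-1/k+O(1/n)$ uniformly in $\sigma$. The key estimate then gives $\pr[\sigma\text{ proper in }\gnm]\leq C(d,k)\cdot(1-1/k)^m$, and summing over all $k^n$ maps $\sigma$ bounds $\Erw[\Zkc(\gnm)]\leq C(d,k)\cdot k^n(1-1/k)^m$.

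For the matching lower bound I restrict the sum to ``nearly balanced'' profiles with $|n_i-n/k|\leq\sqrt{n}$ for every $i$. For these $F(\sigma)=M/k+O(n)$, so $q(\sigma)=1-1/k+O(1/n)$, and the key estimate yields $\pr[\sigma\text{ proper in }\gnm]\geq c(d,k)\cdot(1-1/k)^m$. A local central limit theorem for the multinomial on $[k]$ (or a direct Stirling estimate on the profile) shows that a uniformly chosen map $[n]\to[k]$ has a nearly balanced profile with probability $\Omega(1)$, so at least $\Omega(k^n)$ of the maps contribute, giving $\Erw[\Zkc(\gnm)]\geq c'(d,k)\cdot k^n(1-1/k)^m$. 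The only point requiring care is the uniformity of the Taylor expansion for $q$ varying over an $O(1/n)$-neighbourhood of $1-1/k$; once that is in hand, everything else is a routine computation.
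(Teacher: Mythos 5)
Your proposal is correct and follows essentially the same route as the paper: the upper bound via convexity of $x\mapsto\binom x2$ (so $\Forb(\sigma)\geq(1-1/k)\binom n2$) together with an asymptotic for the ratio $\binom{N-\Forb(\sigma)}{m}/\binom Nm$, and the lower bound by restricting to balanced colorings with $|\,|\sigma^{-1}(i)|-n/k|\leq\sqrt n$, which contribute $\Omega(k^n)$ maps each proper with probability $\Omega((1-1/k)^m)$. Your explicit Taylor expansion of the binomial ratio just spells out what the paper compresses into an appeal to Stirling's formula.
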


\Lem~\ref{Lemma_firstMoment} is folklore.
We carry the proof out regardless to make a few observations that will be important later.
For a map $\sigma:\brk n\ra\brk k$ let
	\begin{equation}\label{eqForb}
	\Forb(\sigma)=\sum_{i=1}^k\bink{|\sigma^{-1}(i)|}{2}
	\end{equation}
be the number of ``forbidden pairs'' of vertices that are colored the same under $\sigma$.
By convexity,
	\begin{equation}\label{eqForb2}
	 \Forb(\sigma)\geq (1-1/k)N,\qquad\mbox{with $N=\bink n2$.}
	 \end{equation}
Hence,  using Stirling's formula, we find
	\begin{equation}\label{eqFirstMoment1}
	\pr\brk{\sigma\mbox{ is a $k$-coloring of }G(n,m)}=\bink{N-\Forb(\sigma)}{m}/\bink Nm\leq O((1-1/k)^m).
	\end{equation}
As there are $k^n$ possible maps $\sigma$ in total, the linearity of expectation and~(\ref{eqFirstMoment1}) imply
	$$\Erw[\Zkc(\gnm)]=O(k^n(1-1/k)^{m}).$$

To bound $\Erw[\Zkc(\gnm)]$ from below,  call $\sigma:\brk n\ra\brk k$ {\em balanced} if $|\sigma^{-1}(i)-\frac nk|\leq\sqrt n$ for all $i\in\brk k$.
Let $\Bal=\Bal_{n,k}$ be the set of all balanced $\sigma:\brk n\ra\brk k$.
For $\sigma\in\Bal$ we verify easily that $\cF(\sigma)=(1-1/k)N+O(n)$.
Thus, (\ref{eqFirstMoment1}) and Stirling's formula yield
	\begin{equation}\label{eqFirstMoment3}
	\pr\brk{\sigma\mbox{ is a $k$-coloring of }G(n,m)}=\Omega((1-1/k)^m)\qquad\mbox{ for any }\sigma\in\Bal.
	\end{equation}
As $\abs\Bal=\Omega(k^n)$ by Stirling, the linearity of expectation and~(\ref{eqFirstMoment3}) imply
	$\Erw[\Zkc(\gnm)]=\Omega(k^n(1-1/k)^{m})$, whence \Lem~\ref{Lemma_firstMoment} follows.

Letting $\Zkb$ denote the number of balanced $k$-colorings, we obtain from the above argument

\begin{corollary}\label{Cor_balanced}
For any $d\geq0$ we have $\Erw[\Zkb(G(n,m))]=\Theta(k^n(1-1/k)^{m})$.
\end{corollary}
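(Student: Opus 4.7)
The plan is to extract both bounds directly from the proof of \Lem~\ref{Lemma_firstMoment}, which in fact establishes more than it states. For the upper bound no new work is needed: balanced $k$-colorings are a subset of all $k$-colorings, so $\Zkb \leq \Zkc$ pointwise, and therefore
\[
\Erw[\Zkb(G(n,m))] \leq \Erw[\Zkc(G(n,m))] = O(k^n(1-1/k)^m)
\]
by \Lem~\ref{Lemma_firstMoment}.

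For the matching lower bound I would simply rerun the lower-bound half of the proof of \Lem~\ref{Lemma_firstMoment}, observing that that argument only ever considered balanced $\sigma$ in the first place. Explicitly, by linearity of expectation,
\[
\Erw[\Zkb(G(n,m))] = \sum_{\sigma \in \Bal} \pr[\sigma \text{ is a $k$-coloring of }G(n,m)].
\]
For each $\sigma \in \Bal$ we have $\Forb(\sigma) = (1-1/k)N + O(n)$, and hence by~(\ref{eqFirstMoment3}) the summand is $\Omega((1-1/k)^m)$. It remains only to check that $|\Bal| = \Omega(k^n)$. This is a routine multinomial/Stirling estimate: writing $|\Bal|$ as the sum of multinomial coefficients $\binom{n}{n_1,\ldots,n_k}$ over compositions with $|n_i - n/k| \leq \sqrt n$, each such term is $\Theta(k^n / n^{(k-1)/2})$ by Stirling, and there are $\Theta(n^{(k-1)/2})$ such terms, giving $|\Bal| = \Theta(k^n)$. (Equivalently, a uniformly random $\sigma:[n]\to[k]$ is balanced with probability $\Omega(1)$ by the local central limit theorem.)

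No genuine obstacle is present; the corollary is essentially the bookkeeping remark that the proof of \Lem~\ref{Lemma_firstMoment} never needed to consider unbalanced $\sigma$ on the lower-bound side. Multiplying $|\Bal| = \Omega(k^n)$ with the per-$\sigma$ bound $\Omega((1-1/k)^m)$ yields $\Erw[\Zkb(G(n,m))] = \Omega(k^n(1-1/k)^m)$, completing the proof.
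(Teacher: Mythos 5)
Your proposal is correct and follows essentially the same route as the paper: the upper bound via $\Zkb\leq\Zkc$ together with \Lem~\ref{Lemma_firstMoment}, and the lower bound by noting that the lower-bound half of that lemma's proof already only sums over balanced $\sigma$, using~(\ref{eqFirstMoment3}) and $|\Bal|=\Omega(k^n)$ by Stirling. Your explicit multinomial verification of $|\Bal|=\Theta(k^n)$ is just a fleshed-out version of the Stirling estimate the paper invokes.
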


\noindent
As a further consequence of \Lem~\ref{Lemma_firstMoment}, we obtain

\begin{corollary}\label{Cor_firstMoment}
For any $c>0$ we have $\limsup_{n\ra\infty}\Erw[Z_k(G(n,c/n))^{1/n}]\leq k(1-1/k)^{c/2}$.
\end{corollary}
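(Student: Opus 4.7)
The plan is to reduce the $G(n,c/n)$ expectation to Lemma~\ref{Lemma_firstMoment} via conditioning on the edge count, and to pull the $n$th root outside of the expectation using Jensen's inequality. Let $\vec m$ denote the number of edges of $G(n,c/n)$, so that conditional on $\vec m=m$ the graph $G(n,c/n)$ has the same distribution as $G(n,m)$. Then
\[
\Erw[Z_k(G(n,c/n))^{1/n}]=\Erw\brk{\Erw[Z_k(G(n,\vec m))^{1/n}\mid\vec m]}.
\]
Inside the conditional expectation the concavity of $x\mapsto x^{1/n}$ gives $\Erw[Z_k(G(n,m))^{1/n}]\le\Erw[Z_k(G(n,m))]^{1/n}$; combined with Lemma~\ref{Lemma_firstMoment} (with its hidden multiplicative constant $C=C(M)$ kept uniform for $m/n\in[0,M]$, as one reads off the Stirling computation in its proof), this yields
\[
\Erw[Z_k(G(n,m))^{1/n}]\le C^{1/n}\, k\, (1-1/k)^{m/n}\qquad\text{whenever }m\le Mn.
\]

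Next, since $\vec m$ is $\Bin\br{\binom{n}{2},c/n}$-distributed with mean $\mu\sim cn/2$, Chernoff's bound gives $\pr\brk{|\vec m-\mu|>\eps n}=\exp(-\Omega(n))$ for every fixed $\eps>0$. I split the outer expectation into a bulk part and a tail part. On the tail event $\{|\vec m-\mu|>\eps n\}$ I use the crude bound $Z_k(G)^{1/n}\le k$, whose contribution is at most $k\cdot\exp(-\Omega(n))=o(1)$. On the bulk event $\{|\vec m-\mu|\le\eps n\}$, the function $m\mapsto(1-1/k)^{m/n}$ is non-increasing in $m$, so the Jensen--Lemma~\ref{Lemma_firstMoment} bound yields
\[
\Erw\brk{Z_k(G(n,c/n))^{1/n}\mathbbm{1}_{\{|\vec m-\mu|\le\eps n\}}}\le C^{1/n}\, k\, (1-1/k)^{c/2-\eps},
\]
whose $n\to\infty$ limit is $k(1-1/k)^{c/2-\eps}$. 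Combining the two contributions and then letting $\eps\searrow0$ delivers the desired upper bound on $\limsup_n\Erw[Z_k(G(n,c/n))^{1/n}]$.

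The only non-trivial bookkeeping is verifying that the multiplicative constant hidden in the $\Theta$-bound of Lemma~\ref{Lemma_firstMoment} can be chosen uniformly over $m\in[0,Mn]$ for any fixed $M>c$; this follows directly from the Stirling estimates in the proof of that lemma, since all error terms there depend only on $m/n$. Aside from this, the argument is entirely routine: Jensen transfers the $n$th root outside the expectation, Chernoff kills the tail of $\vec m$, and monotonicity of $(1-1/k)^{m/n}$ converts the estimate on the bulk event into a deterministic bound.
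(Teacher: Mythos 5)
Your proof is correct and follows essentially the same route as the paper: Jensen's inequality to pull the $n$th root outside, the first-moment bound of \Lem~\ref{Lemma_firstMoment}, a Chernoff bound on the fluctuating edge count, and an $\eps\searrow0$ step at the end. The only (minor) difference is bookkeeping: you condition on the exact edge count and therefore need the implicit constant in \Lem~\ref{Lemma_firstMoment} to be uniform over $m\leq Mn$ (which indeed it is), whereas the paper fixes a single $m=\lceil (c-\eps)n/2\rceil$ and uses the monotone coupling that adding edges can only decrease $Z_k$, so it never needs that uniformity.
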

\begin{proof}
\Lem~\ref{Lemma_firstMoment} and Jensen's inequality yield
	\begin{equation}\label{eqFirstMoment4}
	\Erw[Z_k(G(n,m))^{1/n}]\leq \Erw[Z_k(G(n,m))]^{1/n}\leq k(1-1/k)^{d/2}+o(1).
	\end{equation}
Now, let $c>0$ and set $d=c-\eps$ for some $\eps>0$.
The number of edges in $G(n,c/n)$ is binomially distributed with mean $(1+o(1))cn/2=m+\Omega(n)$.
Hence, by the Chernoff bound the probability of the event $\cA$ that $G(n,c/n)$ has at least $m$ edges tends to $1$ as $n\ra\infty$.
Because adding further edges can only decrease the number of $k$-colorings and since the number of $k$-colorings is trivially bounded by $k^n$, we obtain
from~(\ref{eqFirstMoment4}) that
	\begin{eqnarray*}
	\Erw[Z_k(G(n,c/n))^{1/n}]&\leq&
		\Erw[Z_k(G(n,c/n))^{1/n}\cdot\vecone_\cA]+\pr\brk{\cA\mbox{ does not occur}}\cdot k\\
		&\leq&\Erw[Z_k(G(n,m))^{1/n}]+o(1)\leq k(1-1/k)^{d/2}+o(1).
	\end{eqnarray*}
Consequently, 
$\limsup\Erw[Z_k(G(n,c/n))^{1/n}]\leq k(1-1/k)^{d/2}$.
This holds for any $d>c$.
Hence, letting $\eps=d-c\ra0$, we see that
$\limsup\Erw[Z_k(G(n,c/n))^{1/n}]\leq k(1-1/k)^{c/2}$, as desired.
\end{proof}

\subsection{The second moment lower bound.}
The main technical step in the article~\cite{Danny} that yields the lower bound~(\ref{eqThrLoc})
on $\dk$ is a second moment argument for a random variable $\Zkg$ related to the number of $k$-colorings.
We are going employ this second moment estimate to bound $\Zkc(G(n,d/n))$ from below.

The random variable $\Zkg$ counts $k$-colorings with some additional properties.
Suppose that $\sigma$ is a balanced $k$-coloring of a graph $G$ on $V=\brk n$.
We call $\sigma$ {\em separable} if for any balanced $\tau\in\cC(G,\sigma)$ and any $i\in\brk k$ we have
	$$\rho_{ii}(\sigma,\tau)\geq(1-\kappa)/k,\mbox{ where }\kappa=\ln^{20}k/k.$$
Thus, if $\sigma$ is a balanced, separable $k$-coloring, then for any color $i$ and for any other balanced $k$-coloring $\tau$ in the cluster
of $\sigma$, a $1-\kappa+o(1)$-fraction of the vertices colored $i$ under $\sigma$ are colored $i$ under $\tau$ as well.
In particular, the clusters of any two such colorings 
are either disjoint or identical.

\begin{definition}
Let $G$ be a graph with $n$ vertices and $m$ edges.
A $k$-coloring $\sigma$ of $G$ is {\em tame} if 
\begin{description}
\item[T1] $\sigma$ is balanced,
\item[T2] $\sigma$ is separable, and
\item[T3] $|\cC(G,\sigma)\cap\Bal|\leq k^n(1-1/k)^m$.
\end{description}
\end{definition}

\noindent
Let $\Zkg(G)$ denote the number of tame $k$-colorings of $G$.

\begin{lemma}[\cite{Danny}]\label{Lemma_smm}
Assume that $d>0$ is such that 
	\begin{equation}\label{eqDanny}
	\liminf_{n\ra\infty}\frac{\Erw[\Zkg(G(n,m))]}{k^n(1-1/k)^m}>0.
	\end{equation}
Then $$\liminf_{n\ra\infty}\frac{\Erw[\Zkg(G(n,m))]^2}{\Erw[\Zkg(G(n,m))^2]}>0.$$
Furthermore, there exists $\eps_k=o_k(1)$ such that (\ref{eqDanny}) is satisfied if $d\leq (2k-1)\ln k-2\ln 2-\eps_k$.
\end{lemma}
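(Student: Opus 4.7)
The plan is to execute the standard second moment method for $k$-colorings, refined by the \good-separable restriction from~\cite{Danny}. Write $N=\binom{n}{2}$, and for a pair of balanced maps $\sigma,\tau:[n]\to[k]$ with overlap matrix $\rho = \rho(\sigma,\tau)$, the probability that both are proper $k$-colorings of $G(n,m)$ equals $\binom{N-\mathrm{Forb}(\sigma,\tau)}{m}/\binom{N}{m}$, where $\mathrm{Forb}(\sigma,\tau)$ counts pairs $\{v,w\}$ that are monochromatic under $\sigma$ or under $\tau$. A direct computation gives $\mathrm{Forb}(\sigma,\tau) = N(2/k - \|\rho\|_2^2) + O(n)$. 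After applying Stirling to the multinomial counting the number of pairs $(\sigma,\tau)$ with a given overlap matrix, one obtains
\begin{equation*}
\Erw[Z_{k,\mathrm{bal}}(G(n,m))^2] = \exp(o(n))\sum_{\rho}\exp\bigl(n\,\Psi_d(\rho)\bigr),
\end{equation*}
where the sum runs over balanced overlap matrices and $\Psi_d(\rho) = -\sum_{i,j}\rho_{ij}\ln\rho_{ij} + \tfrac{d}{2}\ln(1 - 2/k + \|\rho\|_2^2)$. The function $\Psi_d$ takes the value $2\ln k + d\ln(1-1/k)$ at the flat matrix $\bar\rho = J/k^2$, matching $\tfrac{1}{n}\ln(k^n(1-1/k)^m)^2$, so the second moment bound reduces to proving that the dominant contribution comes from overlaps near $\bar\rho$.

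For \good pairs, the overlap matrix lies (after relabeling colors) in the union of a close regime $\{\rho_{ii} \geq (1-\kappa)/k$ for all $i\}$ and a distant regime $\{\rho_{ii} \leq 0.51/k$ for some $i\}$, because the separability condition T2 rules out the intermediate regime. The close regime contributes at most $\Erw[Z_{k,\good}]\cdot k^n(1-1/k)^m$ by condition T3, since for each \good\ $\sigma$ the number of balanced partners in its cluster is at most $k^n(1-1/k)^m$. The distant regime requires a Lagrange-multiplier analysis of $\Psi_d$ subject to the balanced-marginal constraints $\sum_j \rho_{ij} = \sum_j \rho_{ji} = 1/k$; the goal is to show $\max_{\text{distant}}\Psi_d < \Psi_d(\bar\rho)$ for every $d \leq (2k-1)\ln k - 2\ln 2 - \eps_k$. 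Putting these together yields $\Erw[Z_{k,\good}^2] = O((\Erw[Z_{k,\good}])^2)$.

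For the furthermore clause, by Corollary~\ref{Cor_balanced} it suffices to show that most balanced $k$-colorings are \good. I would handle T2 via a first-moment computation on the same function $\Psi_d$: the expected number of triples $(G,\sigma,\tau)$ with $\sigma,\tau$ both balanced $k$-colorings of $G=G(n,m)$ satisfying $\rho_{ii}(\sigma,\tau) \geq 0.51/k$ for all $i$ yet $\rho_{jj} < (1-\kappa)/k$ for some $j$ is $\exp(o(n))\sum_\rho \exp(n\Psi_d(\rho))$ restricted to that intermediate regime, and one shows this is $o(\Erw[Z_{k,\mathrm{bal}}])$ in the relevant $d$-range. Condition T3 follows from Markov's inequality applied to the close-region first-moment bound derived in the same way. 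Combining, $\Erw[Z_{k,\good}] = (1-o(1))\Erw[Z_{k,\mathrm{bal}}] = \Omega(k^n(1-1/k)^m)$, verifying~(\ref{eqDanny}).

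The main obstacle is the Lagrange-multiplier analysis in the distant regime: pushing past the Achlioptas--Naor threshold of $\sim(2k-1)\ln k - 2\ln k$ up to $(2k-1)\ln k - 2\ln 2 - \eps_k$ requires a delicate asymptotic expansion of $\Psi_d$ in $1/k$, with careful attention to the boundary $\rho_{ii}\approx 0.51/k$ where near-maximizers concentrate; the explicit $-2\ln 2$ correction emerges from the local analysis at this boundary. A secondary difficulty is that the same function-theoretic argument must rule out the intermediate separability regime $0.51/k \leq \rho_{jj} < (1-\kappa)/k$ at the level of the \emph{first} moment, with $\eps_k$-slack, rather than merely the second moment.
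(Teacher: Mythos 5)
The paper does not prove this lemma at all: it is imported wholesale from \cite{Danny}, and the present paper's ``proof'' consists of the citation. Measured against that reference, your outline reproduces the right skeleton -- write $\Erw[\Zkg^2]$ as $\exp(o(n))\sum_\rho\exp(n\Psi_d(\rho))$ over overlap matrices, use T3 to tame the near-diagonal regime, use separability T2 to excise the intermediate regime, and show the remaining regime is dominated by the flat overlap $\bar\rho=\vec J/k^2$ -- but the two steps you defer are not technical details; they \emph{are} the theorem. The distant-regime maximization of $\Psi_d$ past the Achlioptas--Naor density is a long, delicate analysis in \cite{Danny} (it is not a routine Lagrange-multiplier computation: the relevant maximizers are not interior critical points, and the argument there proceeds by decomposing the space of balanced overlap matrices and treating several regimes separately), and likewise the ``furthermore'' clause -- that $\Erw[\Zkg]=\Omega(k^n(1-1/k)^m)$ for $d\leq(2k-1)\ln k-2\ln2-\eps_k$ -- is where all the quantitative content sits. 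As written, your proposal establishes neither, so the lemma is not proved.

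Beyond the missing core, one attribution is off in a way that would mislead the execution. The $-2\ln 2$ does not ``emerge from the local analysis at the boundary $\rho_{ii}\approx0.51/k$'' in the distant regime; it enters through the verification of (\ref{eqDanny}), i.e.\ through condition T3: one must show that for typical colorings the balanced part of the cluster is at most $k^n(1-1/k)^m$, and this cluster-size estimate (carried out in \cite{Danny} essentially in the planted model, as the present paper recalls in its discussion of the condensation asymptotics) is exactly what caps the admissible density at $(2k-1)\ln k-2\ln2-\eps_k$. Your plan to get T3 ``from Markov's inequality applied to the close-region first-moment bound'' is also shakier than it looks: the annealed count of close pairs, $\Erw[\sum_\sigma|\cC(G,\sigma)\cap\Bal|]$, may be inflated by exponentially rare graphs carrying atypically large clusters, so a pure Markov argument need not reach the stated $2\ln2$ slack; a typical-value (whp) bound on the planted cluster size, of the kind \cite{Danny} actually proves, is what is needed. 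Similarly, establishing that $1-o(1)$ of balanced colorings are separable requires its own first-moment estimate over the intermediate overlap regime with room to spare, which again rests on the unproven optimization of $\Psi_d$.
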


As fleshed out in~\cite{Danny}, together with the sharp threshold result from~\cite{AchFried}, \Lem~\ref{Lemma_smm} implies 
that $G(n,d/n)$ is $k$-colorable \whp\ if $d\leq (2k-1)\ln k-2\ln 2-\eps_k$.
Here we are going to combine \Lem~\ref{Lemma_smm} with the following variant of that sharp threshold result to obtain a lower bound
on the {\em number} of $k$-colorings.

\begin{lemma}[\cite{Barriers}]\label{Lemma_Xfriedgut}
For any $k\geq3$ and for any real $\xi>0$ there is a sequence $d_{k,\xi}(n)$ such that for any $\eps>0$ the following holds.
\begin{enumerate}
\item If $p(n)<(1-\eps)d_{k,\xi}(n)/n$, then $\Zkc(G(n,p(n)))\geq\xi^n$ \whp
\item If $p(n)>(1+\eps)d_{k,\xi}(n)/n$, then $\Zkc(G(n,p(n)))<\xi^n$ \whp 
\end{enumerate}
\end{lemma}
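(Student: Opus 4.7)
The lemma is a sharp threshold statement for the monotone increasing graph property
\[
\mathcal{A}_{n,\xi} = \set{G : Z_k(G) < \xi^n},
\]
and the plan is to prove it by applying Friedgut's sharp threshold theorem~\cite{Ehud}. The property is monotone because $Z_k$ can only shrink as edges are added. I would define the threshold sequence by
\[
d_{k,\xi}(n) := n\cdot\inf\cbc{p\in[0,1] : \pr\brk{G(n,p)\in\mathcal{A}_{n,\xi}}\geq 1/2}.
\]
Monotonicity of $p\mapsto\pr\brk{G(n,p)\in\mathcal{A}_{n,\xi}}$ together with a routine Bollob\'as--Thomason boosting argument already yields a coarse threshold at $d_{k,\xi}(n)$, so the real content of the lemma lies in upgrading this to a \emph{sharp} threshold.

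To establish sharpness I would argue by contradiction via Friedgut's criterion. If the threshold were coarse, there would exist a fixed graph $H$ of bounded order and constants $\alpha,\delta>0$ such that at the critical $p^*=d_{k,\xi}(n)/n$, planting a uniformly random copy of $H$ into $G(n,p^*)$ acts as a \emph{booster}: it increases $\pr\brk{\mathcal{A}_{n,\xi}}$ by at least $\alpha$ in a $\delta$-fraction of cases. Following the Achlioptas--Friedgut strategy~\cite{AchFried}, I would then observe that inserting the $O(1)$ edges of $H$ into a graph $G$ can alter $Z_k(G)$ by a multiplicative factor of at most $k^{|V(H)|}=\exp(O(1))$, which is subexponential in $n$. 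Combined with the Azuma-type concentration of $n^{-1}\ln Z_k(G(n,p))$ around its mean up to an $o(1)$ additive error~\cite{Barriers}, such a subexponential perturbation of $Z_k$ cannot push $n^{-1}\ln Z_k$ across the fixed threshold $\ln\xi$ with positive probability. Hence no bounded $H$ can genuinely be a booster, forcing the threshold to be sharp and thereby establishing both parts of the lemma.

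The main obstacle will be handling the precise version of Friedgut's criterion that is actually needed: it requires controlling the effect of planting many (linearly many) disjoint copies of $H$, not merely a single one. The standard workaround is to couple the ``many copies of $H$'' distribution to $G(n,p'/n)$ for a slightly larger $p'=p^*(1+\eta)$ via a contiguity argument, and then to use the $o(n)$-concentration of $\ln Z_k$ to conclude that both distributions assign the same asymptotic probability to $\mathcal{A}_{n,\xi}$. Since the setup here differs from the $k$-colorability sharp threshold of~\cite{AchFried} only by substituting the nontrivial cutoff $\ln\xi$ for the trivial cutoff $0$, this workaround transfers essentially verbatim, with the concentration result from~\cite{Barriers} doing the extra work needed to absorb the $k^{O(1)}$ perturbation factor.
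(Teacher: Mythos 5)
The paper does not prove this lemma at all --- it is imported verbatim from \cite{Barriers}, whose proof indeed runs through Friedgut's criterion along the lines of the Achlioptas--Friedgut argument for $k$-colorability, so your high-level plan (monotone property $\{Z_k<\xi^n\}$, threshold at probability $1/2$, refute boosters) is the right skeleton. However, the step you treat as routine is exactly where the real work lies, and as stated it is false. Inserting the $O(1)$ edges of $H$ does \emph{not} change $Z_k$ by a factor of at most $k^{|V(H)|}$: a single edge placed between two vertices that receive the same color in every proper coloring (e.g.\ two vertices each adjacent to a common $K_{k-1}$, or more relevantly two frozen vertices of the same color, which are pervasive in precisely the density regime this paper studies) sends $Z_k$ from $\xi^n$ to $0$, and in general a bounded number of edges can deplete the coloring count by an exponential factor. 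What one must show is that for a \emph{typical} graph conditioned on $Z_k\geq\xi^n$ and a \emph{random} placement of $H$ this does not happen; establishing that statement about the geometry of the set of colorings is the entire content of \cite{AchFried} and of its counting refinement in \cite{Barriers}, and it cannot be asserted as an elementary bookkeeping fact.

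Your fallback, concentration of $n^{-1}\ln Z_k$, does not repair this for two reasons. First, at the putative threshold $d_{k,\xi}(n)$ the typical value of $n^{-1}\ln Z_k$ is within $o(1)$ of $\ln\xi$ itself, so concentration around the mean says nothing about which side of the cutoff the value lies on; a perturbation of size $O(1/n)$ on the scale of $n^{-1}\ln Z_k$ can flip the event $\{Z_k<\xi^n\}$ with constant probability, so ``subexponential perturbations cannot cross the threshold'' is a non sequitur exactly in the critical window where the booster argument must be refuted (one has to exploit the slack between $p$ and $(1\pm\eps)p$, or compare nearby values of $\xi$, which requires quantitative control of how the typical value of $n^{-1}\ln Z_k$ moves with $d$). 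Second, the concentration of $\ln Z_k$ you invoke from \cite{Barriers} is itself derived from this very sharp-threshold lemma ($\ln Z_k$ is not edge-Lipschitz, since one edge can annihilate all colorings --- this is why the present paper works with the smoothed partition function $Z_{\beta,k}$ when it needs McDiarmid-type concentration), so the argument as proposed is circular unless you supply an independent proof of that concentration, which is itself nontrivial.
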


\noindent
\Lem s~\ref{Lemma_smm} and~\ref{Lemma_Xfriedgut} entail the following lower bound on $\dcrit$.

\begin{lemma}\label{Lemma_Xsmm}
Assume that $d^*>0$ and $\eps>0$ are such that~(\ref{eqDanny}) holds for any $d\in(d^*-\eps,d^*)$. 
Then $\dcrit\geq d^*$.
\end{lemma}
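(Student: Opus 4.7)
The plan is to reduce to Proposition~\ref{Lemma_dcrit}, which characterises
$$\dcrit = \sup\cbc{d \geq 0 : \liminf_{n\ra\infty}\Erw[Z_k(G(n,d/n))^{1/n}] \geq k(1-1/k)^{d/2}}.$$
Thus it suffices to show that every $d \in (d^* - \eps, d^*)$ belongs to this set; taking the supremum then yields $\dcrit \geq d^*$. To this end, I fix such a $d$ and pick an auxiliary density $d_1 \in (d, d^*)$. The strict slack $d_1 > d$ is the crucial reserve that will later absorb the losses incurred by the sharp-threshold step.

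Since the hypothesis supplies (\ref{eqDanny}) at $d_1$, Lemma~\ref{Lemma_smm} gives a positive lower bound on the ratio $\Erw[\Zkg(G(n,m_1))]^2/\Erw[\Zkg(G(n,m_1))^2]$, where $m_1 = \lceil d_1 n/2 \rceil$. Paley--Zygmund then furnishes a constant $\beta > 0$ with
$$\pr\brk{\Zkc(G(n,m_1)) \geq \tfrac12\Erw[\Zkg(G(n,m_1))]} \geq \beta$$
for all large $n$, and (\ref{eqDanny}) lets me replace the right-hand side by $\beta' k^n(1-1/k)^{m_1}$ for some $\beta' > 0$. Given any $\xi < k(1-1/k)^{d_1/2}$, this eventually exceeds $\xi^n$, yielding $\pr[\Zkc(G(n,m_1)) \geq \xi^n] \geq \beta$. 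To transfer this bound to the binomial model $G(n,d_1/n)$, I condition on the number of edges and sum over $m'$ in a window of width $n^{2/3}$ around $m_1$; for each such $m'$ the ratio $2m'/n$ still lies in $(d^* - \eps, d^*)$, so the same second-moment estimate applies, and the concentration of $|E(G(n,d_1/n))|$ gives $\pr[\Zkc(G(n,d_1/n)) \geq \xi^n] \geq \beta/2$ eventually.

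The final step amplifies this probability to a w.h.p.\ statement at the smaller density $d/n$ via the sharp threshold of Lemma~\ref{Lemma_Xfriedgut}, applied to the monotone-decreasing-in-edges event $\{\Zkc \geq \xi^n\}$. If there were a subsequence along which $d_1 > (1+\eps') d_{k,\xi}(n)$, then part (2) of the lemma would give $\Zkc(G(n,d_1/n)) < \xi^n$ w.h.p.\ along the subsequence, contradicting the uniform lower bound $\beta/2$. Hence $d_1 \leq (1+\eps')d_{k,\xi}(n)$ for all large $n$; choosing $\eps' > 0$ small enough that $d(1+\eps')^2 < d_1$, I obtain $d/n < (1-\eps'')d_{k,\xi}(n)/n$ for some $\eps'' > 0$, and part (1) of the lemma yields $\Zkc(G(n,d/n)) \geq \xi^n$ w.h.p. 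Combined with the deterministic bound $\Zkc \leq k^n$, this gives $\Erw[Z_k(G(n,d/n))^{1/n}] \geq \xi - o(1)$, and letting $\xi \nearrow k(1-1/k)^{d_1/2}$ and then $d_1 \searrow d$ completes the $\liminf$ bound.

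The main obstacle is a purely logistical one: orchestrating the various $\eps$ parameters so that the strict inequality $d < d_1$ is large enough to absorb both the Paley--Zygmund constant and the multiplicative slack in the sharp threshold, while the edge-count window used in the transfer from $G(n,m)$ to $G(n,p)$ remains narrow enough that every $m'$ in it corresponds to a degree in the regime where (\ref{eqDanny}) has been hypothesised. No new probabilistic idea is required beyond coupling Paley--Zygmund with Friedgut's sharp threshold through the monotonicity of $\{\Zkc \geq \xi^n\}$.
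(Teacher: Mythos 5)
Your argument is correct and is essentially the paper's own proof of \Lem~\ref{Lemma_Xsmm}, just run directly rather than by contradiction: \Lem~\ref{Lemma_smm} plus Paley--Zygmund gives $\Zkc(G(n,m_1))\geq\xi^n$ with probability bounded away from $0$ for any fixed $\xi<k(1-1/k)^{d_1/2}$, the sharp threshold of \Lem~\ref{Lemma_Xfriedgut} boosts this to a with-high-probability statement at the smaller density $d$, and the $\liminf$ characterization of $\dcrit$ (which rests only on \Cor~\ref{Cor_firstMoment}, so there is no circularity in citing \Prop~\ref{Lemma_dcrit}) finishes the job. The only step to tighten is the transfer from $G(n,m_1)$ to $G(n,d_1/n)$: instead of re-invoking the second-moment estimate at the $n$-dependent densities $2m'/n$ in your edge-count window (which the statement of \Lem~\ref{Lemma_smm}, formulated for a fixed $d$, does not literally cover), condition on the event that $G(n,d_1/n)$ has at most $m_1$ edges, which has probability at least $1/3$, and use that adding edges only decreases $\Zkc$ --- this is exactly how the paper handles it.
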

\begin{proof}
Assume for contradiction that $d^*$ is such that~(\ref{eqDanny}) holds for all $d\in(d^*-\eps,d^*)$ but $\dcrit<d^*$.
Pick and fix a number $\max\{d^*-\eps,\dcrit\}<d_*<d^*$.
\Cor~\ref{Cor_firstMoment} implies that $\limsup\Erw[\Zkc(G(n,d_*/n))^{1/n}]\leq k(1-1/k)^{d_*/2}$.
Therefore, since $d_*>\dcrit$, there exists $\eps_*>0$ such that
	\begin{equation}\label{eqXsmm_1}
	\liminf_{n\ra\infty}\Erw[\Zkc(G(n,d_*/n))^{1/n}]<k(1-1/k)^{d_*/2}-\eps_*.
	\end{equation}
Further, pick and fix $d_*<\hat d<d^*$ such that
	$k(1-1/k)^{\hat d/2}>k(1-1/k)^{d_*/2}-\eps_*$
and $\xi$ such that 
		\begin{equation}\label{eqXsmm_2}
		k(1-1/k)^{d_*/2}-\eps_*<\xi<k(1-1/k)^{\hat d/2}.
		\end{equation}

We are going to use \Lem s~\ref{Lemma_smm} and~\ref{Lemma_Xfriedgut} to establish a lower bound 
on $\Zkc(G(n,d_*/n))$ that contradicts~(\ref{eqXsmm_1}).
By the Paley-Zygmund inequality and because~(\ref{eqDanny}) holds for any $d^*-\eps<d<d^*$,
	\begin{equation}\label{eqCor_Xfriedgut0}
	\pr\brk{\Zkg(\gnm)\geq\frac12\Erw[\Zkg(\gnm)]}\geq\frac{\Erw[\Zkg(\gnm)]^2}{4\cdot\Erw[\Zkg(\gnm)^2]}
		\quad\mbox{ for any $d^*-\eps<d<d^*$}.
	\end{equation}
Moreover, \Lem~\ref{Lemma_smm} and~(\ref{eqCor_Xfriedgut0}) imply
	\begin{equation}\label{eqCor_Xfriedgut1}
	\liminf_{n\ra\infty}\pr\brk{\Zkg(\gnm)\geq\frac12\Erw[\Zkg(\gnm)}>0\qquad\mbox{for any }d^*-\eps<d<d^*.
	\end{equation}
Further, because (\ref{eqDanny}) is true for any $d^*-\eps<d<d^*$ and $\xi<k(1-1/k)^{d/2}$ for any $d<\hat d<d^*$, we see that
	$$\frac12\Erw[\Zkg(\gnm)]=\Omega(k^n(1-1/k)^m)>\xi^n\qquad\mbox{for any }d<\hat d.$$
Hence, (\ref{eqCor_Xfriedgut1}) implies
	\begin{equation}\label{eqCor_Xfriedgut2}
	\liminf_{n\ra\infty}\pr\brk{\Zkg(\gnm)\geq\xi^n}>0\qquad\mbox{for any }d<\hat d.
	\end{equation}
Since the number of edges in $G(n,d/n)$ has a binomial distribution with mean $m$,
with probability at least $1/3$ the number of edges in $G(n,d/n)$ does not exceed $m$.
Therefore, (\ref{eqCor_Xfriedgut2}) implies that
	\begin{equation}\label{eqCor_Xfriedgut3}
	\liminf_{n\ra\infty}\pr\brk{\Zkc(G(n,d/n))\geq\xi^n}
		\geq\frac13\liminf_{n\ra\infty}\pr\brk{\Zkg(\gnm)\geq\xi^n}>0\quad\mbox{ for any }d<\hat d.
	\end{equation}
Moreover, (\ref{eqCor_Xfriedgut3}) entails that the sequence $d_{k,\xi}(n)$ from \Lem~\ref{Lemma_Xfriedgut} satisfies $\liminf d_{k,\xi}(n)\geq \hat d$.
Therefore, 
	\begin{equation}\label{eqCor_Xfriedgut4}
	\lim_{n\ra\infty}\pr\brk{\Zkc(G(n,d/n))\geq\xi^n}=1\quad\mbox{ for any }d<\hat d. 
	\end{equation}
Since $d_*<\hat d$, (\ref{eqCor_Xfriedgut4}) entails that
	\begin{equation}\label{eqCor_Xfriedgut5}
	\liminf_{n\ra\infty}\Erw\brk{\Zkg(G(n,d_*/n))^{1/n}}\geq \xi.
	\end{equation}
Combining~(\ref{eqXsmm_1}), (\ref{eqXsmm_2}) and~(\ref{eqCor_Xfriedgut5}) yields a contradiction, which
refutes our assumption that $\dcrit<d^*$.
\end{proof}

\subsection{Proof of \Prop~\ref{Lemma_dcrit}}
We start with the following observation.

\begin{lemma}\label{Lemma_observation}
Let 
	\begin{eqnarray*}
	D_*&=&\cbc{d>0:\liminf\Erw[\Zkc(G(n,d/n))^{1/n}]<k(1-1/k)^{d/2}},\\
	D^*&=&\cbc{d>0:\limsup\Erw[\Zkc(G(n,d/n))^{1/n}]<k(1-1/k)^{d/2}}.
	\end{eqnarray*}
If $d_1\in D_*$ and $d_2>d_1$, then $d_2\in D_*$.
Similarly, if $d_1\in D^*$ and $d_2>d_1$, then $d_2\in D^*$.
\end{lemma}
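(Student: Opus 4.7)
The plan is to exploit the natural monotone coupling that adds edges to pass from $G(n, d_1/n)$ to $G(n, d_2/n)$, combined with the elementary convexity estimate $\Forb(\sigma) \geq N/k$ already used in the proof of \Lem~\ref{Lemma_firstMoment}. It is cleanest to work in the $G(n,m)$ model with $m_j = \lceil d_j n/2 \rceil$, where the coupling $G(n,m_1) \subseteq G(n,m_2)$ is obtained by adding $m_2 - m_1$ edges uniformly at random from the non-edges of $G(n,m_1)$.

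First I would establish the edge-addition inequality
\[
\Erw[\Zkc(G(n,m_2))^{1/n}] \leq \Erw[\Zkc(G(n,m_1))^{1/n}] \cdot (1-1/k)^{(m_2-m_1)/n}.
\]
Indeed, condition on $G(n,m_1)$ and on a proper coloring $\sigma$ of it; the conditional probability that $\sigma$ remains proper after inserting the extra edges equals
\[
\binom{N-\Forb(\sigma)-m_1}{m_2-m_1}\bigg/\binom{N-m_1}{m_2-m_1},
\]
which by $\Forb(\sigma) \geq N/k$ and the elementary observation $(a-c)/(b-c) \leq a/b$ when $a \leq b$, $c \geq 0$, is bounded by $(1-1/k)^{m_2-m_1}$. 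Summing over $\sigma$ yields $\Erw[\Zkc(G(n,m_2))\mid G(n,m_1)] \leq \Zkc(G(n,m_1)) \cdot (1-1/k)^{m_2-m_1}$. Applying conditional Jensen to the concave map $x \mapsto x^{1/n}$ and then taking expectation gives the claimed inequality.

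Second, I would transfer the inequality to the $G(n,d/n)$ model. The number of edges $M$ in $G(n,d/n)$ is binomial with mean $dn/2$ and standard deviation $O(\sqrt{n})$, so it lies in $[dn/2 - n^{2/3}, dn/2 + n^{2/3}]$ with probability $1 - e^{-\Omega(n^{1/3})}$. Conditioning on $M = m$ gives $G(n,m)$; within the typical range the factor $(1-1/k)^{O(n^{-1/3})}$ is $1+o(1)$, so $\Erw[\Zkc(G(n,m))^{1/n}]$ equals $\Erw[\Zkc(G(n,\lceil dn/2 \rceil))^{1/n}]$ up to a $1+o(1)$ factor, while the contribution from atypical $m$ is at most $k \cdot e^{-\Omega(n^{1/3})} = o(1)$ since $\Zkc(G)^{1/n} \leq k$. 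Thus the edge-addition inequality carries over:
\[
\Erw[\Zkc(G(n,d_2/n))^{1/n}] \leq (1+o(1)) \Erw[\Zkc(G(n,d_1/n))^{1/n}] \cdot (1-1/k)^{(d_2-d_1)/2} + o(1).
\]

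Finally, if $d_1 \in D_*$ I fix a subsequence along which $\Erw[\Zkc(G(n,d_1/n))^{1/n}] \leq k(1-1/k)^{d_1/2} - \delta$ for some $\delta > 0$. Multiplying by $(1-1/k)^{(d_2-d_1)/2}$ and using that $k(1-1/k)^{d_1/2} \cdot (1-1/k)^{(d_2-d_1)/2} = k(1-1/k)^{d_2/2}$, the transferred inequality gives
\[
\Erw[\Zkc(G(n,d_2/n))^{1/n}] \leq k(1-1/k)^{d_2/2} - \delta(1-1/k)^{(d_2-d_1)/2} + o(1)
\]
along the same subsequence, so the $\liminf$ at $d_2$ stays strictly below $k(1-1/k)^{d_2/2}$, i.e.\ $d_2 \in D_*$. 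The argument for $D^*$ is identical, with \emph{every} large $n$ replacing the subsequence and $\limsup$ replacing $\liminf$.

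The main subtlety is that the convexity bound $\Forb(\sigma) \geq N/k$ must be applied inside the $G(n,m)$ sampling (without replacement of unordered pairs), since only there does the ratio of binomials collapse to the clean factor $(1-1/k)$ rather than the looser $\exp(-1/k)$ one would obtain from independent edges; the $G(n,m)$-to-$G(n,d/n)$ translation is secondary, but requires controlling the tails of the binomial edge count via the trivial bound $\Zkc \leq k^n$.
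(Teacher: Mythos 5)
Your proposal is correct and takes essentially the same route as the paper's proof: a monotone edge-sprinkling coupling, the survival probability of a fixed coloring bounded by a ratio of binomial coefficients via the convexity bound on $\Forb(\sigma)$, conditional Jensen applied to $x\mapsto x^{1/n}$, and the trivial bound $\Zkc^{1/n}\leq k$ to absorb edge-count fluctuations (the paper simply performs the sprinkling directly inside $G(n,p)$ rather than detouring through $G(n,m)$). One small remark: in your transfer step only the one-sided comparisons are needed (monotonicity of $\Zkc$ under adding edges together with $\pr[M\geq m-n^{2/3}]=1-o(1)$ resp.\ $\pr[M\leq m+n^{2/3}]=1-o(1)$), and these do follow from your ingredients, whereas the two-sided claim that $\Erw[\Zkc(G(n,m'))^{1/n}]$ equals $\Erw[\Zkc(G(n,m))^{1/n}]$ up to $1+o(1)$ throughout the typical window is not immediate in the direction that adds edges -- but it is also not used.
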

\begin{proof}
Let $0<d_1<d_2$ and let $q\sim (d_2-d_1)/n$ be such that $d_1/n+(1-d_1/n)q=d_2/n$.
Let us denote the random graph $G(n,d_1/n)$ by $G_1$.
Furthermore, let $G_2$ be a random graph obtained from $G_1$ by joining any two vertices
that are not already adjacent in $G_1$ with probability $q$ independently.
Then $G_2$ is identical to $G(n,d_2/n)$, because in $G_2$ any two vertices are adjacent with probability
$d_1/n+(1-d_1/n)q=d_2/n$ independently.
Set $N=\bink n2$.

Let $e(G_i)$ signify the number of edges in $G_i$ for $i=1,2$.
Because $e(G_i)$ is a binomial random variable with mean $\mu_i=\frac{d_i}n\cdot N=nd_i/2+O(1)$,
the Chernoff bound implies that
	\begin{equation}\label{eqLemma_observation0}
	\pr\brk{|e(G_1)-\mu_1|>n^{2/3}}=o(1),\qquad\pr\brk{|e(G_2)-e(G_1)-(\mu_2-\mu_1)|>n^{2/3}}=o(1).
	\end{equation}
Further, since $\Zkc^{1/n}\leq k$ with certainty,
(\ref{eqLemma_observation0}) implies that
	\begin{eqnarray}\nonumber
	\Erw[\Zkc(G_2)^{1/n}\,|\,\Zkc(G_1)]&\leq&\Erw[\Zkc(G_2)^{1/n}\,|\,\Zkc(G_1),\,|e(G_2)-e(G_1)-(\mu_2-\mu_1)|\leq n^{2/3}]\\
		&&\qquad+k\cdot
			\pr\brk{|e(G_2)-e(G_1)-(\mu_2-\mu_1)|\leq n^{2/3}}\nonumber\\
		&\leq&\Erw[\Zkc(G_2)^{1/n}\cdot\vecone_{|e(G_2)-e(G_1)-(\mu_2-\mu_1)|\leq n^{2/3}}\,|\,\Zkc(G_1)]+o(1).
			\label{eqLemma_observation101}
	\end{eqnarray}

Suppose that we condition on $e(G_1),e(G_2)$ and $|e(G_1)-\mu_1|\leq n^{2/3}$, $|e(G_2)-e(G_1)-(\mu_2-\mu_1)|\leq n^{2/3}$.
Assume that $\sigma$ is a $k$-coloring of $G_1$.
What is the probability that $\sigma$ remains a $k$-coloring of $G_2$?
For this to happen, none of the $e(G_2)-e(G_1)$ additional edges must be among the $\Forb(\sigma)$
pairs of vertices with the same color under $\sigma$.
Using Stirling's formula, we see that the probability of $\sigma$ remaining a $k$-coloring in $G_2$ is bounded by
	\begin{equation}\label{eqLemma_observation100}
	\gamma=\bink{N-\Forb(\sigma)-e(G_1)}{e(G_2)-e(G_1)}/\bink{N-e(G_1)}{e(G_2)-e(G_1)}\leq(1-1/k)^{(d_2-d_1+o(1)) n/2}.
	\end{equation}	
Hence, by~(\ref{eqLemma_observation101}), Jensen's inequality and~(\ref{eqLemma_observation100})
	\begin{eqnarray}
	\Erw[\Zkc(G_2)^{1/n}\,|\,\Zkc(G_1)]
		&\leq&\Erw\brk{\Zkc(G_2)\cdot\vecone_{|e(G_2)-e(G_1)-(\mu_2-\mu_1)|\leq n^{2/3}}\,\big|\,\Zkc(G_1)}^{1/n}+o(1)\nonumber\\
		&\leq& \gamma^{1/n}\Zkc(G_1)^{1/n}+o(1)
			\leq(1-1/k)^{(d_2-d_1)/2}\Zkc(G_1)^{1/n}+o(1).
				\label{eqLemma_observation111}
	\end{eqnarray}
Averaging~(\ref{eqLemma_observation111}) over $G_1$, we obtain
	\begin{eqnarray*}
	\Erw[\Zkc(G(n,d_2/n)^{1/n}]&=&\Erw[\Zkc(G_2)^{1/n}]\\
		&\hspace{-4cm}\leq&\hspace{-2cm}(1-1/k)^{(d_2-d_1)/2}\Erw[\Zkc(G_1)^{1/n}\cdot\vecone_{|e(G_1)-\mu_1|\leq n^{2/3}}]+
			k\cdot\pr\brk{\vecone_{|e(G_1)-\mu_1|> n^{2/3}}}+o(1)\\
		&\hspace{-4cm}\leq&\hspace{-2cm}(1-1/k)^{(d_2-d_1)/2}\Erw[\Zkc(G(n,d_1/n))^{1/n}]+o(1)\qquad\qquad\qquad[\mbox{due to~(\ref{eqLemma_observation0})}].
	\end{eqnarray*}
Thus, if $\Erw[\Zkc(G(n,d_1/n))^{1/n}]<k(1-1/k)^{d_1/2}-\delta+o(1)$, then $\Erw[\Zkc(G(n,d_2/n)^{1/n}]\leq k(1-1/k)^{d_2/2}-\eps+o(1)$ for some $\eps=\eps(\delta,k,d_1,d_2)>0$.
Taking $n\ra\infty$ yields the assertion.
\end{proof}

\begin{proof}[Proof of \Prop~\ref{Lemma_dcrit}]
\Cor~\ref{Cor_firstMoment} implies that
	$$\dcrit=\sup\cbc{d\geq0:\liminf_{n\ra\infty}\Erw[Z_k(G(n,d/n))^{1/n}]\geq k(1-1/k)^{d/2}}.$$
Hence, the first and the third assertion are immediate from 
\Lem~\ref{Lemma_observation}.

Further, (\ref{eqTreeHug2}) implies that $\dcrit>0$.
Assume for contradiction that $\dcrit$ is smooth.
Then there is $\eps>0$ such that the limit $\Phi_k(d)$ exists for all $d\in(\dcrit-\eps,\dcrit+\eps)$ and such that the function $d\mapsto\Phi_k(d)$ is given by an
absolutely convergent power series on this interval.
Moreover, the first assertion implies that $\Phi_k(d)=k(1-1/k)^{d/2}$ for all $d\in(\dcrit-\eps,\dcrit)$.
Consequently, the uniqueness of analytic continuations implies that $\Phi_k(d)=k(1-1/k)^{d/2}$ for all $d\in(\dcrit-\eps,\dcrit+\eps)$, in contradiction to
the definition of $\dcrit$.
Thus, $\dcrit$ is a phase transition.
\end{proof}

\section{The planted model}\label{Sec_plantedCluster}

\subsection{Overview.}
The aim in this section is to prove \Prop~\ref{Lemma_plantedCluster}.
The proof of the first part is fairly straightforward.
More precisely, in \Sec~\ref{Sec_plantedCluster_easy} we are going to establish

\begin{lemma}\label{Lemma_plantedCluster_easy}
Assume that $(2k-1)\ln k-2\leq d\leq (2k-1)\ln k$ is such that~(\ref{eqLemma_plantedCluster2}) holds.
Then $\dcrit\geq d$.
\end{lemma}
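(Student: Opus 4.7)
\medskip
\noindent\emph{Proof plan for \Lem~\ref{Lemma_plantedCluster_easy}.}
The plan is to use the hypothesis \eqref{eqLemma_plantedCluster2} on cluster sizes in the planted model to derive a w.h.p.\ lower bound on $Z_k(G(n,m))$, and then invoke the concentration result for $Z_k^{1/n}$ from~\cite{Barriers} to convert it into the expectation bound $\liminf_n \Erw[Z_k(G(n,d/n))^{1/n}]\ge k(1-1/k)^{d/2}$, which by the characterisation~\eqref{eqdcrit1} of $\dcrit$ in \Prop~\ref{Lemma_dcrit} gives $\dcrit\ge d$.

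First I would record the standard planting correspondence. Because $\binom{N}{m}/\binom{N-\Forb(\sigma)}{m}\sim(1-1/k)^{-m}$ for balanced $\sigma$ by the calculation underlying \Lem~\ref{Lemma_firstMoment}, the standard planting identity from~\cite{Barriers} reads
\[
\Erw_{G(n,m)}\!\Bigl[\sum_{\sigma\in\Bal,\ \sigma\text{ colors }G} f(G,\sigma)\Bigr]=(1+o(1))\,k^n(1-1/k)^m\,\Erw_{\G,\SIGMA}[f(\G,\SIGMA)]
\]
for any bounded $f$, where $\SIGMA$ is uniform on $\Bal$ and $\G=\gnmp{\SIGMA}$. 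Fix $\delta>0$ and choose $\eps>0$ via \eqref{eqLemma_plantedCluster2} so that $|\cC(\G,\SIGMA)|^{1/n}\le k(1-1/k)^{d/2}-\eps$ with probability at least $1-\delta$. Applying the identity with $f=\vecone_{|\cC(G,\sigma)|^{1/n}\le k(1-1/k)^{d/2}-\eps}$ yields
$\Erw[Y(G(n,m))]\ge (1-\delta-o(1))\,k^n(1-1/k)^m,$
where $Y(G)$ denotes the number of balanced $k$-colourings $\sigma$ of $G$ with $|\cC(G,\sigma)|^{1/n}\le k(1-1/k)^{d/2}-\eps$.

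The crux is to upgrade this first-moment estimate to a w.h.p.\ lower bound on $Y(G(n,m))$, which then transfers to $Z_k\ge Y$. I would run a Paley--Zygmund second-moment argument: a second application of the planting identity reduces $\Erw[Y(G(n,m))^2]$ to $\Theta(k^n(1-1/k)^m)\cdot\Erw_{\G,\SIGMA}[Y(\G)]$, so it suffices to prove $\Erw_{\G,\SIGMA}[Y(\G)]=O(k^n(1-1/k)^m)$. This is the main obstacle. Since every coloring counted by $Y(\G)$ sits in a cluster of size at most $(k(1-1/k)^{d/2}-\eps)^n$, the task reduces to controlling the number of small clusters of the planted graph\,---\,whereas the hypothesis \eqref{eqLemma_plantedCluster2} directly constrains only the cluster containing $\SIGMA$. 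To close the gap I would apply the cluster-size hypothesis pointwise across clusters via a ``re-planting'' from each candidate coloring, combined with first-moment control of $Z_{k,\mathrm{bal}}(\G)$ in the planted model in the spirit of the concentration arguments developed in~\cite{Lenka}; the lower bound $\dcrit\ge(2k-1)\ln k-2$ from \Prop~\ref{Lemma_dcrit} and the structural (balance/separability/freezing) information on $\G$ inherited from~\cite{Danny,Molloy} should be what powers this step.

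With the second-moment bound in hand, Paley--Zygmund gives $Y(G(n,m))\ge\tfrac12\Erw[Y(G(n,m))]$ with probability $\Omega(1)$, hence $Z_k(G(n,m))^{1/n}\ge k(1-1/k)^{d/2}-o(1)$ with probability $\Omega(1)$. The concentration of $Z_k^{1/n}$ from~\cite{Barriers} promotes this to a w.h.p.\ statement, and taking expectations yields $\Erw[Z_k(G(n,m))^{1/n}]\ge k(1-1/k)^{d/2}-o(1)$. A routine Chernoff comparison between $G(n,m)$ and $G(n,d/n)$, as used in the proof of \Cor~\ref{Cor_firstMoment}, transfers the bound to the binomial model and, via \eqref{eqdcrit1}, closes the argument.
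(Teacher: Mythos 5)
Your overall architecture (planting correspondence, a Paley--Zygmund step, promotion to w.h.p.\ via the threshold/concentration machinery from~\cite{Barriers}, then~(\ref{eqdcrit1})) matches the paper's in spirit, but the step you yourself flag as ``the main obstacle'' is a genuine gap, and the fix you sketch does not work. You need $\Erw_{\G,\SIGMA}[Y(\G)]=O(k^n(1-1/k)^m)$, and you propose to get it from ``first-moment control of $Z_{k,\mathrm{bal}}$ in the planted model'' plus re-planting. But $\Erw_{\G,\SIGMA}[Z_{k,\mathrm{bal}}(\G)]$ is essentially $\Erw[Z_{k,\mathrm{bal}}^2]/\Erw[Z_{k,\mathrm{bal}}]$, which at $d\geq(2k-1)\ln k-2$ is \emph{exponentially larger} than $k^n(1-1/k)^m$ --- that failure of the plain second moment is precisely why~\cite{Danny} had to introduce separable/tame colorings in the first place. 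Restricting to colorings in small clusters does not rescue this on its own: the hypothesis~(\ref{eqLemma_plantedCluster2}) constrains only the cluster of the planted coloring $\SIGMA$, and ``re-planting from each candidate coloring'' is exactly the overlap analysis of a full second-moment argument, which your proposal leaves unproved. In effect you are re-deriving the content of \Lem~\ref{Lemma_smm} for a new random variable $Y$ that, moreover, lacks the separability condition {\bf T2} that the overlap analysis in~\cite{Danny} relies on.

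The paper's proof avoids any new second-moment computation. It uses~(\ref{eqLemma_plantedCluster2}) together with \Lem~\ref{Lemma_separable} (separability of the planted coloring, quoted from~\cite{Danny}) and Stirling (balancedness with probability $\Omega(1)$) to show that, conditioned on $\vec\sigma$ being a $k$-coloring of $G(n,m^*)$ for a slightly larger density $d^*>d$, the coloring $\vec\sigma$ is \emph{tame} with probability bounded away from $0$; monotonicity under adding edges converts the cluster bound at density $d$ into condition {\bf T3} at density $d^*$. This verifies the hypothesis~(\ref{eqDanny}) of the black-box second-moment result \Lem~\ref{Lemma_smm}, and then \Lem~\ref{Lemma_Xsmm} (Paley--Zygmund plus the sharp-threshold \Lem~\ref{Lemma_Xfriedgut}) yields $\dcrit\geq d$. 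So the missing ideas in your write-up are (i) invoking separability so that the already-proven second moment for tame colorings applies, and (ii) working at $d^*>d$ and using monotonicity, rather than attempting a fresh second moment for your $Y$.
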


The more challenging claim is that $d\geq\dcrit$ if 
typically the  cluster in the planted model is ``too big''.
To prove this, we consider a variant of the planted model in which the number of edges is fixed.
More precisely, for a map $\sigma:\brk n\ra\brk k$ we let $G(n,m,\sigma)$ denote a graph on the vertex set $V=\brk n$
with precisely $m$ edges that do not join vertices $v,w$ with $\sigma(v)=\sigma(w)$ chosen uniformly at random.
In other words, $G(n,m,\sigma)$ is just the random graph $G(n,m)$  conditioned on the event that $\sigma$ is a $k$-coloring.
The following lemma, which is a variant of the ``planting trick'' from~\cite{Barriers}, establishes a general relationship between $G(n,m)$ and $G(n,m,\sigma)$.

\begin{lemma}\label{Lemma_antiPlanting}
Let $d>0$.
Assume that there exists a sequence $(\cE_n)_{n\geq1}$ of events such that
	\begin{equation}\label{eqantiPlanting}
	\lim_{n\ra\infty}\pr\brk{G(n,m)\in\cE_n}=1\quad\mbox{while}\quad
		\limsup_{n\ra\infty}\pr\brk{G(n,m,\vec\sigma)\in\cE_n}^{1/n}<1.
	\end{equation}
Then for any $c>d$ we have $\limsup\Erw[\Zkc(G(n,c/n))^{1/n}]<k(1-1/k)^{c/2}$. In particular, $\dcrit\leq d$.
\end{lemma}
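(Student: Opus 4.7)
\emph{Proof plan.} The idea is to proceed in two stages. First, I would use the planting trick to convert the hypothesis on $\cE_n$ into an exponentially small upper bound on $\Erw[\Zkc(\gnm)\mathbbm{1}_{\cE_n}]$, and thereby on $\Erw[\Zkc(\gnm)^{1/n}]$. Second, for $c>d$, I would couple $G(n,c/n)$ with $\gnm$ by extracting a random $m$-edge subgraph so that the extra $\sim(c-d)n/2$ edges contribute exactly the factor $(1-1/k)^{(c-d)/2}$ that is needed to match the target $k(1-1/k)^{c/2}$.

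\emph{Step 1 (planting trick).} By hypothesis, there is $\alpha<1$ such that $\pr[G(n,m,\vec\sigma)\in\cE_n]\leq\alpha^n$ for all large $n$. Expanding the planted-model probability over $m$-edge graphs,
\[
\pr[G(n,m,\vec\sigma)\in\cE_n]
=\frac{1}{k^n}\sum_{G}\mathbbm{1}_{\cE_n}(G)\sum_{\sigma\ k\text{-col.\ of }G}\frac{1}{\binom{N-\Forb(\sigma)}{m}}.
\]
The convexity bound $\Forb(\sigma)\geq k\binom{n/k}{2}=N/k-O(n)$, which holds uniformly in $\sigma$, yields the routine ratio-of-binomials estimate $\binom{N-\Forb(\sigma)}{m}/\binom{N}{m}\leq C(1-1/k)^m$ for an absolute constant $C$. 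Substituting,
\[
\Erw[\Zkc(\gnm)\mathbbm{1}_{\cE_n}]\leq C\alpha^n\,k^n(1-1/k)^m.
\]
Combining with the trivial $\Erw[\Zkc(\gnm)^{1/n}\mathbbm{1}_{\cE_n^c}]\leq k\,\pr[\cE_n^c]=o(1)$ and Jensen's inequality (concavity of $x\mapsto x^{1/n}$), I would obtain $\limsup_n\Erw[\Zkc(\gnm)^{1/n}]\leq\alpha k(1-1/k)^{d/2}$.

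\emph{Step 2 (coupling to $G(n,c/n)$).} Fix $c>d$ and let $M:=|E(G(n,c/n))|$, which is binomial with mean $cn/2$; by Chernoff, $M\geq m$ and $|M-cn/2|\leq n^{2/3}$ with probability $1-o(1)$. Conditional on $M\geq m$, let $G_1$ be a uniformly chosen $m$-edge subgraph of $G(n,c/n)$. A standard symmetry argument shows that (i) $G_1\sim\gnm$, (ii) $G_1$ is independent of $M$ on $\{M\geq m\}$, and (iii) the remaining $M-m$ edges form a uniformly random $(M-m)$-subset of $K_n\setminus E(G_1)$. For any $k$-coloring $\sigma$ of $G_1$, all $\Forb(\sigma)$ monochromatic pairs of $\sigma$ lie in $K_n\setminus E(G_1)$, so
\[
\pr[\sigma\text{ is }k\text{-col.\ of }G(n,c/n)\mid G_1,M]
=\frac{\binom{N-m-\Forb(\sigma)}{M-m}}{\binom{N-m}{M-m}}\leq C'(1-1/k)^{M-m},
\]
again by the convexity lower bound on $\Forb(\sigma)$. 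Summing over $\sigma$ gives $\Erw[\Zkc(G(n,c/n))\mid G_1,M]\leq C'(1-1/k)^{M-m}\Zkc(G_1)$.

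\emph{Step 3 (finishing).} Jensen conditionally on $(G_1,M)$ combined with independence of $G_1$ from $M$ and the estimate from Step 1 gives
\[
\Erw[\Zkc(G(n,c/n))^{1/n}]
\leq(1-1/k)^{(c-d)/2+o(1)}\Erw[\Zkc(\gnm)^{1/n}]+o(1)
\leq\alpha k(1-1/k)^{c/2}+o(1),
\]
where the $o(1)$ absorbs the rare event $\{|M-cn/2|>n^{2/3}\}$ on which we use $\Zkc\leq k^n$. Since $\alpha<1$, the right-hand side is strictly less than $k(1-1/k)^{c/2}$. As this holds for every $c>d$, \Prop~\ref{Lemma_dcrit} immediately yields $\dcrit\leq d$.

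\emph{Main obstacle.} The technical crux is the uniform-in-$\sigma$ bound $\binom{N-\Forb(\sigma)}{m}/\binom{N}{m}\leq C(1-1/k)^m$ together with its Step~2 analogue, which must cover all $\sigma$---in particular unbalanced ones---so that non-balanced colorings cannot spoil the planting-trick estimate. The bound reduces to controlling $\prod_{i=0}^{j-1}(b-i)/(a-i)$ with $b/a\leq1-1/k+O(1/n)$ for $j=O(n)$ and $a=\Theta(n^2)$, and the cumulative error $\exp(O(j/n))=O(1)$ is precisely what gets absorbed into the constant $C$. The remaining ingredients (Jensen, the edge-coupling, and concentration of $M$) are mechanical.
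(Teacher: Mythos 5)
Your argument is correct, and its first half coincides with the paper's: the planting-trick computation in Step 1 (linearity of expectation, the uniform bound $\binom{N-\Forb(\sigma)}{m}/\binom{N}{m}\le O((1-1/k)^m)$, and Jensen applied to $\Erw[\Zkc(\gnm)\vecone_{\cE_n}]$) is precisely how the paper converts the hypothesis on $(\cE_n)_n$ into $\limsup_n\Erw[\Zkc(\gnm)^{1/n}]<k(1-1/k)^{d/2}$. Where you diverge is the transfer from $\gnm$ at density $d$ to $G(n,c/n)$ at density $c>d$: the paper outsources this to \Cor~\ref{Cor_observation}, which first passes to a density $d^*$ slightly above $d$ using monotonicity under edge addition and then invokes the sprinkling coupling of \Lem~\ref{Lemma_observation} (two-round Bernoulli exposure, each additional edge costing a factor $1-1/k$ for the survival of any coloring, plus Jensen) to reach all $c>d$. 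You instead do the transfer in one shot by extracting a uniformly random $m$-edge subgraph $G_1\sim\gnm$ of $G(n,c/n)$ and noting that, conditionally on $(G_1,M)$, the remaining $M-m$ edges form a uniform subset of the non-$G_1$ pairs, so each coloring of $G_1$ survives with probability at most $O((1-1/k)^{M-m})$, which yields the factor $(1-1/k)^{(c-d)/2+o(1)}$ directly. The two couplings carry the same content (your exchangeability claims (i)--(iii) check out, and you correctly confine the binomial-ratio estimates to the event $|M-cn/2|\le n^{2/3}$, handling the complement via $\Zkc^{1/n}\le k$), so the difference is organizational: your route is self-contained for this lemma and avoids the strict-inequality propagation built into \Lem~\ref{Lemma_observation}, whereas the paper's factorization allows \Lem~\ref{Lemma_observation} to be reused elsewhere (e.g.\ in the proof of \Prop~\ref{Lemma_dcrit}). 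The final deduction $\dcrit\le d$ from~(\ref{eqdcrit1}) is the same in both.
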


We prove \Lem~\ref{Lemma_antiPlanting} in \Sec~\ref{Sec_antiPlanting}.
Hence, assuming that 
the typical cluster size in the planted model is ``too big'' \whp,
we need to exhibit events $\cE_n$ such that~(\ref{eqantiPlanting}) holds.
An obvious choice seems to be
	$$\cE_n(\eps)=\cbc{\Zkc^{1/n}\leq k(1-1/k)^{d/2}+\eps}$$
But (\ref{eqantiPlanting}) requires that the probability that $\cE_n$ occurs in $G(n,m,\sigma)$ is {\em exponentially} small, and
neither the cluster size nor $\Zkc$ are known to be sufficiently concentrated to obtain such an exponentially small probability.

Therefore, we define the events $\cE_n$ by means of another random variable.
For a graph $G=(V,E)$ and a map $\sigma:V\ra\brk k$ let $\cH_G(\sigma)$ be the number of edges $\cbc{v,w}$ of $G$ such that $\sigma(v)=\sigma(w)$.
In words, $\cH_G(\sigma)$ is the number of edges of $G$ that are monochromatic under $\sigma$.
Furthermore, given $\beta>0$ let
	$$Z_{\beta,k}(G)=\sum_{\sigma:V\ra\brk k}\exp(-\beta\cdot\cH_G(\sigma)),$$
a quantity known as the {\em partition function of the $k$-spin Potts antiferromagnet on $G$ at inverse temperature $\beta$}.

For large $\beta$ there is a stiff ``penalty factor'' of $\exp(-\beta)$ for any monochromatic edge.
Thus, we expect that $Z_{\beta,k}$ becomes a good proxy for $\Zkc$  as $\beta\ra\infty$.
At the same time, $\ln Z_{\beta,k}$ enjoys a Lipschitz property.
Namely, suppose that we obtain a graph $G'$ from $G$ by either adding or removing a single edge.
Then
	\begin{equation}\label{eqLip}
	|\ln(Z_{\beta,k}(G))-\ln(Z_{\beta,k}(G'))|\leq\beta.
	\end{equation}
Due to this Lipschitz property, one can easily show that $\ln Z_{\beta,k}$ is tightly concentrated.
More precisely, we have

\begin{lemma}\label{Cor_ZAzuma}
For any fixed $d>0$, $\eps>0$ there is 
$\alpha>0$ such that the following is true.
Suppose that 
$(\sigma_n)_{n\geq1}$ is a sequence of maps $\brk n\ra\brk k$. 
Then for all large enough $n$,
	$$\pr\brk{|\ln(Z_{\beta,k}(G(n,p',\sigma_n)))-\Erw[\ln Z_{\beta,k}(G(n,p',\sigma_n))]|>\eps n}\leq \exp(-\alpha n).$$
\end{lemma}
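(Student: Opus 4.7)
The plan is to apply a variance-aware martingale concentration inequality to a Doob martingale built by exposing the independent edge indicators of the planted random graph. First, I would view $G(n,p',\sigma_n)$ as a function of $N=|\{\{v,w\}:\sigma_n(v)\neq\sigma_n(w)\}|=\Theta(n^2)$ independent Bernoulli$(p')$ random variables $X_1,\ldots,X_N$, one per potential edge, after fixing an arbitrary ordering of these pairs. Then I would set $M_i=\Erw[\ln Z_{\beta,k}(G(n,p',\sigma_n))\mid X_1,\ldots,X_i]$, so that $M_0=\Erw[\ln Z_{\beta,k}(G(n,p',\sigma_n))]$ and $M_N=\ln Z_{\beta,k}(G(n,p',\sigma_n))$.

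Next, I would extract from the Lipschitz bound~(\ref{eqLip}) two estimates on the martingale differences. By a standard coupling (flip $X_i$ while keeping every other $X_j$ fixed), the conditional expectations $g_0=\Erw[\ln Z_{\beta,k}\mid X_1,\ldots,X_{i-1},X_i=0]$ and $g_1=\Erw[\ln Z_{\beta,k}\mid X_1,\ldots,X_{i-1},X_i=1]$ satisfy $|g_1-g_0|\leq\beta$. This immediately yields the pointwise increment bound $|M_i-M_{i-1}|\leq\beta$, and it also gives the conditional variance bound
\[
\Var(M_i-M_{i-1}\mid\cF_{i-1})\;=\;p'(1-p')(g_1-g_0)^2\;\leq\;p'\beta^2.
\]
Summing over $i\leq N$, the total predictable quadratic variation is bounded by $V\leq Np'\beta^2=O(n\beta^2)$, since $Np'=O(n)$.

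Finally, I would apply a Bernstein-type martingale inequality (Freedman's inequality) with deviation parameter $t=\eps n$, increment bound $B=\beta$ and total conditional variance $V=O(n\beta^2)$:
\[
\pr\brk{|M_N-M_0|>\eps n}\;\leq\;2\exp\bc{-\frac{\eps^2 n^2}{2(V+B\eps n/3)}}\;=\;\exp(-\Omega(n)),
\]
which delivers the desired constant $\alpha=\alpha(\eps,\beta,d,k)>0$, uniformly over the choice of the sequence $(\sigma_n)_n$ (none of the bounds above used anything about $\sigma_n$ beyond $N\leq\binom n2$ and $p'=d'/n$).

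The main obstacle is that plain Azuma--Hoeffding applied only to the $\beta$-Lipschitz property will \emph{not} deliver exponential concentration: with $N=\Theta(n^2)$ potential edges and a per-coordinate bound of $\beta$, Azuma only yields a tail of order $\exp(-\Omega(\eps^2/\beta^2))$, which is a constant independent of $n$. The exponential rate has to be squeezed out of the sparsity $p'=d'/n$, and this is exactly what a variance-sensitive bound such as Freedman's accomplishes. (Alternatively, one could first condition on the Chernoff event $\{|E(G(n,p',\sigma_n))|\leq 2d'n\}$, which fails with probability $\exp(-\Omega(n))$, and then run standard Azuma on the $O(n)$-step edge-exposure martingale inside that event; this produces the same exponential rate through a slightly more pedestrian route.)
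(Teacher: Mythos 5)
Your proof is correct and takes essentially the same route as the paper: the paper's one-line proof combines the Lipschitz property (\ref{eqLip}) with the variance-sensitive, Bernstein-type refinement of the bounded differences inequality \cite[\Thm~3.8]{McDiarmid}, which is exactly what your edge-exposure Doob martingale plus Freedman computation re-derives, the per-coordinate conditional variance $p'(1-p')\beta^2$ giving total variance $O(n\beta^2)$ and hence the $\exp(-\Omega(n))$ tail uniformly in $\sigma_n$. Your observation that plain Azuma--Hoeffding over the $\Theta(n^2)$ potential edges only yields a constant-order exponent is precisely the reason the paper invokes that refined inequality rather than the standard bounded differences bound.
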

\begin{proof}
This is immediate from the Lipschitz property~(\ref{eqLip}) and McDiarmid's inequality~\cite[\Thm~3.8]{McDiarmid}.
\end{proof}

Furthermore, in \Sec~\ref{Sec_partition} we show that \Lem~\ref{Cor_ZAzuma} implies

\begin{lemma}\label{Lemma_partition}
Assume that $d$ is such that~(\ref{eqLemma_plantedCluster3}) holds.
Then there exist $z,\beta>0$ such that
	$$\lim_{n\ra\infty}\pr\brk{\frac1n\ln Z_{\beta,k}(\gnm)\leq z}=1\quad\mbox{while}\quad
		\limsup_{n\ra\infty}\pr\brk{\frac1n\ln Z_{\beta,k}(G(n,m,\vec\sigma))\leq z}^{1/n}<1.$$
\end{lemma}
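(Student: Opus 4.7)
The plan is to choose a large inverse temperature $\beta$ and a threshold $z$ strictly between two carefully computed levels: one coming from a first moment bound on $Z_{\beta,k}(\gnm)$ and one coming from the planted cluster. On the $\gnm$ side I would deploy a vanilla Markov argument, while on the planted side I would combine the trivial inequality $Z_{\beta,k}(G)\geq|\cC(G,\vec\sigma)|$ (every $\tau\in\cC(G,\vec\sigma)$ is by definition a proper coloring of $G$, hence $\cH_G(\tau)=0$) with the exponential concentration provided by Lemma~\ref{Cor_ZAzuma}. The key point is that as $\beta\to\infty$, the first moment exponent tends to $\ln(k(1-1/k)^{d/2})$, while assumption~(\ref{eqLemma_plantedCluster3}) forces the planted partition function to be at least $(k(1-1/k)^{d/2}+\eps)^n$ w.h.p., opening a positive gap.

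For the first moment, for any $\sigma:[n]\to[k]$ the variable $\cH_{\gnm}(\sigma)$ is hypergeometric with parameters $(N,\Forb(\sigma),m)$, and standard asymptotics yield
\begin{equation*}
\Erw[\exp(-\beta\cH_{\gnm}(\sigma))]=\bc{1-(1-e^{-\beta})\Forb(\sigma)/N}^m(1+o(1)).
\end{equation*}
Because the right-hand side is monotone decreasing in $\Forb(\sigma)$ and $\Forb(\sigma)/N\geq 1/k$ by convexity, summing over the $k^n$ maps $\sigma$ gives
\begin{equation*}
\Erw[Z_{\beta,k}(\gnm)]\leq k^n\bc{1-(1-e^{-\beta})/k}^m(1+o(1)).
\end{equation*}
Setting $z^-(\beta)=\ln k+(d/2)\ln(1-(1-e^{-\beta})/k)$, Markov's inequality then delivers $\frac{1}{n}\ln Z_{\beta,k}(\gnm)\leq z^-(\beta)+\eta$ w.h.p.\ for any $\eta>0$.

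For the planted side, applying $Z_{\beta,k}(\G)\geq|\cC(\G,\vec\sigma)|$ to $\G=G(n,p',\vec\sigma)$ and invoking~(\ref{eqLemma_plantedCluster3}) yields an $\eps>0$ for which $\frac{1}{n}\ln Z_{\beta,k}(\G)\geq z^+:=\ln(k(1-1/k)^{d/2}+\eps/2)$ w.h.p. A w.h.p.\ lower bound combined with the sub-gaussian tail of Lemma~\ref{Cor_ZAzuma} then forces the conditional mean $\mu_n(\vec\sigma):=\Erw[\ln Z_{\beta,k}(\G)\mid\vec\sigma]$ to satisfy $\mu_n(\vec\sigma)/n\geq z^+-o(1)$ for a $1-o(1)$ fraction of (essentially balanced) $\vec\sigma$.

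Since $z^-(\beta)\to\ln(k(1-1/k)^{d/2})<z^+$ as $\beta\to\infty$, I fix $\beta$ so large that $z^-(\beta)+3\eta<z^+$ for some $\eta>0$ and set $z=z^-(\beta)+\eta$. The first moment bound gives the $\gnm$ half of the lemma. For the planted half, Lemma~\ref{Cor_ZAzuma} produces $\pr[\frac{1}{n}\ln Z_{\beta,k}(\G)\leq z]\leq\exp(-\alpha n)$ because $z<\mu_n(\vec\sigma)/n-\eta$ holds for typical $\vec\sigma$. Transferring from $G(n,p',\vec\sigma)$ to $G(n,m,\vec\sigma)$ is routine: the edge count of $\G$ is binomial with mean $\sim m$ and hits $m$ with probability $\Omega(1/\sqrt n)$ by a local CLT, so conditional on this event $\G$ is distributed as $G(n,m,\vec\sigma)$, and the polynomial loss does not harm the exponential bound. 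The main delicacy is the calibration of $\beta$: raising $\beta$ closes the gap $z^+-z^-(\beta)$ (good) but inflates the Lipschitz constant in Lemma~\ref{Cor_ZAzuma} (bad). This is not a genuine obstruction because $\beta$ depends only on $d,k,\eps$ and not on $n$, so the resulting $\alpha=\alpha(d,k,\eps,\beta)$ remains a positive constant.
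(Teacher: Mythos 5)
The first half of your argument is fine and essentially matches the paper: your annealed bound on $\Erw[Z_{\beta,k}(\gnm)]$ (the paper gets the same conclusion in \Lem~\ref{Lemma_upperBoundAnnealed} by truncating at $\gamma n$ monochromatic edges rather than via the hypergeometric MGF), the observation $Z_{\beta,k}\geq|\cC(\,\cdot\,,\vec\sigma)|$, and the calibration of $\beta$ so that $z^-(\beta)$ undercuts $z^+$ are all as in the paper. The gap is in the planted half. Your hypothesis~(\ref{eqLemma_plantedCluster3}) is only a w.h.p.\ statement over the joint choice of $(\vec\sigma,\G)$, so the averaging step you describe gives control of the conditional mean $\mu_n(\sigma)=\Erw[\ln Z_{\beta,k}(G(n,p',\sigma))]$ only for a $1-o(1)$ fraction of maps $\sigma$. \Lem~\ref{Cor_ZAzuma} concentrates $\ln Z_{\beta,k}(G(n,p',\sigma))$ around $\mu_n(\sigma)$ \emph{for a fixed $\sigma$}; it says nothing about how $\mu_n(\sigma)$ varies with $\sigma$. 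Consequently the bound you actually obtain is
$$\pr\brk{\tfrac1n\ln Z_{\beta,k}(\G)\leq z}\;\leq\;\pr\brk{\vec\sigma\mbox{ atypical}}+\exp(-\alpha n)\;=\;o(1)+\exp(-\alpha n),$$
which is $o(1)$ but not exponentially small, whereas the lemma requires $\limsup_n\pr[\,\cdot\,]^{1/n}<1$. The sentence ``because $z<\mu_n(\vec\sigma)/n-\eta$ holds for typical $\vec\sigma$'' is exactly where the argument breaks: the atypical $\sigma$, for which you have no bound at all, are not shown to be exponentially rare, and a priori each of them could make the bad event certain.

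Closing this gap is the actual content of the paper's proof. One first extracts (as you implicitly do, cf.\ \Lem~\ref{Lemma_pickAndChoose}) a single \emph{balanced} reference sequence $\sigma_n$ with $\frac1n\Erw[\ln Z_{\beta,k}(G(n,p',\sigma_n))]$ large. Then one exploits that, by permutation invariance, $\mu_n(\sigma)$ depends on $\sigma$ only through its color-class sizes; that $\vec\sigma$ deviates from balanced by more than $\eta n$ only with probability $\exp(-\Omega(n))$ (\Lem~\ref{Lemma_nearlyBalanced}); and that recoloring the at most $2\eta n$ discrepant vertices changes $\ln Z_{\beta,k}$ by at most $\beta$ times the volume of that set, which exceeds $\gamma n$ only with probability $\exp(-\Omega(n))$ (\Lem~\ref{Lemma_Vol}). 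Combining these with McDiarmid applied to the fixed $\sigma_n$ yields the uniform exponential bound for random $\vec\sigma$ (\Lem~\ref{Lemma_sigmaRandom}), which is what \Lem~\ref{Lemma_partition} needs. Your proposal skips this coupling/recoloring step entirely, and your final transfer to $G(n,m,\vec\sigma)$ also needs it: conditioning on the edge count tilts the law of $\vec\sigma$, and for badly unbalanced $\sigma$ the local-CLT factor $\Omega(1/\sqrt n)$ is unavailable, so those $\sigma$ again have to be excluded by an exponential, not merely w.h.p., bound.
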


\noindent
Finally, \Prop~\ref{Lemma_plantedCluster} is immediate from \Lem s~\ref{Lemma_plantedCluster_easy}, \ref{Lemma_antiPlanting} and~\ref{Lemma_partition}.

\subsection{Proof of \Lem~\ref{Lemma_plantedCluster_easy}}\label{Sec_plantedCluster_easy}
We use the following observation from~\cite{Danny}.

\begin{lemma}[\cite{Danny}]\label{Lemma_separable}
Suppose that $(2k-1)\ln k-2\leq d\leq (2k-1)\ln k$.
Let $p'$ be as in~(\ref{eqLemma_plantedCluster1}).
Then the planted coloring $\vec\sigma$ is separable in $G(n,p',\vec\sigma)$ \whp
\end{lemma}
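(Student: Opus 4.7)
The plan is to run a first-moment computation directly in the planted model. Let $\vec X$ count the number of balanced $\tau \in \cC(\G, \vec\sigma)$ that violate the separability condition, i.e., such that $\rho_{ii}(\vec\sigma,\tau) < (1-\kappa)/k$ for some $i \in \brk k$. Since $\vec\sigma$ fails to be separable in $\G$ iff $\vec X \geq 1$, Markov's inequality reduces the problem to showing $\Erw[\vec X] = o(1)$. A standard Chernoff argument lets me condition in advance on $\vec\sigma$ being balanced, so I may treat $\vec\sigma$ as a deterministic balanced map throughout.

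I would classify each candidate $\tau$ by its overlap matrix $\rho = \rho(\vec\sigma,\tau)$. For balanced $\vec\sigma,\tau$, this $\rho$ is a $k\times k$ nonnegative matrix with row and column sums $1/k + o(1)$. By Stirling's formula, the number of balanced $\tau$ with a prescribed $\rho$ equals a product of multinomial coefficients of total size $\exp(n(H(\rho)-\ln k)+O(\ln n))$, where $H(\rho)=-\sum_{i,j}\rho_{ij}\ln\rho_{ij}$. Given such $\tau$, the probability that $\tau$ is a proper coloring of $\G$ equals $(1-p')^{M(\rho)}$, where $M(\rho)$ counts the $\vec\sigma$-bichromatic, $\tau$-monochromatic pairs; with $p'=d'/n$ and $d'=dk/(k-1)$, this simplifies to $\exp\bigl(-\tfrac{d}{2(k-1)}\,n\,(1-k\|\rho\|_F^2)+o(n)\bigr)$. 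Summing over the (polynomially many) $\rho$ on a $1/n$-grid,
\[
\Erw[\vec X]\leq \exp(o(n))\sum_\rho \exp(n\phi(\rho)),\qquad \phi(\rho)=H(\rho)-\ln k-\tfrac{d}{2(k-1)}\bigl(1-k\|\rho\|_F^2\bigr),
\]
where the sum ranges over $k\times k$ doubly-stochastic $\rho$ (marginals $1/k$) satisfying $\rho_{jj}\geq 0.51/k$ for all $j$ and $\rho_{i_0 i_0}\leq (1-\kappa)/k$ for some $i_0$. It therefore suffices to show $\sup\phi \leq -\Omega(1)$ on this ``dangerous'' region, uniformly for $d\in[(2k-1)\ln k-2,(2k-1)\ln k]$.

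The main obstacle is this constrained non-convex optimization. Using Lagrange multipliers, one checks that critical points of $\phi$ on the set of doubly-stochastic matrices are of a block form depending only on the diagonal vector $(\rho_{11},\ldots,\rho_{kk})$: off-diagonal entries within a row and column are forced to be equal by the convexity of $H$ and $\|\rho\|_F^2$ in those entries, given the diagonal. This reduces the task to a low-dimensional analysis of $\phi$ on a $k$-dimensional polytope. The cluster constraint $\rho_{jj}\geq 0.51/k$ confines attention to the ``near-identity'' region, where the unique zero maximum of $\phi$ is at $\rho=\tfrac{1}{k}I$ (and its permutations) with $\phi(\tfrac{1}{k}I)=0$. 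Any perturbation moving some $\rho_{i_0 i_0}$ down into the interval $[0.51/k,(1-\kappa)/k]$ yields $\phi<0$, and a quantitative gap $\phi \leq -\Omega(1)$ follows from Hessian/convexity estimates essentially equivalent to those underpinning the second-moment bound in \cite{Danny}. The choice $\kappa=\ln^{20}k/k$ is tuned so that the negativity persists throughout $d\in[(2k-1)\ln k-2,(2k-1)\ln k]$, whence the union bound yields $\Erw[\vec X]=o(1)$ and the lemma follows.
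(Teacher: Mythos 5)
Before comparing, note that the paper does not prove this lemma at all: it is imported verbatim from \cite{Danny}, so the only meaningful comparison is with the argument there. Your strategy --- a first moment in the planted model over overlap matrices, which is just the planted-model face of the usual second-moment exponent --- is the natural route, and your setup is essentially right: the entropy count $\exp(n(H(\rho)-\ln k)+O(\ln n))$, the survival probability $(1-p')^{M(\rho)}$ with $M(\rho)\approx\frac{n^2}{2}(1/k-\|\rho\|_F^2)$, and the union bound over a $1/n$-grid of overlaps. (A small slip: a uniformly random $\vec\sigma$ is balanced in the paper's $\sqrt n$-sense only with probability $\Theta(1)$, not \whp; but near-balance within $o(n)$ holds \whp\ and suffices for the exponential-scale computation.)

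The genuine gap is that the entire content of the lemma is the claim $\sup\phi<0$ on the dangerous region, and your treatment of that optimization is both unproven and, in one key respect, wrong. First, $\rho=\frac1k\,\mathrm{I}$ is \emph{not} a maximum of $\phi$ on the (near-)doubly-stochastic polytope: moving mass $\delta$ off a diagonal entry gains entropy $\approx 2\delta\ln\frac{1}{k\delta}$ against an energy cost $\approx\frac{2d\delta}{k-1}\approx 4\delta\ln k$, so $\phi(\rho)>0=\phi(\frac1k\,\mathrm{I})$ for all $0<\delta\lesssim e^{-d/(k-1)}\asymp k^{-2}$ (this is exactly why the cluster has exponential size). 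The negativity you need holds only because $\kappa/k=\ln^{20}k/k^2$ sits well above this crossover scale, so ``any perturbation moving $\rho_{i_0i_0}$ down yields $\phi<0$'' is false, and Hessian/convexity estimates at $\frac1k\,\mathrm{I}$ cannot be the argument; a genuinely quantitative comparison of the two scales is required. Second, the Lagrange-multiplier reduction to a diagonal-parametrized family is not justified: you are maximizing a concave entropy plus a \emph{convex} term $+c\|\rho\|_F^2$ ($c>0$), a non-convex program whose maximizer need not be a symmetric interior critical point and may lie on the boundary of the polytope --- precisely the notorious difficulty in Achlioptas--Naor-type optimizations and in \cite{Danny}. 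Finally, deferring the ``quantitative gap'' to estimates ``essentially equivalent to those underpinning the second-moment bound in \cite{Danny}'' is circular here, since this lemma is exactly what is being quoted from \cite{Danny}; as written, your proposal amounts to the citation the paper already makes, with the hard step (uniform negativity of $\phi$ over all admissible $\rho$ with diagonal entries in $[0.51/k,1/k+o(1)]$ and some entry at most $(1-\kappa)/k$, uniformly for $d$ up to $(2k-1)\ln k$) left unestablished.
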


If~(\ref{eqLemma_plantedCluster2}) holds, then there exists $\eps>0$ such that with $p'$ from~(\ref{eqLemma_plantedCluster1}) we have
	\begin{equation}\label{eqLemma_plantedCluster_easy1}
	\lim_{n\ra\infty}\pr\brk{|\cC(G(n,p',\vec\sigma),\vec\sigma)|\leq k^n(1-1/k)^m\exp(-\eps n)}=1.
	\end{equation}
Pick a number $d^*>d$ such that with $m^*=\lceil d^*n/2\rceil$ we have
	$$k^n(1-1/k)^{m^*}\geq k^n(1-1/k)^m\exp(-\eps n/2).$$

We claim that if we choose $\vec\sigma:\brk n\ra\brk k$ uniformly at random and independently a random graph $G(n,m^*)$, then
	\begin{eqnarray}\label{eqLemma_plantedCluster_easy6}
	\liminf_{n\ra\infty}\pr\brk{\vec\sigma\mbox{ is tame}|\vec\sigma\mbox{ is a $k$-coloring of }G(n,m^*)}>0.
	\end{eqnarray}
To see this, let $\cE$ be the event that the random graph $G(n,p',\vec\sigma)$ has no more than $m^*$ edges.
Because the number of edges in $G(n,p',\vec\sigma)$ is binomially distributed with mean $m<m^*-\Omega(n)$,
the Chernoff bound implies that $\pr\brk\cE=1-o(1)$.
Therefore, (\ref{eqLemma_plantedCluster_easy1}) implies
	\begin{equation}\label{eqLemma_plantedCluster_easy2}
	\lim_{n\ra\infty}\pr\brk{|\cC(G(n,p',\vec\sigma),\vec\sigma)|\leq k^n(1-1/k)^m\exp(-\eps n)\,|\,\cE}=1.
	\end{equation}
Further, set $d''=kd^*/(k-1)$ and let $p''=d''/n>p'$.
Then we can think of $G(n,p'',\vec\sigma)$ as being obtained from $G(n,p',\vec\sigma)$ by adding further random edges.
More precisely, let $\cA$ be the event that $G(n,p'',\vec\sigma)$ contains precisely $m^*$ edges and set
	$$p_n'=\pr\brk{|\cC_{\vec\sigma}(G(n,p'',\vec\sigma))|\leq k^n(1-1/k)^{m^*}\,|\,\cA}.$$
Since adding edges can only decrease the cluster size, (\ref{eqLemma_plantedCluster_easy2}) entails
	\begin{equation}\label{eqLemma_plantedCluster_easy3}
	\lim_{n\ra\infty}p_n'\geq
		\lim_{n\ra\infty}\pr\brk{|\cC_{\vec\sigma}(G(n,p',\vec\sigma))|\leq k^n(1-1/k)^m\exp(-\eps n)\,|\,\cE}=1.
	\end{equation}
Similarly, let
	$p_n''=\pr\brk{\vec\sigma\mbox{ is separable in }G(n,p'',\vec\sigma)\,|\,\cA}.$
Then \Lem~\ref{Lemma_separable} implies
	\begin{eqnarray}\label{eqLemma_plantedCluster_easy4}
	\lim_{n\ra\infty}p_n''&\geq&\lim_{n\ra\infty}\pr\brk{\vec\sigma\mbox{ is separable in }G(n,p',\vec\sigma)\,|\,\cE}=1.
	\end{eqnarray}
Further, consider $p_n'''=\pr\brk{\vec\sigma\mbox{ is balanced}}$.
Then by Stirling's formula,
	\begin{eqnarray}\label{eqLemma_plantedCluster_easy5}
	\liminf_{n\ra\infty}p_n'''&>&0.
	\end{eqnarray}
Finally, let $p_n=\pr\brk{\vec\sigma\mbox{ is a tame $k$-coloring of }G(n,p'',\vec\sigma)|\cA}$.
Given the event $\cA$, $G(n,p'',\vec\sigma)$ is just a uniformly random graph with $m^*$ edges in which $\vec\sigma$ is a $k$-coloring.
Hence,
	$$p_n=\pr\brk{\vec\sigma\mbox{ is tame}|\vec\sigma\mbox{ is a $k$-coloring of }G(n,m^*)}.$$
As~(\ref{eqLemma_plantedCluster_easy3})--(\ref{eqLemma_plantedCluster_easy5}) yield $\liminf_{n\ra\infty}p_n>0$,
we obtain~(\ref{eqLemma_plantedCluster_easy6}) 

The estimate~(\ref{eqLemma_plantedCluster_easy6}) enables us to bound $\Erw[\Zkg(G(n,m^*)]$ from below.
Indeed, by the linearity of expectation
	\begin{eqnarray*}
	\Erw[\Zkg(G(n,m^*)]&=&\sum_{\sigma:\brk n\ra\brk k}\pr\brk{\sigma\mbox{ is a tame $k$-coloring of $G(n,m^*)$}}\\
		&=&k^n\cdot\pr\brk{\vec\sigma\mbox{ is a tame $k$-coloring of $G(n,m^*)$}}\\
		&=&k^n\pr\brk{\vec\sigma\mbox{ is a $k$-coloring of $G(n,m^*)$}}\cdot\pr\brk{\vec\sigma\mbox{ is tame}|\,\vec\sigma\mbox{ is a $k$-coloring of $G(n,m^*)$}}\\
		&=&k^n\pr\brk{\vec\sigma\mbox{ is a $k$-coloring of $G(n,m^*)$}}\cdot p_n=\Erw[\Zkc]\cdot p_n.
	\end{eqnarray*}
Thus, \Lem~\ref{Lemma_firstMoment} and (\ref{eqLemma_plantedCluster_easy6}) yield
	$$\liminf_{n\ra\infty}\frac{\Erw[\Zkg(G(n,m^*)]}{k^n(1-1/k)^m}>0.$$
As this holds for all $d^*$ in an interval $(d+\eta,d+2\eta)$ with $\eta>0$, \Lem~\ref{Lemma_Xsmm} implies that $\dcrit\geq d$.
\qed

\subsection{Proof of \Lem~\ref{Lemma_antiPlanting}}\label{Sec_antiPlanting}

\begin{lemma}\label{Cor_observation}
Assume that $d>0$ is such that
$\limsup\Erw[\Zkc(\gnm)^{1/n}]<k(1-1/k)^{d/2}$.
Then for any $c>d$ we have
	$\limsup\Erw[\Zkc(G(n,c/n))^{1/n}]<k(1-1/k)^{c/2}.$
\end{lemma}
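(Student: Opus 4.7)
The plan is to reduce Corollary~\ref{Cor_observation} to Lemma~\ref{Lemma_observation}. I will show that the hypothesis $\limsup_n\Erw[\Zkc(\gnm)^{1/n}]<k(1-1/k)^{d/2}$ already implies the analogous strict inequality $\limsup_n\Erw[\Zkc(G(n,d'/n))^{1/n}]<k(1-1/k)^{d'/2}$ for some fixed $d'\in(d,c)$. Once this intermediate statement is in hand, applying Lemma~\ref{Lemma_observation} with $d_1=d'$ and $d_2=c$ immediately yields the claim for $G(n,c/n)$.

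To prove the intermediate statement, fix $d'\in(d,c)$ and condition $G(n,d'/n)$ on its number of edges $M\sim\Bin(N,d'/n)$, noting that given $M=m'$ the graph has the law of $G(n,m')$, so $\Erw[\Zkc(G(n,d'/n))^{1/n}]=\sum_{m'}\pr[M=m']\,\Erw[\Zkc(G(n,m'))^{1/n}]$. Since $d'>d$, a Chernoff bound gives $\pr[M\le m]\le\exp(-\Omega(n))$ and $\pr[M>2d'n]\le\exp(-\Omega(n))$; together with the trivial inequality $\Zkc^{1/n}\le k$, these tails contribute $o(1)$. Hence it suffices to bound $\Erw[\Zkc(G(n,m'))^{1/n}]$ uniformly for $m\le m'\le 2d'n$.

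For this, I will rerun the edge-addition coupling of Lemma~\ref{Lemma_observation} starting from $\gnm$: let $G_1\sim\gnm$ and obtain $G_2\sim G(n,m')$ by adding $s=m'-m$ uniformly random non-edges of $G_1$. For each $k$-coloring $\sigma$ of $G_1$ the survival probability equals $\binom{N-m-\Forb(\sigma)}{s}/\binom{N-m}{s}$, and the convexity bound $\Forb(\sigma)\ge N/k-O(n)$ together with Stirling gives the same estimate $(1-1/k)^{s(1+o(1))}$ as in~(\ref{eqLemma_observation100}). Exactly as in~(\ref{eqLemma_observation111}), Jensen's inequality applied to the conditional expectation followed by averaging over $G_1$ yields
$$\Erw[\Zkc(G(n,m'))^{1/n}]\le(1-1/k)^{(m'-m)(1+o(1))/n}\,\Erw[\Zkc(\gnm)^{1/n}].$$
Plugging in $\Erw[\Zkc(\gnm)^{1/n}]<k(1-1/k)^{d/2}-\eps$ (valid for some $\eps>0$ and all large $n$) and using $d/2+(m'-m)/n=m'/n+o(1)$, the right-hand side is at most $k(1-1/k)^{m'/n+o(1)}-\eps(1-1/k)^{(m'-m)/n(1+o(1))}$. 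Because $(m'-m)/n$ stays bounded by $d'$ on the concentration window of $M$, the penalty $(1-1/k)^{(m'-m)/n}$ is bounded below by some constant $\eta>0$; summing over $m'$ weighted by $\pr[M=m']$ and using $\Erw[(1-1/k)^{M/n}]\to(1-1/k)^{d'/2}$ (by bounded convergence, since $M/n\to d'/2$ in probability) then gives $\limsup_n\Erw[\Zkc(G(n,d'/n))^{1/n}]\le k(1-1/k)^{d'/2}-\eps\eta/2<k(1-1/k)^{d'/2}$.

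The main obstacle is purely bookkeeping: ensuring that the quantitative gap $\eps$ from the hypothesis is not swallowed by either the edge-addition penalty $(1-1/k)^{(m'-m)/n}$ or the averaging over the binomial edge count $M$. Since $(m'-m)/n$ remains uniformly bounded on the concentration window of $M$, the penalty stays bounded away from $0$ and the strict inequality is preserved. No new ideas beyond those already present in Lemma~\ref{Lemma_observation} are required.
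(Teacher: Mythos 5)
Your proof is correct, and its skeleton is the same as the paper's: condition $G(n,c/n)$ (or $G(n,d'/n)$) on its binomial edge count, couple the resulting $G(n,m')$ with $\gnm$ by adding random non-edges, and hand the conclusion to \Lem~\ref{Lemma_observation}. The difference is in the transfer step. The paper's own argument is leaner: it only uses that adding edges cannot increase $\Zkc$, so $\Erw[\Zkc(G(n,d^*/n))^{1/n}]\leq\Erw[\Zkc(\gnm)^{1/n}]+o(1)<k(1-1/k)^{d/2}-\delta$, and then exploits continuity of $d\mapsto k(1-1/k)^{d/2}$ to conclude $d^*\in D^*$ only for $d^*$ \emph{sufficiently close} to $d$, after which \Lem~\ref{Lemma_observation} propagates the strict inequality to every $c>d$. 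You instead re-derive the quantitative survival factor $(1-1/k)^{(m'-m)/n(1+o(1))}$ from the proof of \Lem~\ref{Lemma_observation} (with the correct bound $\Forb(\sigma)\geq N/k-O(n)$), which buys a stronger intermediate statement: the gap survives at \emph{any} fixed $d'>d$, with only bookkeeping over the edge-count window needed (your bound $(m'-m)/n\leq d'$ should be $2d'$ on your window, but that is immaterial since only boundedness is used). A side effect is that your final appeal to \Lem~\ref{Lemma_observation} is redundant — your intermediate argument applied with $d'=c$ already proves the lemma — whereas in the paper's version that lemma is doing the real work. Both routes are valid; the paper's trades sharpness for brevity, yours trades a repeated computation for a self-contained quantitative bound.
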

\begin{proof}
Assume that $d,\delta>0$ are such that $\limsup\Erw[\Zkc(\gnm)^{1/n}]<k(1-1/k)^{d/2}-\delta$.
We claim that
	\begin{equation}\label{eqLemma_observation1}
	d^*\in D^*=\cbc{c>0:\limsup\Erw[\Zkc(G(n,c/n))^{1/n}]<k(1-1/k)^{c/2}}
	\qquad\mbox{ for any $d^*>d$.}
	\end{equation}
Indeed, the number $e(G(n,d^*/n))$ of edges of $G(n,d^*/n)$ is binomially distributed with mean $(1+o(1))d^*n/2$.
Since $d,d^*$ are independent of $n$ and $d^*>d$, the Chernoff bound implies that
	\begin{equation}\label{eqLemma_observation2}
	\pr\brk{e(G(n,d^*/n))\leq m}\leq\exp(-\Omega(n)).
	\end{equation}
Further, if we condition on the event that $m^*=e(G(n,d^*/n))>m$, then we can think of $G(n,d^*/n)$ as follows:
	first, create a random graph $\gnm$; then, add another $m^*-m$ random edges.
Since the addition of further random edges cannot increase the number of $k$-colorings, (\ref{eqLemma_observation2}) 
	\begin{eqnarray*}
	\Erw[\Zkc(G(n,d^*/n))^{1/n}]&\leq&\Erw[\Zkc(G(n,d^*/n))^{1/n}|m^*>m]+k\cdot\pr\brk{e(G(n,d^*/n))\leq m}\\
		&\leq&\Erw[\Zkc(G(n,m))^{1/n}]+o(1).
	\end{eqnarray*}
Taking $n\ra\infty$, and assuming that $d^*>d$ is sufficiently close to $d$, we conclude that
	$$\limsup_{n\ra\infty}\Erw[\Zkc(G(n,d^*/n))^{1/n}]\leq k(1-1/k)^{d/2}-\delta< k(1-1/k)^{d_*/2}.$$
Hence, for any $\eps>0$ there is $d^*\in(d,d+\eps)$ such that $d^*\in D^*$.
Thus, (\ref{eqLemma_observation1}) follows from \Lem~\ref{Lemma_observation}.
\end{proof}

\begin{proof}[Proof of \Lem~\ref{Lemma_antiPlanting}]
Assuming the existence of $d$ and $(\cE_n)_n$ as in \Lem~\ref{Lemma_antiPlanting}, we are going to argue that
	\begin{equation}\label{eqLemma_antiPlanting1}
	\limsup\Erw[\Zkc(\gnm)^{1/n}]<k(1-1/k)^{d/2}.
	\end{equation}
Then the assertion follows from \Lem~\ref{Cor_observation}.

Since $\Zkc^{1/n}\leq k$ with certainty and $\pr[\gnm\in\cE_n]=1-o(1)$, Jensen's inequality yields
	\begin{equation}\label{eqLemma_antiPlanting2}
	\Erw[\Zkc(\gnm)^{1/n}]=\Erw[\Zkc(\gnm)^{1/n}\cdot\vecone_{\cE_n}]+o(1)
		\leq\Erw[\Zkc(\gnm)\cdot\vecone_{\cE_n}]^{1/n}+o(1).
	\end{equation}
Furthermore, by the linearity of expectation,
	\begin{eqnarray}
	\Erw[\Zkc(\gnm)\cdot\vecone_{\cE_n}]&=&\sum_{\sigma:\brk n\ra\brk k}\pr\brk{\cE_n\mbox{ occurs and $\sigma$ is a $k$-coloring of }\gnm}\nonumber\\
		&=&\sum_{\sigma:\brk n\ra\brk k}\pr\brk{\cE_n|\mbox{$\sigma$ is a $k$-coloring of }\gnm}\cdot\pr\brk{\mbox{$\sigma$ is a $k$-coloring of }\gnm}\nonumber\\
		&=&\sum_{\sigma:\brk n\ra\brk k}\pr\brk{G(n,m,\sigma)\in\cE_n}\cdot\pr\brk{\mbox{$\sigma$ is a $k$-coloring of }\gnm}.
			\label{eqLemma_antiPlanting3}
	\end{eqnarray}
To estimate the last factor, 
we use (\ref{eqForb2}) and Stirling's formula, which yield
	$$\pr\brk{\mbox{$\sigma$ is a $k$-coloring of }\gnm}\leq\bink{\bink n2-\Forb(\sigma)}{m}/\bink{\bink n2}{m}\leq O((1-1/k)^m).$$
Plugging this estimate into~(\ref{eqLemma_antiPlanting3}) and recalling that $\vec\sigma$ is a random map $\brk n\ra\brk k$, we obtain
	\begin{eqnarray}
	\Erw[\Zkc(\gnm)\cdot\vecone_{\cE_n}]&\leq&O((1-1/k)^m)
			\sum_{\sigma:\brk n\ra\brk k}\pr\brk{G(n,m,\sigma)\in\cE_n}\nonumber\\
			&\hspace{-2cm}=&\hspace{-1cm}
				O((1-1/k)^m)\cdot k^n\pr\brk{G(n,m,\vec\sigma)\in\cE_n}
			\leq O(\Erw[\Zkc])\cdot\pr\brk{G(n,m,\vec\sigma)\in\cE_n}.\qquad
				\label{eqLemma_antiPlanting4}
	\end{eqnarray}
Finally, using our assumption that $\limsup\pr\brk{G(n,m,\vec\sigma)\in\cE_n}^{1/n}<1$ and combining~(\ref{eqLemma_antiPlanting3}) and~(\ref{eqLemma_antiPlanting4}),
we see that
	$$\limsup \Erw[\Zkc(\gnm)^{1/n}]\leq k(1-1/k)^{d/2}\cdot\limsup\pr\brk{G(n,m,\vec\sigma)\in\cE_n}^{1/n}<k(1-1/k)^{d/2},$$
thereby completing the proof of~(\ref{eqLemma_antiPlanting1}).
\end{proof}

\subsection{Proof of \Lem~\ref{Lemma_partition}}\label{Sec_partition}

\begin{lemma}\label{Lemma_upperBoundAnnealed}
Let $d>0$.
For any $\eps>0$ there exists $\beta>0$ such that
	$$\frac1n\ln\Erw[Z_{\beta,k}(\gnm)]\leq \ln k+\frac d2\ln(1-1/k)+\eps.$$
\end{lemma}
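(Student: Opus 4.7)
The plan is a direct first-moment computation. Using linearity of expectation,
$$\Erw[Z_{\beta,k}(\gnm)] = \sum_{\sigma:[n]\to[k]} \Erw\brk{\exp(-\beta \cH_{\gnm}(\sigma))},$$
and the outer sum contains exactly $k^n$ terms. For a fixed map $\sigma$, the random variable $\cH_{\gnm}(\sigma)$ counts how many of the $m$ uniformly chosen edges of $\gnm$ land inside the set of $\Forb(\sigma)$ monochromatic pairs, so it follows the hypergeometric law with parameters $(N,\Forb(\sigma),m)$, where $N=\bink{n}{2}$.

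Since $x\mapsto\exp(-\beta x)$ is convex, the Hoeffding comparison inequality (sampling without replacement has smaller exponential moment than sampling with replacement, see his 1963 paper) yields
$$\Erw\brk{\exp(-\beta \cH_{\gnm}(\sigma))}\leq\bc{1-\tfrac{\Forb(\sigma)}{N}(1-\exp(-\beta))}^m.$$
Jensen applied to the convex function $x\mapsto\bink{x}{2}$ on the class sizes $|\sigma^{-1}(i)|$ gives $\Forb(\sigma)\geq k\bink{n/k}{2}$, whence $\Forb(\sigma)/N\geq 1/k-O(1/n)$ uniformly in $\sigma$. Because the previous right-hand side is decreasing in $\Forb(\sigma)/N$ (as $1-\Forb(\sigma)/N\cdot(1-\exp(-\beta))\in(0,1)$), this lower bound allows us to replace $\Forb(\sigma)/N$ by $1/k - O(1/n)$ in the bound uniformly in $\sigma$.

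Summing these $\sigma$-independent bounds over the $k^n$ maps, using $m=\lceil dn/2\rceil$, and taking $\frac1n\ln(\cdot)$ leads to
$$\frac{1}{n}\ln\Erw[Z_{\beta,k}(\gnm)] \leq \ln k + \frac{d}{2}\ln\bc{1-\frac{1-\exp(-\beta)}{k}} + o(1).$$
Since $\ln\bigl(1-(1-\exp(-\beta))/k\bigr)\to\ln(1-1/k)$ as $\beta\to\infty$, the proof concludes by first picking $\beta$ large enough so that this logarithm is within $\eps/d$ of $\ln(1-1/k)$, and then choosing $n$ large enough to absorb the $o(1)$ error. No step is genuinely difficult; the only point requiring care is applying Hoeffding's convexity comparison in the correct direction (hypergeometric dominated by binomial for convex test functions), after which everything reduces to routine estimation.
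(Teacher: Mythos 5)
Your argument is correct, and it reaches the annealed bound by a different route than the paper. You bound the exponential moment $\Erw\brk{\exp(-\beta\cH_{\gnm}(\sigma))}$ of the hypergeometric monochromatic-edge count directly, via Hoeffding's comparison between sampling without and with replacement, obtaining the explicit estimate $\frac1n\ln\Erw[Z_{\beta,k}(\gnm)]\leq \ln k+\frac d2\ln\bc{1-\frac{1-\exp(-\beta)}{k}}+o(1)$ for every fixed $\beta$, and then let $\beta\to\infty$; the uniformity in $\sigma$ via $\Forb(\sigma)/\bink n2\geq 1/k-O(1/n)$ and the monotonicity of the bound are handled correctly, and $\beta$ depends only on $\eps,d,k$, as required. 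The paper instead avoids any moment-generating-function computation: it splits the maps $\sigma$ into those with at least $\gamma n$ monochromatic edges, whose total contribution is crushed to at most $1$ by taking $\beta$ large relative to $\gamma$, and those with fewer than $\gamma n$, whose probability is bounded by a counting/Stirling estimate (a $\sigma$ with few monochromatic edges properly colors a subgraph with $m-\gamma n$ edges), giving $(1-1/k)^m\exp(\eps n/2)$ for $\gamma$ small. Your version buys a sharper, closed-form bound on the annealed free energy of the Potts antiferromagnet at every finite $\beta$, at the cost of importing Hoeffding's convex-ordering theorem; the paper's version is more elementary and recycles exactly the binomial-coefficient estimates already used for the first moment of $\Zkc$ in \Lem~\ref{Lemma_firstMoment}. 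Either way the quantifier structure of the lemma (first $\beta$, then $n$ large, with ``sufficiently large $n$'' tacitly assumed throughout the paper) is respected.
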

\begin{proof}
For any fixed number $\gamma>0$ we can choose $\beta(\gamma)>0$ so large that $\ln k-\beta\gamma<0$.
Now, let $\cM(\gnm)$ be the set of all $\sigma:\brk n\ra\brk k$ such that at least $\gamma n$ edges are monochromatic under $\sigma$,
and let $\overline\cM(\gnm)$ contain all $\sigma\not\in\cM(\gnm)$.
Then
	\begin{eqnarray}\nonumber
	Z_{\beta,k}(\gnm)&\leq&|\cM(\gnm)|\cdot \exp(-\beta\gamma n)+|\overline\cM(\gnm)|\\
		&\leq&k^n\cdot \exp(-\beta\gamma n)+|\overline\cM(\gnm)|\leq1+|\overline\cM(\gnm)|.
			\label{eqLemma_upperBoundAnnealed1}
	\end{eqnarray}
Further, if $\sigma\in\overline\cM(\gnm)$, then $\sigma$ is a $k$-coloring of a subgraph of $\gnm$ containing $m-\gamma n$ edges.
Hence, 
we obtain from Stirling's formula that
for $\gamma=\gamma(\eps)>0$ small enough,
	\begin{eqnarray*}
	\pr\brk{\sigma\in\overline\cM(\gnm)}&\leq&\bink{\bink n2}{\gamma n}\cdot\bink{\bink n2-\Forb(\sigma)}{m-\gamma n}/{\bink{\bink n2}{m}}
		\leq(1-1/k)^m\cdot\exp(\eps n/2).
	\end{eqnarray*}
Hence,
	\begin{equation}\label{eqLemma_upperBoundAnnealed2}
	\Erw[\overline\cM(\gnm)]\leq k^n(1-1/k)^m\cdot\exp(\eps n/2).
	\end{equation}
Combining~(\ref{eqLemma_upperBoundAnnealed1}) and~(\ref{eqLemma_upperBoundAnnealed2}), we obtain
	$$\Erw[Z_{\beta,k}(\gnm)]\leq 1+k^n(1-1/k)^m\cdot\exp(\eps n/2)<k^n(1-1/k)^m\cdot\exp(\eps n).$$
Taking logarithms completes the proof.
\end{proof}

\begin{lemma}\label{Lemma_pickAndChoose}
Assume that~(\ref{eqLemma_plantedCluster3}) is true.
Then there exist a fixed number $\eps>0$, a sequence $\sigma_n$ of balanced maps $\brk n\ra\brk k$ and a sequence $\mu_n$ of numbers satisfying
	$|\mu_n-dn/2|\leq\sqrt n$ such that
	$$\lim_{n\ra\infty}\pr\brk{|\cC(G(n,\sigma_n,\mu_n),\sigma_n)|^{1/n}>k(1-1/k)^{d/2}+\eps}=1.$$
\end{lemma}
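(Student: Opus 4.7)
The plan is to derandomize the hypothesis on the planted model. Under~(\ref{eqLemma_plantedCluster3}), in $\G=G(n,p',\vec\sigma)$ the cluster of the planted coloring is large with probability close to $1$. I will condition on two ``typical'' events (that $\vec\sigma$ is balanced and that $|E(\G)|$ is close to $dn/2$), each of which has positive probability bounded away from $0$, and then extract a specific pair $(\sigma_n,\mu_n)$ by averaging.

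\emph{Step 1: unpacking the hypothesis.} Writing $\alpha:=k(1-1/k)^{d/2}$, the hypothesis~(\ref{eqLemma_plantedCluster3}) says that
$$\lim_{\delta\searrow 0}\eps(\delta)=0\quad\text{where}\quad\eps(\delta):=\sup\cbc{\eta>0:\liminf_n\pr[|\cC(\G,\vec\sigma)|^{1/n}\geq\alpha+\eta]\geq 1-\delta}.$$
So for every $\delta>0$ sufficiently small there exists $\eps(\delta)>0$ such that for all large $n$,
$\pr[|\cC(\G,\vec\sigma)|^{1/n}\geq\alpha+\eps(\delta)]\geq 1-2\delta.$

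\emph{Step 2: the ``nice'' event has positive probability.} Let $\cB$ be the event that $\vec\sigma$ is balanced and $||E(\G)|-dn/2|\leq\sqrt n$. The marginal of $\vec\sigma$ is uniform on $[k]^n$, so by the multinomial local CLT, $\pr[\vec\sigma\text{ balanced}]=\Theta(1)$. Conditional on $\vec\sigma=\sigma$ balanced, the number of edges $|E(\G)|$ is binomial with $\Forb^c(\sigma)p'=dn/2+O(\sqrt n)$ and variance $\Theta(n)$; so by another (local) CLT, $\pr[||E(\G)|-dn/2|\leq\sqrt n\mid\vec\sigma=\sigma]=\Omega(1)$ uniformly in balanced $\sigma$. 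Hence $\pr[\cB]\geq c_0$ for a constant $c_0>0$ depending only on $k,d$.

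\emph{Step 3: averaging.} Choose $\delta=\delta(c_0)$ small compared to $c_0$. Then
$\pr[|\cC|^{1/n}\geq\alpha+\eps(\delta)\mid\cB]\geq 1-2\delta/c_0.$
The conditional distribution of $\G$ given $\vec\sigma=\sigma$ and $|E(\G)|=\mu$ is precisely $G(n,\mu,\sigma)$. Averaging over $(\vec\sigma,|E(\G)|)$ conditional on $\cB$ and applying the pigeonhole principle, we can for each large $n$ pick a balanced $\sigma_n$ and a $\mu_n$ with $|\mu_n-dn/2|\leq\sqrt n$ such that
$\pr[|\cC(G(n,\mu_n,\sigma_n),\sigma_n)|^{1/n}\geq\alpha+\eps(\delta)]\geq 1-2\delta/c_0.$

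\emph{Step 4: the hard part --- upgrading a constant lower bound to $\to 1$.} The argument above yields for each fixed $\eps$ only a \emph{constant} lower bound close to $1$, whereas the lemma asserts that the probability tends to~$1$ as $n\to\infty$. The missing ingredient is a near-two-point concentration statement: for each fixed $(\sigma,\mu)$ in the support of $\cB$, the conditional probability $p_n(\sigma,\mu;\eps):=\pr[|\cC(G(n,\mu,\sigma),\sigma)|^{1/n}\geq\alpha+\eps]$ should, after a diagonal choice $\eps=\eps_n\searrow0$ sufficiently slowly, tend to either $0$ or $1$ on each ``typical'' pair. Concretely, I would combine Markov's inequality applied to $1-p_n(\vec\sigma,|E(\G)|;\eps(\delta))$ with a subsequence/diagonal argument over $\delta_n\searrow 0$: for each $n$, Markov gives that a $1-\sqrt{2\delta/c_0}$ fraction of $(\sigma,\mu)\in\cB$ have $p_n\geq 1-\sqrt{2\delta/c_0}$; letting $\delta=\delta_n\to 0$ slowly with $n$, pick the pair $(\sigma_n,\mu_n)$ achieving this bound, and use the monotonicity $\eps\mapsto\pr[|\cC|^{1/n}\geq\alpha+\eps]$ (decreasing) together with a careful tracking of the interplay between $\eps(\delta_n)$ and the target fixed $\eps$. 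Matching the two rates --- the shrinkage of $\eps(\delta)$ and the improvement of the Markov bound --- is the technical crux, and is what forces one to extract (as in~\cite{Lenka}) a single fixed $\eps>0$ rather than an $\eps$ depending on $n$.
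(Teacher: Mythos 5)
Your Steps 1--3 are exactly the paper's argument: the event $\cB=\cbc{\vec\sigma\in\Bal}\cap\cbc{||E(\G)|-dn/2|\leq\sqrt n}$ has probability bounded below by a constant $\gamma\delta>0$ (central limit theorem for the binomial edge count given a balanced $\sigma$, Stirling for balancedness), conditionally on $(\vec\sigma,|E(\G)|)=(\sigma,\mu)$ the graph is the fixed-edge planted model, and one extracts $(\sigma_n,\mu_n)$ by taking the maximizing (equivalently, a better-than-average) pair. The gap is Step 4, which by your own account is not a proof, and neither of the devices you gesture at can close it. A diagonal choice $\eps_n\searrow0$ cannot deliver the \emph{fixed} $\eps>0$ that the lemma asserts and that its use in \Lem~\ref{Lemma_partition} requires (there $\beta$ and $z$ are chosen in terms of this fixed $\eps$); and a concentration statement for $\frac1n\ln|\cC(G(n,\mu,\sigma),\sigma)|$ is precisely what is \emph{not} available at this point --- \Sec~\ref{Sec_plantedCluster} stresses that neither $\Zkc$ nor the cluster size is known to be sufficiently concentrated, which is exactly why the Potts partition function $Z_{\beta,k}$ is introduced afterwards. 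So as written, your proposal proves only ``for every $\delta''>0$ there is some $\eps>0$ with $\liminf_n p(\sigma_n,\mu_n)\geq1-\delta''$'', not the stated conclusion.

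The source of the difficulty is your weak unpacking of~(\ref{eqLemma_plantedCluster3}) in Step 1, where $\eps=\eps(\delta)$ shrinks as the tolerated failure probability $\delta$ does, so you can never drive the failure probability to $0$ at fixed $\eps$. The paper instead uses the hypothesis in the form in which it is meant and in which it is verified in the application: there is one fixed $\eps>0$ such that $\pr\brk{|\cC(\G,\vec\sigma)|^{1/n}\geq k(1-1/k)^{d/2}+\eps}=1-o(1)$ (via \Prop~\ref{Prop_cluster}, $\frac1n\ln|\cC(\G,\SIGMA)|$ converges in probability to a constant, so whenever that constant exceeds $\ln k+\frac d2\ln(1-1/k)$ such an $\eps$ exists); the same conversion is made from~(\ref{eqLemma_plantedCluster2}) at the start of the proof of \Lem~\ref{Lemma_plantedCluster_easy}. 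Once $\eps$ is fixed with unconditional failure probability $o(1)$, no rate-matching or concentration is needed: conditioning on $\cB$ costs only division by the constant $\pr[\cB]\geq\gamma\delta$, so the conditional failure probability is $o(1)/(\gamma\delta)=o(1)$, and the maximizing pair $(\sigma_n,\mu_n)$ does at least as well as the conditional average, giving $\lim_n p(\sigma_n,\mu_n)=1$ directly. You are right that the literal double-limit formulation of~(\ref{eqLemma_plantedCluster3}) is formally weaker than this fixed-$\eps$ statement, and flagging that is a fair observation about the write-up; but the intended (and needed) resolution is this one-line conditioning argument, not a new concentration or diagonalization scheme, and without it your proof of the lemma is incomplete.
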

\begin{proof}
Let $\cA$ be the event that the number of edges in the random graph $G(n,p',\vec\sigma)$ differs from $dn/2$ by at most $\sqrt n$.
Let $N=\bink n2$.
For any balanced $\sigma:\brk n\ra\brk k$ the expected number of edges in $G(n,p',\vec\sigma)$ is
	\begin{equation}\label{eqLemma_pickAndChoose1}
	(N-\Forb(\sigma))p'=(1-1/k)Np'+O(1)=dn/2+O(1).
	\end{equation}
Since the number of edges in $G(n,p',\vec\sigma)$ is a binomial random variable, (\ref{eqLemma_pickAndChoose1}) shows together with
the central limit theorem that there exists a fixed $\gamma>0$ such that for sufficiently large $n$
	\begin{equation}\label{eqLemma_pickAndChoose2}
	\pr\brk{G(n,p',\sigma)\in\cA}\geq\gamma\qquad\mbox{for all balanced $\sigma$}.
	\end{equation}
Furthermore, by Stirling's formula there is an $n$-independent number $\delta>0$ such that for sufficiently large $n$ we have
	\begin{equation}\label{eqLemma_pickAndChoose3}
	\pr\brk{\vec\sigma\in\Bal}\geq\delta.
	\end{equation}
Combining~(\ref{eqLemma_pickAndChoose2}) and~(\ref{eqLemma_pickAndChoose3}), we see that
	\begin{equation}\label{eqLemma_pickAndChoose4}
	\pr\brk{\vec\sigma\in\Bal,\,G(n,p',\vec\sigma)\in\cA}=
		\pr\brk{\vec\sigma\in\Bal}\cdot\pr\brk{G(n,p',\vec\sigma)\in\cA|\vec\sigma\in\Bal}\geq\gamma\delta>0.
	\end{equation}
Thus, pick $\sigma_n\in\Bal$ and $\mu_n\in[dn/2-\sqrt n,dn/2+\sqrt n]$ that maximize
	$$p(\sigma_n,\mu_n)=\pr\brk{|\cC(G(n,\sigma_n,\mu_n),\sigma_n)|^{1/n}>k(1-1/k)^{d/2}+\eps}.$$
Then~(\ref{eqLemma_plantedCluster3}) and~(\ref{eqLemma_pickAndChoose4}) imply that $\lim_{n\ra\infty}p(\sigma_n,\mu_n)=1$.
\end{proof}

\begin{lemma}\label{Lemma_nearlyBalanced}
For any $\eta>0$ there is $\delta>0$ such that
	$\lim_{n\ra\infty}\frac1n\ln\pr\brk{\sum_{i=1}^k|\vec\sigma^{-1}(i)-n/k|>\eta n}\leq-\delta.$
\end{lemma}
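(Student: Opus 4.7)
The plan is to treat this as a standard large deviations bound for the multinomial distribution. Since $\vec\sigma:\brk n\ra\brk k$ is chosen uniformly at random, the cell counts $(|\vec\sigma^{-1}(1)|,\ldots,|\vec\sigma^{-1}(k)|)$ are jointly multinomial with $n$ trials and uniform cell probabilities $1/k$; equivalently, each $|\vec\sigma^{-1}(i)|$ is marginally $\Bin(n,1/k)$.

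The first step is a simple union bound: if $\sum_{i=1}^k\bigl||\vec\sigma^{-1}(i)|-n/k\bigr|>\eta n$, then there must exist some index $i\in\brk k$ for which $\bigl||\vec\sigma^{-1}(i)|-n/k\bigr|>\eta n/k$. Since $k$ is held fixed as $n\ra\infty$, this union bound costs only a factor of $k$, which disappears after taking logarithms and dividing by $n$.

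The second step is to apply the Chernoff bound to each individual binomial coordinate. For $X\sim\Bin(n,1/k)$ and any $\eta>0$, standard large deviations for Bernoulli sums (via the exponential moment method) give
\[
\pr\brk{|X-n/k|>\eta n/k}\leq 2\exp(-c(\eta,k)\,n),
\]
where $c(\eta,k)>0$ is the relative entropy $\min\{H(1/k+\eta/k\,\|\,1/k),H(1/k-\eta/k\,\|\,1/k)\}$ with $H(a\|b)=a\ln(a/b)+(1-a)\ln((1-a)/(1-b))$. This quantity is strictly positive for any $\eta>0$ and depends only on $\eta$ and $k$, not on $n$.

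Combining the two steps yields $\pr\brk{\sum_{i=1}^k||\vec\sigma^{-1}(i)|-n/k|>\eta n}\leq 2k\exp(-c(\eta,k)n)$, whence the conclusion follows with any $\delta<c(\eta,k)$. There is no real obstacle here: the lemma is essentially a standard multinomial concentration fact, and the only care needed is to verify that the $k$-fold union bound contributes only an additive $O(\ln k)/n$ term that vanishes in the $n\ra\infty$ limit.
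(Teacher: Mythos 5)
Your proposal is correct and follows essentially the same route as the paper: observe that each $|\vec\sigma^{-1}(i)|$ is binomial with mean $n/k$, note that the event forces $\bigl||\vec\sigma^{-1}(i)|-n/k\bigr|>\eta n/k$ for some single $i$, and finish with a union bound over the fixed number $k$ of colors together with the Chernoff bound. The extra detail you give on the explicit relative-entropy rate is fine but not needed.
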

\begin{proof}
For each $i\in\brk k$ the number $|\vec\sigma^{-1}(i)|$ is a binomially distributed random variable with mean $n/k$.
Moreover, if $\sum_{i=1}^k|\vec\sigma^{-1}(i)-n/k|>\eta n$, then there is some $i\in\brk k$ such that $|\vec\sigma^{-1}(i)-n/k|>\eta n/k$.
Thus, the assertion is immediate from the Chernoff bound.
\end{proof}

Let $\Vol_G(S)$ be the sum of the degrees of the vertices in $S$ in the graph $G$.

\begin{lemma}\label{Lemma_Vol}
For any $\gamma>0$ there is $\alpha>0$ such that for any set $S\subset\brk n$
of size $|S|\leq\alpha n$ and any $\sigma:\brk n\ra\brk k$ we have
	$\limsup\frac1n\ln\pr\brk{\Vol_{G(n,p',\sigma)}(S)>\gamma n}\leq -\alpha.$
\end{lemma}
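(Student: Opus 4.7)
The plan is to reduce the volume to a binomially-distributed quantity (up to a factor of two) and then apply a standard large-deviation bound, choosing $\alpha$ small enough that the expected volume is well below $\gamma n$.

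Concretely, I would first write $\Vol_{G(n,p',\sigma)}(S) = 2 e_G(S) + e_G(S, V \setminus S)$, where $e_G(S)$ is the number of edges inside $S$ and $e_G(S, V \setminus S)$ is the number of edges with exactly one endpoint in $S$. Hence $\Vol_{G(n,p',\sigma)}(S) \leq 2 Y$, where $Y$ counts the number of edges of $G(n,p',\sigma)$ incident to at least one vertex of $S$. The key point is that $Y$ is a sum of independent Bernoulli variables indexed by pairs $\{v,w\}$ with $\sigma(v) \neq \sigma(w)$ and $\{v,w\} \cap S \neq \emptyset$, each present with probability $p' = d'/n$ where $d' = dk/(k-1)$. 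The number of such pairs is bounded by $|S|(n-1) \leq |S|n$, so $Y$ is stochastically dominated by $\Bin(|S|n, p')$, a binomial of mean at most $|S| d' \leq \alpha d' n$.

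The event $\Vol_{G(n,p',\sigma)}(S) > \gamma n$ implies $Y > \gamma n / 2$. Applying the standard Chernoff bound for binomials, namely $\pr[\Bin(N,p) \geq t] \leq (e N p / t)^t$ for $t > N p$, yields
\begin{equation*}
\pr\brk{Y > \gamma n/2} \leq \br{\frac{2 e \alpha d'}{\gamma}}^{\gamma n /2}.
\end{equation*}
Therefore, provided $\alpha$ is chosen so small that $2 e \alpha d' / \gamma \leq 1/2$, i.e.\ $\alpha \leq \gamma/(4 e d')$, the right-hand side is at most $2^{-\gamma n/2}$. Taking $\alpha := \min\cbc{\gamma/(4ed'),\, (\gamma \ln 2)/2}$ then yields $\limsup_{n \to \infty} \frac 1n \ln \pr[\Vol_{G(n,p',\sigma)}(S) > \gamma n] \leq -\alpha$, uniformly over all $\sigma$ and all $S$ with $|S| \leq \alpha n$.

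There is no real obstacle: the estimate holds uniformly in $\sigma$ because the upper bound on the number of incident pairs uses only $|S|$ and $n$, never the color classes of $\sigma$. The only mild subtlety is to remember that an edge inside $S$ contributes $2$ to the volume, which is absorbed cleanly by the factor of $2$ in the reduction $\Vol \leq 2Y$.
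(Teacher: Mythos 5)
Your proof is correct and follows essentially the same route as the paper: stochastically dominate $\Vol_{G(n,p',\sigma)}(S)$ by twice a binomial random variable of mean $O(\alpha d' n)$ (the paper uses independent $\Bin(n,p')$ copies of the degrees, you count incident edges directly, which if anything is slightly cleaner since the edge indicators are genuinely independent) and then apply a Chernoff bound with $\alpha$ chosen small in terms of $\gamma$ and $d'$. The uniformity in $\sigma$ and the factor $2$ for edges inside $S$ are handled correctly.
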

\begin{proof}
Let $(X_v)_{v\in\brk n}$ be a family of independent random variables with distribution $\Bin(n,p')$.
Then for any set $S$ the volume $\Vol(S)$ in $G(n,p',\sigma)$ is stochastically dominated by $X_S=2\sum_{v\in S}X_s$.
Indeed, for each vertex $v\in S$ the degree is a binomial random variable with mean at most $np'$, and
the only correlation amongst the degrees of the vertices in $S$ is that each edge joining two vertices in $S$ contributes two to $\Vol(S)$.
Furthermore, $\Erw[X_S]\leq 2d'|S|$.
Thus, for any $\gamma>0$ we can choose an $n$-independent $\alpha>0$ such that for any $S\subset\brk n$ of size $|S|\leq\alpha n$ we have $\Erw[X_S]\leq\gamma n/2$.
In fact, the Chernoff bound shows that by picking $\alpha>0$ sufficiently small, we can ensure that
	$$\pr\brk{\Vol(S)\geq\gamma n}\leq\pr\brk{X_S\geq\gamma n}\leq\exp(-\alpha n),$$
as desired.
\end{proof}

\begin{lemma}\label{Lemma_sigmaRandom}
Assume that there exist numbers $z>0$, $\eps>0$ and a sequence $(\sigma_n)_{n\geq1}$ of balanced maps $\brk n\ra\brk k$ 
such that
	$$\lim_{n\ra\infty}\frac1n\Erw\brk{\ln Z_{\beta,k}(G(n,p',\sigma_n))}>z+\eps.$$
Then
	$\limsup_{n\ra\infty}\pr\brk{\ln Z_{\beta,k}(G(n,p',\vec\sigma))\leq nz}^{1/n}<1.$
\end{lemma}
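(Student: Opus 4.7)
The plan is to transfer the assumed lower bound on $\Erw[\ln Z_{\beta,k}(G(n,p',\sigma_n))]$ from the specific balanced sequence $(\sigma_n)$ to \emph{every} approximately balanced coloring $\sigma$, and then combine this with the concentration bound \Lem~\ref{Cor_ZAzuma} together with \Lem~\ref{Lemma_nearlyBalanced}, which guarantees that a uniformly random $\vec\sigma$ is approximately balanced with probability $1-\exp(-\Omega(n))$. The crux is a comparison of $\Erw[\ln Z_{\beta,k}(G(n,p',\sigma))]$ across $\sigma$, based on a shared-edge coupling.

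To compare two maps $\sigma,\sigma':\brk n\to\brk k$, I couple $G(n,p',\sigma)$ and $G(n,p',\sigma')$ via a shared family $(U_{vw})_{v<w}$ of i.i.d.\ uniforms on $[0,1]$: place $\cbc{v,w}$ in $G(n,p',\sigma)$ iff $U_{vw}<p'$ and $\sigma(v)\neq\sigma(w)$, and analogously for $\sigma'$. If $\sigma$ and $\sigma'$ agree outside a set $S\subseteq\brk n$, the symmetric difference of the two edge sets is contained in the pairs incident to $S$ that are active under the coupling, whose expected size is at most $|S|np'=|S|d'$. Since adding or removing one edge changes $\ln Z_{\beta,k}$ by at most $\beta$ (see~(\ref{eqLip})), taking expectations yields
\begin{equation*}
\abs{\Erw[\ln Z_{\beta,k}(G(n,p',\sigma))]-\Erw[\ln Z_{\beta,k}(G(n,p',\sigma'))]}\leq\beta d'|S|.
\end{equation*}
Moreover, if $\sigma,\sigma'$ have identical color class sizes, they are related by a vertex permutation and the two expectations coincide exactly by isomorphism-invariance of $Z_{\beta,k}$. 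Now suppose $\sigma$ satisfies $\sum_i\abs{\abs{\sigma^{-1}(i)}-n/k}\leq\eta n$. Then $\sigma$ can be converted into $\sigma_n$ by first recoloring $O(\eta n+\sqrt n)=O(\eta n)$ vertices to match the class sizes of $\sigma_n$, and then relabeling vertices; consequently,
\begin{equation*}
\abs{\Erw[\ln Z_{\beta,k}(G(n,p',\sigma))]-\Erw[\ln Z_{\beta,k}(G(n,p',\sigma_n))]}\leq C\beta d'\eta n
\end{equation*}
for some absolute constant $C$.

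Finally, choose $\eta>0$ so small that $C\beta d'\eta<\eps/4$. Writing $\cS_\eta=\cbc{\sigma:\sum_i\abs{\abs{\sigma^{-1}(i)}-n/k}\leq\eta n}$, the above yields $\Erw[\ln Z_{\beta,k}(G(n,p',\sigma))]\geq n(z+3\eps/4)$ for all $\sigma\in\cS_\eta$ and large enough $n$. Applying \Lem~\ref{Cor_ZAzuma} with accuracy $\eps/4$ to each fixed $\sigma\in\cS_\eta$ gives $\alpha>0$ (independent of $\sigma$) such that
\begin{equation*}
\pr\brk{\ln Z_{\beta,k}(G(n,p',\sigma))\leq nz}\leq\exp(-\alpha n)\qquad\text{for every }\sigma\in\cS_\eta.
\end{equation*}
Meanwhile, \Lem~\ref{Lemma_nearlyBalanced} furnishes $\delta>0$ with $\pr[\vec\sigma\notin\cS_\eta]\leq\exp(-\delta n)$. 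Conditioning on $\vec\sigma$ therefore gives
\begin{equation*}
\pr\brk{\ln Z_{\beta,k}(G(n,p',\vec\sigma))\leq nz}\leq\pr[\vec\sigma\notin\cS_\eta]+\sup_{\sigma\in\cS_\eta}\pr\brk{\ln Z_{\beta,k}(G(n,p',\sigma))\leq nz}\leq\exp(-\Omega(n)),
\end{equation*}
which implies the claimed $\limsup_{n\to\infty}\pr[\ln Z_{\beta,k}(G(n,p',\vec\sigma))\leq nz]^{1/n}<1$. The main technical step is the edge-coupling comparison: once the expected log partition function is stable under $O(\eta n)$-perturbations of $\sigma$ with $\eta\ll\eps/(\beta d')$, the remaining ingredients (concentration and near-balance of random $\vec\sigma$) plug in directly.
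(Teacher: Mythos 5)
Your proof is correct, and it takes a mildly but genuinely different route from the paper's. The paper also reduces to the event that $\vec\sigma$ is nearly balanced (\Lem~\ref{Lemma_nearlyBalanced}) and also exploits permutation invariance for maps with the same color class sizes, but it then argues \emph{pathwise}: it builds $\tau_{\vec\sigma}$ with exactly the class sizes of $\sigma_n$, passes from $G(n,p',\tau_{\vec\sigma})$ to $G(n,p',\vec\sigma)$ through a three-stage edge surgery, bounds the number of edges touched by the volume of the recolored set using the exponential tail bound of \Lem~\ref{Lemma_Vol}, and in the end only needs the concentration estimate \Lem~\ref{Cor_ZAzuma} for the single reference sequence $(\sigma_n)$. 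You instead perform the comparison at the level of \emph{expectations}: under the shared-uniform coupling the expected symmetric difference of the edge sets is at most $|S|d'$, so by the per-edge Lipschitz bound~(\ref{eqLip}) the expected log-partition functions of any two colorings differing on $S$ differ by at most $\beta d'|S|$, which transfers the hypothesis to every $\sigma\in\cS_\eta$ and makes \Lem~\ref{Lemma_Vol} (and the graph surgery) unnecessary. The price is that you must invoke \Lem~\ref{Cor_ZAzuma} uniformly over all of $\cS_\eta$ rather than for one fixed sequence; this is legitimate, since the constant $\alpha$ there is chosen before the sequence and does not depend on the maps, and the ``large enough $n$'' threshold can be made uniform simply by applying the lemma to a worst-case sequence $\sigma_n^*\in\cS_\eta$ (it would be worth one sentence in a write-up to make this explicit). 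Net effect: your argument is somewhat shorter and more robust (only expected edge counts are needed), while the paper's pathwise coupling keeps all randomness tied to the one sequence $(\sigma_n)$ at the cost of the extra volume lemma.
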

\begin{proof}
Let $Y=\frac1n\ln Z_{\beta,k}$ for the sake of brevity.
Suppose that $n$ is large enough so that $\Erw\brk{Y(G(n,p',\sigma_n))}>z+\eps/2$.
Set $n_i=|\sigma_n^{-1}(i)|$ and let $T$ be the set of all $\tau:\brk n\ra\brk k$ such that $|\tau^{-1}(i)|=n_i$ for $i=1,\ldots,k$.
As $Z_{\beta,k}$ is invariant under permutations of the vertices, we have
	\begin{equation}\label{eqLemma_sigmaRandom1}
	\Erw\brk{Y(G(n,p',\tau))}=\Erw\brk{Y(G(n,p',\sigma_n))}>z+\eps/2\quad\mbox{for any }\tau\in T.
	\end{equation}

Let $\gamma=\eps/(4\beta)>0$.
By \Lem~\ref{Lemma_Vol} there exists $\alpha>0$ such that for large enough $n$ for any set $S\subset V$ of size $|S|\leq\alpha n$ and
any $\sigma:\brk n\ra\brk k$ we have
	\begin{equation}\label{eqLemma_sigmaRandom4}
	\pr\brk{\Vol_{G(n,p',\sigma)}(S)>\gamma n}\geq1-\exp(-\alpha n).
	\end{equation}
Pick and fix a small $0<\eta<\alpha/3$ and let $\cA$ be the event that $\sum_{i=1}^k|\vec\sigma^{-1}(i)-n/k|\leq\eta n$.
Then by \Lem~\ref{Lemma_nearlyBalanced} there exist an ($n$-independent) number $\delta=\delta(\beta,\eps,\eta)>0$ such that
for $n$ large enough
	\begin{equation}\label{eqLemma_sigmaRandom3}
	\pr\brk{\cA}\geq1-\exp(-\delta n).
	\end{equation}

Because $\sigma_n$ is balanced, we have $|n_i-n/k|\leq\sqrt n$ for all $i\in\brk k$.
Therefore, if $\cA$ occurs, then it is possible to obtain from $\vec\sigma$ a map $\tau_{\vec\sigma}\in T$ by changing the colors of at most $2\eta n$ vertices.
If $\cA$ occurs, we let $\vec G_1=G(n,p',\tau_{\vec\sigma})$.
Further, let $\vec G_2$ be the random graph obtained by removing from $\vec G_1$ all edges
that are monochromatic under $\vec\sigma$.
Finally, let $\vec G_3$ be the random graph obtained from $\vec G_2$ by inserting an edge between any two
vertices $v,w$ with $\vec\sigma(v)\neq\vec\sigma(w)$ but $\tau_{\vec\sigma}(v)=\tau_{\vec\sigma}(w)$ with probability $p'$ independently.
Thus, the bottom line is that in $\vec G_3$, we connect any two vertices that are colored differently under $\vec\sigma$ with probability $p'$ independently.
That is, $\vec G_3=G(n,p',\vec\sigma)$.

Let $S_{\vec\sigma}$ be the set of vertices $v$ with $\vec\sigma(v)\neq\tau_{\vec\sigma}(v)$ and
let $\Delta$ be the number of edges we removed to obtain $\vec G_2$ from $\vec G_1$.
Then $\Delta$ is bounded by the volume of $S_{\vec\sigma}$ in  $\vec G_1=G(n,p',\tau_{\vec\sigma})$.
Hence, (\ref{eqLemma_sigmaRandom4}) implies that 
	\begin{equation}\label{eqLemma_sigmaRandom4a}
	\pr\brk{\Delta\leq\gamma n|\cA}\geq1-\exp(-\alpha n).
	\end{equation}
Since removing a single edge can reduce $Y$ by at most $\beta/n$, we obtain
	\begin{align*}
	\pr[Y(G(n,p',\vec\sigma))\leq z]&=\pr[Y(\vec G_3)\leq z]\leq \exp(-\delta n)+\pr[Y(\vec G_3)\leq z|\cA]&[\mbox{by~(\ref{eqLemma_sigmaRandom3})}]\\
		&\leq\exp(-\delta n)+\exp(-\alpha n)+\pr[Y(\vec G_3)\leq z|\cA,\Delta\leq\gamma n]&[\mbox{by~(\ref{eqLemma_sigmaRandom4a})}]\\
		&\leq\exp(-\delta n)+\exp(-\alpha n)+\pr[Y(\vec G_1)-\gamma\beta\leq z|\cA,\Delta\leq\gamma n]\\
		&\leq\exp(-\delta n)+\exp(-\alpha n)+2\pr[Y(\vec G_1)\leq z+\eps/4|\cA]&[\mbox{by~the choice of $\gamma$}]\\
		&\leq\exp(-\delta n)+\exp(-\alpha n)+3\pr[Y(G(n,p',\sigma_n))\leq z+\eps/4]\\
		&\leq\exp(-\delta n)+\exp(-\alpha n)\\
		&\qquad+3\pr[Y(G(n,p',\sigma_n))\leq \Erw[Y(G(n,p',\sigma_n))]-\eps/4].
			&[\mbox{by~(\ref{eqLemma_sigmaRandom1})}]
	\end{align*}
Finally, the assertion follows from 
	\Cor~\ref{Cor_ZAzuma}.
\end{proof}

\begin{proof}[Proof of \Lem~\ref{Lemma_partition}]
\Lem~\ref{Lemma_pickAndChoose} shows that there exist $\eps>0$, balanced maps $\sigma_n:\brk n\ra\brk k$ and a sequence
$\mu_n$ satisfying $|\mu_n-dn/2|\leq\sqrt n$ such that
	\begin{equation}\label{eqLemma_partition_1}
	\lim_{n\ra\infty}\pr\brk{\frac1n\ln|\cC(G(n,\mu_n,\sigma_n,\sigma_n))|\geq \ln k+\frac d2\ln(1-1/k)+\eps}=1.
	\end{equation}
By the definition of $Z_{\beta,k}$, (\ref{eqLemma_partition_1}) implies that
	\begin{equation}\label{eqLemma_partition_2}
	\lim_{n\ra\infty}\pr\brk{\frac1n\ln Z_{\beta,k}(G(n,\mu_n,\sigma_n))\geq \ln k+\frac d2\ln(1-1/k)+\eps}=1\qquad\mbox{for all }\beta>0.
	\end{equation}
By comparison, \Lem~\ref{Lemma_upperBoundAnnealed} yields $\beta>0$ such that with $z=\ln k+\frac d2\ln(1-1/k)+\eps/8$ we have
	\begin{equation}\label{eqLemma_partition_3}
	\lim_{n\ra\infty}\pr\brk{\frac1n\ln Z_{\beta,k}(G(n,m))\leq z}=1.
	\end{equation}
Thus, we aim to prove that there is $\alpha>0$ such that for sufficiently large $n$
	\begin{equation}\label{eqLemma_partition_4}
	\pr\brk{\frac1n\ln Z_{\beta,k}(G(n,m,\vec\sigma))\leq z+\eps/8}\leq\exp(-\alpha n).
	\end{equation}
Indeed, since $\ln Z_{\beta,k}(G(n,\mu_n,\sigma_n))\leq \beta\mu_n=O(n)$, (\ref{eqLemma_partition_2}) implies that for large enough $n$
	\begin{equation}\label{eqLemma_partition_5}
	\frac1n\Erw[\ln Z_{\beta,k}(G(n,\mu_n,\sigma_n))]\geq \ln k+\frac d2\ln(1-1/k)+\eps-o(1)\geq\ln k+\frac d2\ln(1-1/k)+\eps/2.
	\end{equation}
Thus, (\ref{eqLemma_partition_4})  follows from \Lem~\ref{Lemma_sigmaRandom}.
\end{proof}

\section{The fixed point problem}\label{Sec_fix}

\subsection{The branching process}
Throughout this section we assume that $(2k-1)\ln k-3\leq d\leq(2k-1)\ln k$.
Moreover, we recall that $d'=kd/(k-1)$.

\begin{lemma}\label{Lemma_GW}
Suppose that $d\geq(2k-1)\ln k-2$.
\begin{enumerate}
\item The function
	\begin{eqnarray}\label{eqThm_condF}
	F_{d,k}:[0,1]^k\ra[0,1]^k,&&(q_1,\ldots,q_k)\mapsto 
		\bigg(\frac1k{\prod_{j\in\brk k\setminus\cbc i}1-\exp\bc{-d' q_j}}\bigg)_{i\in\brk k}
	\end{eqnarray}
	has a unique fixed point $\vec q^*=(q_1^*,\ldots,q_k^*)$ 
		such that $\sum_{j\in\brk k}q_j^*\geq2/3$.
	This fixed point has the property that $q_1^*=\cdots =q_k^*$.
	Moreover, $q^*=kq_1^*$ is the unique fixed point of the function~(\ref{eqSimpleFix})
		in the interval $[2/3,1]$, and
		$q^*=1-O_k(1/k)$.
\item The branching process $\GW(d,k,\vec q^*)$ is sub-critical.
\item 	Furthermore,
	$\frac{\partial}{\partial d}\Erw\brk{\frac{\ln\cZ(\T_{d,k}(\vec q^*))}{|\T_{d,k}(\vec q^*)|}}=\tilde O_k(k^{-2})$.
\end{enumerate}
\end{lemma}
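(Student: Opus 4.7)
\medskip\noindent\emph{Proof plan for \Lem~\ref{Lemma_GW}.}

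\emph{Part (i).} Since the $i$-th coordinate of $F_{d,k}$ does not depend on $q_i$, at any fixed point we have $q_i=\frac{1}{k}\prod_{j\neq i}(1-\exp(-d'q_j))$. Taking the ratio of the equations for two indices $i,h$ yields
\[
\frac{q_i}{q_h}=\frac{1-\exp(-d'q_h)}{1-\exp(-d'q_i)},
\]
which is incompatible with $q_i\neq q_h$, since the right-hand side moves oppositely to the left. Hence every fixed point is symmetric, $\vec q=(q^*/k)\vecone$, and the system reduces to the one-dimensional equation $q^*=g(q^*)$ with $g$ as in~(\ref{eqSimpleFix}). Using $d\geq (2k-1)\ln k-2$, on $[2/3,1]$ we have $e^{-dq/(k-1)}\leq k^{-2q+o_k(1)}$, so $g(2/3)\geq 1-k^{-1/3+o_k(1)}>2/3$, $g(1)<1$, and
\[
g'(q)=d(1-e^{-dq/(k-1)})^{k-2}e^{-dq/(k-1)}\leq \tilde O_k(k^{-1/3})
\]
throughout $[2/3,1]$. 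Banach's fixed point theorem then gives a unique $q^*\in[2/3,1]$, and substitution yields $q^*=(1-\tilde O_k(k^{-2}))^{k-1}=1-O_k(1/k)$.

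\emph{Part (ii).} The mean offspring matrix $M_{(i,\ell),(i',\ell')}=d'q_{i',\ell'}\vecone_{(i',\ell')\in\cT_{i,\ell}}$, restricted to types with $|\ell|,|\ell'|\geq 2$, is invariant under simultaneous permutations of the $k$ colors, so its Perron eigenvector depends only on $|\ell|$. Testing with $v_{(i,\ell)}=|\ell|$ and writing $\lambda:=dq^*/(k-1)$, a direct computation gives
\[
(Mv)_{(i,\ell)}=d'\sum_{s'\geq 2}s' e^{-(s'-1)\lambda}(1-e^{-\lambda})^{k-s'}\left[\binom{k-1}{s'-1}-\frac{k-|\ell|}{k}\binom{k-1-|\ell|}{s'-1}\right].
\]
Expanding $\binom{k-1-s}{s'-1}/\binom{k-1}{s'-1}=\prod_{j=0}^{s'-2}(1-s/(k-1-j))$ shows that the bracketed factor is $O_k(ss'/k)\binom{k-1}{s'-1}$ for small $s,s'$, while larger $s'$ are suppressed by $e^{-(s'-1)\lambda}$ with $e^{-\lambda}=\tilde O_k(k^{-2})$. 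Summing via the identity $\sum_{s'\geq 1}(s'-1)\binom{k-1}{s'-1}e^{-(s'-1)\lambda}(1-e^{-\lambda})^{k-s'}=(k-1)e^{-\lambda}=\tilde O_k(k^{-1})$, I obtain $(Mv)_{(i,\ell)}\leq \tilde O_k(d'|\ell|/k^2)=\tilde O_k(|\ell|/k)=\tilde O_k(v_{(i,\ell)}/k)$. Collatz--Wielandt then gives $\rho(M)\leq \tilde O_k(1/k)<1$ for large $k$, so $\GW(d,k,\vec q^*)$ is sub-critical and $\pr[\cN<\infty]=1$.

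\emph{Part (iii).} Split the expectation according to whether the root type satisfies $|\ell_0|=1$ (contributing $0$ since then $\cZ=1=|\T|$) or $|\ell_0|\geq 2$ (conditional probability $1-q^*=O_k(1/k)$, conditional value bounded by $\ln k$):
\[
h(d):=\Erw\brk{\frac{\ln\cZ(\T_{d,k}(\vec q^*))}{|\T_{d,k}(\vec q^*)|}}=(1-q^*(d))\,f_0(d).
\]
Thus $h'(d)=-q^{*\prime}(d)f_0(d)+(1-q^*(d))f_0'(d)$. Implicit differentiation of $q^*=g_d(q^*)$ together with $\partial_d g_d(q^*)=q^*e^{-dq^*/(k-1)}(1-e^{-dq^*/(k-1)})^{k-2}=\tilde O_k(k^{-2})$ and $g_d'(q^*)=\tilde O_k(k^{-1})$ (both from the estimates in (i) at $q=q^*=1-O_k(1/k)$) yield $q^{*\prime}(d)=\tilde O_k(k^{-2})$, so the first term is $\tilde O_k(k^{-2})$. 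For the second, the sub-critical estimate from (ii) lets me expand $f_0(d)$ as an absolutely convergent sum over finite rooted decorated tree shapes (rooted at a free vertex), with per-level contraction factor $\tilde O_k(1/k)$ giving geometric decay in the tree size. Term-by-term differentiation in $d$ is valid: each derivative acts either on a Poisson rate $d'q_{i',\ell'}$ (contributing $O(1)$ via $\partial d'/\partial d=O(1)$ and $\partial q^*/\partial d=\tilde O_k(k^{-2})$) or on the normalising weights, and the sub-critical branching factor yields an extra $\tilde O_k(1/k)$ on top of the combinatorial prefactor, so $f_0'(d)=\tilde O_k(1/k)$. Combining, $h'(d)=\tilde O_k(k^{-2})$.

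\emph{Main obstacle.} Part (iii) is the delicate step: the target bound $\tilde O_k(k^{-2})$ requires extracting, beyond the ``root-level'' smallness $(1-q^*)=O_k(1/k)$, an additional $\tilde O_k(1/k)$ from $f_0'(d)$. This additional factor hinges on uniformly exploiting the sub-critical Perron eigenvalue from (ii) across all tree shapes in the expansion of $f_0$, while keeping track of the joint dependence of the Poisson rates on $d$ both directly (through $d'$) and implicitly (through $\vec q^*(d)$). By contrast, parts (i) and (ii) are essentially direct manipulations of the algebraic structure of $F_{d,k}$ and of standard Perron--Frobenius bookkeeping.
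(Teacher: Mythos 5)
Parts (i) and (ii) are fine. Part (i) follows essentially the paper's own route (ratio argument for symmetry of any fixed point, reduction to the one--dimensional map~(\ref{eqSimpleFix}), contraction on $[2/3,1]$, bootstrap for $q^*=1-O_k(1/k)$). Part (ii) is a genuinely different but valid argument: instead of lumping the types into a three--type process and bounding trace and determinant of a $2\times2$ mean matrix (the paper's \Lem~\ref{Lemma_subcrit}), you test the full mean matrix against the positive vector $v_{(i,\ell)}=|\ell|$ and invoke the upper Collatz--Wielandt bound; your binomial identity and the estimate $e^{-\lambda}=\tilde O_k(k^{-2})$ do give $(Mv)_{(i,\ell)}=\tilde O_k(|\ell|/k)$ (one should note that the bracket estimate $O_k(ss'/k)\binom{k-1}{s'-1}$ is only needed for $s'\le k/2$, the rest being killed by $e^{-(s'-1)\lambda}$), hence $\rho(M)=\tilde O_k(1/k)<1$. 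Both approaches rest on the same input, namely $q^*_{i,\ell}=\tilde\Theta_k(k^{-(2|\ell|-1)})$; yours avoids the lumping at the price of a slightly more delicate combinatorial sum.

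Part (iii) is where there is a genuine gap. Your decomposition $h(d)=(1-q^*(d))f_0(d)$ and the bound on the first term via $q^{*\prime}=\tilde O_k(k^{-2})$ are correct, and the target $f_0'(d)=\tilde O_k(1/k)$ is indeed what is needed. But the justification of $f_0'=\tilde O_k(1/k)$ does not close as written. In the series expansion of $f_0$ over tree shapes, the $d$--derivative also hits the weight of the \emph{single-vertex} trees (through the no-offspring factor $\exp(-\sum_{(i',\ell')\in\cT_{i_0,\ell_0}}d'q^*_{i',\ell'})$ and through the conditioned root distribution $q^*_{i_0,\ell_0}/(1-q^*)$), and these shapes carry conditional probability $1-o_k(1)$; so your accounting ``each derivative acting on a Poisson rate contributes $O(1)$, and sub-criticality supplies an extra $\tilde O_k(1/k)$'' would only give $f_0'=O(\ln k)$ unless you show that the per-type rate derivatives are themselves small. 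What is actually needed — and what drives both a correct version of your term-by-term differentiation and the paper's argument — are the type-wise estimates $\partial_d q^*_{i,\ell}=\tilde O_k(|\ell|k^{-2|\ell|})$, i.e.\ the analogue of \Cor~\ref{lem_q_il}, together with a uniform summable bound justifying the interchange of $\partial_d$ and the infinite sum. You never derive these; you only record $\partial_d q^*=\tilde O_k(k^{-2})$ for the scalar fixed point. The paper instead proves \Lem~\ref{Lemma_diffTree} by an explicit monotone coupling of $\T_{d,k,\vec q^*}$ and $\T_{d+\eps,k,\vec{\hat q}^*}$ with per-vertex thinning probabilities $\lambda_{i,\ell}$ controlled exactly by those type-wise derivative bounds, plus a $2\times2$ mean-matrix computation bounding the expected number of vertices with $|\ell|=2$ and $|\ell|>2$ given a non-singleton root. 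Your analytic route can very likely be completed, but the decisive quantitative step is currently asserted rather than proved, so part (iii) is not yet a proof.
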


The proof of \Lem~\ref{Lemma_GW} requires several steps.
We begin by studying the fixed points of $F_{d,k}$.

\begin{lemma}\label{Prop_Fixpunkt}
The function $F_{d,k}$ maps the compact set $[\frac 2{3k},\frac 1k]^k$ into itself and has a unique fixed point $\vec q^*$ in this set.
Moreover, the function from~(\ref{eqSimpleFix}) has a unique fixed point $q^*$ in the set $[2/3,1]$ and $\vec q^*=(q^*/k,\ldots,q^*/k)$.
Furthermore, 
	\begin{equation}\label{eqq*apx}
	q^*=1-1/k+o_k(1/k).
	\end{equation}	
In addition, if $\vec q\in[0,1]^k$ is a fixed point of $F_{d,k}$, then
	\begin{equation}\label{eqFdkFixedPointSym}
	q_1=\cdots=q_k.
	\end{equation}	
\end{lemma}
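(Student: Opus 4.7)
The plan is to establish the four claims in the order: box-invariance, the symmetry of every fixed point (which is the key structural step), reduction to the one-dimensional problem and its unique fixed point via a contraction/derivative argument, and finally the asymptotic expansion.

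First I would verify directly that $F_{d,k}$ maps $[\tfrac{2}{3k},\tfrac{1}{k}]^k$ into itself. For any $q_j$ in this interval, $d'q_j = \tfrac{k d}{k-1}q_j$ lies in roughly $[\tfrac{2d}{3(k-1)},\tfrac{d}{k-1}]$, so with $d\geq(2k-1)\ln k-2$ we get $\exp(-d'q_j)\in[k^{-2}(1+o_k(1)),k^{-4/3+o_k(1)}]$. The $(k-1)$-fold product satisfies $\prod_{j\neq i}(1-\exp(-d'q_j)) \geq (1-k^{-4/3+o_k(1)})^{k-1}\geq 1 - k^{-1/3+o_k(1)}\geq 2/3$ and is obviously $\leq 1$, so after the factor $1/k$ each coordinate of $F_{d,k}(\vec q)$ lies in $[\tfrac{2}{3k},\tfrac{1}{k}]$ for $k$ large.

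The core step is the symmetry assertion~\eqref{eqFdkFixedPointSym}. Suppose $\vec q\in[0,1]^k$ is a fixed point. Then for any $i\neq i'$ the identity $kq_i=\prod_{j\neq i}(1-\exp(-d'q_j))$ gives, after cancelling the common factors $j\neq i,i'$,
\begin{equation*}
q_i\bigl(1-\exp(-d'q_i)\bigr)\;=\;q_{i'}\bigl(1-\exp(-d'q_{i'})\bigr).
\end{equation*}
Thus every coordinate gives the same value of the single-variable function $h(q)=q(1-\exp(-d'q))$. A short calculation yields $h'(q)=1-(1-d'q)\exp(-d'q)$, and since the auxiliary function $x\mapsto(1-x)\exp(-x)$ attains its maximum value $1$ only at $x=0$, we have $h'(q)>0$ for all $q>0$. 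Hence $h$ is strictly increasing on $[0,1]$, so it is injective, and all $q_i$ must coincide. (The trivial fixed point $\vec 0$ is automatically symmetric.)

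Having reduced to the symmetric ansatz $\vec q=(q/k,\ldots,q/k)$, the fixed point equation becomes $q=g(q)$ with $g(q)=(1-\exp(-dq/(k-1)))^{k-1}$ from~\eqref{eqSimpleFix}. The same bounds as in step one show $g$ maps $[2/3,1]$ into itself. Differentiating gives
\begin{equation*}
g'(q)=d\,(1-\exp(-dq/(k-1)))^{k-2}\exp(-dq/(k-1))\leq d\cdot k^{-4/3+o_k(1)}=\tilde O_k(k^{-1/3}),
\end{equation*}
so $g$ is a strict contraction on $[2/3,1]$ for large $k$, yielding the unique fixed point $q^*$. Translating back, $\vec q^*=(q^*/k,\ldots,q^*/k)$ is the unique fixed point of $F_{d,k}$ in $[\tfrac2{3k},\tfrac1k]^k$. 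For the asymptotic~\eqref{eqq*apx}, plug the approximation $q^*\approx 1$ into the fixed-point equation: since $dq^*/(k-1)=2\ln k+O(1)$, we get $\exp(-dq^*/(k-1))=O(k^{-2})$, whence $q^*=(1-O(k^{-2}))^{k-1}=1-1/k+O(k^{-2})=1-1/k+o_k(1/k)$.

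The only step with any real subtlety is the box-invariance, because it requires matching the exponent $d/(k-1)\sim 2\ln k$ against the size of the interval so that $\exp(-d'q_j)$ is small enough to make the product $\geq 2/3$; everything else is a clean consequence of monotonicity of $h$ and contractivity of $g$. I do not expect technical obstacles beyond bookkeeping of the $o_k(1)$ terms.
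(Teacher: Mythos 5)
Your treatment of box-invariance, symmetry and uniqueness is sound and in places takes a genuinely different route from the paper. For the symmetry claim~(\ref{eqFdkFixedPointSym}) you cross-multiply the two fixed point equations to get $q_i(1-\exp(-d'q_i))=q_{i'}(1-\exp(-d'q_{i'}))$ and conclude from strict monotonicity of $h(q)=q(1-\exp(-d'q))$; the paper instead orders the coordinates, argues $q_1>0$, and compares the ratio $q_k/q_1=(1-\exp(-d'q_1))/(1-\exp(-d'q_k))\leq 1$. Your version has the small advantage of not requiring positivity of the coordinates, so the all-zero fixed point needs no separate comment. Likewise, you obtain uniqueness in $[\frac 2{3k},\frac 1k]^k$ by first proving that every fixed point is symmetric and then reducing to the one-dimensional map $g$ and its contraction on $[2/3,1]$, whereas the paper proves a $k$-dimensional contraction by bounding the Jacobian of $F_{d,k}$ and invoking Banach's fixed point theorem; both work, and your reduction is somewhat more economical.

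The one step that does not hold up as written is the expansion~(\ref{eqq*apx}). Writing $\exp(-dq^*/(k-1))=O(k^{-2})$ and then ``$q^*=(1-O(k^{-2}))^{k-1}=1-1/k+O(k^{-2})$'' conflates a one-sided big-$O$ bound with the sharp asymptotic $(1+o_k(1))k^{-2}$ that is actually needed: an unspecified constant inside the $O(k^{-2})$ only yields $1-\Theta_k(1)/k$, not the constant $1$ in $1-1/k$. Moreover, ``plugging in $q^*\approx 1$'' presupposes $1-q^*=o_k(1/\ln k)$, which has to be established first, and the upper estimate $q^*\leq 1-1/k+o_k(1/k)$ requires the standing upper bound $d\leq(2k-1)\ln k$ of this section, which you never invoke (for much larger $d$ the expansion is false, since then $1-q^*\ll 1/k$). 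The repair is the paper's short bootstrap using monotonicity of $g$: first $q^*\geq g(2/3)\geq 1-k^{-0.3}$, so $(1-q^*)\ln k=o_k(1)$; then $q^*\geq g(1-k^{-0.3})=1-1/k+o_k(1/k)$ and $q^*\leq g(1)=1-1/k+o_k(1/k)$, where both evaluations use $\exp(-d/(k-1))=(1+o_k(1))k^{-2}$, i.e.\ both the lower and the upper bound on $d$.
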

\begin{proof}
Let $I=[\frac 2{3k},\frac1k]^k$.
As a first step, we show that $F_{d,k}(I) \subset I$.
Indeed, let $\vec q\in I$.
Then for any $i\in[k]$
	\begin{eqnarray*}
	\br{F_{d,k}(\vec q)}_i&=&\frac 1k \prod_{j\neq i}1-\exp(-d'q_j)\leq \frac 1k.
	\end{eqnarray*}
On the other hand, as $d\geq(2k-1)\ln k$ we see that $d'\geq1.99k\ln k$.
Hence,
	\begin{eqnarray*}
	\br{F_{d,k}(\vec q)}_i&=&\frac 1k \prod_{j\neq i}1-\exp(-d'q_j)
		\geq\frac 1k \bc{1-\exp\bc{-\frac{2d'}{3k}}}^{k-1}\geq\frac 1k(1-k^{-1.1})^k=\frac{1-o_k(1)}k.
	\end{eqnarray*}
Thus, $F_{d,k}(I)\subset I$.

In addition, we claim that $F_{d,k}$ is contracting on $I$.
In fact, for any $i,j\in\brk k$
	\begin{eqnarray*}
	\frac\partial{\partial q_j}\br{F_{d,k}(\vec q)}_i&=&\frac{\vecone_{i\neq j}}k\frac\partial{\partial q_j}\prod_{l\neq i}1-\exp(-d'q_l)
		=\frac{\vecone_{i\neq j}d'}{k\exp(d'q_j)}\cdot\prod_{l\neq i,j}1-\exp(-d'q_l)\\
		&=&(1+o_k(1))\frac{\vecone_{i\neq j}d'}{k\exp(d'q_j)}\qquad\mbox{[as $d'\geq1.99k\ln k$ and $q_l\geq2/3$ for all $l$]}\\
		&\leq&k^{-1.3}\qquad\qquad\qquad\qquad\qquad\qquad\mbox{[for the same reason]}.
	\end{eqnarray*}
Therefore, for $\vec q\in I$ the Jacobi matrix $D F_{d,k}(\vec q)$ satisfies
	\begin{eqnarray*}
	\norm{D F_{d,k}(\vec q)}^2&\leq&\sum_{i,j\in\brk k}\bc{\frac\partial{\partial q_j}\br{F_{d,k}(\vec q)}_i}^2
		\leq k^2\cdot k^{-2.6}<1.
	\end{eqnarray*}
Thus, $F_{d,k}$ is a contraction on the compact set $I$.
Consequently, Banach's fixed point theorem implies that there is a unique fixed point $\vec q_*\in I$.

To establish~(\ref{eqFdkFixedPointSym}), assume without loss that $\vec q=(q_1,\ldots,q_k)\in[0,1]^k$ is a fixed point such that $q_1\leq\cdots\leq q_k$.
Then $q_1>0$ because $F_{d,k}$ maps $[0,1]^k$ into $(0,1]^k$.
Moreover, because $\vec q$ is a fixed point, we find
	$$\frac{q_k}{q_1}=\frac{(F_{d,k}(\vec q))_k}{(F_{d,k}(\vec q))_1}
		=\frac{1-\exp(-d'q_1)}{1-\exp(-d'q_k)}\leq1\qquad\qquad\qquad\mbox{[as $q_1\leq q_k$]},$$
whence~(\ref{eqFdkFixedPointSym}) follows.

Further, we claim that the function
	$f_{d,k}:[0,1]\ra[0,1]$, $q\mapsto(1-\exp(-dq/(k-1)))^{k-1}$
maps the interval $[2/3,1]$ into itself.
This is because for $q\in[2/3,1]$ we have
	$0\leq\exp(-dq/(k-1))\leq k^{-1.3}$ due to our assumption on $d$.
Moreover, the derivative of $f$ works out to be
	$f'_{d,k}(q)=d\exp(-dq/(k-1))(1-\exp(-dq/(k-1)))^{k-2}$.
Thus, for $q\in[2/3,1]$ we find $0\leq f'_{d,k}(q)<1/2$.
Hence, $f_{d,k}$ has a unique fixed point $q_*\in[2/3,1]$.
Comparing the expressions $f_{d,k}(q)$ and $F_{d,k}(\vec q)$, we see
that $(q_*/k,\ldots,q_*/k)$ is a fixed point of $F_{d,k}$.
Consequently, $\vec q_*=(q_*/k,\ldots,q_*/k)$.

Finally, since $f'_{d,k}(q)>0$ for all $q$, the function $f_{d,k}$ is strictly increasing.
Therefore, as $d=(2-o_k(1))k\ln k$,
	\begin{equation}\label{eqq*1}
	q_*=f_{d,k}(q_*)\leq f_{d,k}(1)=(1-\exp(-d/(k-1)))^{k-1}=1-1/k+o_k(1/k).
	\end{equation}
Similarly, $q_*\geq f_{d,k}(2/3)\geq 1-k^{-0.3}$.
Hence, because $d\geq(2k-1)\ln k-3$, we obtain
	\begin{eqnarray}\nonumber
	q_*&=&f_{d,k}(q_*)\geq f_{d,k}(1-k^{-0.3})=\bc{1-\exp\brk{-\frac{d(1-k^{-0.3})}{k-1}}}^{k-1}\\
		&=&\bc{1-k^{-2}+O_k(k^{-2.1})}^{k-1}=1-1/k+o_k(1/k).
			\label{eqq*2}
	\end{eqnarray}
Combining~(\ref{eqq*1}) and~(\ref{eqq*2}), we conclude that $q_*=1-1/k+o_k(1/k)$,
as claimed.
\end{proof}

\begin{remark}
The proof of several statements in this section
(\Lem s~\ref{Prop_Fixpunkt}, \ref{Lemma_hardFields}, \ref{Lemma_separateColors}, \ref{Cor_balanced} and \Cor~\ref{Claim_hardFields1})
directly incorporate parts of the calculations outlined in the physics work~\cite{LenkaFlorent} that predicted the existence and location of $\dc$.
We redo these calculations here in detail to be self-contained and because not all steps are carried out in full detail in~\cite{LenkaFlorent}.
\end{remark}

From here on out, we let $\vec q^*$ denote the fixed point of $F_{d,k}$ in $[2/(3k),1]^k$ and we
denote the fixed point of the function~(\ref{eqSimpleFix}) in the interval $[2/3,1]$ by $q^*$.
Hence, $\vec q^*=(q^*/k,\ldots,q^*/k)$.
If we keep $k$ fixed, how does $q^*$ vary with $d$?

\begin{corollary}\label{Cor_q*diff}
We have $\frac{\dd q^*}{\dd d}={\Theta}_k \br{ k^{-2}}$.
\end{corollary}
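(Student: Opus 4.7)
The plan is to use implicit differentiation on the fixed point equation $q^*=f_{d,k}(q^*)$, where $f_{d,k}(q)=(1-\exp(-dq/(k-1)))^{k-1}$ is the function from~(\ref{eqSimpleFix}). Since $q^*$ depends smoothly on $d$ by the implicit function theorem (which applies because we will verify $1-\partial_q f_{d,k}(q^*)$ is bounded away from $0$), differentiating the relation $q^*(d)=f_{d,k}(q^*(d))$ with respect to $d$ yields
\[
\frac{\dd q^*}{\dd d}=\frac{\partial_d f_{d,k}(q^*)}{1-\partial_q f_{d,k}(q^*)}.
\]
Hence the task reduces to obtaining sharp asymptotic estimates (in $k$, uniformly for $d\in[(2k-1)\ln k-3,(2k-1)\ln k]$) for the numerator and denominator.

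First, I would refine the approximation for $q^*$ from Lemma~\ref{Prop_Fixpunkt}. Writing $x=dq^*/(k-1)$, the fixed point relation gives $1-\exp(-x)=(q^*)^{1/(k-1)}$. Using $q^*=1-1/k+o_k(1/k)$ and expanding the logarithm, we get $(q^*)^{1/(k-1)}=1-1/k^2+o_k(1/k^2)$, hence $\exp(-x)=1/k^2+o_k(1/k^2)$, so in particular $x=(2+o_k(1))\ln k$. In addition, the factor $(1-\exp(-x))^{k-2}=(q^*)^{(k-2)/(k-1)}=1-O_k(1/k)$.

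Next I would compute the two partial derivatives. A direct differentiation gives
\[
\partial_q f_{d,k}(q)=d\exp(-dq/(k-1))(1-\exp(-dq/(k-1)))^{k-2},\qquad
\partial_d f_{d,k}(q)=q\exp(-dq/(k-1))(1-\exp(-dq/(k-1)))^{k-2}.
\]
Plugging in the estimates above and $d=\Theta_k(k\ln k)$, we find $\partial_q f_{d,k}(q^*)=\Theta_k(\ln k/k)$ and $\partial_d f_{d,k}(q^*)=\Theta_k(1/k^2)$. Consequently $1-\partial_q f_{d,k}(q^*)=1-o_k(1)=\Theta_k(1)$, which both justifies the application of the implicit function theorem and yields
\[
\frac{\dd q^*}{\dd d}=\frac{\Theta_k(k^{-2})}{\Theta_k(1)}=\Theta_k(k^{-2}),
\]
as required.

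The only mildly delicate point is tracking the constants precisely enough to get matching upper and lower bounds, i.e.\ both $O_k(k^{-2})$ and $\Omega_k(k^{-2})$. This is purely a matter of keeping the error terms $o_k(1/k^2)$ in the expansion of $\exp(-x)$ under control; since the bounds for $q^*$ established in Lemma~\ref{Prop_Fixpunkt} are tight up to $o_k(1/k)$ and $d/(k-1)=(2+O_k(1/k))\ln k$, this is straightforward bookkeeping. No additional combinatorial input is needed beyond Lemma~\ref{Prop_Fixpunkt}.
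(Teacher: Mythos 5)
Your proposal is correct and follows essentially the same route as the paper: implicit differentiation of the fixed-point relation $q^*=f_{d,k}(q^*)$, solving the resulting linear equation for $\frac{\dd q^*}{\dd d}$, and then inserting the asymptotics $q^*=1-1/k+o_k(1/k)$ and $d=\Theta_k(k\ln k)$ to get $\exp(-dq^*/(k-1))=\Theta_k(k^{-2})$ and hence the stated bound. Your explicit check that $1-\partial_qf_{d,k}(q^*)=\Theta_k(1)$ is the same non-degeneracy fact the paper relies on (it already appears, in the form $0\leq f_{d,k}'<1/2$, in the proof of Lemma~\ref{Prop_Fixpunkt}), so no further comment is needed.
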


\begin{proof}
The map $d\mapsto q^*$ is differentiable by the implicit function theorem.
Moreover, differentiating~(\ref{eqSimpleFix}) while keeping in mind that $q^*=q^*(d)$ is a fixed point, we find
	\begin{eqnarray*}
	\frac{\dd q^*}{\dd d}&=&\frac{\dd}{\dd d}\,(1-\exp(-d q^*/(k-1)))^{k-1}\\
		&=&\frac{(k-1)\br{1-\exp\br{-dq^*/(k-1)}}^{k-2}}
			{\exp\br{dq^*/(k-1)}}\cdot \br{\frac{q^*}{k-1}+\frac d{k-1}\frac{\dd q^* }{\dd d}}\\
	\end{eqnarray*}
Rearranging the above using $d=2k\ln k+O_k(\ln k)$ and~(\ref{eqq*apx}) yields the assertion.
\end{proof}

\begin{corollary}\label{lem_q_il}
We have $q_{i,\ell}^*=\tilde{\Theta}_k\left(k^{-(2|\ell|-1) }\right)$  for all $(i,\ell)\in \cT$.
Moreover, 
	$\frac{\dd q_{i,\ell}^*}{\dd d} =\tilde{O}_k\left(|\ell|k^{-2|\ell|}\right).$
\end{corollary}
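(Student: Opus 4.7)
The plan is a direct calculation, substituting the symmetric fixed point $\vec q^{\,*}=(q^*/k,\dots,q^*/k)$ from Lemma \ref{Prop_Fixpunkt} into the definition of $q_{i,\ell}^*$ and tracking the orders of magnitude in $k$. The first step is to compute the quantity $a=q^*d'/k$ that appears in every factor: since $d'=dk/(k-1)=2k\ln k\,(1+o_k(1))$ and $q^*=1-1/k+o_k(1/k)$ by Lemma \ref{Prop_Fixpunkt}, one obtains $a=2\ln k+\tilde O_k(1/k)$, so that $\mathrm{e}^{-a}=k^{-2}(1+\tilde O_k(1/k))$ and $1-\mathrm{e}^{-a}=1-\tilde O_k(k^{-2})$.

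Plugging these into
\[
q_{i,\ell}^*=\frac1k\,\mathrm{e}^{-a(|\ell|-1)}\bc{1-\mathrm{e}^{-a}}^{k-|\ell|}
\]
immediately gives the first claim: the leading factor contributes $k^{-1}$, the exponential factor contributes $k^{-2(|\ell|-1)}(1+\tilde O_k(1/k))^{|\ell|-1}$, and the last factor equals $\exp\!\bigl(-(k-|\ell|)\tilde O_k(k^{-2})\bigr)=1+\tilde O_k(1/k)$. Combining, $q_{i,\ell}^*=\tilde\Theta_k(k^{-(2|\ell|-1)})$.

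For the derivative, I would take the logarithmic derivative
\[
\frac1{q_{i,\ell}^*}\,\frac{\dd q_{i,\ell}^*}{\dd d}=\frac{\dd a}{\dd d}\brk{-(|\ell|-1)+(k-|\ell|)\frac{\mathrm{e}^{-a}}{1-\mathrm{e}^{-a}}}.
\]
From $a=q^*d'/k$, the chain rule together with Corollary \ref{Cor_q*diff} ($\dd q^*/\dd d=\Theta_k(k^{-2})$) and $d'=dk/(k-1)$ yields
\[
\frac{\dd a}{\dd d}=\frac{d'}{k}\frac{\dd q^*}{\dd d}+\frac{q^*}{k-1}=\Theta_k\bc{\tfrac{\ln k}{k^2}}+\Theta_k(1/k)=\Theta_k(1/k).
\]
Meanwhile $(k-|\ell|)\mathrm{e}^{-a}/(1-\mathrm{e}^{-a})=\tilde O_k(1/k)$, so the bracket equals $-(|\ell|-1)+\tilde O_k(1/k)$. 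For $|\ell|\geq 2$ this bracket has magnitude $\Theta(|\ell|)$, and hence
\[
\abs{\frac{\dd q_{i,\ell}^*}{\dd d}}=q_{i,\ell}^*\cdot\Theta_k(1/k)\cdot\Theta(|\ell|)=\tilde O_k\bc{|\ell|\,k^{-2|\ell|}}.
\]
For $|\ell|=1$, the bracket is itself $\tilde O_k(1/k)$, yielding a stronger bound $\tilde O_k(k^{-3})$, which is already subsumed by $\tilde O_k(|\ell|k^{-2|\ell|})=\tilde O_k(k^{-2})$.

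There is no real obstacle here: the corollary is a pure asymptotic book-keeping exercise, entirely reduced to the two inputs $q^*=1-1/k+o_k(1/k)$ (Lemma \ref{Prop_Fixpunkt}) and $\dd q^*/\dd d=\Theta_k(k^{-2})$ (Corollary \ref{Cor_q*diff}). The only point deserving attention is to keep track of the $\ln k$ factors hidden in $a=2\ln k+\tilde O_k(1/k)$, which is why the conclusions are stated in the $\tilde\Theta_k$/$\tilde O_k$ notation rather than the sharper $\Theta_k$/$O_k$.
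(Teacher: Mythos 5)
Your proposal is correct and follows essentially the same route as the paper: substitute the symmetric fixed point $\vec q^*=(q^*/k,\ldots,q^*/k)$, use $q^*=1-1/k+o_k(1/k)$ from \Lem~\ref{Prop_Fixpunkt} and $\dd q^*/\dd d=\Theta_k(k^{-2})$ from \Cor~\ref{Cor_q*diff}, and do the asymptotic bookkeeping (you differentiate logarithmically where the paper applies the product rule directly, a purely cosmetic difference, and you handle $|\ell|=1$ explicitly). Note only that your error-propagation step $(1+\tilde O_k(1/k))^{|\ell|-1}=\tilde\Theta_k(1)$ is tight only for $|\ell|$ up to polylogarithmic size; the paper's own proof is equally cavalier here and in fact states the derivative bound ``provided $|\ell|\leq\ln k$'', so this is a shared, harmless imprecision rather than a gap in your argument.
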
\label{lem_bound_par_q}
\begin{proof}
\Lem~\ref{Prop_Fixpunkt} shows that $q_j^*=q_*/k$ for all $j\in\brk k$.
Hence, due to~(\ref{eqq*apx}) and because $d'=2k\ln k+O_k(\ln k)$ we obtain
	\begin{eqnarray*}
	q_{i,\ell}^*&=&\frac 1k\prod_{j\in\brk k\setminus \ell}1-\exp\br{-d'q_j^*}
		\prod_{j\in \ell \setminus \{i\}}\exp\br{-d'q_j^*}=\tilde\Theta_k(k^{-(2|\ell|-1)}).
	\end{eqnarray*}
Furthermore, applying \Cor~\ref{Cor_q*diff}, we get
	\begin{eqnarray*}
	\frac{\dd q_{i,\ell}^*}{\dd d}&=& \frac 1k \frac{\dd }{\dd d}
			\brk{\prod_{j\in\brk k\setminus \ell}1-\exp\br{-d'q_j^*}\prod_{j\in \ell \setminus \{i\}}\exp\br{-d'q_j^*}} \\
			&=&\frac 1k \frac{\dd }{\dd d}
			\brk{(1-\exp\br{-d'q_*/k})^{k-|\ell|}\exp\br{-d'q_*/k}^{|\ell|-1}
				} \\
			&=& \frac 1k\br{\frac{q_*}{k-1}+\frac{d'}k\frac{\dd q_*}{\dd d}} 
					\bigg[\frac{k-|\ell|}{ \exp(d'q_*/k)} 
							 (1-\exp(-d'q_*/k))^{k-|\ell|-1}\\
		&&\qquad - (|\ell|-1)(1-\exp\br{-d'q_*/k})^{k-|\ell|}\bigg]\exp\br{-d'(|\ell|-1)q_*/k} \\
		&=&|\ell|O_k(k^{-2})\exp\br{-d'(|\ell|-1)q_*/k}=\tilde O_k(|\ell|k^{-2|\ell|}),
	\end{eqnarray*}
provided that $|\ell|\leq\ln k$.
\end{proof}

\begin{lemma}\label{Lemma_subcrit}
The branching process $\GW(d,k,\vec q_*)$ is sub-critical.
\end{lemma}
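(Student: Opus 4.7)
The plan is to verify sub-criticality by producing explicit Perron-style weights for the mean offspring matrix $M$ of $\GW(d,k,\vec q^*)$. Here $M$ is indexed by $\cT$ with entries
\[
M_{(i,\ell),(i',\ell')} = d' q^*_{i',\ell'}\vecone_{(i',\ell') \in \cT_{i,\ell}},
\]
and a multi-type branching process is sub-critical iff the spectral radius $\rho(M) < 1$, which in turn is implied by exhibiting a positive vector $v$ and constant $c<1$ with $(Mv)_{(i,\ell)} \leq c\,v_{(i,\ell)}$ for every type. I propose $v_{(i,\ell)} = 2^{|\ell|}$; the weight grows geometrically in $|\ell|$ in order to absorb the fact that larger available-color sets spawn more potential offspring, while remaining small enough that the bound stays useful.

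The key simplification is \Lem~\ref{Prop_Fixpunkt}, which yields $q^*_j = q^*/k$ for every $j$, so that $q^*_{i',\ell'}$ depends only on $|\ell'|$. Writing $\alpha := \exp(-q^*d'/k)$ we have $q^*_{i',\ell'} = \frac{1}{k}\alpha^{|\ell'|-1}(1-\alpha)^{k-|\ell'|}$. From $d \geq (2k-1)\ln k - 2$, $d' = dk/(k-1)$, and $q^* = 1 - O_k(1/k)$ (\Lem~\ref{Prop_Fixpunkt}) it follows that $q^*d'/k = 2\ln k\cdot(1 + o_k(1))$ and hence $\alpha = k^{-2+o_k(1)}$. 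Applying the union bound $\vecone_{\ell\cap\ell'\neq\emptyset} \leq \sum_{j\in\ell}\vecone_{j\in\ell'}$ and counting the $\ell'$ of size $s$ through a given $j$ together with the $s$ choices of $i' \in \ell'$, I get
\[
(Mv)_{(i,\ell)} \leq d'|\ell|\sum_{s=2}^{k}\binom{k-1}{s-1}\frac{s}{k}\alpha^{s-1}(1-\alpha)^{k-s}\,2^{s}.
\]

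The inner sum admits a closed-form evaluation. Substituting $t=s-1$ and invoking the two binomial identities $\sum_t\binom{k-1}{t}x^t y^{k-1-t}=(x+y)^{k-1}$ and $\sum_t t\binom{k-1}{t}x^t y^{k-1-t}=(k-1)x(x+y)^{k-2}$ at $x=2\alpha$, $y=1-\alpha$ (so $x+y=1+\alpha$), one finds that the sum equals
\[
\frac{2}{k}\bigl[(1+\alpha)^{k-1} - (1-\alpha)^{k-1} + 2(k-1)\alpha(1+\alpha)^{k-2}\bigr].
\]
A Taylor expansion in $\alpha = k^{-2+o_k(1)}$ makes each bracketed term $O(k^{-1+o_k(1)})$, so the sum is $O(k^{-2+o_k(1)})$. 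Combined with $d' = \Theta_k(k\ln k)$ this yields $(Mv)_{(i,\ell)} \leq |\ell|\cdot O(k^{-1+o_k(1)})$.

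Finally, dividing by $v_{(i,\ell)} = 2^{|\ell|}$ and using $|\ell|/2^{|\ell|} \leq 1/2$ for all $|\ell|\geq 2$ (while types with $|\ell|=1$ have $\cT_{i,\ell}=\emptyset$, making the bound vacuous), the ratio $(Mv)_{(i,\ell)}/v_{(i,\ell)}$ is uniformly $O(k^{-1+o_k(1)}) < 1$ once $k$ is large. This gives $\rho(M) < 1$, hence sub-criticality and $\pr[\cN<\infty]=1$. The only delicate point is the choice of weights: since larger $|\ell|$ increases the out-degree linearly but concentrates on types with $q^*_{\cdot,\ell'}\sim k^{-(2|\ell'|-1)}$, a naive uniform bound gives a row sum of order $\ln k$, well above $1$. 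The geometric weight $2^{|\ell|}$ neutralises this precisely because the closed-form evaluation of the binomial sum is $O(k^{-2+o_k(1)})$ independently of $|\ell|$.
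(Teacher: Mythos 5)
Your argument is correct, and it takes a genuinely different route from the paper. The paper does not work with the full mean offspring matrix on $\cT$; instead it collapses the types into three classes ($|\ell|=1$, $|\ell|=2$, $|\ell|>2$), dominates the original process by a three-type process $\GW'(d,k,\vec q^*)$ (taking $\ell_0=\brk k$ as the worst case for the lumped class $|\ell_0|>2$), bounds the four relevant entries via \Cor~\ref{lem_q_il}, and then checks that the resulting $2\times2$ block has trace $\tilde O_k(k^{-1})$ and determinant $\tilde O_k(k^{-2})$, so both eigenvalues are $\tilde O_k(k^{-1})$. Both proofs have to contend with the same awkward feature: a vertex of type $(i,\brk k)$ has expected total offspring of order $\ln k\gg1$, almost all of it of type $|\ell'|=2$; the paper neutralises this through the product $M_{23}M_{32}=\tilde O_k(k^{-2})$ entering the determinant, whereas you absorb it into the geometric weight $v_{(i,\ell)}=2^{|\ell|}$, which is exactly why a uniform test vector fails (row sum $\Theta_k(\ln k)$ at $\ell=\brk k$) but yours succeeds. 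Your route buys an explicit, uniform certificate $\rho(M)\leq O_k(\ln k/k)$ on the full (finite) type space via a clean binomial identity, with no auxiliary dominating process; the paper's route is shorter and only needs the crude order-of-magnitude bounds of \Cor~\ref{lem_q_il} rather than the closed-form evaluation. Both rest on the same inputs (\Lem~\ref{Prop_Fixpunkt}: $q_j^*=q^*/k$ and $q^*=1-1/k+o_k(1/k)$, hence $\exp(-q^*d'/k)=\Theta_k(k^{-2})$) and on the same criterion that spectral radius below one implies sub-criticality; for completeness you should state explicitly the standard fact that for a finite non-negative matrix a strictly positive vector $v$ with $Mv\leq cv$ entrywise forces $\rho(M)\leq c$ (immediate from $M^tv\leq c^tv$), but this is a one-line remark, not a gap.
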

\begin{proof}
We introduce another branching process $\GW'(d,k,\vec q^*)$ with only three types $1,2,3$.
The idea is that type 1 of the new process represents all types $(h,\cbc{h})\in\cT$ with $h\in\brk k$,
	that $2$ represents all types $(h,\cbc{j,h})\in T$ with $h,j\in\brk k$, $j\neq h$,
	and that $3$ lumps together all of the remaining types.
More specifically, in $\GW'(d,k,\vec q^*)$ an individual of type $i$ spawns a Poisson number $\Po(M_{ij})$ of offspring of type $j$ ($i,j\in\cbc{1,2,3}$), where
	$M=(M_{ij})$ is the following matrix.
If either $i=1$ or $j=1$, then $M_{ij}=0$.
Moreover,
	\begin{align*}
	M_{22}&=\sum_{(i,\ell)\in\cT_{(1,\cbc{1,2})}:|\ell|=2}q_{i,\ell}^*d'
		&
		M_{23}&=\sum_{(i,\ell)\in\cT_{(1,\cbc{1,2})}:|\ell|>2}q_{i,\ell}^*d',\\
	M_{32}&=\sum_{(i,\ell)\in\cT_{(1,\brk k)}:|\ell|=2}q_{i,\ell}^*d',&
	M_{33}&=\sum_{(i,\ell)\in\cT_{(1,\brk k)}:|\ell|>2}q_{i,\ell}^*d'.
	\end{align*}
Due to the symmetry of the fixed point $\vec q^*$ (i.e., $\vec q^*=(q^*/k,\ldots,q^*/k)$),
$M_{22}$ is precisely the expected number of offspring of type $(i,\ell)$ with $|\ell|=2$ that an individual of type $(i_0,\ell_0)\in\cT$ with $|\ell_0|=2$ spawns
in the branching process $\GW(d,k,\vec q^*)$.
Similarly, $M_{23}$ is just the expected offspring of type $(i,\ell)$ with $|\ell|>2$ of an individual with $|\ell_0|=2$.
Furthermore, $M_{32}$ is an upper bound on the expected offspring of type $(i',\ell')$ with $|\ell'|=2$ of
an individual of type $(i_0,\ell_0)$ with $|\ell_0|>2$.
Indeed, $M_{32}$ is the the expected offspring in the case that $\ell_0=\brk k$, which is the case that yields the largest possible expectation.
Similarly, $M_{33}$ is an upper bound on the expected offspring of type $(i',\ell')$ with $|\ell'|>2$ in the case $|\ell_0|>2$.
Therefore, if $\GW'(d,k,\vec q^*)$ is sub-critical, then so is $\GW(d,k,\vec q^*)$.

To show that this is the case, we need to estimate the entries $M_{ij}$.
Estimating the $q_{i,\ell}^*$ via \Cor~\ref{lem_q_il}, we obtain
	\begin{align*}
	M_{22}&\leq 2kq_{1,\cbc{1,2}}^*d'\leq\tilde O_k(k^{-1}),&
	M_{23}&\leq 2\sum_{l\geq3}l\bink{k}{l-1}q_{1,\brk l}^*d'\leq\tilde O_k(k^{-2}),\\
	M_{32}&\leq k(k-1)q_{1,\cbc{1,2}}^*d'\leq\tilde O_k(1),&
	M_{33}&\leq k\sum_{l\geq3}l\bink{k}{l-1}q_{1,\brk l}^*d'\leq\tilde O_k(k^{-1}).
	\end{align*}
The branching process $\GW'(d,k,\vec q^*)$ is sub-critical iff all eigenvalues of $M$ are less than $1$ in absolute value.
Because the first row and column of $M$ are $0$, this is the case iff the eigenvalues of the $2\times 2$ matrix
$M_*=(M_{ij})_{2\leq i,j\leq3}$ are less than $1$ in absolute value.
Indeed, since the above estimates show that $M_*$ has trace $\tilde O_k(k^{-1})$ and determinant
$\tilde O_k(k^{-2})$, both eigenvalues of $M_*$ are $\tilde O_k(k^{-1})$.
\end{proof}

\begin{lemma}\label{Lemma_diffTree}
We have $\frac{\dd}{\dd d}\Erw[|\T_{d,k,\vec q^*}|^{-1}\ln\cZ(\T_{d,k,\vec q^*})]\leq\tilde O_k(k^{-2})$.
\end{lemma}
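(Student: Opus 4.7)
\textbf{The plan} is to bound $h'(d) := \frac{\dd}{\dd d}\Erw\brk{|\T_{d,k,\vec q^*}|^{-1}\ln\cZ(\T_{d,k,\vec q^*})}$ by a coupling argument. The key observation is that $0 \le |T|^{-1}\ln\cZ(T) \le \ln k$ for every rooted decorated tree $T$, because $\cZ(T)\ge 1$ and $\cZ(T)\le k^{|T|}$. Hence, if for every small $\eps>0$ I can construct a coupling of $\T_{d,k,\vec q^*}$ and $\T_{d+\eps,k,\vec q^*}$ under which the two trees are identical except on an event of probability $\tilde O_k(\eps/k^2)$, then
$$\abs{h(d+\eps)-h(d)} \le 2\ln k\cdot\tilde O_k(\eps/k^2)=\tilde O_k(\eps/k^2),$$
and the lemma follows upon dividing by $\eps$ and letting $\eps\to 0$.

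The coupling will be built recursively. First I couple the root types optimally, so that the disagreement probability equals $\tfrac12\sum_{(i,\ell)\in\cT}|q_{i,\ell}^*(d+\eps)-q_{i,\ell}^*(d)|\le\tfrac\eps2\sum_{(i,\ell)}|\frac{\dd q_{i,\ell}^*}{\dd d}|+o(\eps)$. By \Cor~\ref{lem_q_il}, the contribution of the $L\binom{k}{L}$ types with $|\ell|=L$ to this sum is bounded by $L\binom{k}{L}\cdot\tilde O_k(L\,k^{-2L})=\tilde O_k(L^2 k^{-L})$, which sums over $L\ge 1$ to $\tilde O_k(k^{-2})$; the cases $L=1$ and $L=2$ dominate, the former handled via \Cor~\ref{Cor_q*diff} together with the symmetric form $\vec q^*=(q^*/k,\ldots,q^*/k)$ of the fixed point. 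Conditional on the root types agreeing on some $(i,\ell)$ with $|\ell|\ge 2$, I will next couple the Poisson offspring counts for each $(i',\ell')\in\cT_{i,\ell}$ via the standard optimal coupling of $\Po(\lambda)$ and $\Po(\lambda+\delta)$, whose disagreement probability is at most $\delta$. The per-vertex Poisson disagreement is then bounded by $\eps\sum_{(i',\ell')\in\cT_{i,\ell}}|\frac{\dd}{\dd d}(d'q_{i',\ell'}^*)|$; \Cor~\ref{lem_q_il} yields $\frac{\dd}{\dd d}(d'q_{i',\ell'}^*)=\tilde O_k(|\ell'|\,k^{-(2|\ell'|-1)})$, and since $|\cT_{i,\ell}\cap\{|\ell'|=2\}|=O(|\ell|\,k)$ while contributions from $|\ell'|\ge 3$ are dwarfed, the per-vertex disagreement works out to $\tilde O_k(\eps|\ell|/k^2)$. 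I then recurse on each pair of children surviving the Poisson coupling.

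To aggregate disagreement over the whole tree, I use the sub-criticality from \Lem~\ref{Lemma_subcrit}: the expected total number of vertices conditional on the root having $|\ell|\ge 2$ is $1+\tilde O_k(k^{-1})$, because the restricted offspring matrix has spectral radius $\tilde O_k(k^{-1})$. Multiplying per-vertex Poisson disagreement by expected tree size and weighting by the root type distribution $(q_{i,\ell}^*)$, the total Poisson contribution to disagreement is $\sum_{(i,\ell):|\ell|\ge 2}q_{i,\ell}^*\cdot\tilde O_k(\eps|\ell|/k^2)=\tilde O_k(\eps/k^3)$, which is dwarfed by the root-type contribution $\tilde O_k(\eps/k^2)$. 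Summing, the coupling has total disagreement probability $\tilde O_k(\eps/k^2)$, as required.

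The main difficulty will be propagating the coupling through arbitrarily deep subtrees without losing the $\tilde O_k(k^{-2})$ rate. The spectral estimates on the restricted offspring matrix used in the proof of \Lem~\ref{Lemma_subcrit} are crucial here, as they ensure that the inhomogeneous linear recurrence for the per-type disagreement probabilities is solved by $\tilde O_k(\eps/k^2)$ at every level of the tree. A secondary subtlety is that the parameter $d$ enters the process both through the root-type distribution $(q_{i,\ell}^*)$ and through the Poisson rates $d'q_{i',\ell'}^*$, so the derivative bounds of \Cor~\ref{Cor_q*diff} and \Cor~\ref{lem_q_il} must be applied carefully in two separate places.
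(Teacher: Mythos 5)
Your plan is correct and follows essentially the same route as the paper's own proof: both arguments couple $\T_{d,k,\vec q^*}$ with $\T_{d+\eps,k,\vec{\hat q}^*}$ by optimally coupling the root types via the derivative estimates of \Cor~\ref{Cor_q*diff} and \Cor~\ref{lem_q_il}, maximally coupling the Poisson offspring (the paper phrases this as thinning a common tree grown with rates $\max\{d'q^*_{i,\ell},\hat d'\hat q_{i,\ell}^*\}$), and aggregating the per-vertex disagreement through the sub-critical type-resolved offspring matrix underlying \Lem~\ref{Lemma_subcrit}, before applying the crude bound $|T|^{-1}\ln\cZ(T)\leq\ln k$. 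The only cosmetic difference is that the paper also exploits that the functional vanishes when the root type is a singleton, while you instead use the exact derivative of the singleton masses $q^*/k$; both yield the same total disagreement $\tilde O_k(\eps k^{-2})$.
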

\begin{proof}
Fix a number $d\in[(2k-1)\ln k-2,(2k-1)\ln k]$ and a small number $\eps>0$ and let $\hat d=d+\eps$.
Let $\vec q^*$ be the unique fixed point of $F_{d,k}$ in $[2/3,1]^k$ and let
	$\vec{\hat q}^*$ be the unique fixed point of $F_{\hat d,k}$ in $[2/3,1]^k$.
Set $d'=dk/(k-1)$ and $\hat d'=\hat dk/(k-1)$.
Moreover, let us introduce the shorthands $\T=\T_{d,k,\vec q^*}$ and $\hat \T=\T_{d,k,\vec{\hat q}^*}$.
We aim to bound
	\begin{align*}
	\Delta=\abs{\Erw\brk{\frac{\ln\cZ(\T)}{|\T|}}-
		\Erw\brk{\frac{\ln\cZ(\hat\T)}{|\hat\T|}}}
	\end{align*}

To this end, we couple $\T$ and $\hat \T$ as follows.
\begin{itemize}
\item In $\T,\hat\T$ the type $(i_0,\ell_0)$ resp.\ $(\hat i_0,\hat\ell_0)$ of the root $v_0$ is chosen from the distribution
		$$Q=(q_{i,\ell})_{(i,\ell)\in\cT}\quad\mbox{resp.}\quad\hat Q=(\hat q_{i,\ell})_{(i,\ell)\in\cT}.$$
	We couple $(i_0,\ell_0)$, $(\hat i_0,\hat\ell_0)$ optimally.
\item If $(i_0,\ell_0)\neq(\hat i_0,\hat\ell_0)$, then we generate $\T$, $\hat\T$ independently
		from the corresponding conditional distributions given the type of the root.
\item If $(i_0,\ell_0)=(\hat i_0,\hat\ell_0)$, we generate a random tree $\widetilde\T$ by means of the following branching process.
		\begin{itemize}
		\item Initially, there is one individual. Its type is $(i_0,\ell_0)$.
		\item Each individual of type $(i,\ell)$ spawns a $\Po(\Lambda_{i',\ell'})$ number of offspring of each type $(i',\ell')\in\cT_{i,\ell}$,
			where $$\Lambda_{i',\ell'}=\max\cbc{q_{i',\ell'}^*d',\hat q_{i',\ell'}^*\hat d'}.$$
		\item Given that the total progeny is finite, we obtain $\widetilde\T$ by linking each individual to its offspring.
		\end{itemize}
\item For each type $(i,\ell)$ let
		$$\lambda_{i,\ell}=1-\min\cbc{d'q_{i,\ell}^*,\hat d'\hat q_{i,\ell}^*}/\Lambda_{i,\ell}.$$
	For every vertex $v$ of $\widetilde\T$ let $s_v$ be a random variable with distribution $\Be(\lambda_{i_v,\ell_v})$,
		where $(i_v,\ell_v)$ is the type of~$v$.
	The random variables $(s_v)_v$ are mutually independent.
\item Obtain $\T$ from $\widetilde\T$ by deleting all vertices $v$ such that 
		$d'q^*_{i_v,\ell_v}<d'\hat q_{i_v,\ell_v}^*$ and 	$s_v=1$, along with the pending sub-tree.
\item	Similarly, obtain $\hat\T$ from $\widetilde\T$ by deleting all $v$ and their sub-trees such that 
		$d'q^*_{i_v,\ell_v}>d'\hat q_{i_v,\ell_v}^*$ and $s_v=1$.
\end{itemize}

Let $\cA$ be the event that the type of the root satisfies $\ell_0=\cbc{i_0}$ and let $\hat\cA$ be the event $\hat\ell_0=\{\hat i_0\}$.
If $\cA\cap\hat\cA$ occurs, then both $\T$, $\hat\T$ consist of a single vertex and have precisely one legal coloring.
Thus, $|\T|^{-1}\ln\cZ(\T)=|\hat\T|^{-1}\ln\cZ(\hat\T)=0$.
Consequently,
	\begin{align*}
	\Delta
		&\leq\Erw\brk{\abs{\frac{\ln\cZ(\T)}{|\T|}-\frac{\ln\cZ(\hat\T)}{|\hat\T|}}\,\bigg|
			\,\neg\cA\vee\neg\hat\cA}\cdot\pr\brk{\neg\cA\vee\neg\hat\cA}.
	\end{align*}
Further, since $|\T|^{-1}\ln\cZ(\T),|\hat\T|^{-1}\ln\cZ(\hat\T)\leq\ln k$ with certainty, we obtain
	\begin{align*}
	\Delta&\leq\bc{\pr\brk{\neg\cA\wedge\hat\cA}+\pr\brk{\cA\wedge\neg\hat\cA}}\ln k+
		\Erw\brk{\abs{\frac{\ln\cZ(\T)}{|\T|}-\frac{\ln\cZ(\hat\T)}{|\hat\T|}}\,\bigg|
			\,\neg\cA\wedge\neg\hat\cA}\cdot\pr\brk{\neg\cA\wedge\neg\hat\cA}.
	\end{align*}
Because $(i_0,\ell_0)$ and $(\hat i_0,\hat\ell_0)$ are coupled optimally and
	$\pr[\cA]=kq_1^*$, $\pr[\hat \cA]=k\hat q_1^*$, \Cor~\ref{Cor_q*diff} implies that
$\pr[\neg\cA\wedge\hat\cA],\pr[\cA\wedge\neg\hat\cA]\leq\eps\tilde O_k(k^{-2})$.
Hence,
	\begin{align}\label{eqCoupling1}
	\Delta&\leq\eps\tilde O_k(k^{-2})+
		\Erw\brk{\abs{\frac{\ln\cZ(\T)}{|\T|}-\frac{\ln\cZ(\hat\T)}{|\hat\T|}}\,\bigg|
			\,\neg\cA\wedge\neg\hat\cA}\cdot\pr\brk{\neg\cA\wedge\neg\hat\cA}.
	\end{align}
Now, let $\cE$ be the event that $\ell_0\neq\{i_0\}$, $\hat\ell_0\neq\{\hat i_0\}$ and $(i_0,\ell_0)=(\hat i_0,\hat\ell_0)$.
Due to \Cor~\ref{lem_q_il} and because $(i_0,\ell_0)$, $(\hat i_0,\hat\ell_0)$ are coupled optimally, we see that
	\begin{equation}\label{eqCoupling2}
	\pr\brk{\neg\cA\wedge\neg\hat\cA\wedge\neg\cE}\leq \eps\tilde O_k(k^{-2}).
	\end{equation}
Combining~(\ref{eqCoupling1}) and~(\ref{eqCoupling2}), we conclude that
	\begin{align}\label{eqCoupling3}
	\Delta&\leq\eps\tilde O_k(k^{-2})+\Erw\brk{\abs{\frac{\ln\cZ(\T)}{|\T|}-\frac{\ln\cZ(\hat\T)}{|\hat\T|}}\,\bigg|\cE}
			\cdot\pr\brk{\neg\cA\wedge\neg\hat\cA}
	\end{align}
Further, since $\pr\brk{\neg\cA\wedge\neg\hat\cA}\leq\pr\brk{\neg\cA}\leq1-kq_1^*\leq O_k(1/k)$ by \Lem~\ref{Lemma_GW}, (\ref{eqCoupling3}) yields
	\begin{align}
	\Delta&\leq\eps\tilde O_k(k^{-2})+O_k(1/k)\cdot\Erw\brk{\abs{\frac{\ln\cZ(\T)}{|\T|}-\frac{\ln\cZ(\hat\T)}{|\hat\T|}}\,\bigg|\cE}
		\leq\eps\tilde O_k(k^{-2})+O_k(\ln k/k)\cdot\pr\brk{\T\neq\hat\T|\cE}.
			\label{eqCoupling4}
	\end{align}

Thus, we are left to estimate the probability that $\T\neq\hat\T$, given that both trees have a root of the same type $(i_0,\ell_0)$ with $|\ell_0|>1$.
Our coupling ensures that this event occurs iff $s_v=1$ for some vertex $v$ of $\widetilde\T$.
To estimate the probability of this event, we observe that by \Cor~\ref{lem_q_il}
	\beq\label{eqCoupling5}
	\lambda_{i,\ell}\leq\begin{cases}
		\eps\tilde O_k(1/k)&\mbox{ if }|\ell|=2,\\
		\eps\tilde O_k(1)&\mbox{ if }|\ell|>2.
		\end{cases}\eeq
Now, let $\cN_1$ be the number of vertices $v\neq v_0$ of $\widetilde\T$ such that $|\ell_v|=2$,
and let $\cN_2$ be the number of $v\neq v_0$ such that $|\ell_v|>2$.
Then~(\ref{eqCoupling4}), (\ref{eqCoupling5}) and the construction of the coupling yield
	\beq\label{eqCoupling6}
	\Delta/\eps\leq\tilde O_k(k^{-2})+\tilde O_k(k^{-1})\bc{k^{-1}\Erw[\cN_1|\cE]+\cdot\Erw[\cN_2|\cE]}.
	\eeq

To complete the proof, we claim that
	\begin{equation}\label{eqCoupling6a}
	\Erw[\cN_1|\cE]\leq\tilde O_k(k^{-1}),\qquad\Erw[\cN_2|\cE]\leq\tilde O_k(k^{-2}).
	\end{equation}
Indeed, consider the matrix $\tilde M=(\tilde M_{ij})_{i,j=1,2}$ with entries
	\begin{align*}
	\tilde M_{11}&=\sum_{(i,\ell)\in\cT_{1,\cbc{1,2}}:|\ell|=2}\Lambda_{i,\ell},&
		\tilde M_{12}&=\sum_{(i,\ell)\in\cT_{1,\cbc{1,2}}:|\ell|>2}\Lambda_{i,\ell},\\
	\tilde M_{21}&=\sum_{(i,\ell)\in\cT_{1,\brk k}:|\ell|=2}\Lambda_{i,\ell},&
		\tilde M_{22}&=\sum_{(i,\ell)\in\cT_{1,\brk k}:|\ell|>2}\Lambda_{i,\ell}.
	\end{align*}
Then \Cor~\ref{lem_q_il} entails that
	\begin{align}\label{eqCoupling7}
	\tilde M_{11}&=\tilde O_k(k^{-1}),&
		\tilde M_{12}&=\tilde O_k(k^{-2}),&
		\tilde M_{21}&=\tilde O_k(1),&\tilde M_{22}&=\tilde O_k(k^{-1}).
	\end{align}
In addition, let $\xi=\bink{x_1}{x_2}$, where $\xi_1=1-\xi_2=\pr\brk{|\ell_0|=2|\cE}$.
Then \Cor~\ref{lem_q_il} shows that $\xi_2=\tilde O_k(k^{-2})$.
Furthermore, by the construction of the branching process and~(\ref{lem_q_il})
	\begin{align*}
	\bink{\Erw\brk{\cN_1|\cE}}{\Erw\brk{\cN_2|\cE}}&\leq\sum_{t=1}^\infty\tilde M^t\xi
		=\bink{\tilde O_k(k^{-1})}{\tilde O_k(k^{-2})},
	\end{align*}
which implies~(\ref{eqCoupling6a}).

Finally, (\ref{eqCoupling6}) and~(\ref{eqCoupling6a}) imply that $\Delta\leq\eps\tilde O_k(k^{-2})$.
Taking $\eps\ra0$ completes the proof.
\end{proof}

\begin{proof}[Proof of \Lem~\ref{Lemma_GW}.]
The first assertion is immediate from \Lem~\ref{Prop_Fixpunkt}.
The second claim follows from \Lem~\ref{Lemma_subcrit}, and the third one from \Lem~\ref{Lemma_diffTree}.
\end{proof}

\subsection{
	The ``hard fields''} \label{sec_proof_hard_fields}

In this section we make the first step towards proving that $\pi_{d,k,\vec q^*}$ is the unique frozen fixed point of $\cF_{d,k}$.
More specifically, identifying the set $\Omega$ with the $k$-simplex,
we show that every face of $\Omega$ carries the same probability mass under any frozen fixed point of $\cF_{d,k}$ as under the measure $\pi_{d,k,\vec q^*}$.
Formally, let us denote the extremal points of $\Omega$ by $\atom_h=(\vecone_{i=h})_{i\in\brk k}$, i.e.,
	$\atom_h$ is the probability measure on $\brk k$ that puts mass $1$ on the single point $h\in\brk k$.
In addition, let $\Omega_\ell$ be the set of all $\mu\in\Omega$ with support $\ell$
	(i.e., $\mu(i)>0$ for all $i\in\ell$ and $\mu(i)=0$ for all $i\in\brk k\setminus\ell$).
Further,  for a probability measure $\pi\in\cP$ we let $\rho_h(\pi)=\pi(\cbc{\atom_h})$ denote the probability mass of $\atom_h$ under $\pi$.
In physics jargon, the numbers $\rho_h(\pi)$ are called the ``hard fields'' of $\pi$.
In addition, recalling that $\dd\pi_i(\mu)=k\mu(i)\dd\pi(\mu)$, we set
	$\rho_{i,\ell}(\pi)=\pi_i(\Omega_\ell)$ for any $(i,\ell)\in\cT$.
The main result of this section is

\begin{lemma}\label{Lemma_hardFields}
Suppose that $d\geq(2k-1)\ln k-2$.
Let $q^*\in[2/3,1]$ be the fixed point of~(\ref{eqSimpleFix}).
If $\pi\in\cP$ is a frozen fixed point of $\cF_{d,k}$, then 
	$\rho_i(\pi)=q^*/k$ and $\rho_{i,\ell}(\pi)=kq_{i,\ell}^*$ for all $(i,\ell)\in\cT$.
\end{lemma}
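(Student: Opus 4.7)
\medskip\noindent\textbf{Proof plan.} The idea is to extract, from the fixed point equation $\pi=\cF_{d,k}[\pi]$, an equation for the hard-field vector $\vec\rho=(\rho_1(\pi),\ldots,\rho_k(\pi))$ that coincides with the fixed point equation of $F_{d,k}$ from (\ref{eqThm_condF}), and then to invoke the uniqueness statement of \Lem~\ref{Prop_Fixpunkt}.

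As a preparatory step I would first show that at any fixed point $\pi$ of $\cF_{d,k}$, the averaged marginals $S_h(\pi):=\int\mu(h)\,\dd\pi(\mu)$ all equal $1/k$. This follows by applying the fixed point equation to the linear functional $\pi\mapsto S_h(\pi)$: the algebraic identity $\brk{\sum_{h'}\prod_j(1-\mu_j(h'))}\cdot\cB[\mu_1,\ldots,\mu_\gamma](h)=\prod_{j=1}^\gamma(1-\mu_j(h))$ collapses the right-hand side to a scalar equation $S_h=e^{-d}/k+T(S_h)$, where $T$ depends on $\pi$ only through the common normalisers $Z_\gamma(\pi)$ and is strictly decreasing. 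Combined with $\sum_hS_h=1$, a monotonicity argument forces $S_h=1/k$ for every~$h$, and consequently $Z_\gamma(\pi)=k(1-1/k)^\gamma$.

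The heart of the proof is a direct evaluation of $\rho_{i,\ell}(\pi)=\pi_i(\Omega_\ell)$ from $\pi_i=\cF_{d,k}[\pi]_i$. The crucial observation is that $\cB[\mu_1,\ldots,\mu_\gamma](h)=0$ exactly when some $\mu_j=\atom_h$, so the event $\{\mathrm{supp}(\cB)=\ell\}$ is equivalent to: among the messages $\mu_1,\ldots,\mu_\gamma$, every hard field $\atom_h$ with $h\notin\ell$ appears at least once while no $\atom_h$ with $h\in\ell$ appears. Writing $\pi=\sum_h\rho_h\atom_{\atom_h}+\tilde\pi$ and expanding the $\gamma$-fold integral by Poisson superposition, the messages split into independent $\Po(d\rho_h)$ counts of each hard type and a Poisson number of soft draws from $\tilde\pi$, each soft draw contributing a factor $a_i:=\int(1-\mu(i))\,\dd\tilde\pi(\mu)$. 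Summing the resulting multinomial series, replacing $Z_\gamma(\pi)$ by $k(1-1/k)^\gamma$, and using $a_i+\sum_{h\neq i}\rho_h=1-1/k$ (a consequence of $S_i=1/k$) yields the closed form
\begin{equation}\label{eqHardFieldsFormula}
\rho_{i,\ell}(\pi)=\prod_{h\in\ell\setminus\{i\}}\exp(-d'\rho_h)\prod_{h\notin\ell}\br{1-\exp(-d'\rho_h)}.
\end{equation}

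Specialising~(\ref{eqHardFieldsFormula}) to $\ell=\{i\}$ gives $k\rho_i=\prod_{h\neq i}(1-\exp(-d'\rho_h))$, i.e.\ $\vec\rho$ is a fixed point of $F_{d,k}$. The frozen hypothesis $\sum_h\rho_h\geq 2/3$ rules out $\rho_{h_0}=0$ for any~$h_0$, since a single vanishing coordinate would propagate through the equation and force $\rho_h=0$ for every~$h$. Thus $\vec\rho\in(0,1]^k$, the symmetry property~(\ref{eqFdkFixedPointSym}) of \Lem~\ref{Prop_Fixpunkt} yields $\rho_1=\cdots=\rho_k$, and since $\rho_i\leq 1/k$ trivially while $\rho_i\geq 2/(3k)$ from $\sum_h\rho_h\geq 2/3$, the common value $k\rho_i$ lies in $[2/3,1]$, where the scalar equation~(\ref{eqSimpleFix}) admits $q^*$ as its unique fixed point. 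Substituting $\rho_h=q^*/k$ back into~(\ref{eqHardFieldsFormula}) matches $kq_{i,\ell}^*$ term for term. I expect the chief technical hurdle to be the Poisson-superposition bookkeeping in the presence of the $\gamma$-dependent normaliser $Z_\gamma(\pi)$; once $S_h=1/k$ reduces $Z_\gamma(\pi)$ to $k(1-1/k)^\gamma$, the rearrangement of the hard/soft multinomial sums into the elegant product~(\ref{eqHardFieldsFormula}) is a routine generating-function exercise.
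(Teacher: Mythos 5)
Your plan is correct and follows essentially the same route as the paper's proof: first pin down the balanced marginals $\int_\Omega\mu(h)\,\dd\pi(\mu)=1/k$ (hence $Z_\gamma(\pi)=k(1-1/k)^\gamma$, as in \Lem~\ref{Cor_balanced_fix}), then Poissonize the fixed-point equation to obtain exactly the product formula of the paper's~(\ref{eqLemma_hardFields3}) for $\rho_{i,\ell}(\pi)$, read off from the case $\ell=\{i\}$ that $(\rho_i(\pi))_i$ is a fixed point of $F_{d,k}$ (\Cor~\ref{Claim_hardFields1}), and conclude via the symmetry and uniqueness statements of \Lem~\ref{Prop_Fixpunkt}. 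The only cosmetic difference is bookkeeping: the paper runs the Poisson computation through the tilted measures $\pi_h$ and independent $\Po(d/(k-1))$ colour counts (\Lem~\ref{Lemma_separateColors}), whereas you split $\pi$ into its hard atoms plus a soft remainder and use $a_i+\sum_{h\neq i}\rho_h(\pi)=1-1/k$; both computations give the same formula, and your explicit handling of the frozen hypothesis (no vanishing coordinate, then symmetry, then $k\rho_i\in[2/3,1]$) is a slightly more careful version of the paper's argument.
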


To avoid many case distinctions, we introduce the following convention when working with product measures.
Let us agree that $\Omega^0=\cbc\emptyset$.
Hence, if $B:\Omega^0\ra \Omega$ is a map, then $B(\emptyset)\in\Omega$.
Furthermore, there is a precisely one probability measure $\pi_0$ on $\Omega^0$, namely the measure that puts
mass one on the point $\emptyset\in \Omega^0$.
Thus, the integral $\int_{\Omega^\emptyset}B(\mu)\dd\pi_0(\mu)$ is simply equal to $B(\emptyset)$.
If $\pi_1,\pi_2,\ldots$ are probability mesures on $\Omega$, what we mean by
the  empty product measure $\bigotimes_{\gamma=1}^0\pi_\gamma$ is just the measure $\pi_0$ on $\Omega^0$.

Further, for a real $\lambda\geq0$ and an integer $y\geq1$ we let
	$$p_{\lambda}(y)=\lambda^y\exp(-\lambda)/y!.$$
Moreover, for $i\in\brk k$ we let $\Gamma_i$ be the set of all non-negative integer vectors
	$\vec\gamma=(\gamma_j)_{j\in\brk k\setminus\cbc i}$ and
for $\vec \gamma \in \Gamma_i$ we set
	\begin{align*}
	p_{i}(\vec \gamma)&= \prod_{h \in [k] \setminus \{i\}} p_{\frac{d}{k-1}}(\gamma_{h}).
	\end{align*}
We also let $\Omega^{\vec\gamma}=\prod_{h\in [k] \setminus \{i\}}\prod_{j\in [\gamma_{h}]}\Omega$  for $\vec\gamma\in\Gamma_i$.
The elements of $\Omega^{\vec\gamma}$ are denoted by
	$\mu_{\vec \gamma}= (\mu_{h,j})_{h\in [k] \setminus \{i\}, j \in [\gamma_{h}]}.$
Moreover, let
	\begin{align*}
	\pi_{i, \vec \gamma}&= \bigotimes_{h\in [k] \setminus \{i\}} \bigotimes_{j \in [\gamma_{h}]} \pi_{h}.
	\end{align*}
Thus, with the convention from the previous paragraph, in the case $\vec \gamma=0$ the set $\Omega^{\vec \gamma}=\cbc\emptyset$
contains only one element, namely $\mu_0=\emptyset$.
Moreover, $\pi_{i, \vec \gamma}$ is the probability measure on $\Omega^0$ that gives mass one to the point $\emptyset$.
We recall the map $\cB:\bigcup_{\gamma\geq1}\Omega^\gamma\ra\Omega$ from~(\ref{eqBPOperator}) and extend
this map to $\Omega^0$ by letting $\cB(\emptyset)=\frac1k\vecone$ be the uniform distribution on~$\Omega$.
We start the proof of \Lem~\ref{Lemma_hardFields} by establishing the following identity.

\begin{lemma}\label{Lemma_separateColors}
If $\pi$ is fixed point of $\cF_{d,k}$, then for any $i\in\brk k$ we have
	$$\pi_i= \sum_{\vec\gamma\in\Gamma_i}
				\int_{\Omega^{\vec\gamma}}\atom_{\cB[\mu_{\vec \gamma}]}
			\measurei .$$
\end{lemma}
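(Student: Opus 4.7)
The plan is to compute both sides of the claimed identity directly using the fixed-point equation $\pi=\cF_{d,k}[\pi]$ and check that they agree measure by measure.

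First I would unpack the left-hand side. By definition $\dd\pi_i(\mu)=k\mu(i)\dd\pi(\mu)$, and substituting $\pi=\cF_{d,k}[\pi]$ gives a $\gamma$-indexed sum. The key simplification is the pointwise identity
$$k\cB[\mu_1,\ldots,\mu_\gamma](i)\cdot\sum_{h\in\brk k}\prod_{j=1}^\gamma 1-\mu_j(h)=k\prod_{j=1}^\gamma 1-\mu_j(i),$$
valid wherever the denominator in the definition of $\cB$ is nonzero (the set where it vanishes has $\pi^{\otimes\gamma}$-measure zero, so it can be discarded). Multiplying the $\gamma$-th summand of $\cF_{d,k}[\pi]$ by $k\mu(i)$ and collapsing the normaliser via this identity, together with $k\mu(i)=1$ at $\mu=k^{-1}\vecone$ for the $\gamma=0$ term, yields
$$\pi_i=e^{-d}\atom_{k^{-1}\vecone}+\sum_{\gamma\geq 1}\frac{kd^\gamma e^{-d}}{\gamma!\,Z_\gamma(\pi)}\int_{\Omega^\gamma}\prod_{j=1}^\gamma(1-\mu_j(i))\,\atom_{\cB[\mu_1,\ldots,\mu_\gamma]}\bigotimes_{j=1}^\gamma\dd\pi(\mu_j).$$

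Next I would expand the right-hand side. Substituting $\dd\pi_h(\mu)=k\mu(h)\dd\pi(\mu)$ turns $\dd\pi_{i,\vec\gamma}$ into an integral against $\pi^{\otimes|\vec\gamma|}$ with extra factors $\prod_{h\neq i}\prod_j k\mu_{h,j}(h)$. Reorganising the sum over $\vec\gamma\in\Gamma_i$ by the total $\gamma=\sum_h\gamma_h$, writing an ordered type-assignment $\tau:[\gamma]\to\brk k\setminus\cbc i$ in place of $\vec\gamma$ via the multinomial identity, and collapsing the type-assignment sum through $\sum_{\tau}\prod_j k\mu_j(\tau(j))=\prod_j k(1-\mu_j(i))$, I obtain
$$\mathrm{RHS}=\sum_{\gamma\geq 0}\frac{(d')^\gamma e^{-d}}{\gamma!}\int_{\Omega^\gamma}\prod_{j=1}^\gamma(1-\mu_j(i))\,\atom_{\cB[\mu_1,\ldots,\mu_\gamma]}\bigotimes_{j=1}^\gamma\dd\pi(\mu_j),$$
where $d'=dk/(k-1)$, the Poisson normalisations collapse via $\sum_{h\neq i}d/(k-1)=d$, and the $\gamma=0$ term reproduces $e^{-d}\atom_{k^{-1}\vecone}$ under the convention $\cB[\emptyset]=k^{-1}\vecone$.

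Comparing the two expressions, they agree if and only if $k/Z_\gamma(\pi)=(k/(k-1))^\gamma$, i.e.\ $Z_\gamma(\pi)=k(1-1/k)^\gamma$ for every $\gamma\geq 1$. To close the argument I would establish this colour-symmetry from the fixed-point equation itself. Setting $a_h=\int\mu(h)\dd\pi$ and integrating $\pi=\cF_{d,k}[\pi]$ against $\mu(h)$, the same $\cB$-identity gives $a_h=g(1-a_h)$, where
$$g(b)=\frac{e^{-d}}{k}+e^{-d}\sum_{\gamma\geq 1}\frac{d^\gamma b^\gamma}{\gamma!\,Z_\gamma(\pi)}$$
is strictly increasing on $[0,1]$. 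Hence $b\mapsto 1-g(b)$ is strictly decreasing, so $1-b=g(b)$ has a unique solution; combined with $\sum_h(1-a_h)=k-1$ this forces $1-a_h=(k-1)/k$ for every $h$, and therefore $Z_\gamma(\pi)=k(1-1/k)^\gamma$ as required. The main obstacle beyond bookkeeping is precisely this monotonicity step: it is not a priori clear that every fixed point of $\cF_{d,k}$ is colour-symmetric, and without this fact the formal algebraic matching of the two sides would fail.
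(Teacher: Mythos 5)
Your proposal is correct and follows essentially the same route as the paper: the colour-symmetry step forcing $\int_\Omega\mu(h)\,\dd\pi(\mu)=1/k$, hence $Z_\gamma(\pi)=k(1-1/k)^\gamma$, is exactly \Lem~\ref{Cor_balanced_fix}, and your coefficient matching via the identity $\cB[\mu_1,\ldots,\mu_\gamma](i)\sum_h\prod_j(1-\mu_j(h))=\prod_j(1-\mu_j(i))$ together with the multinomial/Poisson-thinning rearrangement is the paper's computation run in the opposite direction (the paper Poissonizes the left-hand side rather than de-Poissonizing the right-hand side). Note only that the null-set caveat is unnecessary, since both sides of that identity vanish wherever the normaliser does.
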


To establish \Lem~\ref{Lemma_separateColors} we need to calculate the normalising quantities $Z_{\gamma}(\pi)$.

\begin{lemma}\label{Cor_balanced_fix}
If $\pi$ is fixed point of $\cF_{d,k}$, then $Z_\gamma(\pi)=(k-1)^\gamma/k^{\gamma-1}$.
\end{lemma}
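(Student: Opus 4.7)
The plan is to show that at any fixed point $\pi$ of $\cF_{d,k}$, the marginals $m_h:=\int_\Omega \mu(h)\dd\pi(\mu)$ satisfy $m_h=1/k$ for every $h\in\brk k$. Once this symmetry is established, the conclusion follows instantly: by definition,
\[
Z_\gamma(\pi)=\sum_{h=1}^k(1-m_h)^\gamma = k\cdot(1-1/k)^\gamma=(k-1)^\gamma/k^{\gamma-1}.
\]

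First I would integrate the coordinate function $\mu\mapsto\mu(h)$ against both sides of the fixed point equation $\pi=\cF_{d,k}[\pi]$. On the right-hand side, the integral against the atomic measure $\atom_{\cB[\mu_1,\ldots,\mu_\gamma]}$ produces the value $\cB[\mu_1,\ldots,\mu_\gamma](h)$, and by the definition~\eqref{eqBPOperator} of $\cB$ one has
\[
\cB[\mu_1,\ldots,\mu_\gamma](h)\cdot\sum_{h'\in[k]}\prod_{j=1}^\gamma 1-\mu_j(h')=\prod_{j=1}^\gamma 1-\mu_j(h),
\]
on the set where the denominator is positive, while on the complementary (null) set the product on the right vanishes as well. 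Hence the bracketed factor and the normalising sum cancel, leaving
\[
m_h=\frac{\eul^{-d}}{k}+\sum_{\gamma\geq1}\frac{d^\gamma\eul^{-d}}{\gamma!\,Z_\gamma(\pi)}\int_{\Omega^\gamma}\prod_{j=1}^\gamma(1-\mu_j(h))\bigotimes_{j=1}^\gamma\dd\pi(\mu_j)=\frac{\eul^{-d}}{k}+\sum_{\gamma\geq1}\frac{d^\gamma\eul^{-d}}{\gamma!\,Z_\gamma(\pi)}(1-m_h)^\gamma.
\]

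Writing $y_h=1-m_h$ and introducing the power series $g(y)=\eul^{-d}/k+\sum_{\gamma\geq1}\frac{d^\gamma\eul^{-d}}{\gamma!\,Z_\gamma(\pi)}\,y^\gamma$ (whose coefficients depend on $\pi$ only through the $\gamma$-independent quantities $Z_\gamma(\pi)>0$, since $\sum_h m_h=1$ forces at least one $y_h>0$), the displayed identity becomes
\[
g(y_h)+y_h=1\qquad\text{for every }h\in[k].
\]
Because $g$ has non-negative coefficients, the map $y\mapsto g(y)+y$ is strictly increasing on $[0,1]$, so the equation $g(y)+y=1$ has at most one solution. Therefore $y_1=\cdots=y_k$, and since $\sum_h y_h=k-\sum_h m_h=k-1$, each $y_h=(k-1)/k$, that is $m_h=1/k$.

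The only place I would have to tread carefully is verifying that the cancellation in the definition of $\cB$ goes through cleanly---checking that the set $\{\sum_{h'}\prod_j(1-\mu_j(h'))=0\}$ contributes zero on both sides---and confirming that $Z_\gamma(\pi)>0$ so that $g$ is well defined. Both points are short: the former is integrability and the fact that the product $\prod_j(1-\mu_j(h))$ also vanishes on the null set, and the latter follows because $m_h=1$ for some $h$ would force $g(0)=\eul^{-d}/k=1$, contradicting $d>0$. Once these are in hand, the argument above yields $Z_\gamma(\pi)=(k-1)^\gamma/k^{\gamma-1}$ as required.
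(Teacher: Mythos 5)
Your proposal is correct and follows essentially the same route as the paper: integrate $\mu\mapsto\mu(h)$ against the fixed-point equation (using the cancellation built into the definition of $\cB$) to get the scalar relation for the marginals, deduce that all marginals equal $1/k$, and read off $Z_\gamma(\pi)=(k-1)^\gamma/k^{\gamma-1}$. The only cosmetic difference is the symmetry step, which you phrase as uniqueness of the root of the strictly increasing map $y\mapsto g(y)+y$ whereas the paper compares two colors $h_1,h_2$ directly via the same monotonicity.
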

\begin{proof}
Assume that $\pi$ is fixed point of $\cF_{d,k}$.
We claim that
	\begin{equation}\label{eqLemma_balanced}
	\int_\Omega \mu(h) \dd \pi(\mu)=1/k
	\qquad\mbox{for all $h\in\brk k$.}
	\end{equation}
Indeed, set
	$\nu(h)= \int_\Omega \mu(h)  \dd \pi(\mu).$
Then $\nu$ is a probability distribution on $\brk k$.
Since $\pi$ is a fixed point of $\cF_{d,k}$, we find
	\begin{eqnarray}
	\nu(h)&=&\int_\Omega\mu(h)\dd\cF_{d,k}[\pi](\mu)
		=\sum_{\gamma=0}^\infty\frac{p_d(\gamma)}{Z_\gamma(\pi)}\int_{\Omega^\gamma}
			\brk{\sum_{h=1}^k \prod_{j=1}^{\gamma} 1 - \mu_j(h) }
				\cB[\mu_1, \dots, \mu_\gamma](h)\bigotimes_{j=1}^\gamma  \dd \pi(\mu_j)\nonumber\\
		&=&\sum_{\gamma=0}^\infty\frac{p_d(\gamma)}{Z_\gamma(\pi)}\int_{\Omega^\gamma}
			\prod_{j=1}^\gamma1-\mu_j(h)\bigotimes_{j=1}^\gamma  \dd \pi(\mu_j)
				\qquad\mbox{[plugging in~(\ref{eqBPOperator})]}\nonumber\\
		&=&\sum_{\gamma=0}^\infty\frac{p_d(\gamma)}{Z_\gamma(\pi)}\brk{\int_\Omega1-\mu(h)\dd\pi(\mu)}^\gamma
		= \sum_{\gamma \geq 0} \frac{\left(1-\nu(h)\right)^\gamma p_{d}(\gamma) }{\sum_{h' \in [k]} \left( 1- \nu(h')\right)^\gamma}
			\quad[\mbox{due to~(\ref{eqZgamma})}].
		\label{eq_rs_averaged}	
	\end{eqnarray}
Now, assume that $h_1,h_2\in\brk k$ are such that $\nu(h_1)\leq\nu(h_2)$.
Then~(\ref{eq_rs_averaged}) yields
	\begin{eqnarray*}
	\nu(h_2)&=&
		\sum_{\gamma \geq 0} \frac{\left(1-\nu(h_1)\right)^\gamma p_{d}(\gamma)}
				{\sum_{h' \in [k]} \left( 1- \nu(h')\right)^\gamma}
			\leq\sum_{\gamma \geq 0} \frac{\left(1-\nu(h_2)\right)^\gamma p_{d}(\gamma)}
				{\sum_{h' \in [k]} \left( 1- \nu(h')\right)^\gamma}=\nu(h_1).
	\end{eqnarray*}
Hence, $\nu(h_1)=\nu(h_2)$ for all $h_1,h_2\in\brk k$, which implies~(\ref{eqLemma_balanced}).
Finally, the assertion follows from~(\ref{eqLemma_balanced}) and  the definition~(\ref{eqZgamma}) of $Z_\gamma(\pi)$.
\end{proof}

\begin{proof}[Proof of \Lem~\ref{Lemma_separateColors}]
If $\pi$ is a fixed point of $\cF_{d,k}$, then by \Lem~\ref{Cor_balanced_fix} and the definition~(\ref{eqBPOperator}) of the map $\cB$ we have
	\begin{align*}
	\pi_i&=\int_\Omega k\mu(i)\atom_\mu\dd\pi(\mu)=\int_\Omega k\mu(i)\atom_\mu\dd\cF_{d,k}[\pi](\mu)\\
		&=\sum_{\gamma=0}^\infty\frac{p_d(\gamma)}{Z_\gamma(\pi)}\int_{\Omega^\gamma}
			\brk{\sum_{h=1}^k \prod_{j=1}^{\gamma} 1 - \mu_j(h) }
				k\cB[\mu_1, \dots, \mu_\gamma](i)\atom_{\cB[\mu_1, \dots, \mu_\gamma]}
			\bigotimes_{j=1}^\gamma  \dd \pi(\mu_j)\\
		&=\sum_{\gamma=0}^\infty\frac{k^\gamma p_d(\gamma)}{(k-1)^\gamma}
				\int_{\Omega^\gamma}\brk{\prod_{j=1}^\gamma1-\mu_j(i)}
					\cdot\atom_{\cB[\mu_1, \dots, \mu_\gamma]}
						\bigotimes_{j=1}^\gamma  \dd \pi(\mu_j).
	\end{align*}
Further, for any $\mu\in\Omega$ we have $1-\mu(i)=\sum_{i'\neq i}\mu(i')$.
Hence,
	\begin{align}
	\pi_i&=\sum_{\gamma=0}^\infty\frac{k^\gamma p_d(\gamma)}{(k-1)^\gamma}
				\sum_{i_1,\ldots,i_\gamma\in\brk k\setminus\cbc i}\int_{\Omega^\gamma}
				\brk{\prod_{j=1}^\gamma\mu_j(i_j)}
					\cdot\atom_{\cB[\mu_1, \dots, \mu_\gamma]}
						\bigotimes_{j=1}^\gamma  \dd \pi(\mu_j)\nonumber\\
		&=
			\sum_{\gamma=0}^\infty\frac{p_d(\gamma)}{(k-1)^\gamma}
				\sum_{i_1,\ldots,i_\gamma\in\brk k\setminus\cbc i}
				\int_{\Omega^\gamma}\atom_{\cB[\mu_1, \dots, \mu_\gamma]}
						\bigotimes_{j=1}^\gamma  \dd \pi_{i_j}(\mu_j).
							\label{eqLastExpression}
	\end{align}
In the last expression, we can think of generating the sequence $i_1,\ldots,i_\gamma$ as follows:
	first, choose $\gamma$ from the Poisson distribution $\Po(d)$.
Then, choose the sequence $i_1,\ldots,i_\gamma$ by independently choosing
$i_j$ from the set $\brk k\setminus\cbc i$ uniformly at random.
Thus, in the overall experiment the number
of times that each color $h$ occurs has distribution $\Po(d/(k-1))$, independently for all $h\in\brk k\setminus\cbc i$,
whence (\ref{eqLastExpression}) implies the assertion.
\end{proof}

\begin{corollary}\label{Claim_hardFields1}
If $\pi$ is fixed point of $\cF_d$, then
$(\rho_i(\pi))_{i\in\brk k}$ is a fixed point of the function $F_{d,k}$ from \Lem~\ref{Lemma_GW}.
\end{corollary}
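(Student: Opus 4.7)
The goal is to show that for each $i\in\brk k$,
	$$\rho_i(\pi)=\frac{1}{k}\prod_{h\in\brk k\setminus\cbc i}\bc{1-\exp\bc{-d'\rho_h(\pi)}}.$$
The plan is to start from the identity provided by \Lem~\ref{Lemma_separateColors},
evaluate both sides on the singleton $\cbc{\atom_i}\subset\Omega$,
and carry out a direct Poisson computation.

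The first step is a bookkeeping observation: since $\dd\pi_i(\mu)=k\mu(i)\dd\pi(\mu)$ and $\atom_i(i)=1$,
we have $\pi_i(\cbc{\atom_i})=k\rho_i(\pi)$. Hence it suffices to compute $\pi_i(\cbc{\atom_i})$ via
\Lem~\ref{Lemma_separateColors}:
	$$\pi_i(\cbc{\atom_i})=\sum_{\vec\gamma\in\Gamma_i}p_i(\vec\gamma)\int_{\Omega^{\vec\gamma}}
		\vecone_{\cB[\mu_{\vec\gamma}]=\atom_i}\dd\pi_{i,\vec\gamma}(\mu_{\vec\gamma}).$$
Next I would identify when the $\pi_{i,\vec\gamma}$-integrand is nonzero. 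Inspecting~(\ref{eqBPOperator}),
$\cB[\mu_{\vec\gamma}]=\atom_i$ holds iff $\prod_{(h,j)}(1-\mu_{h,j}(h'))=0$ for every $h'\neq i$
while $\prod_{(h,j)}(1-\mu_{h,j}(i))>0$. The key combinatorial point is that for $\mu\sim\pi_h$ with $h\neq i$,
the measure $\pi_h$ puts zero mass on any $\atom_{h'}$ with $h'\neq h$ (because $\pi_h$ has density $k\mu(h)$ w.r.t.\ $\pi$ and $\atom_{h'}(h)=0$),
and in particular $\pi_h(\cbc{\atom_i})=0$. Thus, almost surely, $\mu_{h,j}(i)<1$ for every $(h,j)$
(so the second condition is automatic), and $\mu_{h,j}(h')=1$ can occur only if $h'=h$ and $\mu_{h,j}=\atom_h$.
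Hence
	$$\vecone_{\cB[\mu_{\vec\gamma}]=\atom_i}=\prod_{h\in\brk k\setminus\cbc i}
		\vecone_{\exists j\in[\gamma_h]:\,\mu_{h,j}=\atom_h}
	\qquad\pi_{i,\vec\gamma}\text{-a.s.}$$

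Now I would exploit the product structure of $\pi_{i,\vec\gamma}=\bigotimes_{h\neq i}\bigotimes_{j\in[\gamma_h]}\pi_h$.
Under $\pi_h$ (which is a probability measure by \Lem~\ref{Cor_balanced_fix}), the probability that
$\mu_{h,j}=\atom_h$ is $\pi_h(\cbc{\atom_h})=k\rho_h(\pi)$. By independence,
	$$\int_{\Omega^{\vec\gamma}}\vecone_{\cB[\mu_{\vec\gamma}]=\atom_i}\dd\pi_{i,\vec\gamma}
		=\prod_{h\neq i}\bc{1-(1-k\rho_h(\pi))^{\gamma_h}}.$$
Summing this against $p_i(\vec\gamma)=\prod_{h\neq i}p_{d/(k-1)}(\gamma_h)$ factorises into $k-1$ independent
Poisson sums, each of which evaluates as
	$$\sum_{\gamma\geq0}\frac{(d/(k-1))^\gamma\exp(-d/(k-1))}{\gamma!}\brk{1-(1-k\rho_h(\pi))^\gamma}
		=1-\exp\bc{-\frac{dk\rho_h(\pi)}{k-1}}=1-\exp(-d'\rho_h(\pi)).$$
Combining these factors and dividing by $k$ gives $\rho_i(\pi)=F_{d,k}(\rho(\pi))_i$, as desired.

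The only delicate point in the plan is the claim that $\cB[\mu_{\vec\gamma}]=\atom_i$ forces
$\mu_{h,j}=\atom_h$ for at least one $j$ in each slot $h\neq i$ almost surely; this rests on the fact
that $\pi_h$ assigns zero mass to $\atom_{h'}$ for $h'\neq h$, which is immediate from the Radon--Nikodym density
$k\mu(h)$. Once this is pinned down, the rest is a straightforward generating-function computation with
independent Poisson random variables.
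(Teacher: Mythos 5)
Your proposal is correct and follows essentially the same route as the paper: both start from \Lem~\ref{Lemma_separateColors}, evaluate at the atom $\atom_i$ using $\pi_i(\cbc{\atom_i})=k\rho_i(\pi)$, observe that $\cB[\mu_{\vec\gamma}]=\atom_i$ forces a copy of $\atom_h$ in each slot $h\neq i$, and finish with the Bernoulli/Poisson computation giving $\prod_{h\neq i}\bc{1-\exp(-d'\rho_h(\pi))}/k$. The only difference is cosmetic: you justify the almost-sure reduction (that $\pi_h$ puts no mass on $\atom_{h'}$, $h'\neq h$, nor on $\atom_i$) explicitly and sum the Poisson series by hand, whereas the paper states the characterisation of the event directly and invokes the thinning interpretation of the Poisson sum.
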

\begin{proof}
Invoking \Lem~\ref{Lemma_separateColors}, we obtain for any $i\in\brk k$
	\begin{align}
	\rho_i(\pi)&=\pi(\cbc{\atom_i})=\frac{\pi_i(\cbc{\atom_i})}k
		=\frac1k\sum_{\vec\gamma\in\Gamma_i}\int_{\Omega^{\vec \gamma}}
				\vecone_{\atom_i=\cB[\mu_{\vec \gamma}]}
				\measurei .
								\label{eqClaim_hardFields1_1}
	\end{align}
A glimpse at the definition~(\ref{eqBPOperator}) of $\cB$ reveals that $\atom_i=\cB[\mu_{\vec \gamma}]$
iff for each $h\in\brk k\setminus\cbc i$ there is $j\in\brk{\gamma_h}$ such that $\mu_{h,j}=\atom_h$.
Further, in~(\ref{eqClaim_hardFields1_1}) the $\mu_{h,j}$ are chosen independently from the distribution $\pi_h$,
and $\pi_h(\cbc{\atom_h})=k\rho_h(\pi)$.
In effect, the r.h.s.\ of~(\ref{eqClaim_hardFields1_1}) is simply the probability that
if we choose numbers $\gamma_h$ independently from the Poisson distribution with mean $d/(k-1)$ for $h\neq i$
and then perform $\gamma_h$ independent Bernoulli experiments with success probability $k\rho_h(\pi)$,
then there occurs at least one success for each $h\neq i$.
Of course, this is nothing but the probability that $k-1$ independent Poisson variables $(\Po(\rho_h(\pi)dk/(k-1)))_{h\neq i}$
are all strictly positive.
Hence,
	$$\rho_i(\pi)=\frac1k\prod_{h\in\brk k\setminus\cbc i}\pr[\Po(\rho_h(\pi)dk/(k-1))>0]
		=\frac1k\prod_{h\in\brk k\setminus\cbc i}1-\exp(-\rho_h(\pi)d')\quad\mbox{for any }i \in\brk k.$$
Consequently, $(\rho_i(\pi))_{i\in\brk k}=F_{d,k}((\rho_i(\pi))_{i\in\brk k})$.
\end{proof}

\begin{proof}[Proof of \Lem~\ref{Lemma_hardFields}]
Assume that $\pi\in\cP$ is a frozen fixed point of $\cF_{d,k}$.
Then $\rho_i(\pi)\geq\frac 2{3k}$ for all $i\in\brk k$.
Hence, \Cor~\ref{Claim_hardFields1} shows that $(\rho_1(\pi),\ldots,\rho_k(\pi))\in[\frac 2{3k},1]$ is a fixed point of $F_{d,k}$.
Therefore, \Lem~\ref{Lemma_GW} implies that $\rho_i(\pi)=q^*/k$ for all $i\in\brk k$.

To prove the second assertion, let $(i,\ell)\in\cT$.
Then \Lem~\ref{Lemma_separateColors} yields
	\begin{align}\label{eqLemma_hardFields1}
	\rho_{i,\ell}(\pi)&=
	\sum_{\vec \gamma\in\Gamma_i}\int_{\Omega^{\vec \gamma}}\vecone_{\cB[\mu_{\vec \gamma}]\in\Omega_\ell}
	\measurei .
	\end{align}
Now, the definition~(\ref{eqBPOperator}) is such that $\cB[\mu_{\vec \gamma}]\in\Omega_\ell$ iff
	\begin{enumerate}
	\item for each $h\in\brk k\setminus\ell$ there is
			$j\in\brk{\gamma_h}$ such that $\mu_{h,j}=\atom_h$, and
	\item for each $h\in\ell\setminus\cbc i$ and any $j\in\brk{\gamma_h}$ we have $\mu_{h,j}\neq\atom_h$.

	\end{enumerate}
Given $\vec\gamma$, the distributions $\mu_{h,j}$ are chosen independently from $\pi_h$ for all $h\neq i$, $j\in\brk{\gamma_h}$.
Hence, for a given $\vec\gamma$ the probability that (1) and (2) occur is precisely
	\begin{align}\nonumber
	\eta(\vec\gamma)&=\prod_{h\in\ell\setminus\cbc i}(1-\pi_h(\cbc{\atom_h}))^{\gamma_h}
			\cdot\prod_{h\in\brk k\setminus\ell}1-(1-\pi_h(\cbc{\atom_h}))^{\gamma_h}\\
			&=\prod_{h\in\ell\setminus\cbc i}(1-k\rho_h(\pi))^{\gamma_h}
				\cdot\prod_{h\in\brk k\setminus\ell}1-(1-k\rho_h(\pi))^{\gamma_h}.
			\label{eqLemma_hardFields2}
	\end{align}
Thus, combining~(\ref{eqLemma_hardFields1}) and~(\ref{eqLemma_hardFields2}), we see that
	\begin{align}
	\rho_{i,\ell}(\pi)&=
	\sum_{\vec\gamma\in\Gamma_i}\eta(\vec\gamma) p_i(\vec \gamma) 
		\nonumber\\
		&=\prod_{h\in\ell\setminus\cbc i}\brk{\sum_{\gamma_h\geq0}(1-k\rho_h(\pi))^{\gamma_h}p_{\frac d{k-1}}(\gamma_h)}
			\cdot\prod_{h\in\brk k\setminus\ell}
				\brk{\sum_{\gamma_h\geq0}(1-(1-k\rho_h(\pi))^{\gamma_h})p_{\frac d{k-1}}(\gamma_h)}\nonumber\\
		&=\prod_{h\in \ell \setminus \{i\}}\pr\brk{\Po(dk\rho_h(\pi)/(k-1)=0)}\prod_{h\in [k] \setminus \ell}\pr\brk{\Po(dk\rho_h(\pi)/(k-1)>0)}
			\nonumber\\
		&=\prod_{h\in\ell\setminus\cbc i}\exp(-d'\rho_h(\pi))
			\prod_{h\in [k] \setminus \ell}1-\exp(-d'\rho_h(\pi)).
							\label{eqLemma_hardFields3}
	\end{align}
Finally, as we already know from the first paragraph that $\rho_h(\pi)=q^*/k$, (\ref{eqLemma_hardFields3}) implies that $\rho_{i,\ell}(\pi)=kq_{i,\ell}^*$.
\end{proof}

\subsection{The fixed point}\label{Sec_softFields}
The objective in this section is to establish

\begin{lemma}\label{Lemma_softFields}
Suppose that $d\geq(2k-1)\ln k-2$.
Then $\pi_{d,k,\vec q^*}$ is the unique frozen fixed point of $\cF_{d,k}$.
\end{lemma}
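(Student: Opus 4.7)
I would prove this in two parts: first verify that $\pi^* := \pi_{d,k,\vec q^*}$ is a frozen fixed point, then establish uniqueness by iterating \Lem~\ref{Lemma_separateColors} and invoking sub-criticality of the branching process.

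For the first part, $\pi^*$ is frozen because with probability $\sum_i kq_{i,\{i\}}^* = q^* \geq 2/3$ (using \Lem~\ref{Lemma_GW}) the root of $\T_{d,k,\vec q^*}$ has type $(i,\{i\})$ for some $i$, in which case the unique legal colouring at the root forces $\mu_{\T_{d,k,\vec q^*}} = \atom_i$; thus $\pi^*(\{\atom_1,\ldots,\atom_k\}) \geq q^* \geq 2/3$. To verify $\cF_{d,k}[\pi^*]=\pi^*$, I would decompose $\T_{d,k,\vec q^*}$ at the root: conditioned on a $\Po(d)$ total number of children with i.i.d.\ uniform-on-$\brk k\setminus\{i\}$ distinguished colours, the marginal of the root colour is obtained from the children's subtree marginals by applying $\cB$, which is exactly the prescription~(\ref{eqBPOperator})--(\ref{eqFixedPoint2}).

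For uniqueness, let $\pi$ be any frozen fixed point. By \Lem~\ref{Lemma_hardFields}, $\rho_h(\pi) = q^*/k$ for every $h$. I would iterate the representation of \Lem~\ref{Lemma_separateColors}: a sample $\mu \sim \pi_i$ is generated as $\cB[\mu_{\vec\gamma}]$ with Poisson parameters $\vec\gamma \in \Gamma_i$ and independent $\mu_{h,j}\sim\pi_h$, and each soft $\mu_{h,j}$ is re-expanded by the same formula. This generates a random tree in which each vertex $v$ carries a type $(i_v,\ell_v) \in \cT$ determined by which colours of its children produce hard vs.\ soft messages. By the computation in the proof of \Lem~\ref{Lemma_hardFields} --- essentially Poisson thinning applied to the hard-field probabilities --- the induced law on type-trees is exactly that of $\GW(d,k,\vec q^*)$. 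Since this branching process is sub-critical by \Lem~\ref{Lemma_GW}(ii) (see also \Lem~\ref{Lemma_subcrit}), the iteration terminates almost surely in finitely many rounds, at which point every surviving message is hard and the nested $\cB$-composition returns a deterministic value depending only on the type structure. Hence the law of $\mu$ under $\pi_i$ is uniquely determined by $\vec q^*$ and must equal $(\pi^*)_i$; since $\pi = \frac{1}{k}\sum_i \pi_i$, this forces $\pi = \pi^*$.

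The principal technical obstacle is to make the tree-unfolding argument rigorous: one needs to bound, in total variation on $\Omega$, the error in $\pi_i$ induced by truncating the unfolding at depth $t$, and show this error vanishes as $t\to\infty$. This reduces to the exponential-in-$t$ decay of survival probability in the sub-critical multi-type branching process, which is a standard consequence of the spectral estimates used in the proof of \Lem~\ref{Lemma_subcrit}.
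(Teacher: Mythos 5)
Your proposal is correct and follows essentially the same route as the paper: pin the hard fields via \Lem~\ref{Lemma_hardFields}, decompose by types $(i,\ell)$, identify the decomposed fixed point with the root marginals of the Galton--Watson tree, and obtain uniqueness from sub-criticality by unfolding/truncating the recursion (the paper formalises exactly this as \Lem~\ref{Lemma_fixDecomp}, \Cor~\ref{Cor_tildeFixedPoint} and \Lem~\ref{lemma_wtcF_unicity}, with the truncation error bounded by $2\pr\brk{|\T_{i,\ell}|\geq t}\to0$). Two harmless slips: the hard mass is $\sum_i q^*_{i,\{i\}}=q^*$ (no extra factor $k$), and when the unfolding dies out a leaf of type $(i,\ell)$ with $|\ell|\geq2$ contributes the uniform distribution on $\ell$ rather than a hard message---but this is still deterministic given the type, so your conclusion stands.
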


To prove \Lem~\ref{Lemma_softFields},
let $\cP_\ell$ be the set of all probability measures $\pi\in\cP$ whose support is contained in $\Omega_\ell$
	(i.e., $\pi(\Omega_\ell)=1$).
For each $\pi\in\cP$ and any $(i,\ell)\in\cT$ we define a measure $\pi_{i,\ell}$ by letting
	$$\dd\pi_{i,\ell}(\mu)=\frac{\vecone_{\mu\in\Omega_\ell}}{kq_{i,\ell}^*}\dd\pi_i(\mu)
		=\frac{\mu(i)}{q_{i,\ell}^*}\vecone_{\mu\in\Omega_\ell}\dd\pi(\mu).$$
In addition, let
	$\widetilde\cP=\prod_{(i,\ell)\in\cT}\cP_\ell$
be the set of all families $(\pi_{i,\ell})_{i,\ell\in\cT}$ such that $\pi_{i,\ell}\in\cP_\ell$ for all $(i,\ell)$.

\begin{lemma}\label{Lemma_decomp}
If $\pi$ if a frozen fixed point of $\cF_{d,k}$, then $\widetilde\pi=(\pi_{i,\ell})_{(i,\ell)\in\cT}\in\widetilde\cP$.
\end{lemma}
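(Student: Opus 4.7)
The plan is to verify the two defining properties of membership in $\widetilde\cP=\prod_{(i,\ell)\in\cT}\cP_\ell$ separately for each coordinate $\pi_{i,\ell}$: first, that its support is contained in $\Omega_\ell$, and second, that it is in fact a probability measure (total mass one). The support condition is built into the definition
	$$\dd\pi_{i,\ell}(\mu)=\frac{\vecone_{\mu\in\Omega_\ell}}{kq_{i,\ell}^*}\dd\pi_i(\mu),$$
since the indicator $\vecone_{\mu\in\Omega_\ell}$ forces $\pi_{i,\ell}(\Omega\setminus\Omega_\ell)=0$ regardless of any hypothesis on $\pi$.

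The substantive part is the normalization. I would simply compute the total mass from the definition:
	$$\pi_{i,\ell}(\Omega)=\int_{\Omega}\frac{\vecone_{\mu\in\Omega_\ell}}{kq_{i,\ell}^*}\dd\pi_i(\mu)=\frac{\pi_i(\Omega_\ell)}{kq_{i,\ell}^*}=\frac{\rho_{i,\ell}(\pi)}{kq_{i,\ell}^*}.$$
At this point the frozen fixed-point assumption enters via \Lem~\ref{Lemma_hardFields}, which asserts precisely that $\rho_{i,\ell}(\pi)=kq_{i,\ell}^*$ for every $(i,\ell)\in\cT$. Substituting this identity yields $\pi_{i,\ell}(\Omega)=1$, so $\pi_{i,\ell}\in\cP_\ell$ as required.

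Since there is no obstacle beyond correctly invoking \Lem~\ref{Lemma_hardFields}, this lemma is really a bookkeeping corollary of the ``hard fields'' computation. The only conceptual point worth noting is the role of the frozen hypothesis: without it, $\rho_{i,\ell}(\pi)$ need not equal $kq_{i,\ell}^*$ and the rescaled measures could fail to be probability measures; with it, the decomposition $\widetilde\pi\in\widetilde\cP$ follows coordinate by coordinate.
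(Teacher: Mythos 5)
Your proof is correct and matches the paper's argument exactly: the support condition is immediate from the indicator in the definition of $\pi_{i,\ell}$, and the normalization $\pi_{i,\ell}(\Omega_\ell)=\rho_{i,\ell}(\pi)/(kq_{i,\ell}^*)=1$ follows from \Lem~\ref{Lemma_hardFields}, which is precisely where the frozen hypothesis is used.
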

\begin{proof}
Let $(i,\ell)\in\cT$.
By construction, the support of $\pi_{i,\ell}$ is contained in $\Omega_\ell$.
Furthermore, \Lem~\ref{Lemma_hardFields} implies that
	$$\pi_{i,\ell}(\Omega_\ell)=\frac1{kq_{i,\ell}^*}\int_\Omega \vecone_{\mu\in\Omega_\ell}\dd\pi_i(\mu)
		=\frac{\pi_i(\Omega_\ell)}{kq_{i,\ell}^*}=\frac{\rho_{i,\ell}(\pi)}{kq_{i,\ell}^*}=1.$$
Thus, $\pi_{i,\ell}$ is a probability measure.
\end{proof}		

Let $\Gamma_{i,\ell}$ be the set of all non-negative integer vectors $\vec\hgamma=(\hgamma_{i',\ell'})_{(i',\ell')\in\cT_{i,\ell}}$.
For $\vec \hgamma \in \Gamma_{i, \ell}$, we let
\begin{align*}
p_{i,\ell}(\vec \hgamma)&= \prod_{(i',\ell')\in \cT_{i,\ell}} p_{d'q^*_{i', \ell'}}(\hgamma_{i', \ell'})
\end{align*}
Moreover, we let $\Omega^{\vec \hgamma}=\prod_{(i',\ell')\in \cT_{i,\ell}}\prod_{j \in [\hgamma_{i',\ell'}]}\Omega$ and denote its points by
		$\mu_{\vec \hgamma} = (\mu_{i',\ell',j})_{(i',\ell')\in \cT_{i,\ell}, j \in [\hgamma_{i',\ell'}]}$.
In addition, if $\pi$ is a probability measure on $\Omega$ and $\vec\hgamma\in\Gamma_{i,\ell}$, we set
	$$\pi_{i,\ell,\vec\hgamma}=\bigotimes_{(i',\ell')\in\cT_{i,\ell}}\bigotimes_{j=1}^{\hgamma_{i',\ell'}}\pi_{i',\ell'}.$$

Further, we define for any non-empty set $\ell\subset\brk k$ a map
	\begin{eqnarray}\label{eqBell}
	\cB_{\ell}&:&\bigcup_{\gamma=1}^\infty\Omega^\gamma\ra\Omega,\quad
		(\mu_1,\ldots,\mu_\gamma)\mapsto\cB_{\ell}[\mu_1,\ldots,\mu_\gamma],\quad\mbox{where}\\
	\cB_{\ell}[\mu_1,\ldots,\mu_\gamma](h)&=&
		\begin{cases}
		\frac{\vecone_{h\in\ell}}{|\ell|}&\mbox{ if }\sum_{h'\in\ell}  \prod_{j=1}^\gamma 1 - \mu_j(h')=0,\\
		\frac{\vecone_{h\in\ell}\cdot \prod_{j=1}^\gamma 1 - \mu_j(h)}
			{\sum_{h'\in\ell}  \prod_{j=1}^\gamma 1 - \mu_j(h')}&\mbox{ if }\sum_{h'\in\ell}  \prod_{j=1}^\gamma 1 - \mu_j(h')>0.
		\end{cases}
			\nonumber
	\end{eqnarray}
Additionally, to cover the case $\gamma=0$ we define
	$\cB_\ell[\emptyset](h)=\frac{\vecone_{h\in\ell}}{|\ell|}$.
Thus, $\cB_\ell[\emptyset]$ is the uniform distribution on $\ell$.

\begin{lemma}\label{Lemma_fixDecomp}
Let $\cX$ be the set of all frozen fixed points of $\cF_{d,k}$.
Moreover, let $\widetilde\cX$ be the set of all fixed points of
	\begin{align*}
	\widetilde\cF_{d,k}&:\widetilde\cP\ra\widetilde\cP,&
	(\pi_{i,\ell})_{(i,\ell) \in \cT} &\mapsto
			\left( \sum_{\vec \hgamma \in \Gamma_{i, \ell}}
				\int_{\Omega^{\vec \hgamma}} \atom_{\cB_\ell[\mu_{\vec \hgamma}]}
				\measureii \right)_{(i,\ell)\in \cT}.
	\end{align*}
Then the map $\pi\in\cX\mapsto\widetilde\pi=(\pi_{i,\ell})_{(i,\ell)\in\cT}$ induces a bijection between $\cX$ and $\widetilde\cX$.
\end{lemma}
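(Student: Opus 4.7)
My plan is to exhibit an explicit inverse to $\pi\mapsto\widetilde\pi$ and show that the fixed-point property transfers through it. The starting point is a reconstruction identity: for any frozen $\pi\in\cP$, and in particular any $\pi\in\cX$,
\beq\label{eqPlanRec}
\pi=\sum_{(i,\ell)\in\cT} q^*_{i,\ell}\,\pi_{i,\ell}.
\eeq
This follows by writing $\pi=\frac1k\sum_i\pi_i$ (using $\sum_i\mu(i)=1$), decomposing each $\pi_i$ according to the partition $\Omega=\bigsqcup_\ell\Omega_\ell$, and invoking \Lem~\ref{Lemma_hardFields} to get $\pi_i(\Omega_\ell)=kq^*_{i,\ell}$. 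The identity~\eqref{eqPlanRec} already establishes the injectivity of $\pi\mapsto\widetilde\pi$ on $\cX$, and suggests the candidate inverse $\Psi:\widetilde\cP\to\cP$, $\widetilde\pi\mapsto\sum_{(i,\ell)}q^*_{i,\ell}\pi_{i,\ell}$; this map is well-defined since $\sum_{(i,\ell)}q^*_{i,\ell}=1$ by a direct binomial computation. Moreover $\Psi(\widetilde\pi)$ is always frozen: for any $\widetilde\pi\in\widetilde\cP$, the measure $\pi_{h,\cbc h}$ is supported on $\Omega_{\cbc h}=\cbc{\atom_h}$ and must equal $\atom_{\atom_h}$, giving $\Psi(\widetilde\pi)(\cbc{\atom_1,\ldots,\atom_k})=\sum_h q^*_{h,\cbc h}=q^*\geq 2/3$ by \Lem~\ref{Prop_Fixpunkt}.

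The main content is the equivalence $\cF_{d,k}[\pi]=\pi$ iff $\widetilde\cF_{d,k}[\widetilde\pi]=\widetilde\pi$. To argue the forward direction, start from the representation of $\pi_i$ given by \Lem~\ref{Lemma_separateColors} and substitute each $\pi_h$ by its own decomposition $\sum_{\ell'\ni h}kq^*_{h,\ell'}\pi_{h,\ell'}$. Poisson thinning converts the Poisson counts $\gamma_h\sim\Po(d/(k-1))$ into independent refinements $\gamma_{h,\ell'}\sim\Po(d'q^*_{h,\ell'})$. Since $\pi_{h,\cbc h}$ is the Dirac $\atom_{\atom_h}$, a realization has $\cB[\mu_{\vec\gamma}](h)=0$ iff $\gamma_{h,\cbc h}\geq 1$; thus the support $\ell$ of $\cB[\mu_{\vec\gamma}]$ is governed by which $\gamma_{h,\cbc h}$ vanish. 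Restricting to a specified support $\ell$, the ``frozen'' Poisson contributions factor as
\beq
\prod_{h\in\ell\setminus\cbc i}\exp(-d'q^*/k)\prod_{h\in\brk k\setminus\ell}(1-\exp(-d'q^*/k))=kq^*_{i,\ell},
\eeq
using the symmetry $q^*_j=q^*/k$ from \Lem~\ref{Prop_Fixpunkt}; this equals $\pi_i(\Omega_\ell)$. For the surviving Poisson indices, types $(h,\ell')$ with $\ell\cap\ell'=\emptyset$ contribute trivially to $\cB[\mu_{\vec\gamma}]|_\ell$ (the associated $\mu_{h,\ell',j}$ is supported off $\ell$) and sum out to $1$, leaving only the indexing by $\cT_{i,\ell}$; on the retained arguments, $\cB$ reduces to $\cB_\ell$ by the normalization~\eqref{eqBell}. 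Dividing by $kq^*_{i,\ell}$ recovers $\pi_{i,\ell}=\widetilde\cF_{d,k}[\widetilde\pi]_{i,\ell}$.

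For the reverse implication I run the same computation backwards: given $\widetilde\pi\in\widetilde\cX$, apply $\cF_{d,k}$ to $\Psi(\widetilde\pi)$ and decompose the output measure by the support $\ell$ of $\cB$; each component on $\Omega_\ell$ reproduces $q^*_{i,\ell}\pi_{i,\ell}$, so that $\cF_{d,k}[\Psi(\widetilde\pi)]=\Psi(\widetilde\pi)$ via~\eqref{eqPlanRec}. Combined with the frozenness already established, this gives $\Psi(\widetilde\pi)\in\cX$, completing the bijection. The main obstacle I anticipate is the careful bookkeeping in the Poisson thinning step: verifying that the frozen Poisson prefactor cancels exactly against the normalization $\pi_i(\Omega_\ell)=kq^*_{i,\ell}$, via the pivotal identity $kq^*_{i,\ell}=x^{|\ell|-1}(1-x)^{k-|\ell|}$ with $x=\exp(-dq^*/(k-1))$ — which is the fixed-point equation for $F_{d,k}$ written componentwise — and confirming that the sum over non-interacting types $(h,\ell')$ with $\ell\cap\ell'=\emptyset$ indeed collapses to $1$ so that the residual index set is precisely $\cT_{i,\ell}$.
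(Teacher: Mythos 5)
Your forward direction and the injectivity argument are essentially the paper's own proof: starting from \Lem~\ref{Lemma_separateColors}, conditioning on the support $\ell$ of $\cB[\mu_{\vec\gamma}]$, identifying the hard-field prefactor with $kq^*_{i,\ell}$ via \Lem~\ref{Lemma_hardFields}, and observing that on this event only the types in $\cT_{i,\ell}$ survive and $\cB$ collapses to $\cB_\ell$; your Poisson-thinning phrasing is just a more explicit version of the paper's computation of the probability of that event in~(\ref{eq_aux_pi_i_ell_1a}). The reconstruction identity and the frozenness of $\Psi(\widetilde\pi)$ are also correct.

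The gap is in the converse. ``Running the same computation backwards'' is not available as stated, because the forward computation is powered by two facts that presuppose $\cF_{d,k}[\pi]=\pi$: \Lem~\ref{Lemma_separateColors} itself and \Lem~\ref{Cor_balanced_fix}, i.e.\ the uniform marginals $\int_\Omega\mu(h)\dd\pi(\mu)=1/k$, which fix $Z_\gamma(\pi)=(k-1)^\gamma/k^{\gamma-1}$ and tame the size-biasing weight in~(\ref{eqFixedPoint2}). When you instead start from $\widetilde\pi\in\widetilde\cX$ and apply $\cF_{d,k}$ to $\Psi(\widetilde\pi)$, you must first know $Z_\gamma(\Psi(\widetilde\pi))$, i.e.\ the color marginals of $\Psi(\widetilde\pi)$, and nothing in your proposal shows these are uniform; this has to be extracted from the fixed-point equation for $\widetilde\cF_{d,k}$ by a separate argument, and without it the reweighting does not cancel as it did in the forward direction. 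A second, related omission: for bijectivity you also need the decomposition of $\Psi(\widetilde\pi)$ to return the given family, which amounts to the consistency relation $\mu(i')q^*_{i,\ell}\dd\pi_{i,\ell}(\mu)=\mu(i)q^*_{i',\ell}\dd\pi_{i',\ell}(\mu)$ on $\Omega_\ell$ for $i,i'\in\ell$; your $\Psi(\widetilde\pi)=\sum_{(i,\ell)}q^*_{i,\ell}\pi_{i,\ell}$ does not automatically satisfy this for an arbitrary fixed point of $\widetilde\cF_{d,k}$. To be fair, the paper is also terse here --- it uses a different inverse density, $\dd\pi(\mu)=\sum_{\ell}\frac1{|\ell|}\sum_{i\in\ell}\frac{q^*_{i,\ell}}{\mu(i)}\dd\pi_{i,\ell}(\mu)$, and declares the verification easy --- but your write-up should address these two points rather than asserting that the computation is symmetric.
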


\begin{proof}
Suppose that $\pi\in\cX$.
Let $(i,\ell)\in\cT$. 
Then \Lem~\ref{Lemma_separateColors} yields
	\beq \begin{split} \label{eq_aux_pi_i_ell_1}
	\pi_{i,\ell}=\int_{\Omega_\ell}\frac{\atom_{\mu}}{kq_{i,\ell}^*}\dd\pi_i(\mu)
	&=\sum_{\vec\gamma\in\Gamma_{i}}\int_{\Omega^{\vec\gamma}}
				\frac{\vecone_{\cB[\mu_{\vec \gamma}]\in\Omega_\ell}\atom_{\cB[\mu_{\vec \gamma}]}}{kq_{i,\ell}^*}
				\measurei.
	\end{split} \eeq
Now let us fix a pair $(i, \ell) \in \cT$ and $(\vec \gamma, \mu_{\vec \gamma})$.
We denote, for $h\neq i$, by $\hgamma_{h}=\hgamma_{h}(\mu_{\vec\gamma})$
the number of occurence of $\delta_h$ in the tuple $\mu_{\vec \gamma}$.
The event $\cB[\mu_{\vec \gamma}]\in\Omega_\ell$ occurs iff
	\begin{enumerate}
	\item for each $h\in\brk k\setminus\ell$ there is $j\in\brk{\gamma_h}$ such that $\mu_{h,j}=\atom_h$, i.e. $\hgamma_{h} >0$,
	\item  for each  $h\in\ell\setminus\cbc i$ and all $j\in\brk{\gamma_h}$ we have $\mu_{h,j}=\atom_h$, i.e. $\hgamma_{h} =0$,
	\end{enumerate}
Thus, \Lem~\ref{Lemma_hardFields} implies that
	\begin{align}\label{eq_aux_pi_i_ell_1a}
	\sum_{\vec\gamma\in\Gamma_i}\int_{\Omega^{\vec\gamma}}\frac{\vecone_{\cB[\mu_{\vec \gamma}]\in\Omega_\ell}}{kq_{i,\ell}^*}
				\measurei&=
		 \frac{1}{kq_{i,\ell}^*}
		 \prod_{h\in\brk k\setminus\ell}\pr\brk{\Po(q_h^*d')>0} \prod_{h\in\ell\setminus\cbc i}\pr\brk{\Po(q_h^*d')=0}=1.
	\end{align}
Furthermore, given that the event $\cB[\mu_{\vec \gamma}]\in\Omega_\ell$ occurs,
the measure $\cB[\mu_{\vec \gamma}]$ is determined by those components $\mu_{i',\ell',j}$
with $(i',\ell')\in\cT_{i,\ell}$ only.
Thus, with $\vec \hgamma = (\hgamma_{i', \ell'})_{(i',\ell') \in \cT_{i, \ell}}$ and
$\mu_{\vec \hgamma}=(\mu_{i',\ell',j})_{(i',\ell')\in\cT_{i,\ell},j\in[\hgamma_{i',\ell'}]}$
we obtain from~(\ref{eq_aux_pi_i_ell_1}) and~(\ref{eq_aux_pi_i_ell_1a})
	\begin{align*}
	\pi_{i,\ell}&=
		\sum_{\vec\hgamma\in\Gamma_{i,\ell}}\int_{\Omega^{\vec\hgamma}}\atom_{\cB_\ell[\mu_{\vec \hgamma}]} \measureii .
	\end{align*}
Thus, if $\pi$ is a frozen fixed point of $\cF_{d,k}$, then $\widetilde\pi$ is a fixed point of $\widetilde\cF_{d,k}$.

Conversely, if $\widetilde\pi=(\pi_{i,\ell})$ is a fixed point of $\widetilde\cF_{d,k}$, then the measure $\pi$ defined by
	$$\dd \pi(\mu)= \sum_{\ell \subset [k]}  \frac{1}{|\ell|} \sum_{i \in \ell} \frac{q_{i,\ell}^*}{\mu(i)} \dd\pi_{i,\ell}(\mu)$$
is easily verified to be a fixed point of $\cF_{d,k}$.
Moreover, for $i \in [k]$, $\rho_i(\pi) = q_{i, \{i\}}^* = q^*/k \geq 2/(3k)$ and $\pi$ is thus a frozen fixed point of $\cF_{d,k}$.
 \end{proof}

\begin{corollary}\label{Cor_tildeFixedPoint}
The distribution $\pi_{d,k,\vec q^*}$ is a fixed point of $\cF_{d,k}$.
\end{corollary}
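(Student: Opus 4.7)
The plan is to apply \Lem~\ref{Lemma_fixDecomp} in reverse: I exhibit a candidate family $\widetilde{\pi}^{*} = (\pi^{*}_{i,\ell})_{(i,\ell)\in\cT}\in\widetilde{\cP}$, verify $\widetilde{\cF}_{d,k}[\widetilde{\pi}^{*}] = \widetilde{\pi}^{*}$, and check that the inverse formula of \Lem~\ref{Lemma_fixDecomp} applied to $\widetilde{\pi}^{*}$ recovers $\pi_{d,k,\vec q^{*}}$. Let $\nu_\ell$ denote the law of $\mu_{\T^{(i,\ell)}}$, where $\T^{(i,\ell)}$ is $\T_{d,k,\vec q^{*}}$ conditioned on the root type being $(i,\ell)$. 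Since the offspring type set $\cT_{i,\ell}$ and the rates $q^{*}_{i',\ell'}$ depend on $\ell$ alone and are invariant under relabelings of colors, by color-symmetry $\nu_\ell$ depends only on $\ell$ and $\int\mu(h)\dd\nu_\ell(\mu) = 1/|\ell|$ for each $h\in\ell$. Set $\dd\pi^{*}_{i,\ell}(\mu) := |\ell|\mu(i)\dd\nu_\ell(\mu)$; this is a probability measure supported on $\Omega_\ell$. Since the unconditioned $\T_{d,k,\vec q^{*}}$ first draws a root type from $(q^{*}_{i,\ell})$ and then runs the conditioned process, we have $\pi_{d,k,\vec q^{*}} = \sum_{(i,\ell)\in\cT}q^{*}_{i,\ell}\nu_\ell$, and a routine substitution shows that the inverse formula from \Lem~\ref{Lemma_fixDecomp} applied to $\widetilde{\pi}^{*}$ yields exactly this measure.

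The main step is to verify $\widetilde{\cF}_{d,k}[\widetilde{\pi}^{*}] = \widetilde{\pi}^{*}$. The branching structure (a vertex of type $(i,\ell)$ spawns $\hgamma_{i',\ell'}\sim\Po(d'q^{*}_{i',\ell'})$ independent offspring over $(i',\ell')\in\cT_{i,\ell}$, with subtrees being iid copies of $\T^{(i',\ell')}$) combined with the Belief Propagation identity $\mu_{\T} = \cB_\ell[\mu_{T_1},\ldots,\mu_{T_s}]$ (a standard consequence of uniform sampling of legal colorings at the root) yields the recursion
\[
\nu_\ell = \sum_{\vec\hgamma\in\Gamma_{i,\ell}} p_{i,\ell}(\vec\hgamma)\int \delta_{\cB_\ell[\mu_{\vec\hgamma}]}\bigotimes_{(i',\ell')\in\cT_{i,\ell}}\bigotimes_{j\in[\hgamma_{i',\ell'}]}\dd\nu_{\ell'}(\mu_{i',\ell',j}).
\]
After multiplying both sides by $|\ell|\mu(i)$ and unfolding $\widetilde{\cF}_{d,k}[\widetilde{\pi}^{*}]_{i,\ell}$ via the density $\dd\pi^{*}_{i',\ell'} = |\ell'|\mu(i')\dd\nu_{\ell'}$, the fixed-point equation for $\widetilde{\pi}^{*}$ reduces to showing that, under the joint law of $(\vec\hgamma,\mu_{\vec\hgamma})$ with $\vec\hgamma\sim p_{i,\ell}$ and $\mu_{i',\ell',j}\sim \nu_{\ell'}$ independent, the two tiltings $|\ell|\cB_\ell[\mu_{\vec\hgamma}](i)$ and $\prod_{(i',\ell'),j}|\ell'|\mu_{i',\ell',j}(i')$ (both of unit total mass) induce the same pushforward under $\cB_\ell$.

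This tilting identity is the main obstacle. The pointwise equality of the two tilts fails in general, so the argument must work at the level of laws. I expect to prove it via a Bayesian planting interpretation: both tilts act as Radon--Nikodym derivatives for conditioning the joint measure (tree, uniformly random legal coloring $\vec\sigma$) on the event $\vec\sigma(v_0)=i$, once encoded globally through $\mu_{\T}(i)$ and once factored through the offspring subtrees using their conditional independence given the tree. The key Poisson identity ensuring compatibility is that each factor $|\ell'|\mu(i')$ has unit mean under $\nu_{\ell'}$: by Poisson marking the $\prod|\ell'|\mu(i')$-tilted process is again Poisson with rates $d'q^{*}_{i',\ell'}$ but marks distributed as $\pi^{*}_{i',\ell'}$, and the fact that $\vec q^{*}$ is the fixed point of $F_{d,k}$ from \Lem~\ref{Prop_Fixpunkt} is precisely the self-consistency needed for the resulting distributions of $\cB_\ell[\mu_{\vec\hgamma}]$ under the two tilts to agree. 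Once this step is established, \Lem~\ref{Lemma_fixDecomp} immediately concludes.
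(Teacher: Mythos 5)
Your overall architecture is close to the paper's: the paper also works through $\widetilde\cF_{d,k}$ via \Lem~\ref{Lemma_fixDecomp} and combines the recursive description of the branching process with the identity $\mu_{\T_{i,\ell}}=\cB_\ell[(\mu_{\T_{i',\ell',j}})]$. The problem is that the step you yourself call ``the main obstacle'' is exactly the mathematical content of the corollary, and you do not prove it -- ``I expect to prove it via a Bayesian planting interpretation'' is precisely where the paper's actual work lies. The paper establishes that the law of the decorated tree is invariant under replacing the distinguished colors by a uniformly random legal coloring, so that conditionally on the tree together with its lists the root's distinguished color has law $\mu_{\T}$; this yields that the tilted measure $\pi_{i,\ell}$ \emph{is} the law of $\mu_{\T_{i,\ell}}$, and then substituting this into the recursion for $\T_{i,\ell}$ gives the $\widetilde\cF_{d,k}$ fixed-point equation directly, with no separate ``two tilts induce the same pushforward'' lemma and no need to run \Lem~\ref{Lemma_fixDecomp} in reverse. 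Moreover your proposed mechanism mislocates what makes this identity true: it is the symmetry $q^*_{i,\ell}=q^*_{i',\ell}$ for $i,i'\in\ell$ together with the factorization of the branching-process weight into (a function of the tree-with-lists) times (the indicator that the distinguished colors form a legal coloring); the fixed-point equation $\vec q^*=F_{d,k}(\vec q^*)$ is not the engine here.

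There is a second, related gap: your symmetry claim that $\nu_\ell$ depends on $\ell$ alone, with $\int\mu(h)\,\dd\nu_\ell(\mu)=1/|\ell|$, is false for the process the corollary concerns. In $\GW(d,k,\vec q^*)$ a vertex of type $(i,\ell)$ only spawns offspring with distinguished colors $i'\neq i$ -- this is how $\cT_{i,\ell}$ has to be read (it is what makes coloring every vertex with its distinguished color a legal coloring of $\T$, and it is what the derivation of $\widetilde\cF_{d,k}$ in the proof of \Lem~\ref{Lemma_fixDecomp} produces, since the incoming messages there are drawn from $\pi_{i'}$ with $i'\neq i$ at rates $d'q^*_{i',\ell'}$). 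Consequently the conditional law of the tree-with-lists given the root type depends on $i$ (e.g.\ the number of root children with list $\ell'$ is $\Po\bc{d'q^*_{1,\ell'}(|\ell'|-\vecone_{i\in\ell'})}$), and so does $\nu_{i,\ell}$, the law of $\mu_{\T_{i,\ell}}$. The correct statement is $\dd\nu_{i,\ell}(\mu)=|\ell|\mu(i)\,\dd\bar\nu_\ell(\mu)$ with $\bar\nu_\ell=|\ell|^{-1}\sum_{i'\in\ell}\nu_{i',\ell}$, but this is exactly the planting identity you left open, not a consequence of color symmetry; likewise your recursion must be stated for $\nu_{i,\ell}$ with inputs $\nu_{i',\ell'}$, $(i',\ell')\in\cT_{i,\ell}$, rather than with i.i.d.\ $\nu_{\ell'}$ inputs. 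Your inverse-map computation recovering $\pi_{d,k,\vec q^*}$ is fine (with $\bar\nu_\ell$ in place of $\nu_\ell$), but without proving the conditional-uniformity/re-coloring statement the argument does not close.
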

\begin{proof}
To unclutter the notation we write $\pi=\pi_{d,k,q^*}$.
Moreover, we let $\T=\T_{d,k,\vec q^*}$;
	by \Lem~\ref{Lemma_GW} we may always assume that $\T$ is a finite tree.
Recall that $\pi$ is the distribution of $\mu_{\T}$, which is the distribution of the color of the root
under a random legal coloring of $\T$.
In light of \Lem~\ref{Lemma_fixDecomp} it suffices to show that $\widetilde\pi=(\pi_{i,\ell})$ is a fixed point of $\widetilde\cF_{d,k}$.
Thus, we need to show that for all $(i,\ell)\in\cT$,
	\begin{align}\label{eqCor_tildeFixedPoint1}
	\pi_{i,\ell}=	\sum_{\vec\gamma\in\Gamma_{i,\ell}}
				\int_{\Omega^{\vec \gamma}} \atom_{\cB_\ell[(\mu_{i',\ell'}^{(j)})]}
					\prod_{(i',\ell')\in\cT_{i,\ell}}p_{d'q_{i',\ell'}^*}(\gamma_{i',\ell'})
					\bigotimes_{j=1}^{\gamma_{i', \ell'}} \dd \pi_{i', \ell'}(\mu_{i', \ell'}^{(j)}). 
	\end{align}

Let us denote by $\T_{i,\ell}$ the random tree $\T$ given that the root has type $(i,\ell)$.
We claim that $\pi_{i,\ell}$ is the distribution of $\mu_{\T_{i,\ell}}$.
Indeed, let $\ell\subset\brk k$.
If the root $v_0$ of $\T$ has type $(i,\ell)$ for some $i\in\ell$, then 
	the support of the measure $\mu_{\T}$ is contained in $\ell$ (because under any legal coloring, $v_0$ receives a color from $\ell$).
Moreover, all children of $v_0$ have types in $\cT_{i,\ell}$, and if $(i',\ell')\in\cT_{i,\ell}$, then $|\ell'|\geq2$.
Hence, inductively we see that if $v_0$ has type $(i,\ell)$, then for any color $h\in\ell$ there is a legal coloring under which $v_0$ receives color $h$.
	Consequently, the support of $\mu_{\T}$ is precisely $\ell$.
Furthermore, the distribution $\mu_{\T}$ is invariant under the following operation:
	obtain a random tree $\T'$ by choosing a legal color $\vec\tau$ of $\T$ randomly
	and then changing the types $\thet(v)=(i_v,\ell_v)$ of  the vertices to $\thet'(v)=(\vec\tau(i_v),\ell_v)$;
		this is because the trees $\T$ and $\T'$ have the same set of legal colorings. 
These observation imply that for any measurable set $A$ we have
	\begin{align*}
	\pr\brk{\mu_{\T}\in A|\thet(v_0)=(i,\ell)}&=
		\frac{\pr\brk{\mu_{\T}\in A,\thet(v_0)=(i,\ell)}}{\pr\brk{\thet(v_0)=(i,\ell)}}=
		\frac{\pr\brk{\mu_{\T}\in A\cap\Omega_\ell,\thet(v_0)=(i,\ell)}}{q_{i,\ell}^*}\\
		&=\frac1{q_{i,\ell}^*}\int_A\mu(i)\vecone_{\mu\in\Omega_\ell}\dd\pi(\mu)=\pi_{i,\ell}(A).
	\end{align*}

To prove that $\widetilde\pi$ is a fixed point of $\widetilde\cF_{d,k}$, we observe that the random tree $\T_{i,\ell}$ can be described by
	the following recurrence.
	There is a root of $v_0$ of type $(i,\ell)$.
	For each $(i',\ell')$, $v_0$ has a random number $\gamma_{i',\ell'}=\Po(d'q_{i,\ell}^*)$ of children 
		$(v_{i',\ell',j})_{j=1,\ldots,\gamma_{i',\ell'}}$ of type $(i',\ell')$.
	Moreover, each $v_{i',\ell',j}$ is the root of a random tree $\T_{i',\ell',j}$.
	Of course, the random variables $(\gamma_{i',\ell'})_{(i',\ell')\in\cT_{i,\ell}}$ and
	the random trees $\T_{i',\ell',j}$ are chosen independently.

This recursive description of the random tree $\T_{i,\ell}$ leads to a recurrence for the distribution $\pi_{i,\ell}$.
Indeed, given the numbers $(\gamma_{i',\ell'})_{i',\ell'}$,
	the distribution $\mu_{\T_{i',\ell',j}}$ of the color of the root of the random tree $\T_{i',\ell',j}$
	is an $\Omega_{\ell'}$-valued random variable with distribution $\pi_{i',\ell'}$ for each $j=1,\ldots,\gamma_{i',\ell'}$.
Moreover, the random variables $(\mu_{\T_{i',\ell',j}})_{i',\ell',j}$ are mutually independent.
In addition, we claim that given the distributions $(\mu_{\T_{i',\ell',j}})_{i',\ell',j}$, the color of the root $v_0$ of the entire tree $\T_{i,\ell}$
has distribution
	\begin{equation}\label{eqBPvindication}
	\mu_{\T_{i,\ell}}=\cB_\ell[(\mu_{\T_{i',\ell',j}})_{i',\ell',j}].
	\end{equation}
Indeed, given that $v_0$ has type $(i,\ell)$, $v_0$ receives a color from $\ell$ under any legal coloring.
Further, for any $h\in\ell$ the probability that $v_0$ takes color $h$ under a random coloring of $\T_{i,\ell}$
is proportional to the probability that none of its children $v_{i',\ell',j}$ takes color $h$ in a random coloring of the tree
$\T_{i',\ell',j}$ whose root $v_{i',\ell',j}$ is.

Finally, we recall that $\pi_{i,\ell}$ is the distribution of $\mu_{\T_{i,\ell}}$.
Hence, (\ref{eqBPvindication}) implies together with the fact that the $\gamma_{i',\ell',j}$ are independent Poisson variables that
	$\pi_{i,\ell}$ satisfies~(\ref{eqCor_tildeFixedPoint1}).
\end{proof}

\begin{lemma} \label{lemma_wtcF_unicity}
The map $\widetilde\cF_{d,k}$ has at most one fixed point.
\end{lemma}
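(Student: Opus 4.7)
The plan is to show that any two fixed points of $\widetilde\cF_{d,k}$ must coincide by a coupling argument that turns the fixed-point equation into a linear contraction whose contraction matrix is the mean-offspring matrix of the subcritical process $\GW(d,k,\vec q^*)$. First, observe that for $(i,\ell)$ with $|\ell|=1$ the index set $\cT_{i,\{i\}}$ is empty, so $\widetilde\cF_{d,k}$ forces $\pi_{i,\{i\}}=\atom_{\cB_{\{i\}}[\emptyset]}=\atom_{\atom_i}$ regardless of the input; these coordinates are therefore determined. Hence it suffices to show uniqueness of the components $\pi_{i,\ell}$ with $|\ell|\geq 2$.

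Suppose $\widetilde\pi=(\pi_{i,\ell})$ and $\widetilde\pi'=(\pi'_{i,\ell})$ are both fixed points, and set $\delta_{i,\ell}=\TV{\pi_{i,\ell}-\pi'_{i,\ell}}$ (with $\delta_{i,\{i\}}=0$). For each $(i',\ell')\in\cT_{i,\ell}$ I would choose an optimal coupling of $\pi_{i',\ell'}$ and $\pi'_{i',\ell'}$ so that independent draws agree with probability $1-\delta_{i',\ell'}$. Drawing $\hgamma_{i',\ell'}\sim\Po(d'q^*_{i',\ell'})$ independently for each $(i',\ell')\in\cT_{i,\ell}$ and then sampling coupled pairs $(\mu_{i',\ell',j},\mu'_{i',\ell',j})$ for $j\leq\hgamma_{i',\ell'}$, a union bound combined with the fact that $\cB_\ell$ is a deterministic function of its argument yields
\[
\delta_{i,\ell}\;\leq\;\pr\!\left[\cB_\ell[\mu_{\vec\hgamma}]\neq\cB_\ell[\mu'_{\vec\hgamma}]\right]\;\leq\;\sum_{(i',\ell')\in\cT_{i,\ell}} d'q^*_{i',\ell'}\,\delta_{i',\ell'}.
\]

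Writing $\delta=(\delta_{i,\ell})$ and $\Lambda=(d'q^*_{i',\ell'}\cdot\vecone_{(i',\ell')\in\cT_{i,\ell}})_{(i,\ell),(i',\ell')}$, this reads $\delta\leq\Lambda\delta$ as a pointwise inequality in $\mathbf{R}^{\cT}$. Since $\Lambda$ has nonnegative entries, iteration preserves the inequality and gives $\delta\leq\Lambda^n\delta$ for every $n\geq1$. But $\Lambda$ is precisely the expected-offspring matrix of $\GW(d,k,\vec q^*)$ restricted to types with $|\ell|\geq 2$, which is subcritical by Lemma~\ref{Lemma_GW}(2); hence $\rho(\Lambda)<1$ and $\Lambda^n\to 0$ on the finite-dimensional space indexed by $\cT$. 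Since $\delta$ is uniformly bounded by $1$, we conclude $\delta=0$, i.e.\ $\widetilde\pi=\widetilde\pi'$.

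The main obstacle is the coupling step, but it is essentially clean because $\cB_\ell$ depends pointwise on its arguments and the number of those arguments is a product of independent Poissons; so the TV contraction reduces exactly to computing expected numbers of disagreements in the first generation of the branching process, and subcriticality (already proved) then closes the argument without any further analysis of the geometry of the measures $\pi_{i,\ell}$ themselves.
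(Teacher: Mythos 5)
Your proof is correct, and it takes a somewhat different (more linear\--algebraic) route than the paper. The paper proves uniqueness by unrolling the fixed\--point equation onto the random tree: it defines, for any $\widetilde\pi$, a depth\--$t$ experiment on $\T_{i,\ell}$ with boundary data drawn from $\widetilde\pi$, shows by induction that every fixed point is invariant under this unrolling, and then couples the tree itself to get $\norm{\pi_{i,\ell}-\pi'_{i,\ell}}_{\mathrm{TV}}\leq 2\pr[|\T_{i,\ell}|\geq t]\to0$, which only uses that the total progeny is a.s.\ finite. You instead apply the fixed\--point equation once to both fixed points, couple the Poisson numbers of arguments identically and each argument pair maximally, and deduce the pointwise inequality $\delta\leq\Lambda\delta$ with $\Lambda$ the mean\--offspring matrix; iterating this \emph{linear} inequality and using $\Lambda^n\to0$ closes the argument without the induction over the depth\--$t$ experiment. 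Two small points to make explicit. First, your step needs the quantitative form of subcriticality, namely that the spectral radius of $\Lambda$ is strictly below $1$ (equivalently, that the expected generation sizes decay geometrically); mere a.s.\ finiteness of the progeny, which is all the paper's own argument uses, would not suffice (at criticality your inequality $\delta\leq\Lambda^n\delta$ gives nothing). Fortunately the paper's proof of \Lem~\ref{Lemma_subcrit} does deliver this: the dominating two\--type matrix $M_*$ has eigenvalues $\tilde O_k(k^{-1})$, which forces $\Lambda^n\to0$ on the finite type space, so your citation of \Lem~\ref{Lemma_GW}(2) is legitimate provided you read ``subcritical'' in this standard spectral sense. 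Second, fix a single convention relating $\TV\cdot$ to the maximal\--coupling disagreement probability (they differ by a factor $2$ in the paper's notation); since the same constant appears on both sides of $\delta_{i,\ell}\leq\sum_{(i',\ell')\in\cT_{i,\ell}}d'q^*_{i',\ell'}\delta_{i',\ell'}$, this is cosmetic. What your approach buys is a shorter proof that isolates the contraction in one displayed inequality; what the paper's approach buys is that it works under the weaker hypothesis of a.s.\ finite progeny and simultaneously sets up the depth\--$t$ tree experiment that mirrors the combinatorial interpretation of $\pi_{d,k,\vec q^*}$ used elsewhere in \Sec~\ref{Sec_fix}.
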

\begin{proof}
As before, we let $\T$ denote the random tree $\T_{d,k,\vec q^*}$.
Moreover, $\T_{i,\ell}$ is the random tree $\T$ given that the root has type $(i,\ell)$.

Let $t\geq0$ be an integer and let $\widetilde\pi=(\pi_{i,\ell})\in\widetilde\cP$.
We define a distribution $\widetilde\pi_t=(\pi_{i,\ell,t})\in\widetilde\cP$ by means of the following experiment.
Let $(i,\ell)\in\cT$.
Let $v_0$ denote the root of $\T_{i,\ell}$ and let $\thet(v)$ signifiy the type of each vertex $v$.
	\begin{description}
	\item[TR1] Let $\T_{i,\ell,t}$ be the tree obtained from $\T_{i,\ell}$ by deleting all vertices
			at distance greater than $t$ from $v_0$.
	\item[TR2] Let $V_t$ be the set of all vertices at distance exactly $t$ from $v_0$.
			For each $v\in V_t$ independently, choose $\mu_v\in\Omega$
			from the distribution $\pi_{\thet(v)}$.
	\item[TR3] Let $\mu_{i,\ell,t}$ be the
			distribution of the color of $v_0$ under a random coloring $\vec\tau$ chosen as follows.
		\begin{itemize}
		\item Independently for each vertex $v\in V_t$ choose a color $\vec\tau_t(v)$ from the distribution $\mu_v$.
		\item Let $\vec\tau$ be a uniformly random legal coloring of $\T_{i,\ell,t}$
				such that $\vec\tau(v)=\vec\tau_t(v)$ for all $v\in V_t$;
					if there is no such coloring, discard the experiment. 
		\end{itemize}
	\end{description}
Step {\bf TR3} of the above experiment yields a {\em distribution} $\mu_{i,\ell,t}\in\Omega$.
Clearly $\mu_{i,\ell,t}$ is determined by the random choices in steps {\bf TR1}--{\bf TR2}.
Thus, let we let $\pi_{i,\ell,t}$ be the distribution of $\mu_{i,\ell,t}$ with respect to {\bf TR1}--{\bf TR2}.

We now claim that for any integer $t\geq0$ the following is true.
	\begin{equation}\label{equniqueFixedPoint1}
	\mbox{If $\widetilde\pi$ is a fixed point of $\widetilde\cF_{d,k}$, then $\widetilde\pi=\widetilde\pi_t$.}
	\end{equation}
The proof of~(\ref{equniqueFixedPoint1}) is by induction on $t$.
It is immediate from the construction that $\pi_{i,\ell,0}=\pi_{i,\ell}$ for all $(i,\ell)\in\cT$.
Thus, assume that $t\geq1$.
By induction, it suffices to show that $\widetilde\pi_t=\widetilde\pi_{t-1}$.
To this end, let us condition on the random tree $\T_{i,\ell,t-1}$.
Consider a vertex $v\in V_{t-1}$ of type $\thet(v)=(i_v,\ell_v)$.
We obtain the random tree $\T_{i,\ell,t}$ from $\T_{i,\ell,t-1}$ by attaching to each such $v\in V_{t-1}$
a random number $\gamma_{i',\ell',v}=\Po(d'q_{i',\ell'}^*)$ of children of each type $(i',\ell')\in\cT_{i_v,\ell_v}$ where,
	of course, the random variables $\gamma_{i',\ell',v}$ are mutually independent.
Further, in step {\bf TR2} of the above experiment we choose $\mu_{i',\ell',v,j}\in\Omega_{\ell'}$ independently from $\pi_{i',\ell'}$
for each $v\in V_{t-1}$, $(i',\ell')\in\cT_{i,\ell}$ and $j=1,\ldots,\gamma_{i',\ell',v}$.

Given the distributions $\mu_{i',\ell',v,j}$, suppose that we choose a legal coloring $\vec\tau_v$ of the sub-tree consisting of $v\in V_{t-1}$
and its children only from the following distribution.
	\begin{itemize}
	\item  Independently choose the colors $\vec\tau_v(u_{i',\ell',j})$ of the children $u_{i',\ell',j}$ of $v$ of type $(i',\ell')$ from
		 $\mu_{i',\ell',j}$.
	\item Choose a color $\vec\tau_v(v)$ for $v$ uniformly from the set of all
			colors $h\in\ell$ that are not already assigned to a child of $v$ if possible. 
	\end{itemize}
Let $\mu_v$ denote the distribution of the color $\vec\tau_v(v)$.
Then by construction,
	$$\mu_v=\cB_\ell[(\mu_{i',\ell',j})_{(i',\ell')\in\cT_{i,\ell},j\in[\gamma_{i',\ell',v}]}].$$
Hence, the distribution of $\mu_v$ with respect to the choice of the numbers $\gamma_{i',\ell',v}$ and the distributions $\mu_{i',\ell',j}$ is given by
	$$\sum_{\vec\gamma\in\Gamma_{i,\ell}}
				\int_{\Omega^{\vec \gamma}} \atom_{\cB_\ell[(\mu_{i',\ell'}^{(j)})]}
					\prod_{(i',\ell')\in\cT_{i,\ell}}p_{d'q_{i',\ell'}^*}(\gamma_{i',\ell'})
					\bigotimes_{j=1}^{\gamma_{i', \ell'}} \dd \pi_{i', \ell'}(\mu_{i', \ell'}^{(j)})=\pi_{i,\ell},$$
because $\widetilde\pi$ is a fixed point of $\widetilde\cF_{d,k}$.
Therefore, the experiment of first choosing $\T_{i,\ell,t}$, then choosing distributions $\mu_u$ independently from $\pi_{\thet(u)}$ for
the vertices at distance $t$, and then choosing a random legal coloring $\vec\tau$ as in {\bf TR3} is equivalent
to performing the same experiment with $t-1$ instead.
Hence, $\widetilde\pi_t=\widetilde\pi_{t-1}$.

To complete the proof, assume that $\widetilde\pi,\widetilde\pi'$ are fixed points of $\widetilde\cF_{d,k}$.
Then for any integer $t\geq0$ we have $\widetilde\pi=\widetilde\pi_t$, $\widetilde\pi'=\widetilde\pi'_t$.
Furthermore, as $\widetilde\pi_t$, $\widetilde\pi_t'$ result from the experiment {\bf TR1--TR3}, whose first step
	{\bf TR1} can be coupled, we see that for any $(i,\ell)\in\cT$,
	\begin{equation}\label{eqSubCriticalNorm}
	\norm{\pi_{i,\ell}-\pi_{i,\ell}'}_{\mathrm{TV}}=\norm{\pi_{i,\ell,t}-\pi_{i,\ell,t}'}_{\mathrm{TV}}\leq 2\pr\brk{|\T_{i,\ell}|\geq t}.
	\end{equation}
Because \Lem~\ref{Lemma_GW} shows that $\T$ results from a sub-critical branching process, we have $$\lim_{t\ra\infty}\pr\brk{|\T_{i,\ell}|\geq t}=0$$
for any $(i,\ell)\in\cT$.
Consequently, (\ref{eqSubCriticalNorm}) shows that $\widetilde\pi=\widetilde\pi'$.
\end{proof}

\noindent
Finally, \Lem~\ref{Lemma_softFields}
follows directly from \Lem~\ref{Lemma_fixDecomp}, \Cor~\ref{Cor_tildeFixedPoint} and \Lem~\ref{lemma_wtcF_unicity}.

\subsection{The number of legal colorings}

The final step of the proof of \Prop~\ref{Prop_fix} is to relate $\FF_{d,k}(\pi_{d,k,\vec q^*})$ to the number
of legal colorings of $\T_{d,k,\vec q^*}$.
The starting point for this is a formula for the (logarithm of the) number of legal colorings of a decorated tree $T,\thet$.
To write this formula down, we recall the map $\cB_\ell$ from~(\ref{eqBell}).
Moreover, suppose that $\ell \subset\brk k$ and $\mu_1,\ldots,\mu_\gamma\in\Omega$
are such that:
	\begin{equation}\label{eqFellCond}
	\exists h \in \ell\ \forall j \in [\gamma]:\mu_j(h)<1.
	 \end{equation}
Then we let 
	\begin{eqnarray*}
	\FF_{\ell}(\mu_1,\ldots,\mu_\gamma)&=&
		\FF^v_{\ell}(\mu_1,\ldots,\mu_\gamma)-\frac12\FF^e_{\ell}(\mu_1,\ldots,\mu_\gamma),
			\quad\mbox{where}\\
	\FF^v_{\ell}(\mu_1,\ldots,\mu_\gamma)&=&
		\ln \sum_{h \in \ell} \prod_{j=1}^{\gamma}1-\mu_{j}(h),\\
	\FF^e_{\ell}(\mu_1,\ldots,\mu_\gamma)&=&
		\sum_{j=1}^\gamma
			\ln \brk{1- \sum_{h \in \ell} \mu_{j}(h)\cB_\ell[\mu_1,\ldots,\mu_{j-1},\mu_{j+1},\ldots,\mu_\gamma](h)};
	\end{eqnarray*}
the condition~(\ref{eqFellCond}) ensures that these quantities are well-defined (i.e., the argument of the logarithm is positive in both instances).
Additionally, to cover the case $\gamma=0$ we set
	$\FF_{\ell}(\emptyset)=\ln|\ell|.$

Further, suppose that $T,\thet,v$ is a rooted decorated tree
	that has at least one legal coloring $\sigma$.
Let $v_1,\ldots,v_\gamma$ be the neighbors of the root vertex $v$ and suppose that $\thet(v)=(i,\ell)$ and
	$\thet(v_j)=(i_j,\ell_j)$ for $j=1,\ldots,\gamma$.
If we remove the root $v$ from $T$, then each of the vertices $v_1,\ldots,v_\gamma$ lies in a connected component $T_i$
of the resulting forest.
By considering the restrictions $\thet_i$ of $\thet$ to the vertex set of $T_i$, we obtain decorated trees $T_i,\thet_i$.
Recall that $\mu_{T_j,\thet_j,v_j}$ denotes the distribution of the color of the root in a random legal coloring of $T_j,\thet_j,v_j$.
Since $\sigma$ is a legal coloring, for $h=\sigma(v)$ for all $j\in[\gamma]$ we have $\mu_{T_j,\thet_j,v_j}<1$.
Thus, we can define
	$$\FF(T,\thet,v)=\FF_{\ell}(\mu_{T_1,\thet_1,v_1},\ldots,\mu_{T_\gamma,\thet_\gamma,v_\gamma}).$$

\begin{fact}\label{Fact_BetheFreeEnergy}
Let $T,\thet$ be a decorated tree 
such that $\cZ(T,\thet)\geq1$.
Then
	$\ln \cZ(T,\thet) = \sum_{v \in V(T)} \FF (T,\thet,v).$
\end{fact}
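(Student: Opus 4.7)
The plan is to interpret $\FF(T,\thet,v)$ as a local Bethe free-entropy contribution expressed in terms of Belief Propagation messages on the tree, and then to establish the standard ``Bethe identity on trees'' by an induction on the height of $T$.

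For every ordered pair $(u,w)$ of adjacent vertices of $T$, let $T_{u\to w}$ denote the connected component of $T\setminus\{u,w\}$ that contains $u$, and let $\mu_{u\to w}\in\Omega$ be the distribution of the color of $u$ in a uniformly random legal coloring of $T_{u\to w}$. These marginals are well defined since any legal coloring of $T$ restricts to one of $T_{u\to w}$, so $\cZ(T_{u\to w})\geq 1$. Now fix $v$ with $\thet(v)=(i,\ell)$ and neighbors $v_1,\ldots,v_\gamma$, and set $\mu_j=\mu_{v_j\to v}$. Since the subtree hanging off $v_i$ (for $i\neq j$) is the same in $T$ and in the cavity tree $T_{v\to v_j}$, the messages entering $v$ in $T_{v\to v_j}$ are exactly $(\mu_i)_{i\neq j}$, and inspection of~(\ref{eqBell}) yields
\begin{equation*}
\cB_\ell[\mu_1,\ldots,\hat\mu_j,\ldots,\mu_\gamma]=\mu_{v\to v_j}.
\end{equation*}
Writing $Z_v=\sum_{h\in\ell_v}\prod_{u\in N(v)}(1-\mu_{u\to v}(h))$ and $Z_{\{u,w\}}=1-\sum_h \mu_{u\to w}(h)\mu_{w\to u}(h)$, this identifies $\FF^v_\ell(\mu_1,\ldots,\mu_\gamma)=\ln Z_v$ and $\FF^e_\ell(\mu_1,\ldots,\mu_\gamma)=\sum_{j=1}^\gamma \ln Z_{\{v,v_j\}}$. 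Summing over $v$ and using that each edge contributes to $\FF^e$ at both of its endpoints,
\begin{equation*}
\sum_{v\in V(T)}\FF(T,\thet,v)=\sum_{v\in V(T)}\ln Z_v-\sum_{e\in E(T)}\ln Z_e.
\end{equation*}

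What remains is to prove $\cZ(T,\thet)=\prod_{v}Z_v\big/\prod_{e}Z_e$. Root $T$ at an arbitrary vertex $r$ and, for each $u$, let $T_u$ be the subtree rooted at $u$ and set $Z_u^{\mathrm{up}}=\sum_{h\in\ell_u}\prod_{c}(1-\mu_{c\to u}(h))$, the product ranging over the children $c$ of $u$ (so $Z_u^{\mathrm{up}}=|\ell_u|$ when $u$ is a leaf). Conditioning on the color of $u$ in a random legal coloring of $T_u$ and using that the subtrees rooted at the children are conditionally independent given this color yields $\cZ(T_u)=Z_u^{\mathrm{up}}\prod_{c}\cZ(T_c)$, and an immediate induction on height gives $\cZ(T)=\prod_{u}Z_u^{\mathrm{up}}$. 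Finally, for every non-root $u$ with parent $p(u)$, the BP recursion identifies $\mu_{u\to p(u)}(h)\propto\vecone_{h\in\ell_u}\prod_c(1-\mu_{c\to u}(h))/Z_u^{\mathrm{up}}$, and a one-line computation gives $Z_u=Z_u^{\mathrm{up}}\cdot Z_{\{u,p(u)\}}$. Combined with the bijection $u\mapsto\{u,p(u)\}$ between non-root vertices and edges of $T$, this yields $\cZ(T)=\prod_v Z_v/\prod_e Z_e$ and hence the claim.

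There is no real obstacle: the content is the classical Bethe free-entropy identity on trees, and the only subtlety lies in checking that the denominator defining $\cB_\ell$ is nonzero, which follows from $\cZ(T_{v\to v_j})\geq 1$ under the hypothesis $\cZ(T,\thet)\geq 1$. The main ``content'' of the proof is the bookkeeping match between the two local pieces $\FF^v_\ell, \FF^e_\ell$ of $\FF$ and the vertex/edge factors $Z_v, Z_e$ of the Bethe product, together with the factor $1/2$ that accounts for each edge being incident to two vertices.
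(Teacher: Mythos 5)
Your proof is correct, but it takes a genuinely different route from the paper. The paper disposes of this fact in two lines: it observes that $\sum_{v}\FF(T,\thet,v)$ is precisely the Bethe free entropy of the uniform distribution over legal colorings of $(T,\thet)$ and then invokes the general result \cite[\Prop~3.7]{DeMo09} that the Bethe free entropy equals $\ln$ of the partition function for tree-structured Gibbs measures. You instead re-derive this identity from scratch: you identify the arguments $\mu_j$ of $\FF(T,\thet,v)$ as exact cavity marginals $\mu_{v_j\to v}$, use exactness of the $\cB_\ell$-recursion on trees to rewrite $\sum_v\FF(T,\thet,v)=\sum_v\ln Z_v-\sum_e\ln Z_e$, and then prove $\cZ(T,\thet)=\prod_vZ_v/\prod_eZ_e$ by rooting the tree, establishing $\cZ(T)=\prod_uZ_u^{\mathrm{up}}$ by induction, and telescoping via $Z_u=Z_u^{\mathrm{up}}Z_{\{u,p(u)\}}$. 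What the paper's citation buys is brevity and no duplication of a standard result; what your argument buys is self-containedness and a transparent accounting of exactly where the hypothesis $\cZ(T,\thet)\geq1$ enters (positivity of the vertex factors, of the $\cB_\ell$-denominators, and of the edge factors). Two small points to tighten: your $T_{u\to w}$ should be defined as the component of $u$ in $T$ minus the \emph{edge} $\{u,w\}$ (as written, $T\setminus\{u,w\}$ deletes $u$ itself); and since you divide by $\prod_eZ_e$ and take its logarithm, you should state explicitly that $Z_{\{u,p(u)\}}=0$ would force $\mu_{u\to p(u)}$ and $\mu_{p(u)\to u}$ to be the same Dirac mass, which contradicts $\cZ(T,\thet)\geq1$ because any legal coloring of $T$ restricts to legal colorings of both cavity trees and assigns $u$ and $p(u)$ distinct colors.
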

\begin{proof}
This follows from \cite[\Prop~3.7]{DeMo09}.
More specifically, let $(i_v,\ell_v)=\thet(v)$ be the type of vertex $v$.
In the terminology of~\cite{DeMo09} (and of the physicists ``cavity method''),
$\FF(T,\thet,v)$ is the {\em Bethe free entropy} of the Boltzmann distribution
	$$\nu:\brk k^{V(T)}\ra[0,1],\qquad
	\nu(\tau)=\frac1{\cZ(T,\thet)}\prod_{v\in V(T)}\vecone_{\tau(v)\in\ell_v}\cdot
		\prod_{e=\cbc{u,w}\in E(T)}\vecone_{\tau(u)\neq\tau(w)}.$$
Thus, $\nu$ is simply the uniform distribution over legal $k$-colorings of $T,\thet$, and $\cZ(T,\thet)$ is its partition function.
\end{proof}

Let $\T$ denote the random rooted decorated tree $\T_{d,k,\vec q^*}$.
Moreover, for $(i,\ell)\in\cT$ we let $\T_{i,\ell}$ denote the random tree $\T$ given that the root has type $(i,\ell)$.
The starting point of the proof is the following key observation.
Furthermore, if $(T,\thet,v)$ is a rooted decorated tree, then we let $(T,\thet,v)^{\star}$ signify
the isomorphism class of the random rooted decorated tree $(T,\thet,u)$ obtain from $(T,\thet,v)$ by choosing a vertex $u$ of $T$ uniformly
at random and rooting the tree at $u$.
In other words, $(T,\thet,v)^{\star}$ is obtained by re-rooting $(T,\thet,v)$ at random vertex.

\begin{lemma}\label{Lemma_unimodular}
Let $\T^\star$ be the random rooted decorated tree obtained by re-rooting $\T$ at a random vertex.
Then the distribution of $\T^\star$ coincides with the distribution of $\T$.
\end{lemma}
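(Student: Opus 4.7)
The plan is to derive an explicit product formula for $\pr[\T=(T,\thet,v_0)]$ that exhibits this probability as a rooting-invariant quantity divided by $|\Aut(T,\thet,v_0)|$. Using the Poisson structure of $\GW(d,k,\vec q^*)$ (an individual of type $(i,\ell)$ independently spawns $\Po(d'q^*_{i',\ell'})$ children of each type $(i',\ell')\in\cT_{i,\ell}$) and recursing on the levels of the tree, I obtain, for any finite isomorphism class $(T,\thet,v_0)$,
\[
\pr[\T=(T,\thet,v_0)]\;=\;q^*_{\thet(v_0)}\prod_{u\in V(T)}\exp\Bigl(-\sum_{(i',\ell')\in\cT_{\thet(u)}}d'q^*_{i',\ell'}\Bigr)\prod_{u\in V(T)}\prod_{X}\frac{\bigl(d'q^*_{\thet(\mathrm{rt}(X))}\bigr)^{N_u(X)}}{N_u(X)!},
\]
where $X$ ranges over isomorphism classes of rooted decorated trees, $\mathrm{rt}(X)$ denotes the type of $X$'s root, and $N_u(X)$ is the number of children of $u$ (in the rooting at $v_0$) whose pendant subtree lies in the class $X$.

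Two observations simplify this formula. First, the product $\prod_u\prod_X(d'q^*_{\thet(\mathrm{rt}(X))})^{N_u(X)}$ telescopes to $(d')^{|E(T)|}\prod_{w\ne v_0}q^*_{\thet(w)}$, since each edge $\{u,w\}$ of $T$ contributes exactly the factor $d'q^*_{\thet(w)}$ where $w$ is the child endpoint in the rooting at $v_0$; multiplying by the root factor $q^*_{\thet(v_0)}$ yields the rooting-invariant $\prod_{w\in V(T)}q^*_{\thet(w)}$. Second, by the classical recursive description of rooted-tree automorphisms---an automorphism permutes the children of each vertex within isomorphism classes of pendant subtrees and acts recursively within each subtree---one has $\prod_u\prod_X N_u(X)!=|\Aut(T,\thet,v_0)|$. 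Combining these two simplifications, we arrive at
\[
\pr[\T=(T,\thet,v_0)]=\frac{C(T,\thet)}{|\Aut(T,\thet,v_0)|},
\]
where $C(T,\thet)=(d')^{|E(T)|}\prod_{w\in V(T)}q^*_{\thet(w)}\cdot\prod_{u\in V(T)}\exp\bigl(-\sum_{(i',\ell')\in\cT_{\thet(u)}}d'q^*_{i',\ell'}\bigr)$ depends only on the underlying unrooted decorated tree.

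Finally, the orbit-stabilizer identity for the action of $\Aut(T,\thet)$ on $V(T)$ gives $|\Aut(T,\thet,v_0)|=|\Aut(T,\thet)|/|\mathrm{Orb}(v_0)|$, so $\pr[\T=(T,\thet,v_0)]$ is proportional to $|\mathrm{Orb}(v_0)|$ with a constant depending only on $(T,\thet)$. This is equivalent to saying that conditional on the unrooted decorated tree $(T,\thet)$, the root of $\T$ is uniformly distributed on $V(T)$---which is precisely the defining property of $\T^\star$---and hence $\T\stackrel{d}{=}\T^\star$. The delicate point is the bookkeeping in the first step: among same-type offspring the Poisson formula must be indexed by isomorphism classes $X$ of pendant subtrees rather than merely by offspring types $(i',\ell')\in\cT$, and it is exactly this finer indexing that makes the denominator of the product formula equal to $|\Aut(T,\thet,v_0)|$ rather than to a coarser quantity.
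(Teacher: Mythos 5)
Your proof is correct, but it takes a genuinely different route from the paper's: the paper settles this lemma in one line by appealing to the general fact that Galton--Watson trees are unimodular in the sense of \cite{BordenaveCaputo}, whereas you give a self-contained computation tailored to the Poisson structure. Your expansion of $\pr[\T\cong(T,\thet,v_0)]$ as a product over vertices is the standard formula for a multi-type branching process with Poisson offspring, the identity $\prod_{u}\prod_X N_u(X)!=|\Aut(T,\thet,v_0)|$ is the correct iterated-wreath-product description of the automorphism group of a rooted decorated tree, and the orbit--stabilizer step correctly converts the rooting-invariance of $C(T,\thet)$ into $\pr[\T\cong(T,\thet,w)]\propto|\mathrm{Orb}(w)|$, which is exactly what is needed: re-rooting at a uniform vertex then leaves the law of the rooted isomorphism class unchanged. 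What your argument buys, and what the citation hides, is a transparent view of where the hypotheses enter: each edge $\{u,w\}$ contributes the factor $d'q^*_{\thet(w)}$ with $w$ the child endpoint, and since the resulting edge weight $q^*_{\thet(u)}\cdot d'q^*_{\thet(w)}$ is symmetric in the two endpoint types (and the exponential and $1/N_u(X)!$ factors depend only on vertex data), the formula is rooting-invariant up to the automorphism denominator---for a non-Poisson offspring law this invariance would fail, so the Poisson structure is genuinely the crux. Two points are worth making explicit in your write-up: the argument presupposes that $\T$ is almost surely finite (sub-criticality, \Lem~\ref{Lemma_GW}), both so that $\T^\star$ is well defined and so that the recursive expansion exhausts all the probability mass; and the phrase ``the root is uniform given the unrooted tree'' should be read at the level of isomorphism classes, i.e.\ precisely as the proportionality to orbit sizes that your orbit--stabilizer computation establishes, which indeed suffices to conclude that $\T$ and $\T^\star$ have the same distribution.
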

\begin{proof}
This follows from the general fact that Galton-Watson trees are unimodular in the sense of~\cite{BordenaveCaputo}.
\end{proof}

\begin{corollary}\label{Cor_unimodular}
We have $\Erw\brk{\frac{\ln\cZ(\T)}{|\T|}}=\Erw[\FF(\T)]$.
\end{corollary}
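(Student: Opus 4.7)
The plan is to combine the two ingredients that have just been put in place: the tree identity in Fact~\ref{Fact_BetheFreeEnergy}, which decomposes $\ln\cZ$ into a sum over vertices, and the unimodularity stated in Lemma~\ref{Lemma_unimodular}, which allows one to swap a fixed root for a uniformly random one.

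Concretely, I would first note that by Lemma~\ref{Lemma_GW} the tree $\T=\T_{d,k,\vec q^*}$ is finite almost surely, so $|\T|<\infty$ and $\cZ(\T)\geq1$ hold with probability one. Next, since $\FF(T,\thet,v)$ is defined as a function of the rooted tree, I would interpret $\FF(\T)=\FF(\T,\thet,v_0)$, where $v_0$ is the root of $\T$, and likewise $\FF(\T^\star)=\FF(\T,\thet,u)$ for a uniformly chosen vertex $u\in V(\T)$ (independent of $\T$ given $\T$). By the law of total expectation,
\[
\Erw\bigl[\FF(\T^\star)\bigr]=\Erw\!\left[\frac{1}{|\T|}\sum_{v\in V(\T)}\FF(\T,\thet,v)\right].
\]
Applying Fact~\ref{Fact_BetheFreeEnergy} vertex-wise (which is legitimate since $\cZ(\T)\geq1$ a.s.) converts the inner sum into $\ln\cZ(\T)$, so the right-hand side equals $\Erw[|\T|^{-1}\ln\cZ(\T)]$.

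Finally, Lemma~\ref{Lemma_unimodular} says $\T^\star$ has the same distribution as $\T$, and $\FF$ is isomorphism-invariant, so $\Erw[\FF(\T^\star)]=\Erw[\FF(\T)]$. Chaining the two equalities yields the claim. The only potential obstacle is a bookkeeping one, namely checking that $\FF(T,\thet,v)$ is indeed a measurable, isomorphism-invariant function of the rooted decorated tree (so that the re-rooting identity applies cleanly) and that the sums and expectations are well-defined; both follow from the explicit definition of $\FF_\ell$ in terms of the distributions $\mu_{T_j,\thet_j,v_j}$ of the color of the root in the pending subtrees, together with the uniform bound $|\FF(T,\thet,v)|\leq O(\deg(v)\ln k)$ combined with the finite mean offspring established in the proof of Lemma~\ref{Lemma_subcrit}.
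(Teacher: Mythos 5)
Your argument is essentially the paper's own proof: decompose $\ln\cZ(\T)$ as a vertex sum via Fact~\ref{Fact_BetheFreeEnergy}, recognize the averaged sum as $\Erw[\FF(\T^\star)]$ for a uniformly re-rooted tree, and invoke the unimodularity of Lemma~\ref{Lemma_unimodular} to replace $\T^\star$ by $\T$; the paper merely writes the same computation as an explicit double sum over isomorphism classes of rooted decorated trees. The proposal is correct.
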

\begin{proof}
Letting $(T,\thet,v)$ range over rooted decorated trees, we find
	\begin{align*}
	\Erw\brk{\frac{\ln\cZ(\T)}{|\T|}}&=\sum_{(T,\thet,v)}\pr\brk{\T\ism(T,\thet,v)}\cdot\frac{\ln\cZ(T,\thet,v)}{|V(T)|}\\
		&=\sum_{(T,\thet,v)}\sum_{u\in V(T)}\frac{\pr\brk{\T\ism(T,\thet,v)}\FF(T,\thet,u)}{|V(T)|}&[\mbox{by Fact~\ref{Fact_BetheFreeEnergy}}]\\
		&=\sum_{(T,\thet,v)}\sum_{u\in V(T)}\frac{\pr\brk{\T\ism(T,\thet,u)}\FF(T,\thet,u)}{|V(T)|}&[\mbox{by \Lem~\ref{Lemma_unimodular}}]\\
		&=\sum_{(T,\thet,v)}\pr\brk{\T\ism(T,\thet,v)}\FF(T,\thet,v)=\Erw[\FF(\T)],
	\end{align*}
as claimed.
\end{proof}

\begin{lemma}\label{Lemma_FFqil}
We have
	$$\Erw[\FF(\T_{i,\ell})]=\sum_{\vec\gamma\in\Gamma_{i,\ell}}p_{i,\ell}(\vec\gamma)
			\int_{\Omega^{\vec\gamma}}\FF_\ell^v(\mu_{\vec\gamma})
				\dd\pi_{\vec\gamma}(\mu_{\vec\gamma})
					-\sum_{(\hat i,\hat\ell)\in\cT_{i,\ell}}\frac{q_{\hat i,\hat\ell}d'}2
						\int_{\Omega^2}\ln\brk{1-\sum_{h=1}^k\hat\mu(h)\mu(h)}
							\dd\pi_{i,\ell}(\mu)\tensor\pi_{\hat i,\hat \ell}(\hat\mu).$$
\end{lemma}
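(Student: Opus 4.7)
My plan is to split $\FF(\T_{i,\ell})$ into its vertex and edge contributions and evaluate each using the branching-process description of $\T_{i,\ell}$, with the edge part handled by a Mecke-type (size-biased Poisson) identity. Write $v_1,\ldots,v_\gamma$ for the children of the root of $\T_{i,\ell}$ and set $\mu_j=\mu_{T_j,\thet_j,v_j}$, so that $\FF(\T_{i,\ell})=\FF_{\ell}^v(\mu_1,\ldots,\mu_\gamma)-\tfrac12\FF_{\ell}^e(\mu_1,\ldots,\mu_\gamma)$. By the definition of $\GW(d,k,\vec q^*)$ the vector $\vec\gamma=(\gamma_{\hat i,\hat\ell})_{(\hat i,\hat\ell)\in\cT_{i,\ell}}$ of offspring counts of the root is a product of independent $\Po(d'q^*_{\hat i,\hat\ell})$ variables, and conditional on $\vec\gamma$ the $\mu_j$ are mutually independent with the message of a type-$(\hat i,\hat\ell)$ child distributed as $\pi_{\hat i,\hat\ell}$; this is the content of~(\ref{eqBPvindication}) combined with the identification of $\pi_{i,\ell}$ as the law of $\mu_{\T_{i,\ell}}$ carried out in the proof of Corollary~\ref{Cor_tildeFixedPoint}.

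For the vertex part, conditioning on $\vec\gamma$ and integrating the $\mu_j$ against their conditional product law immediately yields
$$\Erw\brk{\FF_{\ell}^v(\mu_1,\ldots,\mu_\gamma)}=\sum_{\vec\gamma\in\Gamma_{i,\ell}}p_{i,\ell}(\vec\gamma)\int_{\Omega^{\vec\gamma}}\FF_{\ell}^v(\mu_{\vec\gamma})\,\dd\pi_{i,\ell,\vec\gamma}(\mu_{\vec\gamma}),$$
which is the first sum in the statement. The edge part is more subtle because $\FF_{\ell}^e(\mu_1,\ldots,\mu_\gamma)=\sum_j\ln[1-\sum_{h\in\ell}\mu_j(h)\cB_\ell[\mu_1,\ldots,\mu_{j-1},\mu_{j+1},\ldots,\mu_\gamma](h)]$ involves $\cB_\ell$ applied with the $j$-th message removed, which \emph{a priori} has no clean marginal law. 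The right workaround is to split the sum over $j$ by the type of the corresponding child and to exploit that, for each fixed $(\hat i,\hat\ell)\in\cT_{i,\ell}$, the messages of type-$(\hat i,\hat\ell)$ children form a Poisson point process on $\Omega$ of intensity $d'q^*_{\hat i,\hat\ell}\,\dd\pi_{\hat i,\hat\ell}$, independent of the analogous processes for other child types.

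Mecke's formula (equivalently, the size-biased Poisson identity $n\,\pr[\Po(\lambda)=n]=\lambda\,\pr[\Po(\lambda)=n-1]$) applied to each such process gives, for any integrable $g$,
$$\Erw\brk{\sum_{k:\thet(v_k)=(\hat i,\hat\ell)}g(\mu_{v_k},(\mu_{v_{k'}})_{k'\neq k})}=d'q^*_{\hat i,\hat\ell}\int_\Omega\Erw\brk{g(\hat\mu,(\mu_j)_{j=1}^\gamma)}\,\dd\pi_{\hat i,\hat\ell}(\hat\mu),$$
with $\hat\mu\sim\pi_{\hat i,\hat\ell}$ independent of the original offspring configuration at the root. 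Specialising to $g(\hat\mu,\nu)=\ln[1-\sum_h\hat\mu(h)\cB_\ell[\nu](h)]$ and invoking the fixed-point identity $\cB_\ell[\mu_1,\ldots,\mu_\gamma]=\mu_{\T_{i,\ell}}\sim\pi_{i,\ell}$ from~(\ref{eqBPvindication}) turns the inner expectation into an integral against $\pi_{i,\ell}\otimes\pi_{\hat i,\hat\ell}$. Summing over $(\hat i,\hat\ell)\in\cT_{i,\ell}$ and including the $\tfrac12$ in front of $\FF_{\ell}^e$ reproduces the second sum of the claim; subtracting from the vertex contribution gives the lemma. The only substantive step is the Mecke identity: it absorbs the missing $j$-th coordinate in $\cB_\ell[\mu_{-j}]$ at the price of the Poisson-rate factor $d'q^*_{\hat i,\hat\ell}$, after which the fixed-point characterisation of $\pi_{i,\ell}$ via $\cB_\ell$ immediately produces the product measure on the right-hand side.
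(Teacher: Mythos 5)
Your proposal is correct and follows essentially the same route as the paper: the same split of $\FF(\T_{i,\ell})$ into the vertex term $\FF_\ell^v$ and edge term $\FF_\ell^e$, the same use of the recursive branching-process description (children's root-color distributions i.i.d.\ $\pi_{\hat i,\hat\ell}$ given the Poisson offspring counts), and the same key step for the edge part -- your Mecke/size-biased Poisson identity is exactly the paper's elementary relation $\Erw[\vecone_{g\geq1}gX(g-1)]=\Erw[g]\Erw[X(g)]$ applied per child type, followed by the fixed-point identification of $\cB_\ell[\mu_{\vec\gamma}]$ with a sample from $\pi_{i,\ell}$. The point-process packaging is a slightly cleaner way to state the same computation, but it is not a different argument.
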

\begin{proof}
Writing $\pi=\pi_{d,k,\vec q^*}$ for the distribution of $\mu_{\T}$, we know from \Cor~\ref{Cor_tildeFixedPoint}
that $\pi_{i,\ell}$ is the distribution of $\mu_{\T_{i,\ell}}$ for any type $(i,\ell)$.
Furthermore, the distribution of $\T_{i,\ell}$ can be described by the following recurrence:
	there is a root $v_0$ of type $(i,\ell)$, to which we attach for each $(i',\ell')\in\cT_{i,\ell}$
		independently a number $\gamma_{i',\ell'}=\Po(d'q_{i',\ell'}^*)$
	of trees $(T_{i',\ell',j})_{j=1,\ldots,\gamma_{i',\ell'}}$ that are chosen independently from the distribution $\T_{i',\ell'}$.
By independence, the distribution of the color of the root of each $T_{i',\ell',j}$ is just an independent sample from the distribution $\pi_{i',\ell'}$.
Therefore, we obtain the expansion
	\begin{align*}
	\Erw[\FF(\T_{i,\ell})]&=\sum_{\vec\gamma\in\Gamma_{i,\ell}}\int_{\Omega^{\vec\gamma}}\FF_\ell(\mu_{\vec\gamma})
			p_{i,\ell}(\vec\gamma)\dd\pi_{\vec\gamma}(\mu_{\vec\gamma}).
	\end{align*}
Substituting in the
definition of $\FF_\ell$, we obtain
	\begin{align*}
	\Erw[\FF(\T_{i,\ell})]&=I_{i,\ell}-\frac12J_{i,\ell},&\mbox{where}\\
	I_{i,\ell}&=\sum_{\vec\gamma\in\Gamma_{i,\ell}}p_{i,\ell}(\vec\gamma)
			\int_{\Omega^{\vec\gamma}} \FF_\ell^v(\mu_{\vec\gamma})
				\dd\pi_{\vec\gamma}(\mu_{\vec\gamma}),\quad
	J_{i,\ell}=\sum_{\vec\gamma\in\Gamma_{i,\ell}}p_{i,\ell}(\vec\gamma)
				\int_{\Omega^{\vec\gamma}} \FF_\ell^e(\mu_{\vec\gamma})
					\dd\pi_{\vec\gamma}(\mu_{\vec\gamma}).
	\end{align*}
Further, by the definition of $\FF_\ell^e$ we have
	\begin{align*}
	J_{i,\ell}
		&=\sum_{\vec\gamma\in\Gamma_{i,\ell}}p_{i,\ell}(\vec\gamma)
			\sum_{(\hat i,\hat\ell)\in\cT_{i,\ell}}\sum_{\hat j=1}^{\gamma_{\hat i,\hat\ell}}
				\int_{\Omega^{\vec\gamma}} \ln\brk{1-\sum_{h\in\ell}\mu_{\hat i,\hat \ell,\hat j}(h)\cB[(\mu_{i',\ell',j})_{(i',\ell',j)\neq(\hat i,\hat\ell,\hat j)}](h)}
				\dd\pi_{\vec\gamma}(\mu_{\vec\gamma})
		\\&=\sum_{(\hat i,\hat\ell)\in\cT_{i,\ell}} \sum_{g \geq 1} p_{q_{\hat i, \hat \ell}^*}(g) \sum_{\hat j=1}^g \sum_{\vec\gamma\in\Gamma_{i,\ell}} p_{i,\ell}(\vec\gamma) \vecone_{\gamma_{\hat i, \hat \ell} = g} 
			\\& \quad\quad\quad\quad\quad\quad	\int_{\Omega \times \Omega^{\vec\gamma}} \ln\brk{1-\sum_{h\in\ell}\mu(h)\cB[(\mu_{i',\ell',j})_{(i',\ell',j)\neq(\hat i,\hat\ell,1)}](h)}
				\dd \pi_{\hat i,\hat \ell}(\mu) \tensor \dd\pi_{\vec\gamma}(\mu_{\vec\gamma})
		\\&=\sum_{(\hat i,\hat\ell)\in\cT_{i,\ell}} \sum_{g \geq 1} p_{q_{\hat i, \hat \ell}^* d'}(g) \sum_{\hat j=1}^g \sum_{\vec\gamma\in\Gamma_{i,\ell}} p_{i,\ell}(\vec\gamma) \vecone_{\gamma_{\hat i, \hat \ell} = g-1} 
				\\
				&\qquad\qquad\int_{\Omega \times \Omega^{\vec\gamma}} \ln\brk{1-\sum_{h\in\ell}\mu(h)\cB[\mu_{\vec \gamma}](h)}
				\dd \pi_{\hat i,\hat \ell}(\mu) \tensor \dd\pi_{\vec\gamma}(\mu_{\vec\gamma}).
	\end{align*}
To simplify this, we use the following elementary relation:
	if $X:\ZZ\ra\RRpos$ is a function and $g$ is a Poisson random variable,
	then $\Erw[\vecone_{g \geq 1}g X(g-1)]=\Erw[g] \Erw[X(g)]$.
Applying this observation to 
$$ X(g) =  \sum_{\vec \gamma \in \Gamma_{i,\ell}} p_{i,\ell}(\vec \gamma) \vecone_{\gamma_{i,\ell} = g-1} 
				\int_{\Omega \times \Omega^{\vec\gamma}}\ln\brk{1-\sum_{h\in\ell}\mu(h)\cB[\mu_{\vec \gamma}](h)}
				\dd \pi_{\hat i,\hat \ell}(\mu) \tensor \dd\pi_{\vec\gamma}(\mu_{\vec\gamma}), $$
we obtain
	\begin{align*}
	J_{i,\ell}	&=\sum_{(\hat i,\hat\ell)\in\cT_{i,\ell}}q_{\hat i,\hat\ell}d'
				\sum_{\vec\gamma\in\Gamma_{i,\ell}}p_{i,\ell}(\vec\gamma)
				\int_{\Omega \times \Omega^{\vec\gamma}} \ln\brk{1-\sum_{h\in\ell}\mu_{\hat i,\hat \ell}(h)\cB[\mu_{\vec\gamma}](h)}
					\dd\pi_{\hat i,\hat\ell}(\mu) \tensor \dd\pi_{\vec\gamma}(\mu_{\vec\gamma}).
	\end{align*}
Now, since $\pi$ is a fixed point of $\cF_{d,k}$, the distribution of the measure $\cB[\mu_{\vec \gamma}]$ is just $\pi_{i,\ell}$.
Hence,
	$$J_{i,\ell}=\sum_{(\hat i,\hat\ell)\in\cT_{i,\ell}}q_{\hat i,\hat\ell}d'\int_{\Omega^2} \ln\brk{1-\sum_{h\in\ell}\hat\mu(h)\mu(h)}
		\dd\pi_{i,\ell}(\mu)\tensor \dd \pi_{\hat i,\hat \ell}(\hat\mu).$$
Thus, we obtain the assertion.
\end{proof}

\begin{lemma}\label{Lemma_Bethe}
We have
	$\Erw\brk{\FF(\T_{d,k,\vec q^*})}=\FF_{d,k}(\pi_{d,k,q^*}).$
\end{lemma}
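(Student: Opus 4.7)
The plan is to prove $\Erw[\FF(\T_{d,k,\vec q^*})]=\FF_{d,k}(\pi_{d,k,\vec q^*})$ by conditioning on the root type of $\T := \T_{d,k,\vec q^*}$, applying Lemma~\ref{Lemma_FFqil}, and then reorganising $\FF_{d,k}(\pi)$ (with $\pi := \pi_{d,k,\vec q^*}$) into the same form via the distributional fixed point. First, the root of $\T$ has type $(i,\ell)\in\cT$ with probability $q^*_{i,\ell}$, so
\[
\Erw[\FF(\T)] \;=\; \sum_{(i,\ell)\in\cT} q^*_{i,\ell}\,\Erw[\FF(\T_{i,\ell})] \;=\; V - \tfrac{1}{2}E,
\]
where, by Lemma~\ref{Lemma_FFqil}, $V := \sum_{(i,\ell)} q^*_{i,\ell}\,I_{i,\ell}$ is the ``vertex part'' and $E := \sum_{(i,\ell)} q^*_{i,\ell}\,J_{i,\ell}$ is the ``edge part''.

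For the edge comparison, I will unfold $\FF_{d,k}^e(\pi)$ using Lemma~\ref{Lemma_hardFields}, which yields the stratification $\dd\pi_h=\sum_{\ell\ni h}kq^*_{h,\ell}\dd\pi_{h,\ell}$. Combined with $d'=dk/(k-1)$ this gives
\[
\FF_{d,k}^e(\pi) \;=\; -\frac{d'}{2}\sum_{\substack{(h_1,\ell_1),(h_2,\ell_2)\in\cT\\ h_1\neq h_2}} q^*_{h_1,\ell_1}q^*_{h_2,\ell_2}\int\ln\brk{1-\sum_h\mu_1(h)\mu_2(h)}\dd\pi_{h_1,\ell_1}\otimes\dd\pi_{h_2,\ell_2},
\]
which I compare term-by-term with $-E/2$, whose index set $(\hat i,\hat\ell)\in\cT_{i,\ell}$ requires $|\ell|,|\hat\ell|>1$ and $\ell\cap\hat\ell\neq\emptyset$ but places \emph{no} restriction on $(i,\hat i)$. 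Since the integrand vanishes when $\ell\cap\hat\ell=\emptyset$, the two expressions agree on the bulk and differ only by a finite discrepancy $\Delta$ consisting of boundary contributions: the $i=\hat i$ pairs appearing in $E$ but not in $\FF_{d,k}^e(\pi)$, minus the pairs with $|\ell|=1$ or $|\hat\ell|=1$ (frozen neighbours) appearing in $\FF_{d,k}^e(\pi)$ but not in $E$.

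For the vertex part, I will apply Poisson thinning to the i.i.d.\ samples $\mu^{(j)}_{h'}\sim\pi_{h'}$ in $\FF_{d,k}^v(\pi;i;\vec\gamma)$: each such sample is either a ``hard'' atom $\delta_{h'}$ (probability $q^*$) or a ``soft'' draw from $\pi_{h',\ell'}$ for some $\ell'\ni h'$ with $|\ell'|>1$. Thinning splits the $\Po(d/(k-1))$ counts into $\Po(d'q^*/k)$ hard messages and independent $\Po(d'q^*_{h',\ell'})$ soft messages. Declaring $h\in\ell$ iff no hard message has colour $h$, the identity $q^*_{i,\ell}=\frac1k(e^{-d'q^*/k})^{|\ell|-1}(1-e^{-d'q^*/k})^{k-|\ell|}$ together with $d/(k-1) = d'/k$ identifies the joint distribution of $(i,\ell)$ with the root-type distribution of the GW process, while the soft-message structure matches the offspring distribution in $\cT_{i,\ell}$. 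Since hard messages $\delta_h$ contribute $1-\delta_h(h)=0$ to the logarithm in $\FF_{d,k}^v$, the sum over $h\in[k]$ collapses to $h\in\ell$, matching $\FF_\ell^v$ in $I_{i,\ell}$. The reparameterisation produces exactly $V$ plus a correction that equals $\Delta$, giving $\FF_{d,k}(\pi)=V-\tfrac12 E$.

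The main obstacle is the telescoping cancellation of $\Delta$ between the edge and vertex parts. One must verify that the ``$i=\hat i$'' edge contributions in $E$ correspond, via the Poisson thinning of the vertex part, to the interaction of two soft messages at the same colour, while the boundary $|\ell_i|=1$ contributions in $\FF^e_{d,k}$ correspond to hard messages being absorbed into the type $(i,\ell)$ of the root. Establishing this identification cleanly, and in particular the exact factors, relies on the fixed point equation $q^*=(1-e^{-d'q^*/k})^{k-1}$ and on the normalisation $Z_\gamma(\pi)=(k-1)^\gamma/k^{\gamma-1}$ from Lemma~\ref{Cor_balanced_fix}; these are the combinatorial identities that make the physicists' Bethe functional agree with the tree computation.
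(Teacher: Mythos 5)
Your skeleton---conditioning on the root type, invoking \Lem~\ref{Lemma_FFqil}, stratifying $\dd\pi_h=\sum_{\ell\ni h}kq^*_{h,\ell}\,\dd\pi_{h,\ell}$ via \Lem~\ref{Lemma_hardFields}, and Poisson-thinning the $\Po(d/(k-1))$ incoming messages into hard atoms and soft draws---is the same machinery the paper uses (its re-indexing over $\overline\Gamma_i$ is precisely your thinning read in the opposite direction). The gap is that the decisive step, namely identifying your discrepancy $\Delta$ and verifying that it cancels, is asserted rather than carried out, and as sketched it would not go through. The ``$i=\hat i$'' half of $\Delta$ is spurious: offspring of a type-$(i,\ell)$ vertex carry a distinguished colour $i'\neq i$ (this is how $\cT_{i,\ell}$ is used throughout, cf.\ the set $\cT_i=\{(i',\ell')\in\cT:i'\neq i\}$ introduced in the computation of $I$), while in $\FF^v_{d,k}(\pi;i;\cdot)$ the incoming messages carry colours $h'\neq i$ only; hence there is no ``interaction of two soft messages at the same colour'' on the vertex side for such edge terms to telescope against, and had they genuinely been present in $E$ they would survive, not cancel.

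More importantly, you never pin down the true vertex-side correction, and your thinning paragraph in fact denies its existence: collapsing the sum over $h$ to $h\in\ell$ reproduces $\FF^v_\ell$, and hence $I_{i,\ell}$, only when $|\ell|\geq2$. When the hard messages freeze the root, i.e.\ $\ell=\cbc i$, the tree side contributes $\FF_{\cbc i}(\emptyset)=0$, whereas the Bethe vertex term still retains the factors $1-\mu(i)$ from every soft incoming message whose support contains $i$. These frozen-root terms are exactly what compensates the frozen-endpoint pairs of $\FF^e_{d,k}$ (one message equal to $\atom_{\hat i}$ with $\hat i$ lying in the other support), which you correctly observe have no counterpart in $E$. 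Writing both out using thinning and \Lem~\ref{Lemma_hardFields}, each equals, up to sign,
\[
d'\sum_{h_2\in\brk k}q^*_{h_2,\cbc{h_2}}\sum_{h_1\in\brk k\setminus\cbc{h_2}}\ \sum_{\ell\ni h_1,h_2,\ |\ell|\geq2}q^*_{h_1,\ell}\int_\Omega\ln\brk{1-\mu(h_2)}\,\dd\pi_{h_1,\ell}(\mu),
\]
so the cancellation is exact; note that it uses only the hard-field masses $kq^*_{h,\cbc h}=q^*$ and Poisson thinning, not the normalisation $Z_\gamma(\pi)=(k-1)^\gamma/k^{\gamma-1}$ nor the fixed-point equation in the way you suggest (those enter upstream, in \Lem~\ref{Lemma_separateColors} and \Lem~\ref{Lemma_FFqil}). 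Until this identity is actually carried out your argument is incomplete at its central point; once it is, your route closes and amounts to the paper's computation organised so that these boundary contributions are exhibited and cancelled explicitly rather than absorbed into the part-by-part matching of the vertex and edge terms.
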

\begin{proof}
Summing over all $(i,\ell)\in\cT$, we obtain from \Lem~\ref{Lemma_FFqil} that
	\begin{align*}
	\Erw[\FF(\T)]&=I-\frac12J,&\mbox{where}\\
		I&=\sum_{(i,\ell)\in\cT}q_{i,\ell}^*\sum_{\vec\gamma\in\Gamma_{i,\ell}}p_{i,\ell}(\vec\gamma)
			\int_{\Omega^{\vec\gamma}} \FF_\ell^v(\mu_{\vec\gamma})
				\dd\pi_{\vec\gamma}(\mu_{\vec\gamma}),\\
		J&=d'\sum_{(i,\ell)\in\cT}\sum_{(\hat i,\hat\ell)\in\cT_{i,\ell}}
			q_{i,\ell}^*q_{\hat i,\hat\ell}^*\int_{\Omega^2} \ln\brk{1-\sum_{h=1}^k\hat\mu(h)\mu(h)}
						\dd\pi_{i,\ell}(\mu)\tensor \pi_{\hat i,\hat \ell}(\hat\mu).
	\end{align*}
Recalling that $\dd\pi_{i,\ell}(\mu)=\frac{\vecone_{\mu\in\Omega_\ell}}{kq_{i,\ell}^*}\dd\pi_{i}(\mu)$ and
	$\dd\pi_{\hat i,\hat \ell}(\hat \mu)=\frac{\vecone_{\hat\mu\in\Omega_{\hat\ell}}}{kq_{\hat i,\hat \ell}^*}\dd\pi_{\hat i}(\hat \mu)$, we get
	\begin{align*}
	J&=\frac{d'}{k^2}\sum_{(i,\ell)\in\cT}\sum_{(\hat i,\hat\ell)\in\cT_{i,\ell}}
			\int_{\Omega^2} \ln\brk{1-\sum_{h=1}^k\hat\mu(h)\mu(h)}\vecone_{\mu\in\Omega_\ell}\vecone_{\hat\mu\in\Omega_{\hat\ell}}
						\dd\pi_{i}(\mu)\tensor \pi_{\hat i}(\hat\mu)\\
	&=\frac{d}{k(k-1)}\sum_{i,\hat i\in\brk k:i\neq\hat i}\int_{\Omega^2} \sum_{\ell:(i,\ell)\in\cT}\sum_{\hat\ell:(\hat i,\hat \ell)\in\cT}
		\ln\brk{1-\sum_{h=1}^k\hat\mu(h)\mu(h)}\vecone_{\mu\in\Omega_\ell}\vecone_{\hat\mu\in\Omega_{\hat\ell}}
						\dd\pi_{i}(\mu)\tensor \pi_{\hat i}(\hat\mu)\\
	&=\frac{d}{k(k-1)}\sum_{i,\hat i\in\brk k:i\neq\hat i}
		\int_{\Omega^2} \ln\brk{1-\sum_{h=1}^k\hat\mu(h)\mu(h)}\dd\pi_{i}(\mu)\tensor\pi_{\hat i}(\hat\mu)
			=\FF^e_{d,k}(\pi).
	\end{align*}

It finally remains to simplify the expression for $I$. To do it, we introduce $\cT_i = \{(i',\ell') \in \cT, i' \neq i\}$. We let $\overline{\Gamma}_i$ be the set of non-negative vectors $\vec \ogamma = (\overline{\gamma}_{i',\ell'})_{(i',\ell')\in \cT_i}$. Moreover, we let $\Omega^{\vec \ogamma}=\prod_{(i',\ell')\in \cT_{i}}\prod_{j \in [\ogamma_{i',\ell'}]}\Omega$ and denote its points by
		$\mu_{\vec \ogamma} = (\mu_{i',\ell',j})_{(i',\ell')\in \cT_i, j \in [\ogamma_{i',\ell'}]}$. We note that if $\vec \gamma \in \Gamma_{i,\ell}$ and $\vec \ogamma \in \overline{\Gamma}_i$ are such that: \begin{itemize}
\item[(a)] $\forall i' \in \ell \setminus \{i\}, \ogamma_{i',\{i'\}} =0$,
\item[(b)] $\forall i' \in [k] \setminus \ell, \ogamma_{i',\{i'\}} >0$
\item[(c)] $\forall (i', \ell') \in \cT_{i, \ell}, \gamma_{i',\ell'} = \ogamma_{i',\ell'}$,
\end{itemize}
and that $\mu_{\vec \gamma}, \overline{\mu}_{\vec \ogamma}$ satisfy \begin{itemize}
\item[(d)] $\forall (i',\ell') \in \cT_{i, \ell}, \forall j \in [\gamma_{i', \ell'}], \mu_{i',\ell', j} = \overline{\mu}_{i',\ell',j}$,
\item[(e)] $\forall (i', \ell') \in \cT_i, \forall j \in [\ogamma_{i',\ell'}], \overline{\mu}_{i', \ell', j} \in \Omega_{\ell'}$,
\end{itemize} then 
$$  \prod_{(i', \ell') \in \cT_i} \prod_{j \in [\ogamma_{i', \ell'}]} 1- \overline{\mu}_{i', \ell',j}(h) = \begin{cases} 0 \textrm{ if $h \notin \ell$,} \\  \prod_{(i', \ell') \in \cT_{i,\ell}} \prod_{j \in [\ogamma_{i', \ell'}]} 1- {\mu}_{i', \ell',j}(h) \textrm{ if $h \in \ell$.} \end{cases}$$
Consequently
\beq \label{eq_phi_omu} \FF_\ell^v(\mu_{\vec \gamma} ) = \ln \left[ \sum_{h\in [k]} \prod_{(i', \ell') \in \cT_i} \prod_{j \in [\ogamma_{i', \ell'}]} 1- \overline{\mu}_{i', \ell',j}(h)\right] .\eeq Moreover, choosing the $\ogamma_{i',\ell'}$ from Poisson distributions of parameter $q_{i',\ell'}^* d'$, the event ``(a) and (b)'' happens with probability exactly $k q^*_{i,\ell}$.
This allows to write:
\begin{align*}
I &= \sum_{(i,\ell)\in\cT}q_{i,\ell}^*\sum_{\vec\gamma\in\Gamma_{i,\ell}}\prod_{(i',\ell')\in \cT_{i,\ell}} p_{q^*_{i', \ell'}d'}(\gamma_{i', \ell'})
			\int_{\Omega^{\vec \gamma}} \FF_\ell^v(\mu_{\vec\gamma})
				\bigotimes_{(i,'\ell')\in \cT_{i,\ell}} \bigotimes_{j \in [\hgamma_{i',\ell'}]} \dd \pi_{i',\ell'} (\mu_{i',\ell',j})\\
&= \frac{1}{k} \sum_{(i,\ell)\in\cT}   \sum_{\vec\ogamma\in \overline{\Gamma}_i}\prod_{(i',\ell')\in \cT_i} p_{q^*_{i', \ell'}d'}(\ogamma_{i', \ell'}) \prod_{i' \in \ell \setminus \{i\}} \vecone_{\ogamma_{i',\{i'\}} =0} \prod_{i' \in [k] \setminus \ell} \vecone_{\ogamma_{i',\{i'\}}>0}
		\\ & \hspace{0.6 cm}	\int_{\Omega^{\vec \ogamma}}   \ln \left[ \sum_{h\in [k]} \prod_{(i',\ell')\in \cT_i} \prod_{j \in [\ogamma_{i', \ell'}]} 1- \overline{\mu}_{i', \ell',j}(h)\right]  \prod_{(i',\ell') \in \cT_i} \prod_{j \in [\ogamma_{i',\ell'}]} \frac{ \vecone_{\overline{\mu}_{i',\ell',j} \in \Omega_{\ell'}}}{k q^*_{i',\ell'}}
				\bigotimes_{(i,'\ell')\in \cT_i} \bigotimes_{j \in [\ogamma_{i',\ell'}]} \dd \pi_{i'} (\overline{\mu}_{i',\ell',j})
\\ &=  \frac{1}{k} \sum_{i\in [k]} \sum_{\vec\ogamma\in\overline{\Gamma}_i}\prod_{(i',\ell')\in \cT_i} p_{q^*_{i', \ell'}d'}(\ogamma_{i', \ell'})
\\ & \hspace{0.6 cm}	\int_{\Omega^{\vec \ogamma}}  \ln \left[ \sum_{h\in [k]} \prod_{(i',\ell') \in \cT_i} \prod_{j \in [\ogamma_{i', \ell'}]} 1- \overline{\mu}_{i',\ell',j}(h)\right]  \prod_{(i',\ell') \in \cT_i} \prod_{j \in [\ogamma_{i',\ell'}]} \frac{ \vecone_{\overline{\mu}_{i',\ell',j'} \in \Omega_{\ell'}}}{k q^*_{i',\ell'}}
				\bigotimes_{(i,'\ell')\in \cT_i} \bigotimes_{j \in [\ogamma_{i',\ell'}]} \dd \pi_{i'} (\overline{\mu}_{i',\ell',j}).
\end{align*}
We used (\ref{eq_phi_omu}) to go from the first to the second line, and summed over $\ell \ni i$ to go from the second to the third. Re-indexing the vector $\overline{\mu}_{\vec \ogamma}$ in a vector $\mu_{\vec \gamma}$, $\vec \gamma \in \Gamma_i$ (with $\gamma_{i'} = \sum_{\ell':(i',\ell') \in \cT} \ogamma_{i',\ell'}$), we obtain with \Lem~\ref{Lemma_hardFields}:

\begin{align*}
I &=  \frac{1}{k} \sum_{i \in [k]} \sum_{\vec \gamma \in \Gamma_i} \prod_{i' \neq i} p_{\frac{d}{k-1}}(\gamma_{i'})  \int_{\Omega^{\vec \gamma}}   \ln \left[ \sum_{h\in [k]} \prod_{i' \neq i} \prod_{j \in [\gamma_{i'}]} 1- \mu_{i',j}(h)\right]  
				\bigotimes_{i' \neq i} \bigotimes_{j \in [\gamma_{i'}]} \dd \pi_{i'} (\mu_{i',j})
\\&=  \frac{1}{k} \sum_{i \in [k]} \sum_{\gamma_1, \dots, \gamma_h = 0}^\infty \prod_{i' \in [k]} p_{\frac{d}{k-1}}(\gamma_{i'})  \int_{\Omega^{\gamma_1+ \dots +\gamma_h}}   \ln \left[ \sum_{h\in [k]} \prod_{i' \neq i} \prod_{j \in [\gamma_{i'}]} 1- \mu_{i',j}(h)\right]  
				\bigotimes_{i' \in [k]} \bigotimes_{j \in [\gamma_{i'}]} \dd \pi_{i'} (\mu_{i',j}).
\end{align*}
\end{proof}

\begin{proof}{Proof of \Prop~\ref{Prop_fix}}
The first assertion is immediate from \Lem~\ref{Lemma_GW}, while the second assertion follows from \Lem~\ref{Lemma_softFields}.
The third claim follows by combining \Cor~\ref{Cor_unimodular} with \Lem~\ref{Lemma_Bethe}.
With respect to the last assertion, we observe that for $d=(2k-1)\ln k-2\ln2+o_k(1)$ we have
	$$\ln k+\frac d2\ln(1-1/k)=\frac{\ln 2+o_k(1)}k.$$
Moreover, as $q^*=1-1/k+o_k(1/k)$ by \Lem~\ref{Lemma_GW}, one checks easily that
	\begin{equation}\label{eqUniqueZero1}
	\Erw\brk{\frac{\ln\cZ(\T_{d,k,\vec q^*})}{|\T_{d,k,\vec q^*}|}}=\frac{\ln 2+o_k(1)}k.
	\end{equation}
Further, by \Lem~\ref{Lemma_GW}
	\begin{equation}\label{eqUniqueZero2}
	\frac{\partial}{\partial d}\Erw\brk{\frac{\ln\cZ(\T_{d,k}(\vec q^*))}{|\T_{d,k}(\vec q^*)|}}=\tilde O_k(k^{-2})
		\quad\mbox{while}\quad
		\frac{\partial}{\partial d}\ln k+\frac d2\ln(1-1/k)=\Omega_k(1/k).
		\end{equation}
Combining~(\ref{eqUniqueZero1}) and~(\ref{eqUniqueZero2}) and using the third part of \Prop~\ref{Prop_fix},
we conclude that $\Sigma_k$ has a unique zero $\dc$, as claimed.
\end{proof}

\section{The cluster size}\label{Sec_cluster}

\noindent
The objective in this section is to prove \Prop~\ref{Prop_cluster}.
For technical reasons, we consider a variant of the ``planted model'' $G(n,p',\vec\sigma)$
in which the number of vertices is not exactly $n$ but $n-o(n)$.
	This is necessary because we are going to perform inductive arguments in which small parts of
	the random graph get removed.
Thus, let $\eta=\eta(n)=o(n)$ be a non-negative integer sequence.
Throughout the section, we write $n'=n-\eta(n)$.
Moreover, we let $\vec G=G(n',p',\vec\sigma)$, where $p'=d'/n$ with $d'=kd/(k-1)$ as in~(\ref{eqLemma_plantedCluster1}).
Unless specified otherwise, all statements in this section are understood to hold for {\em any} sequence $\eta=o(n)$.

\subsection{Preliminaries}\label{Sec_cluster_pre}
Assume that $G=(V,E),\sigma$, 
	let $v\in V$ and let $\omega\geq1$ be an integer.
We write $\partial^\omega_{G}(v)$ for the subgraph of $G$ consisting of all vertices at distance at most $\omega$ from $v$.
Moreover, $|\partial^\omega_{G,\sigma}(v)|$ signifies the number of vertices of $\partial^\omega_{G}(v)$.
Where the reference to $G$ is clear from the context, we omit it.
We begin with the following standard fact about the random graph $\G$.

\begin{lemma}\label{Lemma_acyclic}
Let $\omega=10\lceil\ln\ln\ln n\rceil$.
\begin{enumerate}
\item With probability $1-\exp(-\Omega(\ln^2n))$ the random graph $\G$ is such that 
	 $|\partial_{\G}^{\omega}(v)|\leq n^{0.01}$ for all vertices $v$.
\item 	\Whp\ all but $o(n)$ vertices $v$ of $\G$ are such that 
	$\partial_{\G}^{\omega}(v)$ is acyclic.
\end{enumerate}
\end{lemma}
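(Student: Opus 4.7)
The plan is to use standard branching-process and first-moment arguments. Since $d'=dk/(k-1)$ is a constant independent of $n$ and $\omega=\Theta(\ln\ln\ln n)$ grows extremely slowly, the typical size of $\partial^\omega_{\G}(v)$ is $(d')^{O(\omega)}=(\ln\ln n)^{O(1)}\ll n^{0.01}$, and short cycles in any given $\omega$-ball are rare.

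For part~(i), I would expose the breadth-first search from $v$ layer by layer. Let $Z_t=|\cbc{u:d_{\G}(v,u)=t}|$. Conditional on the first $t-1$ layers and using a union bound over the potential cross edges from layer $t-1$ to unexplored vertices, $Z_t$ is stochastically dominated by $\mathrm{Bin}(n'Z_{t-1},p')$, which has mean at most $d'Z_{t-1}$. Set $a=n^{1/(1000\omega)}$, so $a\to\infty$ while $a^\omega\leq n^{0.001}$. Assuming $Z_{t-1}\geq1$, the standard multiplicative Chernoff bound gives
\begin{equation*}
\pr\brk{Z_t>a\,Z_{t-1}\,\big|\,Z_{t-1}}\leq(ed'/a)^{aZ_{t-1}}\leq\exp(-\Omega(a\ln a)),
\end{equation*}
and since $\ln a=\Omega(\ln n/\ln\ln\ln n)$, we have $a\ln a\gg\ln^{3}n$. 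If none of the events $\cbc{Z_t>aZ_{t-1}}$ occurs for $t=1,\ldots,\omega$, then $|\partial^\omega_{\G}(v)|\leq 1+a+\cdots+a^\omega\leq 2a^\omega\leq 2n^{0.001}<n^{0.01}$. A union bound over the $\omega$ layers and the $n$ vertices yields a failure probability of at most $n\omega\exp(-\Omega(a\ln a))=\exp(-\Omega(\ln^{2}n))$.

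For part~(ii), I would bound the expected number of vertices with a cyclic $\omega$-ball by $o(n)$ and invoke Markov's inequality. If $\partial^\omega_{\G}(v)$ contains a cycle $C$, pick $u\in C$ minimizing $d_{\G}(v,u)$ and let $P$ be a shortest $v$--$u$ path in $\G$. Then $|P|\leq\omega$, $|C|\leq2\omega$, and $P$ is internally disjoint from $C$, so $\G$ contains a ``lollipop'' subgraph $P\cup C$ rooted at $v$ with $p=|P|\leq\omega$ and $\ell=|C|\leq2\omega$. The expected number of such lollipops rooted at $v$ in $\G$ is at most
\begin{equation*}
\sum_{p=0}^{\omega}\sum_{\ell=3}^{2\omega}n^{p+\ell-1}(p')^{p+\ell}\leq\frac{1}{n}\sum_{p,\ell}(d')^{p+\ell}=O\brk{(d')^{3\omega}/n}=(\ln\ln n)^{O(1)}/n,
\end{equation*}
so summing over $v$ gives $\Erw[\#\cbc{v:\partial^\omega_{\G}(v)\text{ is cyclic}}]\leq(\ln\ln n)^{O(1)}=o(n)$, and Markov completes the proof.

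The only mild subtlety is balancing the choice of $a$ in part~(i): it must be large enough that the Chernoff tail $\exp(-\Omega(a\ln a))$ comfortably survives the union bound over $n\omega$ events, yet small enough that $a^\omega\leq n^{0.01}$. Since $\omega$ grows only triply-logarithmically, the setting $a=n^{1/(1000\omega)}$ satisfies both constraints with substantial room to spare, and the rest is routine first-moment bookkeeping.
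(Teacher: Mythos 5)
The paper does not actually spell out a proof of this lemma (it is invoked as a standard fact), so the only question is whether your argument is sound. Part (i) is: the layer-by-layer exposure, the domination of $Z_t$ by $\Bin(n'Z_{t-1},p')$ (which is valid in the planted model since every potential edge appears with probability at most $p'$), the choice $a=n^{1/(1000\omega)}$ and the bound $(\eul d'/a)^{aZ_{t-1}}\leq\exp(-\Omega(a\ln a))$ with $a\ln a\gg\ln^2 n$ all check out, and the union bound over $n\omega$ events indeed leaves $\exp(-\Omega(\ln^2 n))$.

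Part (ii) has a genuine flaw in the reduction step: for an \emph{arbitrary} cycle $C$ contained in $\partial_{\G}^{\omega}(v)$ it is simply not true that $|C|\leq2\omega$. The ball condition only says every vertex of $C$ is within distance $\omega$ of $v$; the cycle itself can be arbitrarily long (e.g.\ $v$ joined by spokes to all vertices of a cycle of length $100$ puts that cycle inside the ball of radius $1$). If you allow $\ell$ up to $n$ in your sum, $\sum_\ell (d')^\ell/n$ blows up, so the length restriction is doing real work and must be justified. The standard repair: take a BFS tree $T$ of the component of $v$; if the ball is not acyclic, some cycle in it contains an edge $\cbc{x,y}\notin T$ with $x,y$ at BFS-distance at most $\omega$ from $v$; letting $w$ be the meet of the tree paths from $x$ and $y$ to $v$, the two tree segments together with $\cbc{x,y}$ form a cycle of length at most $2\omega+1$, attached to $v$ by a tree path of length at most $\omega$. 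This yields exactly the lollipop structure you count, now with the justified bounds $p\leq\omega$, $\ell\leq2\omega+1$, and your first-moment computation $\sum_{p,\ell}(d')^{p+\ell}/n=(\ln\ln n)^{O(1)}/n$ per vertex, followed by Markov, then goes through unchanged.
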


In addition, we need to know that the ``local structure'' of the random graph $\G$ endowed with the coloring $\SIGMA$
enjoys the following concentration property.

\begin{lemma}\label{Lemma_conc}
Let $\cS$ be a set of triples $(G_0,\sigma_0,v_0)$ such that $G_0$ is a graph, $\sigma_0$ is a $k$-coloring of $G_0$, and $v_0$ is a vertex of $G_0$.
Let $\omega=10\lceil\ln\ln\ln n\rceil$
and define a random variable $S_v=S_v(\G,\SIGMA)$ by letting 
	$$S_v=\vecone_{(\partial_{\G}^\omega(v),\SIGMA|_{\partial_{\G}^\omega(v)},v)\in\cS}.$$
Further, let $S=\sum_vS_v$.
Then $S=\Erw[S]+o(n)$ \whp
\end{lemma}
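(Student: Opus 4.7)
The proof is a standard second-moment calculation via the Efron--Stein inequality. I would first realize $\G$ on the product probability space with independent coordinates $\SIGMA(v) \in [k]$ (uniform) for $v \in [n']$ and $Y_{uv} \in \{0,1\}$ (Bernoulli with mean $p'$) for $\{u,v\} \in \binom{[n']}{2}$, letting $\{u,v\}$ be an edge of $\G$ iff $Y_{uv} = 1$ and $\SIGMA(u) \neq \SIGMA(v)$. Since $S = \sum_v S_v$ is a deterministic function on this product space, the Efron--Stein inequality gives
\[
\Var(S) \le \tfrac{1}{2}\sum_{\alpha}\Erw\!\left[(S-S^{(\alpha)})^2\right],
\]
where $\alpha$ ranges over all $n' + \binom{n'}{2}$ coordinates and $S^{(\alpha)}$ denotes $S$ after an independent resample of coordinate $\alpha$.

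The decisive quantitative input is the bound $\Erw[|\partial^\omega_{\G}(v)|^2] \le C(d')^{2\omega} = (\ln n)^{o(1)}$ for every $v$, obtained by coupling the breadth-first exploration from $v$ in $\G$ with the first $\omega$ generations of a $\Po(d')$ Galton--Watson tree. Since $S_w$ depends only on the labelled rooted $\omega$-ball around $w$, resampling coordinate $\alpha$ can alter $S_w$ only when this ball is altered. For a color coordinate $\alpha = \SIGMA(v)$ the resample changes only edges incident to $v$, so $S_w$ is affected only when $v$ lies in the $\omega$-ball of $w$ in the original graph $\G$ or in the resampled graph $\G'$; this yields $|S - S^{(\alpha)}| \le 2(|\partial^\omega_{\G}(v)| + |\partial^\omega_{\G'}(v)|)$ and, using the marginal symmetry between $\G$ and $\G'$, $\Erw[(S - S^{(\alpha)})^2] = O((d')^{2\omega})$. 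For an edge coordinate $\alpha = Y_{uv}$ the change is nonzero only if both $\SIGMA(u) \neq \SIGMA(v)$ and $Y_{uv} \neq Y'_{uv}$, an event of probability $O(p') = O(1/n)$; conditional on it the change is at most $O(|\partial^\omega_{\G}(u)| + |\partial^\omega_{\G}(v)| + |\partial^\omega_{\G'}(u)| + |\partial^\omega_{\G'}(v)|)$, giving $\Erw[(S - S^{(\alpha)})^2] = O((d')^{2\omega}/n)$.

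Summing the contributions, the $n'$ color coordinates give $O(n(d')^{2\omega})$ and the $\binom{n'}{2}$ edge coordinates give $O(n^2 \cdot (d')^{2\omega}/n) = O(n(d')^{2\omega})$, whence $\Var(S) = O(n(d')^{2\omega}) = o(n^2)$. Chebyshev's inequality then yields $\Pr[|S - \Erw S| \ge \eps n] \le \Var(S)/(\eps n)^2 \to 0$ for every fixed $\eps > 0$, i.e., $S = \Erw[S] + o(n)$ w.h.p. The one point that needs care is the color-coordinate bound: a single color change can simultaneously create and destroy edges at $v$, so one must track the $\omega$-neighborhood in both $\G$ and the modified graph $\G'$; once the uniform second-moment bound on neighborhood sizes is in hand, however, the rest is routine Poisson-branching bookkeeping.
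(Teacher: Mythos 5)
Your proof is correct, but it takes a genuinely different route from the paper. The paper also works on a product space (via vertex exposure), but instead of a variance bound it proves quasi-exponential concentration: it invokes the event $\Gamma$ that every $\omega$-ball has at most $n^{0.01}$ vertices (supplied by \Lem~\ref{Lemma_acyclic} with failure probability $\exp(-\Omega(\ln^2 n))$), replaces $S$ by a surrogate $S'$ computed on the graph pruned of high-volume vertices so that $S=S'$ on $\Gamma$ and $\Erw[S']=\Erw[S]+o(1)$, and then applies Warnke's typical bounded differences inequality (\Lem~\ref{Lemma_Lutz}) with $c=n^{0.01}$, $c'=n'$, $t=n^{2/3}$. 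You instead bound $\Var(S)$ by Efron--Stein, with the decisive input being an \emph{average-case} control of neighborhoods, $\Erw[|\partial^\omega_{\G}(v)|^2]=O((d')^{2\omega})=(\ln n)^{o(1)}$ via branching-process domination, and finish with Chebyshev. What each buys: your argument avoids the truncation/surrogate bookkeeping and needs no with-high-probability uniform bound on ball sizes, but only yields a polynomially small failure probability --- which is all the lemma asserts; the paper's argument yields failure probability $\exp(-\Omega(\ln^2 n))$, at the price of relying on \Lem~\ref{Lemma_acyclic} and Warnke's inequality. Two small points you should tighten when writing this up: (i) in the edge-coordinate step, justify that the conditioning on $\{Y_{uv}\neq Y'_{uv},\,\SIGMA(u)\neq\SIGMA(v)\}$ does not bias the neighborhood sizes, e.g.\ by bounding the change by the $\omega$-balls of $u,v$ in the graph with the edge $uv$ forced present, which is independent of $(Y_{uv},Y'_{uv})$, so the expectation factorizes into $O(p')\cdot O((d')^{2\omega})$; (ii) when proving the second-moment bound on $|\partial^\omega_{\G}(v)|$, sum $\Erw[Z_sZ_t]\le\sqrt{\Erw[Z_s^2]\Erw[Z_t^2]}$ over the generations of the dominating branching process --- the constant depends on $d'$ but the bound stays $(\ln n)^{o(1)}$, which is all that is needed.
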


\noindent
The proof of \Lem~\ref{Lemma_conc} is based on standard arguments.
The full details can be found in \Sec~\ref{Sec_Lemma_conc}.

\subsection{Warning Propagation}\label{Sec_cluster_overview}
The goal in this section is to prove \Prop~\ref{Prop_cluster}, i.e., to determine the cluster size $|\cC(\G,\SIGMA)|$.
A key step in this endeavor will be to determine the sets
	$$\cL(v)=\cbc{\tau(v):\tau\in\cC(\G,\SIGMA)}$$
of colors that vertex $v$ may take under a $k$-coloring in $\cC(\SIGMA)$.
In particular, we called a vertex {\em frozen} in $\cC(\SIGMA)$ if $\cL(v)=\cbc{\SIGMA(v)}$.
To establish \Prop~\ref{Prop_cluster}, we will first
show that the sets $\cL(v)$ can be determined by means of a process called {\em Warning Propagation},
	which hails from the physics literature (see~\cite{MM} and the references therein).
More precisely, we will see that Warning Propagation yields color sets $L(v)$ such that $L(v)=\cL(v)$ for all but $o(n)$ vertices \whp\
Crucially, by tracing Warning Propagation we will be able to determine for any given type $(i,\ell)$ how many vertices of that type there are.
Moreover, we will show that the cluster $\cC(\SIGMA)$ essentially consists of all $k$-colorings $\tau$ of $\G$ such that $\tau(v)\in L(v)$ for all $v$.
In addition, the number of such colorings $\tau$ can be calculated by considering a certain reduced graph $\GGWP(\SIGMA)$.
This graphs turns out to be a forest (possibly after the removal of $o(n)$ vertices),
and the final step of the proof consists in arguing that, informally speaking, \whp\
the statistics of the trees in this forest are given by the distribution of the
multi-type branching process from \Sec~\ref{sec:outline}.

Let us begin by describing Warning Propagation on a general graph $G$ endowed with a $k$-coloring $\sigma$.
For each edge $e=\cbc{v,w}$ of $G$ and any color $i$ we define a sequence $(\mu_{v\ra w}(i,t|G,\sigma))_{t\geq1}$ such that
	$\mu_{v\ra w}(i,t|G,\sigma)\in\cbc{0,1}$ for all $i,v,w$.
The idea is that $\mu_{v\ra w}(i,t|G,\sigma)=1$ indicates that in the $t$th step of the process vertex $v$ ``warns'' vertex $w$
that the other neighbors $u\neq w$ of $v$ force $v$ to take color $i$.
We initialize this process by having each vertex $v$ emit a warning about its original $\sigma(v)$ at $t=0$, i.e.,
	\begin{equation}\label{eqWPini}
	\mu_{v\ra w}(i,0|G,\sigma)=\vecone_{i=\sigma(v)}
	\end{equation}
for all edges $\cbc{v,w}$ and all $i\in\brk k$.
Letting $\partial v=\partial_{G}(v)$ denote the neighborhood of $v$ in $G$, for $t\geq0$ we let
	\begin{equation}\label{eqWPit} 
	\mu_{v\ra w}(i,t+1|G,\sigma)=\prod_{j\in\brk k\setminus\cbc i}\max\cbc{\mu_{u\ra v}(j,t|G,\sigma):u\in \partial v\setminus\cbc w}.
	\end{equation}
That is, $v$ warns $w$ about color $i$ in step $t+1$ iff at step $t$ it received warnings from its other neighbors $u$ ({\em not} including $w$)
about all colors $j\neq i$.
Further, 
for a vertex $v$ and $t\geq0$ we let
	$$L(v,t|G,\sigma)=\cbc{j\in\brk k:\max_{u\in \partial v}\mu_{u\ra v}(j,t|G,\sigma)=0}
		\quad\mbox{and}\quad L(v|G,\sigma)=\bigcup_{t=0}^\infty L(v,t|G,\sigma).$$
Thus, $L(v,t|G,\sigma)$ is the set of colors that vertex $v$ receives no warnings about at step $t$.
To unclutter the notation, we omit the reference to $G,\sigma$ where it is apparent from the context.

To understand the semantics of this process, observe that by construction the list $L(v,t|G,\sigma)$
only depend on the vertices at distance at most $t+1$ from $v$.
Further, if we assume that  the $t$th neighborhood $\partial^{t}v$ in $G$ is a tree, then $L(v,t|G,\sigma)$ is precisely the set
of colors that $v$ may take in $k$-colorings $\tau$ of $G$ such that $\tau(w)=\sigma(w)$ for all vertices $w$ at distance greater than $t$ from $v$,
as can be verified by a straightforward induction on $t$.
As we will see, this observation together with the fact that the random graph $\G$ contains only few short cycles (cf.\ \Lem~\ref{Lemma_acyclic})
allows us to show that for most vertices $v$ we have $\cL(v)=L(v|\G,\SIGMA)$ \whp\
In effect,  the number of $k$-colorings $\tau$ of $\G$ with $\tau(v)\in L(v|\G,\SIGMA)$ for all $v$
will emerge to be a very good approximation to the cluster size $\cC(\G,\SIGMA)$.

Counting these $k$-colorings $\tau$ is greatly facilitated by the following observation.
For a graph $G$ together with a $k$-coloring $\sigma$, let us denote by
	$\GWP(t|\sigma)$ 
the graph obtained from $G$ by removing all edges
$\{v,w\}$ such that either $|L(v,t)|<2$, $|L(w,t)|<2$ or 
$L(v,t)\cap L(w,t)=\emptyset$.
Furthermore, obtain $\GWP(\sigma)$ from $G$ by
removing all edges $\{v,w\}$ such that $L(v,t)\cap L(w,t)=\emptyset$.
We view $\GWP(t|\sigma)$ and $\GWP(\sigma)$ as decorated graphs in which
each vertex $v$ is endowed with the color list $L(v,t)$ and $L(v)$ respectively.
As before, we let $\cZ$ denote the number of legal colorings of a decorated graph.
Thus, $\cZ(\GWP(\sigma))$ is the number of colorings $\tau$ of $\GWP(\sigma)$ such that $\tau(v)\in L(v|G,\sigma)$ for all~$v$.
The key statement in this section is

\begin{proposition}\label{Lemma_WPupper}
\Whp\ we have $\ln\cZ(\GGWP(\SIGMA))=\ln|\cC(\G,\SIGMA)|+o(n)$.
\end{proposition}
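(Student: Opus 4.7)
\emph{Plan.} The plan is to sandwich $\cC(\G,\SIGMA)$ and the set of legal colorings of $\GGWP(\SIGMA)$ in both directions, each inclusion holding up to an exceptional set whose count is at most $e^{o(n)}$. This will yield $\ln\cZ(\GGWP(\SIGMA)) = \ln|\cC(\G,\SIGMA)| + o(n)$ w.h.p. Throughout I would set $\omega = 10\lceil\ln\ln\ln n\rceil$ and work conditional on the ``good'' event from \Lem~\ref{Lemma_acyclic}, so that all but $o(n)$ vertices $v$ have an acyclic $\omega$-neighborhood of size at most $n^{0.01}$.

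\emph{First direction ($|\cC(\G,\SIGMA)| \le \cZ(\GGWP(\SIGMA)) \cdot e^{o(n)}$).} I would prove that for every $\tau \in \cC(\G,\SIGMA)$ and for all but $o(n)$ vertices $v$ one has $\tau(v) \in L(v|\G,\SIGMA)$. Since $\GGWP(\SIGMA) \subseteq \G$, $\tau$ is automatically a proper $k$-coloring of $\GGWP(\SIGMA)$, so only the color-list constraint is at issue. For this I would invoke the alternative characterization of the cluster from~\cite{Molloy} cited in the excerpt: $\tau$ is reachable from $\SIGMA$ by a sequence of moves each altering the colors of $o(n)$ vertices at a time. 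Starting from the trivial base $\SIGMA(v) \in L(v,0)$ for all $v$, an induction on the moves---coupled with the WP iteration (\ref{eqWPit})---shows that on any vertex with acyclic $\omega$-neighborhood, the intermediate colors of $v$ must lie in $\bigcup_t L(v,t) = L(v|\G,\SIGMA)$. Absorbing the $o(n)$ atypical vertices (where $\tau(v)$ need not respect the list) into a prefactor of $k^{o(n)} = e^{o(n)}$ yields the inequality.

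\emph{Second direction ($\cZ(\GGWP(\SIGMA)) \le |\cC(\G,\SIGMA)|$).} Any legal coloring $\tau$ of $\GGWP(\SIGMA)$ is automatically a proper $k$-coloring of the full graph $\G$: every edge $\{v,w\} \in E(\G) \setminus E(\GGWP(\SIGMA))$ was removed precisely because $L(v) \cap L(w) = \emptyset$, which forces $\tau(v) \neq \tau(w)$ when $\tau(v) \in L(v)$ and $\tau(w) \in L(w)$. I then need to verify the overlap condition $\rho_{ii}(\SIGMA,\tau) \geq 0.51/k$ for all $i \in \brk k$, which follows once a $1 - O_k(1/k)$ fraction of vertices are \emph{frozen} (i.e.\ $L(v|\G,\SIGMA) = \{\SIGMA(v)\}$) w.h.p., since any legal $\tau$ must then agree with $\SIGMA$ at every frozen vertex. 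The corresponding tree statement is \Prop~\ref{Prop_fix}: on a vertex with acyclic $\omega$-neighborhood, the output of WP on $\G$ coincides with the output of WP on a realisation of the Galton--Watson tree $\T_{d,k,\vec q^*}$, and the probability that the root has frozen type $(i,\{i\})$ sums to $q^* = 1 - O_k(1/k)$ by \Lem~\ref{Prop_Fixpunkt}. Concentration via \Lem~\ref{Lemma_conc} transfers this count from the tree to $\G$, so $\rho_{ii}(\SIGMA,\tau) \geq (1-O_k(1/k))/k$ comfortably exceeds $0.51/k$ for $k \geq k_0$.

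\emph{Main obstacle.} The technically hardest step is the induction in the first direction: making it precise near cyclic portions of $\G$, where the tree semantics of WP breaks down, and controlling long-range correlations through a single rearrangement of size $o(n)$ (many vertices may be flipped simultaneously in one move). Sub-criticality of $\GW(d,k,\vec q^*)$ from \Lem~\ref{Lemma_subcrit} should ensure that WP converges to its unique fixed point in $O(\ln\ln\ln n)$ steps for typical vertices, which is what lets the coupling between the dynamical rearrangement of $\tau$ and the WP iteration be carried out within depth $\omega$.
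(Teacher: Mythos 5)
The main gap is in your first direction. Your key claim --- that for every $\tau\in\cC(\G,\SIGMA)$ and every vertex $v$ with acyclic $\omega$-neighborhood one has $\tau(v)\in L(v|\G,\SIGMA)$ --- is false: the paper's own Kempe-chain example (a $4$-cycle colored $1,2,1,2$ whose other colors are blocked by core vertices) extends to long alternating cycles, all of whose vertices have perfectly tree-like $\omega$-neighborhoods; WP started from the initialization (\ref{eqWPini}) freezes every vertex on such a cycle, yet swapping the two colors along the cycle changes only $o(n)$ vertices and stays inside the overlap-defined cluster (\ref{eqCluster}), so $\tau(v)\notin L(v)$ there. Thus no local induction (over Molloy-type moves, whose equivalence with (\ref{eqCluster}) is anyway only mentioned informally in the paper and not available as an ingredient) can establish your per-vertex statement; the correct statement is only that the set of such exceptional vertices has size $o(n)$ \whp, and that is a global fact. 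The paper proves it by introducing a second Warning Propagation run $L'$ initialized solely from the core, showing $\cL(v)\subseteq L'(v)$ for \emph{all} $v$ \whp\ (\Lem~\ref{Fact_WPupper}, \Cor~\ref{Cor_WPupper}, which rests on \Prop~\ref{Prop_core}(2)), and then showing $L(v)=L'(v)$ for all but $o(n)$ vertices via the component-size bound of \Lem~\ref{Lemma_T''} (a first-moment argument over ``wobbly'' sets --- note that sub-criticality of $\GW(d,k,\vec q^*)$ by itself gives no such bound on $\G$) combined with the deterministic tree statement \Lem~\ref{Lemma_ab}; finally $\ln\cZ(\GGWP(\SIGMA))$ and $\ln\cZ(\GGWP'(\SIGMA))$ are compared component by component, each vertex contributing at most $\ln k$. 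None of this machinery appears in your proposal. Your fallback of absorbing the exceptional vertices into a factor $k^{o(n)}$ is also not automatic: the exceptional set depends on $\tau$, and a coloring violating the lists on that set neither projects to nor need extend to a legal coloring of $\GGWP(\SIGMA)$, so the counting inequality you want does not follow by a simple union bound.

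Your second direction is essentially the paper's \Cor~\ref{Cor_core}, but the way you certify many frozen vertices does not work as stated. Since the messages of the $\SIGMA$-initialized WP are non-increasing (Fact~\ref{Prop_WP}), the lists $L(v,t)$ \emph{grow} in $t$ and $L(v)=\bigcup_t L(v,t)$; hence finite-depth tree statistics (the content of \Lem~\ref{Prop_empirical}, or your appeal to $q^*$ --- note \Prop~\ref{Prop_fix} concerns the fixed point of $\cF_{d,k}$, not WP on $\G$) only bound the number of vertices with $L(v)=\cbc{\SIGMA(v)}$ from \emph{above}, whereas the overlap condition needs a lower bound. The paper obtains the lower bound from the core: core vertices satisfy $L(v,t)=\cbc{\SIGMA(v)}$ for all $t$ (Fact~\ref{Fact_WPcore}) and the core covers a $1-k^{-2/3}$ fraction of each color class \whp\ (\Prop~\ref{Prop_core}(1)). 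This part of your argument is repairable by quoting the core directly, but as written it, too, has a gap.
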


We begin by proving that $\cZ(\GGWP(\SIGMA))$ is a lower bound on the cluster size \whp\
To this end, let us highlight a few elementary facts.

\begin{fact}\label{Prop_WP}
The following statements hold for any $G,\sigma$.
\begin{enumerate}
\item 	For all $v,w,i$ and all $t\geq0$ we have $\mu_{v\ra w}(i,t+1)\leq \mu_{v\ra w}(i,t)$. 
\item We have $\sigma(v)\in L(v,t)$ for all $v,t$.
		Moreover, if $\mu_{v\ra w}(i,t)=1$, then $i=\sigma(v)$.
\item There is a number $t^*$ such that for any $t>t^*$ we have $\mu_{v\ra w}(i,t)=\mu_{v\ra w}(i,t^*)$ for all $v,w,i$.
\end{enumerate}
\end{fact}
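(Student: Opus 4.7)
The plan is to prove the three claims in order, using induction for (1) and then deriving (2) and (3) as easy corollaries. The single point that actually carries content is the base case of (1), which uses that $\sigma$ is a proper $k$-coloring.

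For (1) I proceed by induction on $t$. For the base step, suppose $\mu_{v\ra w}(i,1)=1$. By the update rule~\eqref{eqWPit}, for every $j\in\brk k\setminus\cbc i$ there exists $u\in\partial v\setminus\cbc w$ with $\mu_{u\ra v}(j,0)=1$, which by~\eqref{eqWPini} is equivalent to $\sigma(u)=j$. If we had $i\neq\sigma(v)$, then choosing $j=\sigma(v)$ would produce a neighbor $u$ of $v$ with $\sigma(u)=\sigma(v)$, contradicting properness of $\sigma$. Hence $i=\sigma(v)$, so $\mu_{v\ra w}(i,0)=1$ and the base step $\mu_{v\ra w}(i,1)\leq\mu_{v\ra w}(i,0)$ holds. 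For the inductive step, both the $\max$ over $u$ and the $\prod$ over $j\neq i$ in~\eqref{eqWPit} preserve the pointwise inequality $\mu_{u\ra v}(j,t)\leq\mu_{u\ra v}(j,t-1)$ supplied by the induction hypothesis, so $\mu_{v\ra w}(i,t+1)\leq\mu_{v\ra w}(i,t)$.

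Claim (2) then follows by iterating (1). For every $t\geq 0$ one has $\mu_{v\ra w}(i,t)\leq\mu_{v\ra w}(i,0)=\vecone_{i=\sigma(v)}$, so $\mu_{v\ra w}(i,t)=1$ forces $i=\sigma(v)$, which is the second half of (2). For the first half, fix any neighbor $u\in\partial v$; then by the same iteration $\mu_{u\ra v}(\sigma(v),t)\leq\mu_{u\ra v}(\sigma(v),0)=\vecone_{\sigma(u)=\sigma(v)}=0$, since $\sigma$ is proper. Thus no neighbor of $v$ ever warns about $\sigma(v)$, whence $\sigma(v)\in L(v,t)$ for all $t$.

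Finally, (3) is a finiteness argument. The graph $G$ is finite, so the collection of messages $\mu_{v\ra w}(i,\cdot)$ indexed by $(v,w,i)$ with $\cbc{v,w}\in E(G)$ and $i\in\brk k$ is finite, with total cardinality $2k|E(G)|$. Each such message is a $\cbc{0,1}$-valued sequence that is non-increasing in $t$ by (1), and therefore changes value at most once, from $1$ to $0$. Hence after at most $2k|E(G)|$ time steps every message has stabilized, and one may take $t^*=2k|E(G)|$. There is no serious obstacle in any of the three parts: the whole fact essentially records that $\sigma$ itself furnishes the correct initial condition from which the WP dynamics can only shrink the list of active warnings.
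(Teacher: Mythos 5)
Your proof is correct and follows essentially the same elementary route as the paper: an induction on $t$ driven by the update rule~(\ref{eqWPit}), the initialization~(\ref{eqWPini}) and the properness of $\sigma$ (the paper runs (1) and (2) as a joint induction, whereas you prove (1) first, with properness entering only in the base case, and then read off (2) from $\mu_{v\ra w}(i,t)\leq\mu_{v\ra w}(i,0)=\vecone_{i=\sigma(v)}$ -- a harmless reorganization). Part (3) is treated the same way in both arguments, as an immediate consequence of monotonicity and the finiteness of the message set.
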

\begin{proof}
We prove (1) and (2) by induction on $t$.
In the case $t=0$ both statements are immediate from~(\ref{eqWPini}).
Now, assume that $t\geq1$ and $\mu_{v\ra w}(i,t)=0$.
Then there is a color $j\neq i$ and a neighbor $u\neq w$ of $v$ such that $\mu_{u\ra v}(j,t-1)=0$.
By induction, we have $\mu_{u\ra v}(j,t)=0$.
Hence, (\ref{eqWPit}) implies that $\mu_{v\ra w}(i,t+1)=0$.
Furthermore, if $\mu_{v\ra w}(i,t+1)=1$ for some $i\neq\sigma(v)$, then $v$ has a neighbor
$u\neq w$ such that $\mu_{u\ra v}(\sigma(v),t)=1$.
But since $\sigma(u)\neq\sigma(v)$ because $\sigma$ is a $k$-coloring,
this contradicts the induction hypothesis.
Thus, we have established (1) and (2).
Finally, (3) is immediate from (1).
\end{proof}

\begin{fact}\label{Prop_WP_upper}
If for some $t\geq0$, $\tau$ is a coloring of  $\GWP(t|\sigma)$ such that $\tau(v)\in L(v,t)$ for all $v$,
then $\tau$ is a $k$-coloring of~$G$.
Moreover, if $\tau$ is a $k$-coloring of $\GWP(\sigma)$ such that $\tau(v)\in L(v)$ for all $v$,
then $\tau$ is a $k$-coloring of $G$.
\end{fact}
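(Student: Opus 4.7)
The plan is to verify that $\tau(v) \neq \tau(w)$ for every edge $\{v,w\}$ of $G$. If $\{v,w\}$ survives in $\GWP(t|\sigma)$, properness of $\tau$ as a coloring there gives this directly. Otherwise $\{v,w\}$ was removed, meaning one of (a) $|L(v,t)| < 2$, (b) $|L(w,t)| < 2$, or (c) $L(v,t) \cap L(w,t) = \emptyset$ holds. Case (c) is immediate, since $\tau(v) \in L(v,t)$ and $\tau(w) \in L(w,t)$ then lie in disjoint sets. For (a)/(b), Fact~\ref{Prop_WP}(2) gives $\sigma(v) \in L(v,t)$, so $|L(v,t)| < 2$ forces $L(v,t) = \{\sigma(v)\}$ and hence $\tau(v) = \sigma(v)$, symmetrically for $w$. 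If both (a) and (b) hold, $\tau(v) = \sigma(v) \neq \sigma(w) = \tau(w)$ because $\sigma$ is a proper $k$-coloring of $G$.

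The remaining scenario reduces to the auxiliary claim: if $|L(v,t)| = 1$ and $w \in \partial v$, then $\sigma(v) \notin L(w,t)$. Combined with $\tau(w) \in L(w,t)$, this yields $\tau(w) \neq \sigma(v) = \tau(v)$. I would prove the claim by induction on $t$. The base case $t=0$ is immediate from~(\ref{eqWPini}), which gives $\mu_{v \to w}(\sigma(v),0) = 1$. For the inductive step, $|L(v,t)| = 1$ means that for every color $j \neq \sigma(v)$ some $u_j \in \partial v$ warns $v$ about $j$ at time $t$, and Fact~\ref{Prop_WP}(2) gives $\sigma(u_j) = j$. If a warner $u_{\sigma(w)} \neq w$ for the color $\sigma(w)$ can be chosen, then the WP update rule~(\ref{eqWPit}) combined with monotonicity of warnings (Fact~\ref{Prop_WP}(1)) yields $\mu_{v \to w}(\sigma(v),t) = 1$ and thus $\sigma(v) \notin L(w,t)$. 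Otherwise $w$ itself is the unique warner of $\sigma(w)$ at $v$, which forces $\mu_{w \to v}(\sigma(w),t) = 1$; unpacking~(\ref{eqWPit}) at $w$ then produces a vertex $u^* \in \partial w \setminus \{v\}$ of color $\sigma(v)$ warning $w$ about $\sigma(v)$, again giving $\sigma(v) \notin L(w,t)$ from that side.

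The second statement of the fact, concerning $\GGWP(\sigma)$, is essentially immediate: the only edges removed are those with $L(v) \cap L(w) = \emptyset$, which is exactly case (c) above. The main technical obstacle lies in the inductive step for case (a)/(b): the WP update at time $t$ consumes warnings at time $t-1$, so in the unique-warner sub-case one initially obtains $\mu_{u^* \to w}(\sigma(v), t-1) = 1$ rather than at time $t$, and the lists $L(w,t)$ are monotone nondecreasing in $t$ so this does not immediately transfer. Handling this cleanly will likely require strengthening the inductive invariant to track the persistence of ``frozen'' warnings across consecutive time steps, possibly by leveraging the stabilization guaranteed by Fact~\ref{Prop_WP}(3).
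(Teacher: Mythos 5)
Your reduction to the removed edges, the disjoint-list case, the doubly-frozen case, and your remark that the second assertion is immediate (in $\GWP(\sigma)$ only edges with $L(v)\cap L(w)=\emptyset$ are removed) are all sound; the last observation is even more direct than the paper, which instead deduces the second assertion from the first via the stabilization in Fact~\ref{Prop_WP}(3). The crux is, as you identify, the auxiliary claim that $|L(v,t)|=1$ and $w\in\partial v$ force $\sigma(v)\notin L(w,t)$ --- the very step the paper dispatches with ``by~(\ref{eqWPit})''.

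The obstacle you flag in the unique-warner sub-case is, however, not something a strengthened inductive invariant can absorb: for intermediate $t$ the auxiliary claim is simply false. Take $k=3$, let $v$ (color $1$) have exactly the neighbors $w$ (color $2$) and $a$ (color $3$), let $w$ have two further neighbors of colors $1$ and $3$, and $a$ two further neighbors of colors $1$ and $2$, these four vertices having no other neighbors (so by~(\ref{eqWPit}) they emit no warnings after step $0$). At $t=1$, $w$ warns $v$ about color $2$ and $a$ warns $v$ about color $3$, so $L(v,1)=\cbc{1}=\cbc{\sigma(v)}$; but nobody warns $w$ at $t=1$ (in particular $v$ does not, since $\partial v\setminus\cbc w=\cbc a$ misses color $2$), so $\sigma(v)=1\in L(w,1)$. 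One checks that $\tau(v)=\tau(w)=1$ extends to a legal coloring of $\GWP(1|\sigma)$ (all edges at $v$ are removed since $|L(v,1)|<2$) that is not a proper coloring of $G$; so the first assertion must in effect be read at a time where the messages have stabilized, which is also all the paper needs.

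The repair is exactly the stabilization you mention only in passing: by Fact~\ref{Prop_WP}(3), for all sufficiently large $t$ the messages at steps $t-1$ and $t$ coincide. At such $t$ your ``unpack~(\ref{eqWPit}) at $w$'' step closes the unique-warner sub-case, because the warning $\mu_{u^*\to w}(\sigma(v),t-1)=1$ it produces is then also a warning at step $t$. Since $L(v)=L(v,t)$ for such $t$, this yields the claim for the limit lists, which suffices for the second assertion and for its only application, \Cor~\ref{Cor_core}. So the skeleton is right and your suspicion is well founded, but the argument has to go through the message fixed point rather than an induction valid for every $t$.
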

\begin{proof}
Let $\cbc{v,w}$ be an edge of $G$.
Clearly, if $L(v,t)\cap L(w,t)=\emptyset$, then $\tau(v)\neq\tau(w)$.
Thus, assume that $L(v,t)\cap L(w,t)\neq\emptyset$.
Then $|L(v,t)|>1$.
Indeed, if $|L(v,t)|=1$, then by Fact~\ref{Prop_WP} we have $L(v,t)=\cbc{\sigma(v)}$ and thus $\sigma(v)\not\in L(w,t)$ by~(\ref{eqWPit}).
Similarly, $|L(w,t)|>1$.
Hence, the edge $\cbc{v,w}$ is present in $\GWP(t|\sigma)$, and thus $\tau(v)\neq\tau(w)$.
This implies the first assertion.
The second assertion follows from the first assertion and Fact~\ref{Prop_WP}, which shows that there is a finite $t$ such that
$L(v,t)=L(v)$ for all $v$.
\end{proof}

To turn Fact~\ref{Prop_WP_upper} into a lower bound on the cluster size, we are going to argue that
\whp\ in $\G$ there are {\em a lot} of frozen vertices \whp\
In fact, \whp\ the number of such frozen vertices will turn out to be so large that {\em all} colorings $\tau$ as in Fact~\ref{Prop_WP_upper}
belong to the cluster $\cC(\G,\SIGMA)$ \whp\

To exhibit frozen vertices, we consider an appropriate notion of a ``core''.
More precisely, assume that $\sigma$ is a $k$-coloring of a graph $G$.
We denote by $\core(G,\sigma)$ the largest set $V'$ of vertices with the following property.
\begin{equation}\label{eqCoreDef}
\parbox{12cm}{If $v\in V'$ and $j\neq\sigma(v)$, then $|V'\cap \sigma^{-1}(j)\cap \partial v|\geq100$.}
\end{equation}
In words, any vertex in the core has at least $100$ neighbors of any color $j\neq\sigma(v)$ that also belong to the core.
The core is well-defined; for if $V',V''$ are two sets with this property, then so is $V'\cup V''$.
The following is immediate from the definition of the core.

\begin{fact}\label{Fact_WPcore}
Assume that $v\in\core(G,\sigma)$.
Then $L(v,t)=\cbc{\sigma(v)}$ for all $t$.
\end{fact}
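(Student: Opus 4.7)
The plan is to reduce the claim to a self-sustaining induction on the Warning Propagation clock. By Fact~\ref{Prop_WP}(2), $\sigma(v) \in L(v,t)$ holds unconditionally, so it suffices to exclude every color $j \neq \sigma(v)$ from $L(v,t)$. Unpacking the definition $L(v,t) = \cbc{j : \max_{u\in\partial v}\mu_{u\ra v}(j,t) = 0}$, the task is to produce for each such $j$ a neighbor $u$ of $v$ that warns $v$ about color $j$ at time $t$.

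The engine is an auxiliary claim, which I would prove by induction on $t \geq 1$: for every $v \in \core(G,\sigma)$ and every $w \in \partial v$, one has $\mu_{v\ra w}(\sigma(v),t) = 1$. For the base case $t = 1$, the update rule~\eqref{eqWPit} requires, for each $j \neq \sigma(v)$, at least one $u \in \partial v \setminus\cbc w$ with $\mu_{u\ra v}(j,0) = 1$. The initialization~\eqref{eqWPini} tells us any neighbor $u$ with $\sigma(u) = j$ will do, and the core condition~\eqref{eqCoreDef} guarantees $100$ such neighbors of $v$ in the core, so deleting $w$ still leaves $99$ candidates; the product in~\eqref{eqWPit} is therefore $1$. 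The inductive step is formally identical: for $j \neq \sigma(v)$ pick $u \in \core(G,\sigma) \cap \sigma^{-1}(j) \cap \partial v \setminus\cbc w$ (which exists by~\eqref{eqCoreDef}), and invoke the inductive hypothesis at $u$---which is itself in the core---to get $\mu_{u\ra v}(\sigma(u),t) = \mu_{u\ra v}(j,t) = 1$.

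Granted the auxiliary claim, the main statement follows painlessly. Given $v\in\core(G,\sigma)$ and $j\neq\sigma(v)$, condition~\eqref{eqCoreDef} supplies a core neighbor $u \in \partial v$ with $\sigma(u) = j$, and the claim yields $\mu_{u\ra v}(j,t) = 1$ for all $t \geq 1$; hence $j \notin L(v,t)$. The residual case $t = 0$ is direct from~\eqref{eqWPini}: $j \in L(v,0)$ would require every neighbor of $v$ to miss color $j$, again contradicting~\eqref{eqCoreDef}. Combining with Fact~\ref{Prop_WP}(2) gives $L(v,t) = \cbc{\sigma(v)}$ for all $t \geq 0$.

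There is really no obstacle here, since~\eqref{eqCoreDef} is tailored to make the induction close: the constant $100$ is used only qualitatively (one needs $\geq 2$ so that removing $w$ still leaves a witness of each non-$\sigma(v)$ color). The only point worth stating carefully is that the inductive hypothesis is universally quantified over all vertices of $\core(G,\sigma)$ simultaneously, which is unproblematic because the core is a fixed set independent of $t$.
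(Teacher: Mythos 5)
Your proof is correct. The paper in fact gives no proof at all — it declares the fact ``immediate from the definition of the core'' — and your induction on $t$, showing that every $v\in\core(G,\sigma)$ keeps sending the warning $\mu_{v\ra w}(\sigma(v),t)=1$ to all its neighbors because each color $j\neq\sigma(v)$ is witnessed by a core neighbor of color $j$ even after excluding $w$, is precisely the argument the authors leave implicit (including the separate, trivial check at $t=0$).
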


The core has become a standard tool in the theory of random structures in general and in random graph coloring in particular.
Indeed, standard arguments show that $\G$ has a very large core \whp\
More precisely, we have

\begin{proposition}[\cite{Danny}]\label{Prop_core}
\Whp\  $\G,\SIGMA$ are such that the following two properties hold for all sets $S\subset\brk n$ of size $|S|\leq\sqrt n$.
\begin{enumerate}
\item Let $\G'$ be the subgraph obtained from $\G$ by removing the vertices in $S$. 
	Then 
		\begin{equation}\label{eqCoreSize}
		|\core(\G',\SIGMA)\cap\SIGMA^{-1}(i)|\geq\frac nk(1-k^{-2/3})\quad\mbox{ for all $i\in\brk k$.}
		\end{equation}
\item If $v\in\core(\G',\SIGMA')$, then $\SIGMA(v)=\tau(v)$ for all $\tau\in\cC(\G,\SIGMA)$.
\end{enumerate}
\end{proposition}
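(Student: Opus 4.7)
The plan is to transfer the core analysis of~\cite{Danny}, which treats the uniform model $G(n,m)$, to the planted model $\G=G(n',p',\SIGMA)$. The key structural enabler is that, conditional on $\SIGMA$, the edges of $\G$ joining any two distinct color classes are mutually independent $\Be(p')$ variables, which makes the standard first-moment / Chernoff machinery directly available.

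For (1) I would realise $\core(\G',\SIGMA)$ as the fixed point of a peeling procedure: initialise with $V(\G)\setminus S$ and iteratively delete every vertex $v$ for which some color $j\neq\SIGMA(v)$ is represented by fewer than $100$ currently-surviving neighbors. Write $T$ for the deleted set (beyond $S$) and $t_i=|T\cap\SIGMA^{-1}(i)|$. At the moment of its deletion, each $v\in T\cap\SIGMA^{-1}(i)$ certifies some $j\neq i$ with at most $100+t_j+|S|$ $\G$-neighbors of color $j$. In the planted model each such count is $\Bin(|\SIGMA^{-1}(j)|,p')$ with mean $d/(k-1)=(2+o_k(1))\ln k\gg 100$, and these binomials are mutually independent across $v$ given $\SIGMA$. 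A Chernoff tail bound gives a per-vertex probability of $k^{-1-\Omega(1)}$ (uniformly while $t_j+|S|=o(\ln k)$), so a routine union bound over $S$ of size at most $\sqrt n$ and over $(t_1,\ldots,t_k)$ yields $t_i\leq n k^{-5/3}$ for every $i$ with probability $1-\exp(-\Omega(nk^{-5/3}\log k))$. Combined with the balance of $\SIGMA$, this gives~\eqref{eqCoreSize}.

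For (2) I would argue by contradiction. Suppose some $\tau\in\cC(\G,\SIGMA)$ satisfies $\tau(v)\neq\SIGMA(v)$ at a core vertex $v$, and set $S_i=\SIGMA^{-1}(i)\setminus\tau^{-1}(i)$. The cluster constraint $\rho_{ii}(\SIGMA,\tau)\geq 0.51/k$ combined with the balance of $\SIGMA$ forces $|S_i|\leq 0.49\,n/k+o(n)$ for every $i$. Let $T=\bigcup_i S_i\cap\core(\G',\SIGMA)$, so $v\in T$. If $u\in T$ has $\SIGMA(u)=i$ and $\tau(u)=j$, then the core property furnishes at least $100$ neighbors of $u$ in $\core(\G',\SIGMA)\cap\SIGMA^{-1}(j)$; each such neighbor $w$ must satisfy $\tau(w)\neq j$ (since $\tau$ is proper and $\tau(u)=j$), so $w\in S_j\cap\core(\G',\SIGMA)\subseteq T$. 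Hence every $u\in T$ has at least $100$ $\G$-neighbors inside $T\cap\SIGMA^{-1}(\tau(u))$, while the per-color cap $|T\cap\SIGMA^{-1}(i)|\leq 0.49\,n/k$ remains in force.

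The main obstacle is to rule out such a ``locally dense'' $T$ whp. The plan is a first-moment estimate conditional on $\SIGMA$: sum over profiles $(t_1,\ldots,t_k)$ with $t_i\leq 0.49\,n/k$, over choices of $T_i\subseteq\SIGMA^{-1}(i)$ of size $t_i$, and over assignments $\tau:T\to\brk k$ with $\tau(u)\neq\SIGMA(u)$. The probability that every $u\in T_i$ has at least $100$ neighbors in $T_{\tau(u)}$ factorises by independence of inter-class edges; crucially $t_{\tau(u)}\cdot p'\leq 0.49\cdot d/(k-1)\leq 0.98\ln k\ll 100$, so Chernoff gives a per-vertex probability of at most $k^{-c}$ for some fixed $c>0$ once $k\geq k_0$. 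Multiplying across the $|T|$ vertices and weighing against the entropy factor $\prod_i\bink{n/k}{t_i}\cdot k^{|T|}\leq\exp(O(|T|\log k))$ produces an overall bound of $\exp(-\Omega(n))$, which survives the union bound and refutes the existence of $\tau$. This mirrors the rigidity step of~\cite{Danny}.
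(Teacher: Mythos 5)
First, note that the paper does not prove \Prop~\ref{Prop_core} at all: it is imported verbatim from~\cite{Danny}, so there is no in-paper proof to match your argument against. Judged on its own, your proposal has genuine gaps in both parts, and they sit exactly where the real work lies. For part (1), the peeling analysis does not control the cascade. Your per-vertex Chernoff bound is only claimed ``uniformly while $t_j+|S|=o(\ln k)$'', but $t_j$ is the global count of deleted colour-$j$ vertices, which may be as large as $nk^{-5/3}$, so this regime is void after the first few deletions; what you actually need to bound is the number of deleted colour-$j$ neighbours of each vertex, and that is precisely the propagation effect the proof must tame. Moreover, a union bound ``over $(t_1,\ldots,t_k)$'' is not a union bound over anything probabilistically meaningful: the deleted set is determined by the graph, so one has to union over witness sets (or argue by rounds), and then a per-vertex probability of $k^{-1-\Omega(1)}$ — which is in any case optimistic, since after the union over the $k-1$ colours the deficiency probability is $k^{-1+o_k(1)}$, not $k^{-1-\Omega(1)}$ — cannot beat the entropy $\exp\bc{(\frac53+o_k(1))|T|\ln k}$ of choosing a deleted set of size $|T|\approx nk^{-5/3}$. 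The argument in~\cite{Danny} has to separate genuinely low-degree vertices (whose number per colour class is $\tilde O_k(nk^{-2})$, controlled by concentration, not by a first moment over sets) from cascade vertices, and for the latter exploit that they need $\Omega(\ln k)$ deleted neighbours \emph{within a single colour class} of density $k^{-1+o_k(1)}$, which yields per-vertex probabilities of order $\exp(-\Omega(\ln^2 k))$; your sketch never reaches this calibration.

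For part (2) the combinatorial reduction (every disagreeing core vertex has $\geq 100$ neighbours in $T\cap\SIGMA^{-1}(\tau(u))$, and $|T\cap\SIGMA^{-1}(j)|\leq 0.49\,n/k+o(n)$) is fine, but the closing Chernoff step is numerically backwards in the paper's regime. You claim $0.49\,d/(k-1)\leq 0.98\ln k\ll 100$; since $d=\Theta(k\ln k)$ and the whole paper works with $k\geq k_0$ for a large constant $k_0$ (with many estimates, e.g.\ treating $(2\ln k)^{O(1)}$ as $k^{o_k(1)}$, only valid when $\ln k$ dwarfs the constant $100$), we have $0.98\ln k\gg 100$, so having $100$ neighbours inside a set of within-class density $0.49$ is the \emph{typical} event and your per-vertex factor is not $k^{-c}$ but $1-o(1)$. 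Consequently the first moment over profiles, sets and assignments does not close — and it cannot, because the only information you use about $\tau$ is a local min-degree condition with a fixed threshold. The actual proof of this rigidity statement in~\cite{Danny} exploits that $\tau$ is a proper colouring of \emph{all} of $\G$ (so each flipped vertex forces essentially all of its $\approx 2\ln k$ colour-$j$ neighbours to move, giving a per-vertex gain of order $k^{-\Omega(1)}$ tied to the overlap constraint) together with the separability machinery (cf.\ \Lem~\ref{Lemma_separable}); some such global input is indispensable, and its absence is the decisive gap in your part (2).
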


\begin{corollary}\label{Cor_core}
\Whp\ we have $|\cC(\G,\SIGMA)|\geq\cZ(\GGWP(\SIGMA))$.
\end{corollary}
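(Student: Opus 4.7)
The plan is to combine Fact~\ref{Prop_WP_upper}, Fact~\ref{Fact_WPcore}, and the first part of Proposition~\ref{Prop_core} (applied with the empty removal set $S=\emptyset$) into a short direct argument. The key point is that every legal coloring of the decorated graph $\GGWP(\SIGMA)$ must agree with the planted coloring $\SIGMA$ on the entire core, and since the core is huge in each color class, such a legal coloring automatically lies in $\cC(\G,\SIGMA)$.

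First, I would condition on the \whp\ event $\cE$ that $|\core(\G,\SIGMA)\cap\SIGMA^{-1}(i)|\geq \frac nk(1-k^{-2/3})$ for every $i\in\brk k$, as guaranteed by \Prop~\ref{Prop_core}(1). Then I would take an arbitrary legal coloring $\tau$ of $\GGWP(\SIGMA)$, i.e.\ a map $\tau:V\ra\brk k$ with $\tau(v)\in L(v|\G,\SIGMA)$ for every vertex $v$. By the second part of Fact~\ref{Prop_WP_upper}, $\tau$ is automatically a proper $k$-coloring of~$\G$. Moreover, by Fact~\ref{Fact_WPcore}, every $v\in\core(\G,\SIGMA)$ satisfies $L(v,t|\G,\SIGMA)=\cbc{\SIGMA(v)}$ for all~$t$, and hence $L(v|\G,\SIGMA)=\cbc{\SIGMA(v)}$. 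Consequently $\tau(v)=\SIGMA(v)$ for every core vertex~$v$.

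Next I would translate this agreement into the overlap bound required by~(\ref{eqCluster}). For each color $i\in\brk k$, the vertices in $\core(\G,\SIGMA)\cap\SIGMA^{-1}(i)$ all receive color $i$ under $\tau$ as well, so on the event $\cE$,
\[
\rho_{ii}(\SIGMA,\tau)\;\geq\;\frac{|\core(\G,\SIGMA)\cap\SIGMA^{-1}(i)|}{n}\;\geq\;\frac{1-k^{-2/3}}{k}\;\geq\;\frac{0.51}{k},
\]
where the last inequality holds because $k\geq k_0$ is large. Hence $\tau\in\cC(\G,\SIGMA)$. Since distinct $\tau$'s yield distinct elements of $\cC(\G,\SIGMA)$, the map $\tau\mapsto\tau$ embeds the set of legal colorings of $\GGWP(\SIGMA)$ into $\cC(\G,\SIGMA)$, yielding $|\cC(\G,\SIGMA)|\geq\cZ(\GGWP(\SIGMA))$ on the event $\cE$, which occurs \whp.

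There is essentially no obstacle here: the entire content of the corollary is already packaged in the preceding three facts. The only thing to check is the arithmetic $(1-k^{-2/3})/k\geq 0.51/k$, which is immediate for our large-$k$ regime. The more substantive work of the section is deferred to the reverse inequality (the upper bound on $|\cC(\G,\SIGMA)|$ in \Prop~\ref{Lemma_WPupper}), which requires showing that legal colorings actually account for almost all of the cluster.
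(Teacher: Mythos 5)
Your proof is correct and follows essentially the same route as the paper: invoke \Prop~\ref{Prop_core} with $S=\emptyset$, use Fact~\ref{Prop_WP_upper} to see that a legal coloring of $\GGWP(\SIGMA)$ is a $k$-coloring of $\G$, use Fact~\ref{Fact_WPcore} to force agreement with $\SIGMA$ on the core, and conclude $\rho_{ii}(\SIGMA,\tau)\geq(1-k^{-2/3})/k\geq 0.51/k$, hence $\tau\in\cC(\G,\SIGMA)$. Your overlap computation is, if anything, stated slightly more carefully than the paper's.
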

\begin{proof}
By \Prop~\ref{Prop_core} we may assume that~(\ref{eqCoreSize}) is true for $S=\emptyset$.
Let $\tau$  be a $k$-coloring of $\GGWP(\SIGMA)$ such that $\tau(v)\in L(v)$ for all $v$.
Then Fact~\ref{Prop_WP_upper} implies that $\tau$ is a $k$-coloring of $\G$.
Furthermore, Fact~\ref{Fact_WPcore} implies that $\tau(v)=\sigma(v)$ for all $v\in\core(\G,\SIGMA)$.
Hence, (\ref{eqCoreSize}) entails that $\rho_{ii}(\sigma,\tau)\geq 1-k^{-2/3}>0.51$ for all $i\in\brk k$.
Thus, $\tau\in\cC(\G,\SIGMA)$.
\end{proof}

While $\cZ(\GGWP(\SIGMA))$ provides a lower bound on the cluster size, the two numbers do not generally coincide.
This is because for a few vertices $v$, the list $L(v)$ produced by Warning Propagation may be a proper subset of $\cL(v)$.
For instance, assume that the vertices $v_1,v_2,v_3,v_4$ induce a cycle of length four such that
$\sigma(v_1)=\sigma(v_3)=1$ and $\sigma(v_2)=\sigma(v_4)=2$,
	while $v_1,v_2,v_3,v_4$ are not adjacent to any further vertices of color $1$ or $2$.
Moreover, suppose that for each color $j\in\cbc{3,4,\ldots,k}$, each of $v_1,\ldots,v_4$ has at least one neighbor of color $j$ that belongs to the core.
Then Warning Propagation yields $L(v_1)=L(v_3)=\cbc 1$ and $L(v_2)=L(v_4)=\cbc 2$.
However, $v_1,v_2,v_3,v_4$ are actually unfrozen as we might as well give color $2$ to $v_1,v_3$ and color $1$ to $v_2,v_4$.
(A bipartite sub-structure of this kind is known as a ``Kempe chain'', cf.~\cite{Molloy}.)

The reason for this problem is, roughly speaking, that we launched Warning Propagation from the initialization~(\ref{eqWPini}),
which is the obvious choice but may be too restrictive.
Thus, to obtain an upper bound on the cluster size we will start Warning Propagation from a different initialization.
Ideally, this starting point should be such that only vertices that are frozen emit warnings.
By \Prop~\ref{Prop_core}, the vertices in the core meet this condition \whp\
Thus, we are going to compare the above installment of Warning Propagation with the result
of starting Warning Propagation from an initialization where only the vertices in the core send out warnings.

Thus, given a graph $G$ be a graph together with a $k$-coloring $\sigma$
we let
	\begin{eqnarray*}
	\mu'_{v\ra w}(i,0|G,\sigma)&=&\vecone_{i=\sigma(v)}\cdot\vecone_{v\in\core(G,\sigma)},\\
	\mu'_{v\ra w}(i,t+1|G,\sigma)&=&
		\prod_{j\in\brk k\setminus\cbc i}\max\cbc{\mu'_{u\ra v}(j,t|G,\sigma):u\in \partial v\setminus\cbc w}
	\end{eqnarray*}
for all edges $\cbc{v,w}$ of $G$, all $i\in\brk k$ and all $t\geq0$.
Furthermore, let
	$$L'(v,t|G,\sigma)=\cbc{j\in\brk k:\max_{u\in \partial v}\mu_{u\ra v}'(j,t)=0}\quad\mbox{and}\quad 
	L'(v|G,\sigma)=
			\bigcap_{t=0}^\infty L'(v,t|G,\sigma).$$
As before, we drop $G,\sigma$ from the notation where possible.

Similarly as before, we can use the lists $L'(v,t)$ to construct a decorated reduced graph.
Indeed,  let $\GWP'(t|\sigma)$ be the graph obtained from $G$ by removing all edges
$\{v,w\}$ such that $|L'(v,t)|<2$ or $|L'(w,t)|<2$ or $L'(v,t)\cap L'(w,t)=\emptyset$.
We decorate each vertex in this graph with the list $L'(v,t)$.
In addition, let $\GWP'(\sigma)$ be the graph obtain from $G$ by removing all edges
$\{v,w\}$ such that $L'(v)\cap L'(w)=\emptyset$ endowed with the lists $L(v)$.

\begin{fact}\label{Prop_WP'}
The following statements hold for all $G,\sigma$.
\begin{enumerate}
\item For all $v$ we have $\sigma(v)\in L'(v)$.
	Moreover, if there are $j,t,w$ such that $\mu_{v\ra w}'(j,t)=1$, then $j=\sigma(v)$.
\item If $v\in\core(G,\sigma)$, then $L'(v,t)=\cbc{\sigma(v)}$ for all $t$.
\item We have $\mu_{v\ra w}'(i,t+1)\geq \mu_{v\ra w}'(i,t)$. 
\item There is a number $t^*$ such that for any $t>t^*$ we have $\mu_{v\ra w}'(i,t)=\mu_{v\ra w}'(i,t^*)$ for all $v,w,i$.
\end{enumerate}
\end{fact}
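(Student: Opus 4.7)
The proof mirrors that of Fact~\ref{Prop_WP}, but with the direction of monotonicity reversed because of the different initialization (only core vertices emit warnings at $t=0$). The plan is to establish the four items in the order (3), second half of (1), first half of (1), (2), (4), each by induction on $t$ using the recurrence that defines~$\mu'$.

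For monotonicity~(3), I would induct on $t$. The inductive step is immediate: if $\mu'_{v\ra w}(i,t+1)=1$ then for every $j\neq i$ some $u\in\partial v\setminus\{w\}$ has $\mu'_{u\ra v}(j,t)=1$, and by induction also $\mu'_{u\ra v}(j,t+1)=1$, so the recurrence gives $\mu'_{v\ra w}(i,t+2)=1$. The base case $t=0$ is the only nontrivial point: if $\mu'_{v\ra w}(i,0)=1$ then $i=\sigma(v)$ and $v\in\core(G,\sigma)$, and for each color $j\neq i$ the defining property~(\ref{eqCoreDef}) of the core provides at least $100$ neighbors $u$ of $v$ with $\sigma(u)=j$ and $u\in\core$; at most one of them equals $w$, so at least $99$ such $u$ lie in $\partial v\setminus\{w\}$ and satisfy $\mu'_{u\ra v}(j,0)=1$. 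The recurrence then yields $\mu'_{v\ra w}(i,1)=1$.

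For the second statement of (1), I would show by induction on $t$ that $\mu'_{v\ra w}(j,t)=1$ forces $j=\sigma(v)$. The base case is built into the initialization. For the step, if $\mu'_{v\ra w}(j,t+1)=1$ and $j\neq\sigma(v)$, taking the index $i=\sigma(v)\neq j$ in the product defining the recurrence produces some $u\in\partial v\setminus\{w\}$ with $\mu'_{u\ra v}(\sigma(v),t)=1$; the induction hypothesis applied to $u$ then gives $\sigma(u)=\sigma(v)$, contradicting that $\sigma$ is a proper coloring. The first half of (1) is then an immediate corollary: if some neighbor $u$ of $v$ had $\mu'_{u\ra v}(\sigma(v),t)=1$, the second half would force $\sigma(u)=\sigma(v)$, again impossible; thus $\sigma(v)\in L'(v,t)$ for every~$t$, whence $\sigma(v)\in L'(v)$.

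Item (2) then follows at once: $\sigma(v)\in L'(v,t)$ by (1), and for each $j\neq\sigma(v)$ the core condition gives a neighbor $u\in\core$ of $v$ with $\sigma(u)=j$, hence $\mu'_{u\ra v}(j,0)=1$, which by monotonicity (3) upgrades to $\mu'_{u\ra v}(j,t)=1$ for all $t$, so $j\notin L'(v,t)$. Finally, (4) is a pigeonhole argument: the messages $\mu'_{v\ra w}(i,\cdot)$ take values in $\{0,1\}$ and are monotone non-decreasing in $t$ by~(3), so each flips at most once, and since $G$ is finite so is the total number of messages; hence the process must stabilize at some finite $t^*$. The main (and only) substantive obstacle is the base case of~(3), which is where the precise quantitative definition of the core is used; all other steps are routine inductions driven directly by the recurrence.
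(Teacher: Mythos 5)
Your proposal is correct and follows exactly the route the paper intends: the paper disposes of Fact~\ref{Prop_WP'} with the one-line remark that it follows by induction on $t$ as in Fact~\ref{Prop_WP}, and your inductions (monotonicity first, with the base case supplied by the core condition~(\ref{eqCoreDef}), then the color restriction, then (2) and the pigeonhole stabilization) are precisely the details being alluded to. No gaps; the only substantive point, that the core initialization makes the messages non-decreasing rather than non-increasing, is handled correctly.
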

\begin{proof}
This follows by induction on $t$ (cf.\ the proof of Fact~\ref{Prop_WP}).
\end{proof}

\begin{lemma}\label{Fact_WPupper}
\Whp\ for all vertices $v$ we have $\cL(v)=\cbc{\tau(v):\tau\in\cC(\G,\SIGMA)}\subset L'(v|\G,\SIGMA)$.
\end{lemma}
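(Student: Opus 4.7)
The plan is to condition on the (whp) event that the conclusion of Proposition~\ref{Prop_core} holds for $S = \emptyset$ and then prove, by induction on $t\geq 0$, the stronger deterministic statement
\begin{equation}\label{eqWPupperInd}
\tau(v)\in L'(v,t\,|\,\G,\SIGMA)\qquad\mbox{for every }\tau\in\cC(\G,\SIGMA)\mbox{ and every vertex }v.
\end{equation}
Since $L'(v\,|\,\G,\SIGMA)=\bigcap_{t\geq 0}L'(v,t\,|\,\G,\SIGMA)$ by Fact~\ref{Prop_WP'}(3)--(4), this gives $\cL(v)\subset L'(v\,|\,\G,\SIGMA)$ for every $v$.

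For the base case $t=0$, unfolding the definition of $\mu'_{u\to v}(\,\cdot\,,0)$ shows that $j\notin L'(v,0)$ iff there exists a neighbor $u\in\partial v$ with $u\in\core(\G,\SIGMA)$ and $\SIGMA(u)=j$. Thus I only need to rule out $\tau(v)=\SIGMA(u)$ for such a core neighbor~$u$. But Proposition~\ref{Prop_core} forces $\tau(u)=\SIGMA(u)$, and since $\tau$ is a proper $k$-coloring of~$\G$, $\tau(u)\neq\tau(v)$; hence $\SIGMA(u)\neq\tau(v)$, establishing~(\ref{eqWPupperInd}) at $t=0$.

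For the inductive step, assume~(\ref{eqWPupperInd}) at step $t$, fix any vertex $v$ and any $\tau\in\cC(\G,\SIGMA)$, and suppose for contradiction that $\tau(v)\notin L'(v,t+1)$. Then there exists $u\in\partial v$ with $\mu'_{u\to v}(\tau(v),t+1)=1$; by Fact~\ref{Prop_WP'}(1) this forces $\tau(v)=\SIGMA(u)$, so in particular $\tau(u)\neq\SIGMA(u)$. Unfolding the update rule, the equality $\mu'_{u\to v}(\SIGMA(u),t+1)=1$ means that for every color $j\neq \SIGMA(u)$, and thus in particular for $j=\tau(u)$, there is a neighbor $u'\in\partial u\setminus\{v\}$ with $\mu'_{u'\to u}(\tau(u),t)=1$. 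This immediately yields $\tau(u)\notin L'(u,t)$, which contradicts the inductive hypothesis applied to the pair $(u,\tau)$. Hence~(\ref{eqWPupperInd}) propagates from $t$ to $t+1$, completing the induction and the proof.

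The only potential sticking point is ensuring that the inductive step does not require any hypothesis on $\tau$ beyond $\tau\in\cC(\G,\SIGMA)$ (so that the induction closes uniformly over all cluster colorings); the argument above uses $\tau$ only through the fact that it is a proper $k$-coloring of $\G$, which is built into the definition~(\ref{eqCluster}) of~$\cC(\G,\SIGMA)$, so the induction does close.
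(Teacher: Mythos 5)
Your proposal is correct and follows essentially the same route as the paper: condition on the whp conclusion of Proposition~\ref{Prop_core} (with $S=\emptyset$), then induct on $t$ to show every $\tau\in\cC(\G,\SIGMA)$ satisfies $\tau(v)\in L'(v,t)$, handling $t=0$ via core neighbors and the step by unfolding the Warning Propagation update together with Fact~\ref{Prop_WP'}(1). The only difference is cosmetic: you argue the inductive step by contradiction around the single color $\tau(u)$, whereas the paper notes $L'(u,t)=\cbc j$ and applies the hypothesis directly.
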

\begin{proof}
\Prop~\ref{Prop_core} shows that \whp
	\begin{equation}\label{eqWPUpper0}
	\tau(v)=\sigma(v)\quad\mbox{for all }v\in\core(\G,\SIGMA).
	\end{equation}
Assuming~(\ref{eqWPUpper0}), we are going to prove by induction on $t$ that
	\begin{equation}\label{eqWPUpper1}
	\cL(v)\subset L'(v,t)\qquad\mbox{for all }v\in\brk n,t\geq0.
	\end{equation}
By construction, for any vertex $v$ and any color $j$ we have $j\in L'(v,0)$, unless $v$ has a neighbor $w\in\core(\G,\SIGMA)$ such that $\sigma(w)=j$.
Moreover, if such a neighbor $w$ exists, (\ref{eqWPUpper0}) implies that \whp\ $\tau(w)=j$ and thus $\tau(v)\neq j$ for all $\tau\in\cC(\sigma)$.
Hence, (\ref{eqWPUpper1}) is true for $t=0$.

Now, assume that~(\ref{eqWPUpper1}) holds for $t$.
Suppose that $j\not\in L'(v,t+1)$.
Then $v$ has a neighbor $u$ such that $\mu_{u\ra v}'(j,t+1)=1$.
Therefore, for each $l\neq j$ there is $w_l\neq v$ such that $\mu_{w_l\ra u}'(l,t)=1$.
Consequently, $L'(u,t)=\cbc j$.
Hence, by induction we have $\tau(u)=j$ and thus $\tau(v)\neq j$ for all $\tau\in\cC(\G,\SIGMA)$.
\end{proof}

As an immediate consequence of \Lem~\ref{Fact_WPupper} we obtain

\begin{corollary}\label{Cor_WPupper}
\Whp\ we have
	$|\cC(\G,\SIGMA)|\leq\cZ(\GGWP'(\SIGMA))$.
\end{corollary}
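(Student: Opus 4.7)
The plan is to deduce the corollary as an essentially immediate consequence of Lemma~\ref{Fact_WPupper}. That lemma already supplies the nontrivial content: w.h.p.\ the color list $L'(v|\G,\SIGMA)$ produced from the core-based initialization of Warning Propagation contains every color that $v$ may take in the cluster $\cC(\G,\SIGMA)$. Conditioning on this high-probability event, I would exhibit an injection from $\cC(\G,\SIGMA)$ into the set of legal colorings of the decorated graph $\GGWP'(\SIGMA)$.

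Concretely, fix an arbitrary $\tau \in \cC(\G,\SIGMA)$. By the definition of $\cL(v)$ we have $\tau(v) \in \cL(v)$ for every vertex $v$, and Lemma~\ref{Fact_WPupper} gives $\cL(v) \subset L'(v|\G,\SIGMA)$ w.h.p., so $\tau(v)$ lies in the list decorating $v$ in $\GGWP'(\SIGMA)$. Since $\GGWP'(\SIGMA)$ is obtained from $\G$ only by \emph{removing} edges, and $\tau$ is a proper $k$-coloring of $\G$, $\tau$ is a fortiori a proper $k$-coloring of $\GGWP'(\SIGMA)$. Combining the two properties, $\tau$ is a legal coloring of the decorated graph $\GGWP'(\SIGMA)$. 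The assignment $\tau \mapsto \tau$ is visibly injective on $\cC(\G,\SIGMA)$, whence $|\cC(\G,\SIGMA)| \leq \cZ(\GGWP'(\SIGMA))$ on the high-probability event, proving the claim.

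There is no genuine obstacle here: the entire difficulty sits in Lemma~\ref{Fact_WPupper}, whose proof in turn rests on \Prop~\ref{Prop_core} (which forces the core to be ``truly frozen'' w.h.p.) and a straightforward induction on the Warning Propagation iteration count. Once that lemma is in hand, the corollary is just the observation that $\tau \in \cC(\G,\SIGMA)$ automatically satisfies both constraints—edge-properness (inherited from $\G$) and the list constraint (inherited from the inclusion $\cL(v) \subset L'(v)$)—that define a legal coloring of $\GGWP'(\SIGMA)$.
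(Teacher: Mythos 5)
Your proposal is correct and matches the paper, which derives the corollary as an immediate consequence of \Lem~\ref{Fact_WPupper}: since any $\tau\in\cC(\G,\SIGMA)$ satisfies $\tau(v)\in\cL(v)\subset L'(v|\G,\SIGMA)$ \whp\ and remains a proper coloring of the edge-subgraph $\GGWP'(\SIGMA)$, it is a legal coloring of that decorated graph, giving the bound by the identity injection.
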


Combining \Cor~\ref{Cor_core} and \Cor~\ref{Cor_WPupper}, we see that $\cZ(\GGWP(\SIGMA))\leq|\cC(\G,\SIGMA)|\leq\cZ(\GGWP'(\SIGMA))$ \whp\
To complete the proof of \Prop~\ref{Lemma_WPupper}, we are going to argue that $\ln\cZ(\GGWP'(\SIGMA))=\ln \cZ(\GGWP(\SIGMA))+o(n)$ \whp\

To this end, we need one more general construction.
Let $G$ be a graph and let $\sigma$ be a $k$-coloring of $G$.
Let $t\geq0$ be an integer.
For each vertex $v$ of $G$ we define a rooted, decorated graph $T(v,t|G,\sigma)$ as follows.
	\begin{itemize}
	\item The graph underlying $T(v,t|G,\sigma)$ is the connected component of $v$ in $\GWP(v,t|G,\sigma)$.
	\item The root of $T(v,t|G,\sigma)$ is $v$.
	\item The type of each vertex $w$ of $T(v,t|G,\sigma)$ is $(\sigma(w),L(w,t|G,\sigma))$.
	\end{itemize}
Analogously we obtain a rooted, decorated graph $T(v|G,\sigma)$ from $\GWP(\sigma)$,
$T'(v,t|G,\sigma)$  from $\GWP'(t|\sigma)$ and $T'(v|G,\sigma)$ from $\GWP'(\sigma)$.

Of course, the total number $\cZ(\GWP(\sigma))$ of legal colorings of $\GWP(\sigma)$ is just the product
of the number of legal colorings of all the connected components of $\GWP(\sigma)$.
The following lemma shows that \whp\ for all but $o(n)$ vertices the components in
$\GGWP(\SIGMA)$ and $\GGWP'(\SIGMA)$ coincide.

\begin{lemma}\label{Lemma_UpperMatchesLower}
\Whp\ $\G,\SIGMA$ is such that $T(v|\G,\SIGMA)=T'(v|\G,\SIGMA)$
for all but $o(n)$ vertices $v$. 
\end{lemma}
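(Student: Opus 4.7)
The strategy is to introduce a notion of \emph{tame} vertex such that (i) all but $o(n)$ vertices are tame, and (ii) for each tame $v$, $T(v|\G,\SIGMA)=T'(v|\G,\SIGMA)$. Set $\omega=10\lceil\ln\ln\ln n\rceil$. Call $v$ tame if $\partial^\omega(v)$ is acyclic (so $|\partial^\omega(v)|\le n^{0.01}$ by \Lem~\ref{Lemma_acyclic}) and if the subgraph of $\partial^\omega(v)$ induced on the non-core vertices decomposes into connected components of size at most $C_k$, for a large constant $C_k=C_k(k)$. \Lem~\ref{Lemma_acyclic}(2) provides acyclicity for a $1-o(1)$ fraction of vertices. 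For the non-core structure, \Prop~\ref{Prop_core} bounds the total fraction of non-core vertices by $O_k(k^{-2/3})$, and the sub-criticality of $\GW(d,k,\vec q^*)$ restricted to non-singleton types (\Lem~\ref{Lemma_subcrit}; spectral radius $\tilde O_k(1/k)$) predicts that connected non-core clusters have exponentially small tails in their size. Coupling $\partial^\omega(v)$ with this branching process and using the concentration of \Lem~\ref{Lemma_conc} shows that the number of non-tame vertices is $o(n)$ \whp

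Now let $v$ be tame. Both $\mu$ and $\mu'$ reach their fixed points within $\omega$ steps on the acyclic tree $\partial^\omega(v)$ by Facts~\ref{Prop_WP}(3) and~\ref{Prop_WP'}(4). The containment $L(v)\subseteq L'(v)$ is immediate by induction on $t$: since $\mu_{u\to w}(i,0)\ge\mu'_{u\to w}(i,0)$ pointwise (the two initializations differ only on non-core vertices) and the update rule~\eqref{eqWPit} is monotone in each argument, the inequality $\mu\ge\mu'$ is preserved for all $t$, which yields $L(v,t)\subseteq L'(v,t)$ for all $t$. The reverse inclusion is the crux. A color $j\in L'(v)\setminus L(v)$ would correspond to a warning $\mu_{u\to w}(j,t^*)=1$ at the stabilization time $t^*$ whose counterpart $\mu'_{u\to w}(j,t^*)$ equals $0$. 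By Fact~\ref{Prop_WP}(2), $j=\SIGMA(u)$; and if $u\in\core(\G,\SIGMA)$, then $u$ would send the same warning in $\mu'$ starting from $t=0$, contradicting our assumption, so $u$ is non-core. Sustaining $\mu_{u\to w}(\SIGMA(u),t)=1$ via~\eqref{eqWPit} requires, for each color $h\ne\SIGMA(u)$, a neighbor $u_h$ of $u$ with $\mu_{u_h\to u}(h,t-1)=1$; if $u$ has no core neighbor of some such color $h^*$, then $u_{h^*}$ is itself non-core. Iterating, a surplus warning surviving to time $t$ forces a connected path of non-core vertices of length at least $t$ through $\partial^\omega(v)$. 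Tameness forbids any such path of length exceeding $C_k$, so for $\omega\gg C_k$ the surplus warnings cannot reach $v$, yielding $L(v)=L'(v)$.

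Applying the same reasoning to every $w$ in the connected component of $v$ in $\GGWP(\SIGMA)\cup\GGWP'(\SIGMA)$ yields $L(w)=L'(w)$ throughout. Since components of $\GGWP(\SIGMA)$ correspond precisely to the non-singleton-list structure --- sub-critical by \Lem~\ref{Lemma_subcrit} --- they are contained in $\partial^\omega(v)$ for tame $v$, and all their vertices are themselves tame. Consequently the underlying edge sets of $T(v|\G,\SIGMA)$ and $T'(v|\G,\SIGMA)$ agree (both arise from removing edges $\{w,w'\}$ with the common list intersection empty), and so do the vertex types $(\SIGMA(w),L(w))=(\SIGMA(w),L'(w))$. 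This yields $T(v|\G,\SIGMA)=T'(v|\G,\SIGMA)$ for every tame $v$, completing the proof.

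The main obstacle is the quantitative statement ``surplus warnings die out within $C_k$ steps'': one must rigorously analyse WP on small non-core subgraphs and exploit the core condition~\eqref{eqCoreDef} at the interface between non-core and core to force the surplus warning to be ``contradicted'' within a bounded number of WP steps. Making $C_k$ explicit requires a careful coupling of $\partial^\omega(v)$ with the multi-type process $\GW(d,k,\vec q^*)$ together with the explicit sub-criticality bounds in the proof of \Lem~\ref{Lemma_subcrit}, which ensure that the restriction of that process to types with $|\ell|>1$ has spectral radius $\tilde O_k(1/k)$ and hence total progeny of tight exponential tail.
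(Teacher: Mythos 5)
Your deterministic step is fine, and in fact it is a nice alternative to the paper's \Lem~\ref{Lemma_ab}: tracing a ``surplus'' warning ($\mu=1$, $\mu'=0$) backwards in time gives a non-backtracking sequence of adjacent vertices each carrying a surplus warning, and since core vertices never carry surplus warnings, on an acyclic neighborhood this is a simple path of non-core vertices, so bounded non-core components plus acyclicity would indeed force $L(v)=L'(v)$. The problem is the probabilistic half, which is where essentially all of the work lies. Your tameness condition (ii) — that the subgraph of $\partial^\omega(v)$ induced on the \emph{non-core} vertices splits into components of size at most $C_k$ — is not established by the tools you cite, and is very likely false: \Prop~\ref{Prop_core} only bounds the non-core density by $k^{-2/3}$, while the average degree is $d'\approx 2k\ln k$, so the non-core (and even the ``at least two available colors'') subgraph percolates locally with branching factor of order $\ln k$ or more. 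The sub-criticality in \Lem~\ref{Lemma_subcrit} is \emph{not} about this subgraph; it hinges on the additional edge-deletion rule that the two endpoints' lists intersect (the sets $\cT_{i,\ell}$ require $\ell\cap\ell'\neq\emptyset$), and it concerns the idealized process $\GW(d,k,\vec q^*)$, not the graph. What the argument actually needs is that the components of the reduced graph $\GWP'(0|\SIGMA)$ are bounded for all but $o(n)$ vertices — this is exactly the paper's \Lem~\ref{Lemma_T''}, whose proof is a separate first-moment computation over ``wobbly'' sets that exploits the robustness of the core under deletion of any set of size up to $\sqrt n$ (\Prop~\ref{Prop_core}) to decouple the core from the candidate component. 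You cannot shortcut this via \Lem~\ref{Lemma_conc} either: membership in the core is defined by a global peeling process and is not a function of the bounded-radius neighborhood $\partial^\omega_{\G}(v)$, so the indicators $S_v$ you would need are not of the form that lemma handles; likewise ``coupling $\partial^\omega(v)$ with $\GW(d,k,\vec q^*)$'' is precisely the statement that is being proved in this part of the paper, not something one may invoke.

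So the proposal has a genuine gap: the key structural input (smallness of the relevant components) is replaced by an unproven — and, as stated for non-core components, almost certainly false — claim, and the concentration/coupling step invoked to justify it does not apply. If you replace condition (ii) by ``the component of $v$ in $\GWP'(0|\G,\SIGMA)$ has at most $\omega(\eps)$ vertices'' and prove that statement (the paper does this via the wobbly-set union bound), then your surplus-warning trace-back gives a correct and somewhat different finish from the paper's height-induction in \Lem~\ref{Lemma_ab}; note in that case the surplus chain must be confined to that component (which your Case-analysis at the boundary, as in~(\ref{eqLemma_ab_a}), has to provide), not merely to the set of non-core vertices.
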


The main technical step towards the proof of \Lem~\ref{Lemma_UpperMatchesLower} is to show
that \whp\ most of the components  $T'(v|\G,\SIGMA)$ are ``small'' by comparison to $n$.
Technically, it is easier to establish this statement for $T'(v,0|\G,\SIGMA)$, which contains $T'(v|\G,\SIGMA)$
as a subgraph due to the monotonicity property Fact~\ref{Prop_WP'}, (3).

\begin{lemma}\label{Lemma_T''}
For any $\eps>0$ there is a number $\omega=\omega(\eps)>0$ such that \whp\
for at least $(1-\eps)n$ vertices $v$
the component $T'(v,0|\G,\SIGMA)$ contains no more than $\omega$ vertices.
\end{lemma}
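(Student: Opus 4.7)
\medskip
\noindent
The plan is to decompose the vertex set into \emph{good} vertices (those with $|L'(v,0|\G,\SIGMA)|=1$, which are isolated in $\GWP'(0|\SIGMA)$) and \emph{bad} vertices $B=\{v:|L'(v,0|\G,\SIGMA)|\ge 2\}$; since every good $v$ satisfies $T'(v,0|\G,\SIGMA)=\{v\}$, only bad vertices can lie in large components. I would then show that the subgraph of $\GWP'(0|\SIGMA)$ induced by $B$ is \emph{subcritical} for $k\ge k_0$, so that its components have $O_k(1)$ expected size, and transfer this to a \whp\ statement via \Lem~\ref{Lemma_conc}.

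To bound $|B|$, note that $v\in B$ iff some color $j\ne\SIGMA(v)$ admits no $\G$-neighbor of $v$ in $\core(\G,\SIGMA)$ of color $j$. By \Prop~\ref{Prop_core}, \whp\ $|\core(\G,\SIGMA)\cap\SIGMA^{-1}(j)|\ge\frac{n}{k}(1-k^{-2/3})$, while the number of $\G$-neighbors of $v$ of any fixed color is asymptotically Poisson with mean $d/(k-1)=2\ln k+O_k(1)$. A direct calculation, combined with the fact that a random such neighbor lies outside the core with probability only $O(k^{-2/3})$, yields $\pr[v\in B]=O(k^{-1+o_k(1)})$, and the dominant contribution comes from bad vertices missing exactly one color, i.e.\ with $|L'(v,0|\G,\SIGMA)|=2$. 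Now fix such a bad $v$ with $\SIGMA(v)=i$ and $L'(v,0|\G,\SIGMA)=\{i,j\}$, and consider a $\G$-neighbor $w$ with $\SIGMA(w)=i'\ne i$ which is itself bad with typical list $L'(w,0|\G,\SIGMA)=\{i',j'\}$. The edge $\{v,w\}$ survives the overlap filter in $\GWP'(0|\SIGMA)$ only if $\{i,j\}\cap\{i',j'\}\ne\emptyset$; since $i\ne i'$, this requires $i=j'$, $i'=j$ or $j=j'$, each of conditional probability $O(1/k)$. Hence the expected $\GWP'(0|\SIGMA)$-degree of a bad vertex is at most $d\cdot\pr[w\in B]\cdot O(1/k)=O(d/k^2)=O(k^{-1}\ln k)$, a quantity $c_k$ that tends to $0$ as $k\to\infty$.

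Since $c_k<1$ for $k\ge k_0$, the next step is to couple the BFS exploration of $T'(v,0|\G,\SIGMA)$ from a bad root with a subcritical Galton--Watson process of mean offspring $c_k$; the coupling is exact on $\omega$-neighborhoods that are acyclic, which by \Lem~\ref{Lemma_acyclic} holds for all but $o(n)$ roots. This yields $\Erw[|T'(v,0|\G,\SIGMA)|\mid v\in B]\le 1/(1-c_k)=O_k(1)$, and Markov's inequality gives $\pr[|T'(v,0|\G,\SIGMA)|>\omega\mid v\in B]\le O_k(1)/\omega$. Consequently $\Erw|\{v:|T'(v,0|\G,\SIGMA)|>\omega\}|\le\Erw|B|\cdot O_k(1)/\omega=O(n/(k\omega))$, which is at most $\eps n/2$ once $\omega=\omega(\eps)$ is chosen sufficiently large. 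Finally, since whether $|T'(v,0|\G,\SIGMA)|>\omega$ is (modulo the core condition) determined by a polylogarithmic-size neighborhood of $v$ by \Lem~\ref{Lemma_acyclic}, \Lem~\ref{Lemma_conc} upgrades this expectation bound to a \whp\ statement. The main obstacle is the non-locality of core membership: this is handled by replacing ``$v\in\core(\G,\SIGMA)$'' with a deterministic local surrogate (``$v$ has at least $100$ neighbors of every color $\ne\SIGMA(v)$''), and using the robustness statement in \Prop~\ref{Prop_core} (which tolerates the removal of up to $\sqrt n$ vertices) to justify that this substitution affects the bound on $|B|$ and on the expected bad-degree only by $o(1)$ factors.
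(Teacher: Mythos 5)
Your key quantitative insight is the right one: badness alone (probability $\approx k^{-1+o_k(1)}$ per vertex) cannot beat the branching factor $d\approx 2k\ln k$, and it is the list--overlap condition along each surviving edge that buys the extra factor of order $1/k$ and makes the ``bad graph'' subcritical. This is exactly the engine of the paper's argument, where the overlap requirement reappears as condition {\bf W3} in the definition of a wobbly set. However, your implementation has a genuine gap, and it sits precisely where you place your ``main obstacle''. The local surrogate you propose for core membership (``$v$ has at least $100$ neighbours of every colour $\neq\SIGMA(v)$'') is a \emph{superset} of $\core(\G,\SIGMA)$: one round of peeling only over-approximates the core. Warnings emitted from a larger set can only shrink every list $L'(\,\cdot\,,0)$, hence remove \emph{more} edges from $\GGWP'(0|\SIGMA)$, hence shrink every component $T'(v,0|\G,\SIGMA)$. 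So proving that the surrogate components are small says nothing about the true components; the substitution is safe only in the opposite direction. The device that does work --- and the one the paper uses --- is to delete the candidate component $S$ itself and pass to $C_S=\core$ of $\G-S$: since $C_S\subseteq\core(\G,\SIGMA)$, the associated lists $\Lambda(w,S)\supseteq L'(w,0|\G,\SIGMA)$, so all badness/overlap events only become \emph{more} likely (safe for an upper bound), while $C_S$ is independent of the edges between $S$ and the rest of the graph, which is what legitimises the probability estimates; \Prop~\ref{Prop_core} guarantees $C_S$ still has $\frac nk(1-k^{-2/3})$ vertices of each colour.

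The same non-locality undermines your two remaining steps. The claimed Galton--Watson coupling needs conditional independence (badness of the next neighbour, and near-uniformity of its ``second'' list colour, given the exploration history and given that $v$ is bad) that you never establish; these events are all tied together through the global core, and making them rigorous is exactly what the paper's $C_S$-based computation (conditioning on $\cE_S$, the independence of $C_S$ from the $S$-to-outside edges, and the rich/poor vertex case analysis) is engineered to do. Likewise, the final upgrade via \Lem~\ref{Lemma_conc} is not available as stated, because the event $\cbc{|T'(v,0|\G,\SIGMA)|>\omega}$ is not a function of a bounded-radius neighbourhood of $v$, even after a correct surrogate. The paper avoids both problems entirely: there is no exploration and no appeal to \Lem~\ref{Lemma_conc}; instead one takes $\theta=\lceil\ln\ln n\rceil$, performs a first-moment union bound over all candidate sets $S$ of size $\theta$ together with a spanning tree (Cayley's formula), and shows that the per-vertex probability gained from badness plus the {\bf W3} overlaps ($\approx k^{-(1+\Omega(1))}$ per vertex) beats the entropy $\approx(\eul d')^{\theta}$, after which Markov's inequality on the expected count finishes the proof with no separate concentration step. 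To repair your write-up you would have to replace the one-round surrogate by the $C_S$ construction and either carry out the exploration with the dependencies controlled at every step, or, more simply, convert your subcriticality computation into such a static union bound.
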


The proof of \Lem~\ref{Lemma_T''}, which we defer to \Sec~\ref{Sec_T''}, is
a bit technical but based on known arguments.
\Lem~\ref{Lemma_acyclic} shows that \whp\ for most vertices $v$ such that
$T'(v,0|\G,\SIGMA)$ contains at most, say, $\omega=\lceil\ln\ln\ln n\rceil$ vertices,
$T'(v,0|\G,\SIGMA)$ is a tree.
In this case, the following observation applies.

\begin{lemma}\label{Lemma_ab}
Let $G$ be a graph and let $\sigma$ be a $k$-coloring of $G$.
Assume that $T'(v,0|G,\sigma)$ is a tree on $\omega$ vertices for some integer $\omega\geq1$.
Then for any vertex $y$ in $T'(v,0|G,\sigma)$ we have $L(y|G,\sigma)=L'(y|G,\sigma)$.
Moreover, if $T'(v,0|G,\sigma)$ has $\omega$ vertices, then
	$L(y|G,\sigma)=L(y,\omega+2|G,\sigma)$ and $L'(y|G,\sigma)=L'(y,\omega+2|G,\sigma)$.
\end{lemma}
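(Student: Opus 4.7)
The plan is to split $L(y|G,\sigma)=L'(y|G,\sigma)$ into two inclusions and then upgrade the main induction into a quantitative stabilisation bound. The easy direction $L(y)\subseteq L'(y)$ holds on any graph: a direct induction on $t$ shows $\mu'_{v\to w}(i,t)\leq\mu_{v\to w}(i,t)$ (base case $\vecone_{i=\sigma(v)}\vecone_{v\in\core}\leq\vecone_{i=\sigma(v)}$, step uses monotonicity of the update rule~(\ref{eqWPit}) in the incoming arguments), so $L(y,t)\subseteq L'(y,t)$ for all $t$ and passing to the limit via Facts~\ref{Prop_WP}(3) and~\ref{Prop_WP'}(4) gives $L(y)\subseteq L'(y)$.

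For the reverse direction $L'(y)\subseteq L(y)$, I would argue by contradiction. If $j\in L'(y)\setminus L(y)$ then Fact~\ref{Prop_WP} supplies a neighbour $u$ of $y$ in $G$ with $\mu_{u\to y}(j,t)=1$ for all $t$, and in particular $\sigma(u)=j$. The heart of the proof is an inductive claim on the size of the subtree $T_{v,p}$ of $v$ in $T$ away from $p$: for any $v\in T$ and tree-neighbour $p$ of $v$, if $\mu_{v\to p}(\sigma(v),t)=1$ for all $t\geq 0$, then $\mu'_{v\to p}(\sigma(v),t)=1$ for all sufficiently large $t$. Applied to $u$ and $p=y$, this contradicts $j\in L'(y)$. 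The possibility $u\notin T$ is ruled out by a boundary analysis of the removed edge $\{y,u\}$ in $\GWP'(0|\sigma)$: either $|L'(u,0)|<2$, forcing $L'(u,0)=\{j\}$ so that $\mu'_{u\to y}(j,1)=1$ via core neighbours of $u$ covering $\brk k\setminus\{j\}$; or $L'(y,0)\cap L'(u,0)=\emptyset$, which combined with $j\in L'(u,0)$ forces $j\notin L'(y,0)$ and thus $\mu'_{w\to y}(j,0)=1$ for some core neighbour $w$ of $y$ with $\sigma(w)=j$. The degenerate subcase $|L'(y,0)|<2$ only happens if $T=\{y\}$, where direct computation gives $L(y,t)=L'(y,t)=\cbc{\sigma(y)}$ for all $t$.

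The inductive step mirrors the boundary argument. Persistence of $\mu_{v\to p}(\sigma(v),t)=1$ yields, for each $j'\neq\sigma(v)$, a non-$p$ neighbour $w_{j'}$ of $v$ in $G$ with $\sigma(w_{j'})=j'$ and persistent $\mu$-warning. If $w_{j'}\in T$, then since $T$ is a tree $w_{j'}$ is a tree-neighbour of $v$, and the induction hypothesis applied to the strictly smaller subtree $T_{w_{j'},v}$ gives $\mu'_{w_{j'}\to v}(j',t)=1$ eventually. If $w_{j'}\notin T$, the boundary analysis produces a non-$p$ neighbour of $v$ sending the $\mu'$-warning about $j'$; the alternative core neighbour from the second subcase differs from $p$ because $p\in T$ with $|T|\geq 2$ has $|L'(p,0)|\geq 2$ and hence $p\notin\core$ by Fact~\ref{Prop_WP'}(2). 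Combining across all $j'$ yields $\mu'_{v\to p}(\sigma(v),t+1)=1$, closing the induction. For the second assertion of the lemma I would refine this into a quantitative bound: both messages reach their limits by $t\leq|T_{v,p}|+1$, since boundary contributions stabilise by $t=1$ and each inductive step climbs one level of the tree. With $|T_{v,p}|\leq\omega$, messages on edges touching $T$ stabilise by $t\leq\omega+1$ and the lists $L(y,t),L'(y,t)$ by $t\leq\omega+2$.

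The main obstacle will be the careful bookkeeping in the boundary analysis, specifically verifying that the alternative core neighbour supplying the $\mu'$-warning is not equal to the specified non-neighbour $p$. This hinges on Fact~\ref{Prop_WP'}(2) combined with the definition of $\GWP'(0|\sigma)$, which together force $T$ and $\core$ to be disjoint whenever $|T|\geq 2$.
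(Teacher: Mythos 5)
Your argument for the first assertion is sound in outline and close in spirit to the paper's: the inclusion $L(y)\subseteq L'(y)$ via the pointwise domination $\mu'_{v\to w}(i,t)\leq\mu_{v\to w}(i,t)$ is correct (the paper uses the same containment), and your treatment of the hard inclusion rests on the same two ingredients as the paper's proof, namely a case analysis of edges deleted from $\GWP'(0|\sigma)$ at the boundary of the component and an induction up the tree. One step, however, is wrong as stated: from $w_{j'}\in T$ you conclude that $w_{j'}$ is a tree-neighbour of the current vertex ``since $T$ is a tree''. That does not follow: $T$ is a connected component of $\GWP'(0|\sigma)$, and a $G$-edge joining two vertices of $T$ may itself have been deleted (the two endpoints being joined through other vertices), in which case it is not an edge of $T$. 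The gap is repairable with tools you already have: such a deletion cannot be due to $|L'(x,0)|<2$ or $|L'(w_{j'},0)|<2$ (either would isolate that endpoint in $\GWP'(0|\sigma)$, contradicting membership in $T$ with $|T|\geq2$), so it must be due to disjoint lists, and then your disjointness subcase produces the required core neighbour of colour $j'$ distinct from $p$. Similarly, in your top-level subcase $|L'(u,0)|<2$ you should note that the core neighbours of $u$ covering $\brk k\setminus\cbc j$ are distinct from $y$; this needs $y\notin\core(G,\sigma)$, which either follows from $|T|\geq 2$ or, if $y\in\core(G,\sigma)$, makes the lemma trivial for $y$ by Facts~\ref{Fact_WPcore} and~\ref{Prop_WP'}.

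The more substantive gap is the second assertion. The induction you formalise is conditional (it assumes the $\mu$-warning $\mu_{v\to p}(\sigma(v),t)=1$ persists for all $t$) and one-sided (it only shows that $\mu'_{v\to p}(\sigma(v),t)$ eventually becomes $1$, and only for the colour $\sigma(v)$). The stabilisation claims $L(y)=L(y,\omega+2)$ and $L'(y)=L'(y,\omega+2)$ need something different: since the lists under $\mu$ only grow, you must show that every $\mu$-warning at $y$ that eventually vanishes has already vanished by time $\omega+2$, and symmetrically that every $\mu'$-warning that eventually appears has appeared by then -- unconditionally, for all colours and all incident edges. Your closing sentence asserts the bound $t\leq|T_{v,p}|+1$ ``since boundary contributions stabilise by $t=1$ and each inductive step climbs one level of the tree'', but this is not a consequence of the induction you set up; it is a different, stronger statement that has to be proved. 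The way to do it (and the paper's route, cf.~(\ref{eqLemma_ab_a}) and~(\ref{eqLemma_UpperMatchesLower2})) is an unconditional, two-sided induction on the height of a vertex inside the tree: messages entering the tree from outside are constant, and equal under $\mu$ and $\mu'$, for all $t>0$ (for the relevant colours), and the messages along a tree edge $x\to P_y(x)$ are constant and agree under both dynamics for all $t>h_y(x)+1$. That single induction yields both $L(y)=L'(y)$ and the $\omega+2$ stabilisation at once, so upgrading your argument to this form is the natural fix.
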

\begin{proof}
We begin by establishing the following statement.
	\begin{equation}\label{eqLemma_ab_a}
	\parbox{14cm}{If $\cbc{x,z}$ is an edge of $G$ such that $x$ belongs to $T'(v,0)$ and $z$ does not belong to $T'(v,0)$
		then for any $t>0$ and any $j\in L'(x,0)$ we have $\mu_{z\ra x}(j,t)=\mu_{z\ra x}'(j,t)=\vecone_{L'(z,0)=\cbc j}$.}
	\end{equation}
To prove~(\ref{eqLemma_ab_a}), we consider two cases.
\begin{description}
\item[Case 1: $|L'(z,0)|>1$] we have $L'(x,0)\cap L'(z,0)=\emptyset$, because $T'(v,0)$ is a component of $\GWP'(0|\sigma)$.
	In particular, $\sigma(z)\not\in L'(x,0)$.
	As Facts~\ref{Prop_WP} and~\ref{Prop_WP'} show that $\mu_{z\ra x}(j,t)=1$ or $\mu_{z\ra x}'(j,t)=1$ only if $j=\sigma(z)$, we conclude that
	$\mu_{z\ra x}(j,t)=\mu_{z\ra x}'(j,t)=0$ for any $t>0$.
\item[Case 2: $|L'(z,0)|=1$] by Facts~\ref{Prop_WP} and~\ref{Prop_WP'} we have $L'(z,0)=\cbc{\sigma(z)}$.
	Hence, for any $j\neq\sigma(z)$ vertex $z$ has a neighbor $u_j$ in the core such that $\sigma(u_j)=j$.
	Since Fact~\ref{Fact_WPcore} and Fact~\ref{Prop_WP'} entail that $\mu_{u_j\ra z}(j,t-1)=\mu_{u_j\ra z}'(j,t-1)=1$ for all $t>0$,
		we see that $\mu_{z\ra x}(\sigma(z),t)=\mu_{z\ra x}'(\sigma(z),t)=1$ for all $t>0$.
	Moreover, once more by Facts~\ref{Prop_WP} and~\ref{Prop_WP'} we have $\mu_{z\ra x}(i,t)=\mu_{z\ra x}'(i,t)=0$ for all $i\neq\sigma(z)$.
\end{description}
Hence, in either case we obtain $\mu_{z\ra x}(j,t)=\mu_{z\ra x}'(j,t)=\vecone_{L'(z,0)=\cbc j}$, as claimed.

Now, pick and fix an arbitrary vertex $y$ in $T'(v,0)$.
We define the {\em $y$-height} $h_y(x)$ of a vertex $x\neq y$ in $T'(v,0)$ as follows.
Since  $T'(v,0)$ is a tree, there is a unique path from $x$ to $y$ in $T'(v,0)$.
Let $P_y(x)$ be the neighbor of $x$ on this path.
Then $h_y(x)$ is the maximum distance from $x$ to a leaf of $T'(v,0)$ that belongs to the component of $x$
in the subgraph of $T'(v,0)$ obtained by removing the edge $\cbc{x,P_y(x)}$.
We claim that for all $j\in\brk k$,
	\begin{equation}\label{eqLemma_UpperMatchesLower2}
	\mu_{x\ra P_y(x)}(j,t)=\mu_{x\ra P_y(x)}(j,h_y(x)+2)=\mu_{x\ra P_y(x)}'(j,h_y(x)+2)
		=\mu_{x\ra P_y(x)}'(j,t)\quad\mbox{if $t>h_y(x)+1$.}
	\end{equation}

The proof of (\ref{eqLemma_UpperMatchesLower2}) is by induction on $h_y(x)$.
To get started, suppose that $h_y(x)=0$.
Then $x$ is a leaf of $T'(v,0)$.
Let $U$ be the set of all neighbors $u\neq P_y(x)$ of $x$ in $G(n,p',\sigma)$.
Then~(\ref{eqLemma_ab_a}) shows that
	$$\mu_{u\ra x}(j,t)=\mu_{u\ra x}'(j,t)=\vecone_{L'(u,0)=\cbc j}\quad\mbox{ for all $u\in U$, $t>0$.}$$
Hence, for all $j\in\brk k$, $t>0$ we have
	$$\mu_{x\ra P_y(x)}(j,t+1)=\mu_{x\ra P_y(x)}'(j,t+1)=\prod_{j\neq\sigma(x)}\max\cbc{\vecone_{L'(u,0)=\cbc j}:u\in U}\quad\mbox{ if }h_y(x)=0.$$

Now, assume that $h_y(x)>0$.
Let $U$ be the set of all neighbors $u$ of $x$ that do not belong to $T'(v,0)$, and let $U'$ be the set of all neighbors $u'\neq P_y(x)$ of $x$ in $T'(v,0)$.
Then all $u'\in U'$ satisfy $h_y(u')<h_y(x)$.
Moreover, $P_y(u')=x$.
Therefore, by induction
	\begin{eqnarray}\label{eqLemma_UpperMatchesLower3}
	\mu_{u'\ra x}(j,t-1)&=&\mu_{u'\ra x}(j,h_y(x)+1)\\
		&=&\mu_{u'\ra x}'(j,h_y(x)+1)
		=\mu_{u'\ra x}'(j,t-1)\quad\mbox{for all $u'\in U'$, $j\in\brk k$, $t>h_y(x)+1$.}
		\nonumber
	\end{eqnarray}
Furthermore, (\ref{eqLemma_ab_a}) implies that for any $t>1$,
	\begin{equation}\label{eqLemma_UpperMatchesLower4}
	\mu_{u\ra x}(j,t-1)=\mu_{u\ra x}'(j,t-1)=\vecone_{L'(u,0)=\cbc j}\quad\mbox{for any }j\in\brk k.
	\end{equation}
Combining~(\ref{eqLemma_UpperMatchesLower3}) and~(\ref{eqLemma_UpperMatchesLower4}), we see that for any $t>h_y(x)+1$ and any $i\in\brk k$,
	\begin{eqnarray*}
	\mu_{x\ra P_y(x)}(i,t)&=&\prod_{j\neq i}\max\cbc{\mu_{u\ra x}(j,t-1),\mu_{u'\ra x}(j,t-1):u\in U,u'\in U'}\qquad\mbox{[by~(\ref{eqWPit})]}\\
		&=&\prod_{j\neq i}\max\cbc{\mu_{u\ra x}'(j,t-1),\mu_{u'\ra x}'(j,t-1):u\in U,u'\in U'}=\mu_{x\ra P_y(x)}'(i,t),\\
	\mu_{x\ra P_y(x)}(i,t)&=&\prod_{j\neq i}\max\cbc{\mu_{u\ra x}(j,t-1),\mu_{u'\ra x}(j,t-1):u\in U,u'\in U'}\\
		&=&\prod_{j\neq i}\max\cbc{\mu_{u\ra x}(j,h_y(x)+1),\mu_{u'\ra x}(j,h_y(x)+1):u\in U,u'\in U'}\\
		&=&\mu_{x\ra P_y(x)}(i,h_y(x)+2),\qquad\mbox{and analogously}\\
	\mu_{x\ra P_y(x)}'(i,t)
		&=&\mu_{x\ra P_y(x)}'(i,h_y(x)+2).
	\end{eqnarray*}
This completes the proof of~(\ref{eqLemma_UpperMatchesLower2}).

Finally, we observe that $h_y(x)\leq\omega=|T'(v,0)|$ for all $x$.
Hence, applying (\ref{eqLemma_UpperMatchesLower2}) to the neighbors $x$ of $y$ in $T'(v,0)$, we obtain
$\mu_{x\ra y}(j,t)=\mu_{x\ra y}(j,\omega+2)=\mu_{x\ra y}'(j,\omega+2)=\mu'_{x\ra y}(j,t)$ for all $j\in\brk k$ and all $t>\omega+1$.
Together with~(\ref{eqLemma_ab_a}), this show that for any $y\in T'(v,0)$ and any vertex $x$ that is adjacent to $y$ in $G$ we have
	\begin{equation}\label{eqLemma_UpperMatchesLower7}
	\mu_{x\ra y}(j,t)=\mu_{x\ra y}(j,\omega+2)=\mu_{x\ra y}'(j,\omega+2)=\mu'_{x\ra y}(j,t)\quad\mbox{for all $j\in\brk k$ and all $t>\omega+1$.}
	\end{equation}
Combining~(\ref{eqLemma_UpperMatchesLower7}) with the monotonicity properties from Facts~\ref{Prop_WP} and~\ref{Prop_WP'},
we see that $L(y)=L(y,\omega+2)=L'(y,\omega+2)=L'(y)$, as desired.
\end{proof}

\begin{proof}[Proof of \Lem~\ref{Lemma_UpperMatchesLower}]
\Lem~\ref{Lemma_T''} implies that all but $o(n)$ vertices $v$ we have $|T'(v,0)|\leq\ln\ln\ln n$ \whp\
Together with \Lem s~\ref{Lemma_acyclic}, this implies that \whp\ $T'(v,0)$ is a tree for all but $o(n)$ vertices $v$.
Thus, assume in the following that $v$ is such that $T'(v,0)$ is a tree.

It is immediate from Facts~\ref{Prop_WP}, \ref{Fact_WPcore} and~\ref{Prop_WP'} that
	$L(w)\subset L'(w)\subset L'(w,0)$ for all vertices $w$.
Therefore, $\GGWP(\SIGMA)\subset\GGWP'(\SIGMA)\subset\GWP'(0|\SIGMA)$ and thus
	\begin{equation}\label{eqtrivInclusion}
	T(v)\subset T'(v)\subset T'(v,0).
	\end{equation}
Conversely, \Lem~\ref{Lemma_ab} shows that $L(x)=L'(x)$ for all vertices $x$ in $T'(v,0)$.
Together with~(\ref{eqtrivInclusion}), this implies that $T(v)=T'(v)$.
\end{proof}

\begin{proof}[Proof of \Prop~\ref{Lemma_WPupper}]
By \Cor~\ref{Cor_core} and \Cor~\ref{Cor_WPupper} we have $\cZ(\GWP(\SIGMA))\leq|\cC(\G,\SIGMA)|\leq\cZ(\GWP'(\SIGMA))$ \whp\
Thus, it suffices to show that $\ln\cZ(\GWP(\SIGMA))=\ln\cZ(\GWP'(\SIGMA))+o(n)$ \whp\
Indeed, because the various connected components of $\GGWP(\SIGMA)$ can be colored independently, we find that
	\begin{eqnarray}\label{eqLemma_WPupper1}
	\ln\cZ(\GGWP(\SIGMA))=\sum_{v\in[n']}\frac{\ln\cZ(T(v|\G,\SIGMA))}{|T(v|\G,\SIGMA)|},&&\ln\cZ(\GWP(\SIGMA)')=\sum_{v\in [n']}\frac{\ln\cZ(T'(v|\G,\SIGMA))}{|T'(v|\G,\SIGMA)|}.
	\end{eqnarray}
Clearly, for any vertex $v$ we have $\frac{\ln\cZ(T(v|\G,\SIGMA))}{|T(v|\G,\SIGMA)|},\frac{\ln\cZ(T'(v|\G,\SIGMA))}{|T'(v|\G,\SIGMA)|}\leq\ln k$.
Hence, \Lem~\ref{Lemma_UpperMatchesLower} shows that \whp
	\begin{eqnarray}\label{eqLemma_WPupper2}
	\sum_{v\in [n']}\frac{\ln\cZ(T(v|\G,\SIGMA))}{|T(v|\G,\SIGMA)|}\sim\sum_{v\in [n']}\frac{\ln\cZ(T'(v|\G,\SIGMA))}{|T'(v|\G,\SIGMA)|}.
	\end{eqnarray}
Finally, the assertion follows from~(\ref{eqLemma_WPupper1}) and~(\ref{eqLemma_WPupper2}).
\end{proof}

\subsection{Counting legal colorings}
\Prop~\ref{Lemma_WPupper} reduces the proof of \Prop~\ref{Prop_cluster} to the problem of counting the legal colorings
of the reduced graph $\GGWP(\SIGMA)$.
\Lem~\ref{Lemma_T''} implies that \whp\ $\GGWP(\SIGMA)$ is a forest consisting mostly of trees of size, say at most $\ln\ln\ln n$.
In this section we are going to show that \whp\ the ``statistics'' of these trees follows the distribution of the random tree 
generated by the branching process from \Sec~\ref{sec:outline}.
To formalise this, let $\T=\T_{d,k,\vec q^*}$ with $\vec q^*$ from~(\ref{eqq*}) denote the random
isomorphism class of rooted, decorated trees produced by the process $\GW(d,k,\vec q^*)$.
Moreover, for a be a rooted, decorated tree $T$  
let $H_T$ be the number of vertices $v$ in $\GGWP(\SIGMA)$ such that $T(v|\G,\SIGMA)\ism T$.
In this section we prove

\begin{proposition}\label{Prop_empiricalTrees}
If $T$ is such that $\pr[T\in\T]>0$, then $(\frac1nH_T)_{n\geq1}$ converges to $\pr\brk{T\in\vec T}$ in probability.
\end{proposition}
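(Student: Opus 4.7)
The natural strategy is to prove concentration of $H_T$ around its mean, and then identify $\Erw[H_T]/n$ with $\pr[T\in\T]$ via a local weak convergence argument. First I would write $H_T = H_T^{\omega} + (H_T - H_T^{\omega})$, where $H_T^{\omega}$ only counts vertices $v$ such that both $T'(v,0\mid\G,\SIGMA)$ and the depth-$\omega$ neighborhood $\partial_{\G}^{\omega}(v)$ have at most $\omega$ vertices and are acyclic. Lemma~\ref{Lemma_T''} and Lemma~\ref{Lemma_acyclic} together ensure that, for $\omega$ large enough (but fixed), the discrepancy $H_T - H_T^{\omega}$ is $o(n)$ w.h.p., uniformly in $n$. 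So it suffices to prove the corresponding statement for $H_T^{\omega}$.

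Next I would argue that $H_T^{\omega}$ is a \emph{local} statistic. For any vertex $v$ counted by $H_T^{\omega}$, Lemma~\ref{Lemma_ab} (applied to the tree $T'(v,0\mid\G,\SIGMA)$, which by Fact~\ref{Prop_WP'}(3) contains $T(v\mid\G,\SIGMA)$) implies that $L(w\mid\G,\SIGMA) = L(w,\omega+2\mid\G,\SIGMA)$ for every $w$ in the component. Since $L(w,\omega+2\mid\G,\SIGMA)$ is determined by $\partial_{\G}^{\omega+3}(w)$ together with the restriction of $\SIGMA$, the component $T(v\mid\G,\SIGMA)$ is itself a function of $(\partial_{\G}^{2\omega+3}(v),\SIGMA|_{\partial_{\G}^{2\omega+3}(v)},v)$ whenever $v$ is counted in $H_T^{\omega}$. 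Hence there is a set $\cS$ of rooted, $k$-colored graphs of bounded radius so that $H_T^{\omega} = \sum_v \vecone\{(\partial_{\G}^{2\omega+3}(v),\SIGMA|_\cdot,v)\in\cS\}$, and Lemma~\ref{Lemma_conc} gives $H_T^{\omega}=\Erw[H_T^{\omega}]+o(n)$ w.h.p.

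The remaining task is to show $\Erw[H_T^{\omega}]/n \to \pr[T\in\T]$. By symmetry, this reduces to proving that for a uniformly chosen vertex $\vec v$, the probability that $T(\vec v\mid\G,\SIGMA)\cong T$ tends to $\pr[T\in\T]$. A standard local weak convergence calculation in the planted model $\G=G(n',p',\SIGMA)$ shows that $(\partial_{\G}^{2\omega+3}(\vec v),\SIGMA|_\cdot,\vec v)$ converges in total variation, as $n\to\infty$, to the truncation of a Galton--Watson tree $\bT$ in which the root has a uniformly random color in $\brk k$ and every vertex of color $h$ has, independently for each $j\neq h$, a $\Po(d/(k-1))$ number of children of color $j$. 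Thus I need to identify the pushforward of (the truncation of) $\bT$ through the ``run Warning Propagation, then keep the component of the root in $\GWP(\SIGMA)$'' map with the distribution of $\T_{d,k,\vec q^*}$ truncated to trees of size at most $\omega$.

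The hard part is exactly this identification: WP on the finite graph depends on boundary data, but on the infinite Galton--Watson tree there is no natural boundary. My plan is to exploit Proposition~\ref{Prop_core}: at least a $1-O_k(1/k)$ fraction of vertices of each color lie in the core, and they force singleton lists. Using this, I would set up a recursion on the Galton--Watson tree: the probability $\hat q_{i,\ell}$ that the root, conditioned on having color $i$, receives list $\ell$ under WP with the ``core boundary condition'' is given by a distributional fixed point, since the subtrees rooted at the root's children are themselves independent copies of the same process. A direct computation (essentially the one in the proof of Lemma~\ref{Lemma_hardFields}) shows that this fixed point equation coincides with $\vec q = F_{d,k}(\vec q)$, whose unique frozen solution is $\vec q^*$ by Lemma~\ref{Prop_Fixpunkt}. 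Once $\hat q_{i,\ell}=k q_{i,\ell}^*$ is established for vertex types, the same inductive argument shows that the edge set of the component in $\GWP(\SIGMA)$ is obtained by keeping precisely those parent--child pairs whose lists intersect and both have size $\geq 2$; this is exactly the offspring rule of $\GW(d,k,\vec q^*)$. Therefore the induced law on decorated trees is $\T_{d,k,\vec q^*}$, which completes the computation of the expectation and hence the proof.
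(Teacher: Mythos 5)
Your overall architecture (truncate at a fixed radius $\omega$, concentrate via \Lem~\ref{Lemma_conc}, then compute the expectation and match it with the branching process) is the same as the paper's, but two of your steps have genuine problems. First, the concentration step: your $H_T^{\omega}$ requires $T'(v,0\mid\G,\SIGMA)$ to be small, and $T'(\cdot,0)$ is defined through the core, which is a \emph{global} object --- whether a neighbour of $v$ lies in $\core(\G,\SIGMA)$ is not determined by $\partial_{\G}^{2\omega+3}(v)$. Hence $H_T^{\omega}$ is not of the form $\sum_v\vecone\{(\partial_{\G}^{2\omega+3}(v),\SIGMA|_{\partial_{\G}^{2\omega+3}(v)},v)\in\cS\}$, and \Lem~\ref{Lemma_conc} does not apply to it as written. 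The paper avoids this by making the counted event purely local: it works with $H_{T,\omega}$, the number of $v$ with $T(v,\omega\mid\G,\SIGMA)\ism T$, and uses the core-based statements (\Lem~\ref{Lemma_T''} together with \Lem~\ref{Lemma_ab}, packaged as \Lem~\ref{Lemma_omegaStable}) only in a separate \whp\ estimate showing $T(v\mid\G,\SIGMA)=T(v,\omega\mid\G,\SIGMA)$ for all but $\eps n$ vertices, outside the concentration argument; your split can be repaired along these lines.

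The second and more substantial issue is the identification of $\lim_n\Erw[H_T^{\omega}]/n$ with $\pr[T\in\T]$, which you rightly call the hard part but then essentially assert. A self-consistency recursion on the limiting Galton--Watson tree only tells you that the list marginals solve $\vec q=F_{d,k}(\vec q)$; it does not by itself select the frozen solution, nor does it determine the joint law of the root's type, the children's types and the surviving edges, which is what $\T_{d,k,\vec q^*}$ prescribes (the children's lists are not unconditionally independent of the root's list, since warnings travel both ways along an edge, and the offspring rule is restricted to $\cT_{i,\ell}$). Your idea of invoking \Prop~\ref{Prop_core} to force frozenness is salvageable in principle (\whp\ a $1-k^{-2/3}$ fraction of each colour class is frozen, so any limiting hard-field vector lies in the regime where \Lem~\ref{Prop_Fixpunkt} gives uniqueness), but it still leaves the whole decorated-tree law unproved. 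The paper pins this down by a finite-$n$ induction on the shape of $T$ (\Lem~\ref{Lemma_treeStat}): remove one vertex, use that the remaining graph is again a planted model, and compute the Poisson attachment probabilities of the root to the sets of vertices whose truncated components realize the subtrees of $T$; the base case is \Lem~\ref{Prop_empirical}, which tracks the Warning Propagation iterates $\vec q^t=F(\vec q^{t-1})$ from the uniform initialization and uses the contraction property of $F_{d,k}$ to obtain $\vec q^t\to\vec q^*$, thereby resolving the fixed-point selection without any ``core boundary condition'' on the infinite tree. You would need to supply an argument of comparable strength for the joint law; as it stands, the sentence ``the same inductive argument shows that the edge set \dots is exactly the offspring rule'' is where the actual work lies and is missing.
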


We begin by showing that the fixed point problem $\vec q^*=F(\vec q^*)$ with $F$ from~(\ref{eqThm_condF}) provides
a good approximation to the number of vertices $v$ such that $L(v|\G,\SIGMA)=\cbc i$ for any $i$.
To this end, we let
	$$\vec q^0=(1/k,\ldots,1/k)\quad\mbox{and}\quad \vec q^t=F(\vec q^{t-1})\quad\mbox{for }t\geq1.$$
In addition, let $Q_{i}(t|\G,\SIGMA)$ be the set of vertices $v$ of $\G$ such that $L(v,t|\G,\SIGMA)=\cbc i$.

\begin{lemma}\label{Prop_empirical}
For any $i\in\brk k$ 
and any fixed $t>0$ we have
	$\frac 1n|Q_{i}(t|\G,\SIGMA)|=q_{i}^{t}+o(1)$ \whp
\end{lemma}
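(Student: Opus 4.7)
The plan is to combine a purely local computation with the concentration lemma \Lem~\ref{Lemma_conc}. The key observation is that by the Warning Propagation recursion~\eqref{eqWPit} and the initialization~\eqref{eqWPini}, a straightforward induction on $t$ shows that $L(v,t|\G,\SIGMA)$ is determined by the decorated rooted graph $(\partial^{t+1}_{\G}(v),\SIGMA|_{\partial^{t+1}(v)},v)$. Since $t$ is fixed and $\omega=10\lceil\ln\ln\ln n\rceil\to\infty$, for all sufficiently large $n$ the indicator $S_v=\vecone\{L(v,t|\G,\SIGMA)=\{i\}\}$ fits the template of \Lem~\ref{Lemma_conc}, with $\cS=\{(G_0,\sigma_0,v_0):L(v_0,t|G_0,\sigma_0)=\{i\}\}$. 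That lemma immediately yields
\[
|Q_{i}(t|\G,\SIGMA)|=\Erw\brk{|Q_{i}(t|\G,\SIGMA)|}+o(n) \quad\text{w.h.p.}
\]
Because the planted law $\G=G(n',p',\SIGMA)$ is invariant under permutations of vertices, every vertex contributes the same probability $p_{n,i,t}=\pr[L(v_1,t|\G,\SIGMA)=\{i\}]$, so the task reduces to proving $p_{n,i,t}\to q_i^t$ as $n\to\infty$.

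Next I would carry out a standard local weak convergence step. The $\omega$-neighbourhood bound in \Lem~\ref{Lemma_acyclic} guarantees that $\partial^{t+1}_{\G}(v_1)$ contains at most $n^{0.01}$ vertices and is acyclic with probability $1-o(1)$, so a direct depth-$(t+1)$ exploration of $\G$ around $v_1$ couples to the multi-type Galton--Watson tree $\cT$ with total variation error $o(1)$: the root has a uniform colour in $\brk k$, and each vertex of colour $c$ independently spawns $\Po(d/(k-1))$ offspring of each colour $j\neq c$ (this is just Poisson convergence of $\Bin(\lfloor n/k\rfloor,p')\to\Po(d/(k-1))$, using $p'=d'/n=dk/((k-1)n)$). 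Since the event $\{L(v_1,t)=\{i\}\}$ depends only on the first $t+1$ generations, continuity under this coupling gives $p_{n,i,t}\to\pr\brk{L(r_0,t|\cT,\SIGMA)=\{i\}}$, where $r_0$ is the root of $\cT$.

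The final step is to identify this tree-probability with $q_i^t$ by induction on $t$ using the cavity recursion. Letting $\hat q_j^{s}$ denote the probability that a vertex $u$ of colour $j$ in $\cT$ emits a colour-$j$ warning to its parent at time $s$, Fact~\ref{Prop_WP}(2) together with~\eqref{eqWPit} and the Poisson thinning identity $\pr[\Po(\lambda)\geq1]=1-\eul^{-\lambda}$ yields
\[
\hat q_j^{s+1}=\prod_{l\neq j}\bc{1-\exp\bc{-\tfrac{d}{k-1}\hat q_l^{s}}},\qquad \hat q_j^{0}=1.
\]
Setting $p_j^s=\hat q_j^s/k$ converts this into $\vec p^{s+1}=F_{d,k}(\vec p^s)$ via the identity $d/(k-1)=d'/k$, and matching the trivial vertex-versus-cavity calculation at the root gives $\pr\brk{L(r_0,t|\cT,\SIGMA)=\{i\}}=F_{d,k}(\vec p^{t-1})_i$. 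Since the initial condition $\vec p^{0}=k^{-1}\vecone=\vec q^{0}$ agrees with the paper's, iterating $F_{d,k}$ gives $\vec p^s=\vec q^s$ and hence $p_{n,i,t}\to q_i^t$.

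The main obstacle is really only bookkeeping: one must separate the ``cavity'' emission probabilities $\hat q_j^s$ (controlled by the subtree below a child, with its parent excluded) from the ``vertex'' probabilities (the root's list being the singleton $\{i\}$), and track correctly that these differ by the factor $1/k$ coming from the uniform colour of the root and by one application of $F_{d,k}$. The concentration and local-weak-convergence pieces themselves are entirely routine given \Lem s~\ref{Lemma_acyclic} and~\ref{Lemma_conc}.
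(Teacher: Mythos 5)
Your proposal is essentially correct but follows a genuinely different route from the paper. The paper never constructs the limiting tree: it argues by induction on $t$ directly in the graph, removing the last vertex $v$ so that the remainder $\widetilde\G$ is again a planted graph on $n'-1$ vertices (this is the reason for the $\eta=o(n)$ set-up of the section), computing $\pr[\cA(i)\mid\SIGMA(v)=i]$ from the edge-independence of the planted model and the induction hypothesis on $|Q_j(t-1|\widetilde\G,\SIGMA)|$, relating $\cA(i)$ to $L(v,t|\G,\SIGMA)=\cbc i$ via \Lem~\ref{Lemma_acyclic}, and then invoking \Lem~\ref{Lemma_conc} exactly as you do. Your version instead reduces to a single-vertex probability by symmetry plus \Lem~\ref{Lemma_conc}, couples the depth-$(t+1)$ neighborhood to the colored Galton--Watson tree, and solves the Warning Propagation recursion on the tree once and for all. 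What the paper's route buys is that only a one-step neighborhood computation is ever needed (the depth is handled by the induction on $t$), so no local-weak-convergence statement has to be formalized; what your route buys is modularity, since the entire $t$-dependence is isolated in an explicit fixed-point iteration on the tree. The coupling step you label ``routine'' does need a genuine depth-$(t+1)$ exposure argument with Poisson approximation, which is roughly the same amount of work as the paper's inductive step, so neither route is shorter.

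One concrete slip in your last step: the root's list at time $t$ is determined by the \emph{time-$t$} incoming messages, whose emission probabilities are $\hat q_j^{t}$, so the tree probability is $\frac1k\prod_{j\neq i}\bigl(1-\exp(-\frac{d}{k-1}\hat q_j^{t})\bigr)=F_{d,k}(\vec p^{t})_i$, not $F_{d,k}(\vec p^{t-1})_i$. With your (correct) identification $\vec p^{s}=\vec q^{s}$, $\vec p^{0}=k^{-1}\vecone$, this gives $q_i^{t+1}$ rather than $q_i^{t}$ under the paper's convention $\vec q^{0}=k^{-1}\vecone$, $\vec q^{t}=F(\vec q^{t-1})$. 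Interestingly, the paper's own proof contains the mirror image of this off-by-one: its base case sets $q_i^{-1}=1/k$, which coincides with $q_i^{0}$, so its induction in effect also produces $F^{t+1}(k^{-1}\vecone)$ while labelling it $q^{t}$. The shift is immaterial for everything downstream (\Lem~\ref{Lemma_treeStat} only uses that $\vec q^{t}\to\vec q^{*}$ as $t\to\infty$), but as written your final equality should either carry the index $t+1$ or be accompanied by a reindexing of the sequence $(\vec q^{t})_t$; as it stands the identification $\pr[L(r_0,t)=\cbc i]=F_{d,k}(\vec p^{t-1})_i$ is not what the tree recursion yields.
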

\begin{proof}
We proceed by induction on $t$.
To get started, we set
	$Q_{i}(-1|\G,\vec\sigma)=\SIGMA^{-1}(i)$ and
		$q_{i}^{-1}=1/k.$
Then \whp\ $\frac 1n|Q_{i}(-1|\G,\vec\sigma)|=q_i^{-1}+o(1)$.

Now, assuming that $t\geq0$ and that the assertion holds for $t-1$, we are going to argue that
	\begin{equation}\label{eqempiricalInduction1}
	\Erw[|Q_{i}(t|\G,\vec\sigma),\vec\sigma)|/n]=q_{i}^{t}+o(1).
	\end{equation}
Indeed, let $v=n'$ be the last vertex of the random graph,
and let us condition on the event that $\vec\sigma(v)=i$.
By symmetry and the linearity of expectation, 
it suffices to show that 
	\begin{equation}\label{eqempiricalInduction2}
	\pr[L(v,t|\G,\SIGMA)=\cbc i | \vec\sigma(v)=i]=kq_{i}^{t}+o(1).
	\end{equation}

To show~(\ref{eqempiricalInduction2}), 
let $\widetilde\G$ signify the subgraph obtained from $\G$ by removing $v$.
Moreover, let $\cQ^{t-1}(\eps)$ be the event that 
	\begin{equation}\label{eqempiricalInduction3a}
	|n^{-1}|Q_{j}(t-1|\widetilde\G,\vec\sigma)|-q_{j}^{t-1}|<\eps\quad\mbox{for all $j\in\brk k$}.
	\end{equation}
Since $\widetilde\G$ is nothing but a random graph $G(n'-1,p',\vec\sigma)$ with one less vertex
and as $n'-1=n-o(n)$, 
by induction we have
	\begin{equation}\label{eqempiricalInduction3}
	\pr[\cQ^{t-
1}(\eps)]=1-o(1)\qquad\mbox{for any }\eps>0.
	\end{equation}

Let $\cA(i)$ be the event that 
for each $j\in\brk k\setminus\cbc i$ there is $w\in \partial_{\G} v$ such that $L(w,t-1|\widetilde G,\SIGMA)=\cbc j$.
Given $\vec\sigma(v)=i$, we can obtain $\G$ from $\widetilde\G$
by connecting $v$ with each vertex $w\in[n'-1]$ such that $\vec\sigma(w)\neq i$ with probability $p'$ independently.
Therefore,
	\begin{eqnarray*}\nonumber
	\pr\brk{\cA(i)|\widetilde\G,\vec\sigma(v)=i}&=&
		\prod_{j\neq i}1-(1-p')^{n|Q_{j}(t-1|\widetilde\G,\vec\sigma)|}
		\sim\prod_{j\neq i}1-\exp(-np'|Q_{j}(t-1|\widetilde\G,\vec\sigma)|)\\
		&\sim&\prod_{j\neq i}1-\exp\brk{-\frac{kd}{k-1}\cdot |Q_{j}(t-1|\widetilde\G,\vec\sigma)|}.
			\label{eqempiricalInduction4}
	\end{eqnarray*}
Furthermore, for any fixed $\delta>0$ there is an ($n$-independent) $\eps>0$ such that given that $\cQ^{t-1}(\eps)$ occurs, we have
	\begin{equation}			\label{eqempiricalInduction4a}
	\abs{q^t_{i}-\prod_{j\neq i}1-\exp\bc{-\frac{kd}{k-1}\cdot |Q_{j}(t-1|\widetilde \G,\vec\sigma)|}}<\delta.
	\end{equation}
Combining~(\ref{eqempiricalInduction3}) and (\ref{eqempiricalInduction4a}), we see that for any fixed $\delta>0$ we have
	\begin{eqnarray}
	\abs{\pr\brk{\cA(i)|\vec\sigma(v)=i}-kq^t_{i}}<\delta+o(1).
			\label{eqempiricalInduction5}
	\end{eqnarray}
If $v$ is acyclic, $\vec\sigma(v)=i$ and $\cA(i)$ occurs, then $L(v,t|\G,\SIGMA)=\cbc i$.
Therefore, (\ref{eqempiricalInduction2}) follows from (\ref{eqempiricalInduction5}) and \Lem~\ref{Lemma_acyclic}.

Finally, the random variable $|Q^{t}_{i}(\G,\vec\sigma)|$ satisfies the assumptions of \Lem~\ref{Lemma_conc}.
Indeed, the event $v\in Q_{i}(t|\G,\vec\sigma)$ is determined solely by the sub-graph of $\G$ encompassing
those vertices at distance at most $t$ from $v$.
Thus, (\ref{eqempiricalInduction1}) and \Lem~\ref{Lemma_conc} imply that
	$\frac 1n|Q_{i}(t|\G,\vec\sigma)|=q_i^t+o(1)$ \whp, as desired.
\end{proof}

As a next step, we consider the statistics of the trees $T(v,\omega|\G,\SIGMA)$ with $\omega\geq0$ large but fixed as $n\ra\infty$.
Thus, for an isomorphism class $T$ of rooted, decorated graphs we let $H_{T,\omega}$ be the number of vertices
$v$ in $\GGWP(\omega|\SIGMA)$ such that $T(v,\omega|\G,\SIGMA)\in T$.

\begin{lemma}\label{Lemma_treeStat}
Assume that $T$ is an isomorphism class of rooted decorated trees such that $\pr\brk{\T\ism T}>0$.
Then for any $\eps>0$ there is $\omega>0$ such that
	$$\lim_{n\ra\infty}\pr\brk{\abs{\pr\brk{\vec T=T}-\frac1nH_{T,\omega}}>\eps}=0.$$
\end{lemma}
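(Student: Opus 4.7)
The plan is as follows.

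First, I would apply Lemma~\ref{Lemma_conc} to the random variable $H_{T,\omega}$. The event $\{T(v,\omega|\G,\SIGMA) \ism T\}$ depends only on the subgraph of $\G$ induced on the $(\omega+1)$-neighborhood of $v$ together with the restriction of $\SIGMA$, since the lists $L(w,\omega|\G,\SIGMA)$ for $w$ at distance at most $\omega$ from $v$ are determined by this neighborhood. Lemma~\ref{Lemma_conc} then yields $H_{T,\omega} = \Erw[H_{T,\omega}] + o(n)$ w.h.p., reducing the problem to the first-moment statement that $\Erw[H_{T,\omega}]/n$ is arbitrarily close to $\pr[\T \ism T]$ for sufficiently large fixed $\omega$.

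Next, I would extend Lemma~\ref{Prop_empirical} from singleton lists $\{i\}$ to general types $(i,\ell) \in \cT$. Define $q_{i,\ell}^t$ from the iterate $\vec q^t$ by the same Poisson product formula that defines $q_{i,\ell}$ from $\vec q^*$. An induction on $t$, parallel to the proof of Lemma~\ref{Prop_empirical} (conditioning on the last vertex $v$, computing its probability of having a prescribed type via the Poisson limit applied to the number of neighbors in each $Q_{j,\ell'}(t-1|\widetilde\G,\SIGMA)$, and closing with Lemma~\ref{Lemma_conc}), shows that the density of vertices of type $(i,\ell)$ at time $\omega$ converges in probability to $q_{i,\ell}^\omega$. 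Because the iteration $\vec q^t$ lies in the contraction region of Lemma~\ref{Prop_Fixpunkt} for $t \geq 1$, we have $\vec q^\omega \to \vec q^*$ and hence $q_{i,\ell}^\omega \to q_{i,\ell}$ as $\omega \to \infty$.

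By vertex symmetry, $\Erw[H_{T,\omega}]/n' = \pr[T(v_0,\omega|\G,\SIGMA) \ism T]$ for any fixed $v_0$. The tree $T$ is necessarily finite, of some depth $D$, because $\GW(d,k,\vec q^*)$ is sub-critical by Lemma~\ref{Lemma_GW}. By Lemma~\ref{Lemma_acyclic}, the $(\omega+1)$-neighborhood of $v_0$ is acyclic w.h.p.; conditional on the coloring, the number of neighbors of a fixed vertex of type $(j,\ell')$ is $\mathrm{Bin}(|Q_{j,\ell'}(\omega-1)|,p')$, which converges in distribution to $\Po(d'q_{j,\ell'}^{\omega-1})$ by the extended density estimate of the previous paragraph. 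Applied recursively down to depth $D$ and then taking $\omega \to \infty$, this Poisson branching structure matches exactly the distribution of $\T$ truncated to depth $D$, and hence $\pr[T(v_0,\omega) \ism T] \to \pr[\T \ism T]$.

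The main obstacle is the joint distribution of the lists $L(w,\omega|\G,\SIGMA)$ for the various vertices $w$ in the $D$-neighborhood of $v_0$: each such list depends on an $\omega$-neighborhood that can overlap with the corresponding neighborhoods of other nearby vertices, so the types are not a priori independent. On the tree-like event from Lemma~\ref{Lemma_acyclic}, however, the relevant information for $L(w,\omega)$ is carried by the $(\omega-\mathrm{dist}(v_0,w))$-subtree pendant at $w$ in the direction away from $v_0$, and these pendant subtrees are disjoint across branches; this provides the conditional independence needed to validate the Poisson recursion to leading order. Alternatively, one can absorb this correlation issue into the first-moment type-density estimate of the second paragraph, which bypasses the need for an explicit joint-distribution argument.
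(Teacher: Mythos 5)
Your overall strategy is the same as the paper's: reduce to a first--moment statement via \Lem~\ref{Lemma_conc} (the event $T(v,\omega|\G,\SIGMA)\ism T$ is indeed local, so concentration is exactly as you say), and then compute $\Erw[H_{T,\omega}]/n$ by exploiting the locally tree-like structure (\Lem~\ref{Lemma_acyclic}), the Warning Propagation density estimate (\Lem~\ref{Prop_empirical}) and Poisson approximation of the relevant binomial neighbor counts. The difference is organizational but it matters: the paper does {\em not} separately extend \Lem~\ref{Prop_empirical} to all types $(i,\ell)$ and then run a depth recursion inside one neighborhood; instead it performs an induction on the tree $T$ itself, peeling off the root. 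In the inductive step one conditions on the last vertex $v=n'$, passes to $\widetilde\G=G(n'-1,p',\SIGMA)$ (this is exactly why the whole section is set up with $n'=n-o(n)$ vertices), and the induction hypothesis -- applied to the smaller planted graph -- supplies the counts of vertices $w$ whose {\em pendant decorated trees} $T(w,\omega|\widetilde\G,\SIGMA)$ lie in each isomorphism class $T'$ obtained by deleting the root of $T$. The number of neighbors of $v$ in each such class is then binomial, hence asymptotically Poisson, and the root type probability $q^*_{i_0,\ell_0}+o_\omega(1)$ comes from the singleton densities of \Lem~\ref{Prop_empirical} alone.

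This is where your write-up has a genuine soft spot. Your recursion needs, at every level, the density of vertices with a {\em prescribed pendant decorated subtree}, not merely the density of vertices of a given type $(i,\ell)$; so your suggested fallback of ``absorbing the correlation issue into the first-moment type-density estimate'' does not work -- type densities do not determine the joint law of the children's pendant trees. Your primary fix (disjointness of the pendant subtrees on the tree-like event) is the right intuition, but as stated it is not literally true that $L(w,\omega|\G,\SIGMA)$ is carried by the subtree of $w$ pointing away from $v_0$: warnings can propagate into $w$ through $v_0$ from the other branches, so the decorations of the component are not a priori functions of disjoint pieces of the graph. The clean way to make your idea rigorous is precisely the paper's device: formulate the statement as an induction over trees, remove the root vertex from the graph, and apply the induction hypothesis to $\widetilde\G$; this simultaneously delivers the needed pendant-tree counts and the asymptotic independence, and then one only has to check (using acyclicity at $v$ and the monotonicity facts about Warning Propagation) that re-attaching $v$ produces the component $T(v,\omega|\G,\SIGMA)\ism T$ exactly on the event described by the root type and the neighbor counts. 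So: right ingredients and right reduction, but the recursive step needs to be recast as an induction on $T$ over the planted graph with the root removed to be a complete proof.
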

\begin{proof}
We observe that $\pr\brk{\vec T=T}$ is a number that depends on $T$ but not on $n$.
Hence, we assume that $\pr\brk{\vec T= T}\geq-\ln\eps>0$.
Furthermore, if $T_*$ is the isomorphism class of a rooted sub-tree of $T$,
then $\pr\brk{\vec T= T_*}\geq\pr\brk{\vec T= T}$.

The proof is by induction on the sum over the lengths of the color lists of the vertices in $T$.
In the case that $T$ consists of a single vertex $v$ of type $(i,\cbc i)$ for some $i\in\brk k$, the assertion
readily follows from \Lem~\ref{Prop_empirical}.

As for the inductive step,  pick and fix one representative $T_0\in T$.
If we remove the root $v_0$ from $T_0$, then we obtain a decorated forest $T_0-v_0$.
Each tree $T'$ in this forest contains precisely one neighbor of the root of $T_0$, which we designate as the root of $T'$.
Let $\cV$ be the set of all isomorphism classes of rooted decorated trees $T'$ obtained in this way.
Furthermore, for each 
$\hat T\in\cV$ let $y(\hat T)$  be the number of components of the forest $T_0-v_0$ that belong to the isomorphism class $\hat T$.
Let $(i_0,\ell_0)$ be the type of the root.

We are going to show that for $v=n'$ and for $\omega=\omega(T,\eps)$ sufficiently large we have
	$$|\pr\brk{T(v,\omega|\G,\SIGMA)\ism T_0}-\pr\brk{\vec T= T}|<\eps.$$
To this end, consider the graph $\widetilde\G$ obtained by removing $v$.
By \Lem~\ref{Prop_empirical} the number of vertices $w$ of $\widetilde\G$ with $L(w,\omega|\widetilde\G,\SIGMA)=\cbc j$ is $n(q_j+o_\omega(1))$ \whp\ for all $j$,
	where $o_\omega(1)$ signifies a term that tends to $0$ in the limit of large $\omega$.
Let $\cA$ be the event that this is indeed the case.
Moreover, let $\cB$ be the following event.
	\begin{itemize}
	\item $\vec\sigma(v)=i_0$.
	\item for each color $j\not\in\ell_0$, vertex $v$ has a neighbor $w$ in $\widetilde\G$ such that $L(w,\omega|\widetilde\G,\SIGMA)=\cbc j$.
	\item  $v$ does not have a neighbor $w$ with $L(w,\omega|\widetilde\G,\SIGMA)=\cbc h$ for any $h\in\ell_0$.
	\end{itemize}
Then
	\begin{eqnarray*}
	\pr\brk{\cB|\cA}&=&\frac1k\prod_{j\not\in\ell_0}\pr\brk{\Bin(n(q_j^*+o_\omega(1)),p')>0}\prod_{j\in\ell_0\setminus\cbc{i_0}}
			\pr\brk{\Bin(n(q_j^*+o_\omega(1)),p')=0}\\
		&\sim&\frac1k\prod_{j\not\in\ell_0}\pr\brk{\Po(np'(q_j^*+o_\omega(1)))>0}
			\prod_{j\in\ell_0\setminus\cbc{i_0}}\pr\brk{\Po(np'(q_j^*+o_\omega(1)))=0}
		= q^*_{i_0,\ell_0}+o_\omega(1).
	\end{eqnarray*}
Since $\pr\brk{\cA}\sim1$, we find
	\begin{equation}				\label{eqtreeStat8}
	\pr\brk\cB=q_{i_0,\ell_0}^*+o_\omega(1).
	\end{equation}

Furthermore, for each tree $T'\in\cV$ let $\widetilde Q(T')$ be the set of all vertices $w$ of $\widetilde\G$ such that $T(w,\omega|\widetilde\G,\SIGMA)\ism T'$.
In addition, let $\widetilde Q_\emptyset$ be the set of all vertices $w$ of $\widetilde\G$ that satisfy none of the following conditions:
	\begin{itemize}
	\item $w\in\bigcup_{T'\in\cV}Q(T')$.
	\item $w\in\SIGMA^{-1}(i_0)$.
	\item $L(w,\omega|\widetilde\G,\SIGMA)=\cbc j$ for some $j\in\brk k$.
	\end{itemize}
Further, let $q(T')=\pr\brk{T'\in\vec T}$ and let
	$$q_\emptyset=(1-q^*)(1-1/k)-\sum_{T'\in\cV}q(T').$$
Let $\cQ$ be the event that 
	$|\widetilde Q(T')|/n= 
			q(T')+o_\omega(1)$  for all $T'\in\cV$
	and that
	$|\widetilde Q_\emptyset|/n=q_\emptyset+o_\omega(1).$
Then $$\pr\brk{\cQ}\sim1$$ by induction.
Further, let $\cY$ be the event that for each $T'\in\cV$ we have 
	$y(T')=|\partial v\cap \widetilde Q(T')|$
and $\partial v\cap \widetilde Q_\emptyset=\emptyset$.
Then
	\begin{eqnarray}\nonumber
	\pr\brk{\cY|\cB}&\sim&\pr\brk{\cY|\cB,\cQ}=
		(1-p')^{n(q_\emptyset+o_\omega(1))}\prod_{T'\in\cV}\pr\brk{\Bin(n(q(T')+o_\omega(1)),p')=y(T')}\\
		&=&o_\omega(1)+\exp(-np'q_\emptyset)\prod_{T'\in\cV}\pr\brk{\Po(np'q(T'))=y(T')}\nonumber\\
		&=&o_\omega(1)+\exp(-d'q_\emptyset)\prod_{T'\in\cV}\pr\brk{\Po(d'q(T'))=y(T')}
			=o_\omega(1)+\pr\brk{T\in\T_{i_0,\ell_0}}. 
			\label{eqtreeStat111}
	\end{eqnarray}
The last equality sign follows from the fact that in tree $\T_{i_0,\ell_0}$,
the root has a Poisson number of children of possible ``shape'' $T'$.
Combining~(\ref{eqtreeStat8}) and~(\ref{eqtreeStat111}),  we find that
	\begin{equation}\label{eqtreeStat112}
	\pr\brk{\cB\cap\cY}=\pr\brk{T\in\T}+o_\omega(1).
	\end{equation}

Let $\cR$ be the event that $v$ is acyclic.
By \Lem s~\ref{Lemma_acyclic} 
we have $\pr\brk\cR\sim1$.
Furthermore, given $\cR$, we have $T(v,\omega|\G,\SIGMA)\in T$ iff
the event $\cB\cap\cY$ occurs.
Thus, (\ref{eqtreeStat112}) implies that
	\begin{equation}\label{eqtreeStat113}
	\pr\brk{T(v,\omega|\G,\SIGMA)\in T}=\pr\brk{\cB\cap\cY}+o(1)=\pr\brk{T\in\T}+o_\omega(1).
	\end{equation}
Moreover, (\ref{eqtreeStat113}) shows that
	\begin{equation}\label{eqtreeStat114}
	\frac1n\Erw[H_{T,\omega}]=\pr\brk{T\in\T}+o_\omega(1).
	\end{equation}
Finally, because the event $T(v,\omega|\G,\SIGMA)\in T$ is governed by the vertices at distance at most $|T|+\omega$
from $v$, \Lem~\ref{Lemma_conc} implies together with (\ref{eqtreeStat114}) that for any $\eps>0$ there is $\omega$ such that
	$$\pr\brk{|H_{T,\omega}-\pr\brk{T\in\T}|<\eps n}=1-o(1).$$
This completes the induction.
\end{proof}

\begin{lemma}\label{Lemma_omegaStable}
For any $\eps>0$ there is $\omega>0$ such \whp\ all but $\eps n$ vertices $v$ satisfy
	$T(v|\G,\SIGMA)=T(v,\omega|\G,\SIGMA).$
\end{lemma}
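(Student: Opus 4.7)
\emph{Proof outline.} Fix $\eps>0$. The plan is to exhibit a constant $\omega=\omega(\eps)$ and identify at least $(1-\eps)n$ \emph{good} vertices $v$---for which $T'(v,0|\G,\SIGMA)$ is a tree on at most $\omega-2$ vertices---such that $T(v|\G,\SIGMA)=T(v,\omega|\G,\SIGMA)$ for every good $v$ via \Lem~\ref{Lemma_ab}.

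For the count, \Lem~\ref{Lemma_T''} supplies a constant $\omega_0=\omega_0(\eps)$ ensuring \whp\ at most $\eps n/2$ vertices satisfy $|T'(v,0|\G,\SIGMA)|>\omega_0$. Since $T'(v,0|\G,\SIGMA)$ is a connected subgraph of $\G$ containing $v$, whenever it has at most $\omega_0$ vertices it lies inside the $(\omega_0-1)$-neighbourhood of $v$ in $\G$; imposing additionally that this neighbourhood be acyclic forces $T'(v,0|\G,\SIGMA)$ to be a tree. A standard first-moment bound gives an $O(1)$ expected count of cycles of length at most $2\omega_0$ in $\G=G(n',d'/n,\SIGMA)$, so \whp\ only $o(n)$ vertices carry such a cycle in their $(\omega_0-1)$-neighbourhood. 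Taking $\omega=\omega_0+2$, at most $\eps n$ vertices fail to be good \whp

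For the structural claim, fix a good $v$. \Lem~\ref{Lemma_ab} yields $L(y|\G,\SIGMA)=L(y,\omega|\G,\SIGMA)$ for every $y\in T'(v,0|\G,\SIGMA)$; combined with the inclusion $T(v|\G,\SIGMA)\subseteq T'(v,0|\G,\SIGMA)$ from~(\ref{eqtrivInclusion}), the decorations of $T(v|\G,\SIGMA)$ and $T(v,\omega|\G,\SIGMA)$ agree on every vertex belonging to either. The inclusion $\GGWP(\omega|\SIGMA)\subseteq\GGWP(\SIGMA)$ of underlying graphs is immediate from the monotonicity $L(y,\omega)\subseteq L(y)$ of Fact~\ref{Prop_WP}(1), yielding $T(v,\omega|\G,\SIGMA)\subseteq T(v|\G,\SIGMA)$. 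Conversely, any edge $\{y,z\}$ of $T(v|\G,\SIGMA)$ lies in $\GGWP(\SIGMA)$, so $|L(y)|,|L(z)|\geq 2$ and $L(y)\cap L(z)\neq\emptyset$; the identifications $L(y,\omega)=L(y)$ and $L(z,\omega)=L(z)$ transport these inequalities to time $\omega$ and certify $\{y,z\}\in\GGWP(\omega|\SIGMA)$. The main obstacle is this final edge-matching step: the extra cardinality condition $|L(\cdot,\omega)|\geq 2$ built into $\GGWP(\omega|\SIGMA)$ must be extracted from the intersection-only criterion of $\GGWP(\SIGMA)$, which is exactly what the inclusion $T(v|\G,\SIGMA)\subseteq T'(v,0|\G,\SIGMA)$ and the precise statement of \Lem~\ref{Lemma_ab} deliver; the rest is the quantitative combination of \Lem~\ref{Lemma_T''} with the standard short-cycle count.
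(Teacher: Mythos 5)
Your overall route is exactly the paper's: use \Lem~\ref{Lemma_T''} (plus acyclicity of bounded neighbourhoods, for which \Lem~\ref{Lemma_acyclic} already suffices, so no fresh cycle count is really needed) to ensure that for all but $\eps n$ vertices $T'(v,0|\G,\SIGMA)$ is a tree on at most $\omega$ vertices, and then invoke \Lem~\ref{Lemma_ab} to stabilise the lists and match the components $T(v|\G,\SIGMA)$ and $T(v,\omega|\G,\SIGMA)$; the paper's proof is a two-line version of this, so at that level you agree, and your spelling out of the component matching is welcome.

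However, the step you yourself call the main obstacle is not justified by what you cite. For an edge $\{y,z\}$ of $T(v|\G,\SIGMA)$ you assert ``$|L(y)|,|L(z)|\geq2$'' because the edge lies in $\GGWP(\SIGMA)$, and you say this cardinality information is delivered by the inclusion $T(v|\G,\SIGMA)\subseteq T'(v,0|\G,\SIGMA)$ together with \Lem~\ref{Lemma_ab}. But the definition of $\GGWP(\SIGMA)$ imposes only $L(y)\cap L(z)\neq\emptyset$, and the inclusion into $T'(v,0|\G,\SIGMA)$ only tells you that the edge survives in $\GWP'(0|\SIGMA)$, i.e.\ $|L'(y,0)|,|L'(z,0)|\geq2$; since $L(y)=L'(y)\subseteq L'(y,0)$, the containment points the wrong way, so no lower bound on $|L(y)|$ follows from these two ingredients. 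What actually saves the step is that the cardinality clause is redundant once the warnings have stabilised: if $L(y)=\{\SIGMA(y)\}$ while $\SIGMA(y)\in L(z)$ for a neighbour $z$, then for all large $t$ we have $\mu_{y\ra z}(\SIGMA(y),t)=0$, so the warning to $y$ about some colour $j_0\neq\SIGMA(y)$ must come from $z$ alone, whence $j_0=\SIGMA(z)$ and $\mu_{z\ra y}(\SIGMA(z),t)=1$; but by~(\ref{eqWPit}) this forces a neighbour $u\neq y$ of $z$ to warn $z$ about $\SIGMA(y)$, contradicting $\SIGMA(y)\in L(z)$. This is precisely the argument in the proof of Fact~\ref{Prop_WP_upper}, applied at the fixed point. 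Since \Lem~\ref{Lemma_ab} gives $L(\cdot,\omega)=L(\cdot)$ on $T'(v,0|\G,\SIGMA)$, the time-$\omega$ cardinality conditions then follow and your edge-matching goes through; with this replacement (rather than the justification you gave) the proof is complete and coincides with the paper's.
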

\begin{proof}
\Lem~\ref{Lemma_ab} implies that if $T(v|\G,\SIGMA)=T(v,\omega+2|\G,\SIGMA)$, unless
$T'(v,0|\G,\SIGMA)$ contains at least $\omega$ vertices.
Furthermore, \Lem~\ref{Lemma_T''} implies that for any fixed $\eps>0$ there is $\omega=\omega(\eps)$ such that this holds
for no more than $\eps n$ vertices \whp
\end{proof}

Finally, 
\Prop~\ref{Prop_empiricalTrees} is immediate from \Lem s~\ref{Lemma_treeStat} and~\ref{Lemma_omegaStable} and
\Prop~\ref{Prop_cluster} follows \Prop s~\ref{Lemma_WPupper} and~\ref{Prop_empiricalTrees}.

\subsection{Proof of \Lem~\ref{Lemma_T''}}\label{Sec_T''}
Set $\theta=\lceil\ln\ln n\rceil$.
Moreover, for a set $S\subset V$ let $C_S$ denote the $\sigma$-core of the subgraph of $G(n,p',\sigma)$ obtained by removing the vertices in $S$.
Further, for any vertex $w\in S$ let $\Lambda(w,S)$ be the set of colors $j\in\brk k$ such that in $G(n,p',\sigma)$ vertex
			$w$ does not have a neighbor in $\sigma^{-1}(j)\cap C_S$.
In addition, let us call $S$ {\em wobbly} in $G(n,p',\sigma)$ if the following conditions are satisfied.
\begin{description}
\item[W1] $|S|=\theta$.
\item[W2] We have $|\Lambda(w,S)|\geq2$ for all $w\in S$.
\item[W3] The subgraph of $G(n,p',\sigma)$ induced on $S$ has a spanning tree $T$	such that
			$$\Lambda(u,S)\cap \Lambda(w,S)\neq\emptyset\quad\mbox{for each edge $\cbc{u,w}$ of $T$}.$$
\end{description}
Assume that $T''(v)$ contains at least $\theta$ vertices.
If $T=(S,E_T)$ is a sub-tree on $\theta$ vertices contained in $T''(v)$, then $S$ is wobbly.
Therefore, it suffices to prove that the total number $W$ of vertices that are contained in a wobbly set $S$ satisfies
	\begin{equation}\label{eqCoreCondition0}
	\Erw[W]\leq\sum_{S\subset V:\abs S=\theta}\theta\cdot\pr\brk{S\mbox{ is wobbly}}=o(n).
	\end{equation}

To prove~(\ref{eqCoreCondition0}), we need a bit of notation.
For a set $S$ let $\cE_S$ be the event that
	\begin{equation}\label{eqCoreCondition1}
	|C_S\cap\sigma^{-1}(i)|\geq\frac nk(1-k^{-2/3})\quad\mbox{for all }i\in\brk k.
	\end{equation}
Then \Prop~\ref{Prop_core} implies that for any set $S$ of size on $\theta$ we have
	\begin{equation}\label{eqCoreCondition2}
	\pr\brk{\cE_S}\geq1-\exp(-\Omega(n)).
	\end{equation}

Further, for a vertex $w\in S$ and a set $J_w\subset\brk k\setminus\cbc{\sigma(w)}$ let $\cL(w,J_w)$ be the event that $\Lambda(w,S)\supset J_w$.
Crucially, the core $C_S$ of the subgraph of $G(n,p',\sigma)$ obtained by removing $S$ is independent of the edges between $S$ and $C_S$.
Therefore, $w$ is adjacent to a vertex $x$ in $C_S$ with $\sigma(x)\neq\sigma(w)$ with probability $p'$, independently for all such vertices $x$.
Consequently,
	\begin{equation}\label{eqCoreCondition3}
	\pr\brk{\cL(w,J_w)|\cE_S}\leq \prod_{j\in J}(1-p')^{\frac nk(1-k^{-2/3})}\leq k^{-1.99|J|}.
	\end{equation}
Moreover, due to the independence of the edges in $G(n,p',\sigma)$, the events $\cL(w,J_w)$ are independent for all $w\in S$.

Let $S\subset V$ be a set of size $\theta$.
Let us call a vertex $w\in S$ {\em rich} if $|\Lambda(w,S)|\geq\sqrt k$.
Further, let $R_S$ be the set of rich vertices in $S$.
To estimate the probability that $S$ is wobbly, we consider the following events.
\begin{itemize}
\item  Let $\cA_S$ be the event that $|R_S|\geq k^{-1/3}\theta$ 
		and that $G(n,p',\sigma)$ contains a tree $T$ with vertex set $S$.
\item Let $\cA_S'$ be the event that 
	and that $G(n,p',\sigma)$ contains a tree $T$ with vertex set $S$
		such that $$\sum_{w\in R_S}|N_T(w)|\geq\theta/2.$$
		(In words, the sum of the degrees of the rich vertices in $T$ is at least $\theta/2$.)
\item Let $\cA_S''$ be the event that $G(n,p',\sigma)$ contains a tree $T$ with vertex set $S$
		such that $$\sum_{w\in R_S}|N_T(w)|<\theta/2.$$
\item Let $\cW_S$ be the event that condition {\bf W2} is satisfied.
\item For a given tree $T$ with vertex set $S$ let $\cW_{S,T}'$ be the event that condition {\bf W3} is satisfied.
\end{itemize}
If $S$ is wobbly, then the event $\cA_S\cup (\cW_S\cap\cA_S')\cup(\cW_S\cap\cW_S'\cap\cA_S'')$ occurs.
Therefore,
	\begin{equation}\label{eqwobbly1}
	\pr\brk{S\mbox{ is wobbly}}\leq\pr\brk{\cA_S}+\pr\brk{\cW_S\cap\cA_S'\setminus\cA_S}+\pr\brk{\cW_S\cap \cW_S'\cap\cA_S''\setminus(\cA_S\cup\cA_S')}.
	\end{equation}
In the following, we are going to estimate the three probabilities on the r.h.s.\ separately.

With respect to the probability of $\cA_S$, (\ref{eqCoreCondition2}) and (\ref{eqCoreCondition3}) yield
	\begin{eqnarray*}
	\pr\brk{|R_S|\geq k^{-1/3}\theta}&\leq&\pr\brk{\neg\cE_S}+\pr\brk{\exists R\subset S,|R|=\lceil k^{-1/3}\theta\rceil:\forall w\in R:|\Lambda(w,S)|\geq\sqrt k|\cE_S}\\
		&\leq&\exp(-\Omega(n))+\bink{\theta}{k^{-1/3}\theta}\brk{\bink k{\sqrt k} k^{-1.9\sqrt k}}^{k^{-1/3}\theta}
			\leq\exp(-\sqrt k \theta).
	\end{eqnarray*}
Furthermore, by Cayley's formula there are $\theta^{\theta-2}$ possible trees with vertex set $S$.
Since any two vertices in $S$ are connected in $G(n,p',\sigma)$ with probability at most $p'$, and because edges occur independently,
we obtain
	\begin{equation}\label{eqwobbly2}
	\pr\brk{\cA_S}\leq \theta^{\theta-2}{p'}^{\theta-1}\cdot\pr\brk{|R_S|\geq k^{-1/3}\theta}\leq \theta^{\theta-2}{p'}^{\theta-1}\exp(-\sqrt k \theta).
	\end{equation}

To bound the probability of $\cW_S\cap\cA_S'\setminus\cA_S$,
let $R\subset S$ and $t\geq\theta/2$.
Moreover, let $e(S)$ denote the total number of edges spanned by $S$ in $G(n,p',\sigma)$,
and let $e(R,S)$ denote the number of edges that joint a vertex in $R$ with another vertex in $S$.
Let $\cA_S'(R,t)$ be the event  $e(S)\geq\theta-1$ and $e(R,S)=t$.
If $\cA_S'\setminus\cA_S$ occurs, then there exist $R\subset S$, $|R|\leq r=\lfloor k^{-1/3}\theta\rfloor$, and $t\geq\theta/4$ such that $\cA_S'(R,t)$ occurs.
Therefore, by the union bound,
	\begin{eqnarray}\label{eqwobbly3}
	\pr\brk{\cW_S\cap\cA_S'\setminus\cA_S}&\leq&\sum_{R\subset S:|R|\leq r}\sum_{t\geq\theta/4}\pr\brk{\cW_S\cap\cA_S'(R,t)}.
	\end{eqnarray}
Further, 
because the event $\cW_S$ is independent of the subgraph of $G(n,p',\sigma)$ induced on $S$, (\ref{eqwobbly3}) yields
	\begin{eqnarray}\label{eqwobbly4}
	\pr\brk{\cW_S\cap\cA_S'\setminus\cA_S}&\leq&\pr\brk{\cW_S}\cdot\sum_{R\subset S:|R|\leq r}\sum_{t\geq\theta/4}\pr\brk{\cA_S'(R,t)}.
	\end{eqnarray}
Because any two vertices in $S$ are connected with probability at most $p'$ independently,
the random variable $e(R,S)$ is stochastically dominated by a binomial distribution $\Bin(r\theta,p')$.
Therefore,
	\begin{equation}\label{eqwobbly5}
	\pr\brk{e(R,S)=t}\leq\pr\brk{\Bin(r\theta,p')=t}\leq\bink{r\theta}{t}{p'}^t.
	\end{equation}
Similarly, we find
	\begin{equation}\label{eqwobbly6}
	\pr\brk{e(S)\geq\theta-1|e(R,S)=t}\leq\pr\brk{\Bin\bc{\bink{\theta}{2},p'}\geq\theta-t-1}\leq\bink{\theta^2/2}{\theta-t-1}{p'}^{\theta-t-1}.
	\end{equation}
Combining~(\ref{eqwobbly5}) and~Ê(\ref{eqwobbly6}), we get
	\begin{eqnarray}\label{eqwobbly7}
	\pr\brk{\cA_S'(R,t)}&\leq&\bink{r\theta}{t}\bink{\theta^2/2}{\theta-t-1}p^{\theta-1}
	\end{eqnarray}
Further, plugging~(\ref{eqwobbly7}) into~(\ref{eqwobbly4}), we obtain
	\begin{eqnarray}\nonumber
	\pr\brk{\cW_S\cap\cA_S'\setminus\cA_S}&\leq&\pr\brk{\cW_S}\cdot2^\theta p^{\theta-1}\sum_{t\geq\theta/4}\bink{r\theta}{t}\bink{\theta^2/2}{\theta-t-1}
		\leq2^{1+\theta} p^{\theta-1}\pr\brk{\cW_S}\bink{r\theta}{\theta/4}\bink{\theta^2/2}{3\theta/4-1}\\
		&\leq&2^{1+\theta} p^{\theta-1}\pr\brk{\cW_S}\bcfr{\eul r\theta}{\theta/4}^{\theta/4}\bcfr{\eul\theta^2/2}{3\theta/4}^{3\theta/4}
			\leq\theta^\theta p^{\theta-1}k^{-\theta/13}\pr\brk{\cW_S}.
				\label{eqwobbly8}
	\end{eqnarray}
Finally, if the event $\cW_S$ occurs, then for each $w\in S$ there is $j\in\brk k\setminus\cbc{\sigma(w)}$ such that $j\in\Lambda(w,S)$.
Thus, (\ref{eqCoreCondition2}) and~(\ref{eqCoreCondition3}) yield
	\begin{eqnarray}				\label{eqwobbly9}
	\pr\brk{\cW_S}&\leq&\pr\brk{\neg\cE_S}+\prod_{w\in S}\sum_{j\neq\sigma(w)}\pr\brk{\cL(w,\cbc j)|\cE_S}\leq\exp(-\Omega(n))+k^{-0.99\theta}
		\leq k^{-0.98\theta}.
	\end{eqnarray}
Combining~(\ref{eqwobbly8}) and~(\ref{eqwobbly9}), we arrive at
	\begin{eqnarray}
	\pr\brk{\cW_S\cap\cA_S'\setminus\cA_S}&\leq&\theta^\theta p^{\theta-1}k^{-1.02\theta}.
				\label{eqwobbly8}
	\end{eqnarray}

To bound the probability of $\cA_S''$, suppose that $T$ is a tree with vertex set $S$, let $U\subset S$ and denote by $\cA_S''(T,U)$ the event that the following statements are true.
	\begin{enumerate}
	\item[(i)] $T$ is contained as a subgraph in $G(n,p',\sigma)$.
	\item[(ii)] Let $s_0=\min S$ and consider $s_0$ the root of $T$.
			Then for each $u\in U$ the parent $P(u)$ satisfies $P(u)\not\in R_S$.
	\end{enumerate}
If the event $\cA_S''\setminus(\cA_S\cup\cA_S')$ occurs, then there exist a tree $T$ and a set $U$ of size $|U|\geq\theta/3$ such that
$\cA_S''(T,U)$ occurs.
Therefore,
	\begin{equation}\label{eqwobbly10}
	\pr\brk{\cW_S\cap\cW_{S,T}'\cap\cA_S''\setminus(\cA_S\cup\cA_S')}\leq\sum_T\sum_{U:|U|\geq\theta/3}\pr\brk{\cW_S\cap\cW_{S,T}'\cap\cA_S''(T,U)}.
	\end{equation}

Fix a tree $T$ on $S$ and a set $U\subset S$, $|U|\geq\theta/3$.
Since any two vertices are connected in $G(n,p',\sigma)$ with probability at most $p'$ independently,
the probability that (i) occurs is bounded by ${p'}^{\theta-1}$.
Furthermore, if (ii) occurs and $u\in U$, then $|\Lambda(P(u),S)|\leq\sqrt k$ because $P(u)$ is not rich.
In addition, {\bf W3} requires that $\Lambda(P(u),S)\cap\Lambda(u,S)\neq\emptyset$.
There are two ways how this can come about: first, it could be that $\Lambda(P(u),S)\cap\Lambda(u,S)\setminus\cbc{\sigma(u)}\neq\emptyset$.
Then the event $\cL(u,\cbc j)$ occurs for some $j\in \Lambda(P(u),S)\setminus\cbc{\sigma(u)}$.
Hence, due to~(\ref{eqCoreCondition3})
	\begin{equation}\label{eqwobbly11}
	\pr\brk{\Lambda(P(u),S)\cap\Lambda(u,S)\setminus\cbc{\sigma(u)}\neq\emptyset|\cE_S,|\Lambda(P(u),S)|\leq\sqrt k}\leq k^{-1.49}\quad\mbox{for any }u\in U.
	\end{equation}
Alternatively, it could be that $\vec\sigma(u)\in\Lambda(P(u),S)$.
Given that $\Lambda(P(u),S)$ has size at most $\sqrt k$, the probability of this event is bounded by $k^{-1/2}$ because $\vec\sigma(u)$ is random.
Additionally, by {\bf W2} there is another color $j\in\Lambda(u)$, $j\neq\sigma(u)$.
Hence, the event $\cL(u,\cbc j)$ occurs and~(\ref{eqCoreCondition3}) yields
	\begin{equation}\label{eqwobbly12}
	\pr\brk{\vec\sigma(u)\in\Lambda(P(u),S),\Lambda(u,S)\setminus\cbc{\sigma(u)}\neq\emptyset|\cE_S,|\Lambda(P(u),S)|\leq\sqrt k}
		\leq k^{-1.49}\quad\mbox{for any }u\in U.
	\end{equation}
Combining~(\ref{eqCoreCondition2}), (\ref{eqwobbly11}) and~(\ref{eqwobbly12}), we find
	\begin{equation}\label{eqwobbly13}
	\pr\brk{\forall u\in U:\Lambda(P(u),S)\cap\Lambda(u,S)\neq\emptyset\wedge|\Lambda(P(u),S)|\leq\sqrt k}\leq\exp(-\Omega(n))+k^{-1.48|U|}.
	\end{equation}
In addition, if $w\in S\setminus U$, then {\bf W2} requires that the event $\cL(w,\cbc j)$ occurs for some $j\neq\sigma(w)$
and~(\ref{eqCoreCondition3}) yields
	\begin{equation}\label{eqwobbly14}
	\pr\brk{\forall w\in S\setminus U:\exists j\in\brk k\setminus\cbc{\sigma(w)}:\cL(w,j)|\cE_S}
		\leq k^{-0.99|S\setminus U|}.
	\end{equation}
Combining~(\ref{eqwobbly13}) and~(\ref{eqwobbly14}), we obtain
	\begin{equation}\label{eqwobbly15}
	\pr\brk{\cW_S\cap\cW'_{S,T}\cap\cA_S''(T,U)|T\subset G(n,p',\sigma)}
		\leq \exp(-\Omega(n))+k^{-0.99(\theta-|U|)}\cdot k^{-1.48|U|}\leq k^{-1.1\theta}.
	\end{equation}
Further, the probability that $T$ is contained in $G(n,p',\sigma)$ is bounded by ${p'}^{\theta-1}$.
Thus, (\ref{eqwobbly15}) implies
	\begin{equation}\label{eqwobbly16}
	\pr\brk{\cW_S\cap\cW'_{S,T}\cap\cA_S''(T,U)}
		\leq k^{-1.1\theta}{p'}^{\theta-1}.
	\end{equation}
Finally, combining~(\ref{eqwobbly10}) and~(\ref{eqwobbly16}) and using Cayley's formula, we obtain
	\begin{equation}\label{eqwobbly17}
	\pr\brk{\cW_S\cap \cW_S'\cap\cA_S''\setminus(\cA_S\cup\cA_S')}
		\leq 2^\theta\theta^{\theta-2}k^{-1.1\theta}{p'}^{\theta-1}\leq\theta^{\theta-2}{p'}^{\theta-1}k^{-1.09\theta}.
	\end{equation}

Plugging~(\ref{eqwobbly2}), (\ref{eqwobbly8}) and~(\ref{eqwobbly17}) into~(\ref{eqwobbly1}), we see that
	$$\theta\pr\brk{S\mbox{ is wobbly}}\leq2\theta^{\theta+1} p^{\theta-1}k^{-1.02\theta}.$$
Hence, (\ref{eqCoreCondition0}) yields
	\begin{eqnarray*}
	\Erw\brk{W}&\leq&2\theta^{\theta+1} {p'}^{\theta-1}k^{-1.02\theta}\cdot\bink n\theta
		\leq2\bcfr{\eul n}{\theta}^\theta\theta^{\theta+1}{p'}^{\theta-1}k^{-1.02\theta}\\
		&\leq&n(3np')^\theta k^{-1.02\theta}\leq n(7k\ln k)^\theta k^{-1.02\theta}=o(n),
	\end{eqnarray*}
as desired.

\subsection{Proof of \Lem~\ref{Lemma_conc}.}\label{Sec_Lemma_conc}
The following large deviations inequality known as {\em Warnke's inequality} facilitates the proof of \Lem~\ref{Lemma_conc}.

\begin{lemma}[\cite{Lutz}]\label{Lemma_Lutz}
Let $X_1,\ldots,X_N$ be independent random variables with values in a finite set $\Lambda$.
Assume that $f:\Lambda^N\ra\RR$ is a function, that $\Gamma\subset\Lambda^N$ is an event and that $c,c'>0$ are numbers such that 
the following is true.
	\begin{equation}\label{eqTL}
	\parbox{12cm}{If $x,x'\in\Lambda^N$ are such that there is $k\in\brk N$ such that $x_i=x_i'$ for all $i\neq k$, then
			$$|f(x)-f(x')|\leq\left\{\begin{array}{cl}
				c&\mbox{ if }x\in\Gamma,\\
				c'&\mbox{ if }x\not\in\Gamma.
				\end{array}\right.$$}
	\end{equation}
Then for any $\gamma\in(0,1]$
and any $t>0$ we have
	$$\pr\brk{|f(X_1,\ldots,X_N)-\Erw[f(X_1,\ldots,X_N)]|>t}
		\leq2\exp\bc{-\frac{t^2}{2N(c+\gamma (c'-c))^2}}+\frac{2 N}\gamma\pr\brk{(X_1,\ldots,X_N)\not\in\Gamma}
			.$$
\end{lemma}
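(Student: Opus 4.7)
My plan is to reduce Warnke's typical bounded differences inequality to the classical McDiarmid inequality by constructing a Lipschitz surrogate $\tilde f:\Lambda^N\ra\RR$. Writing $X=(X_1,\ldots,X_N)$ for the random tuple, the surrogate should agree with $f$ on a large event and be uniformly $\tilde c$-Lipschitz in the Hamming metric for $\tilde c=c+\gamma(c'-c)$; standard concentration of $\tilde f$ will then give the desired bound on $f$, with the overhead being a union-bound term controlled by $\pr\brk{X\notin\Gamma}$.

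Concretely, set $r=\lceil1/\gamma\rceil$ and call $y\in\Lambda^N$ \emph{$r$-safe} if no single-coordinate path of length at most $r$ starting at $y$ ever exits $\Gamma$. Let $S$ denote the collection of $r$-safe configurations. Iterating the hypothesis along in-$S$ paths shows that $f$ restricted to $S$ is $c$-Lipschitz in the Hamming metric $d_H$. Define $\tilde f$ by the infimal convolution
$$\tilde f(x)=\min_{y\in S}\cbc{f(y)+\tilde c\cdot d_H(x,y)}.$$
By construction $\tilde f$ is $\tilde c$-Lipschitz in $d_H$, hence has $\tilde c$-bounded coordinate-differences uniformly on $\Lambda^N$; the $c$-Lipschitz property of $f$ on $S$ (together with $c\leq\tilde c$) forces $\tilde f(y)=f(y)$ for every $y\in S$.

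Next I will control $\pr\brk{f(X)\ne\tilde f(X)}\le\pr\brk{X\notin S}$ by a resampling argument: the event $X\notin S$ means that some sequence of at most $r\le1/\gamma$ coordinate re-samplings of $X$ lands in $\Gamma^c$, so a union bound over the $N$ coordinates (using that each coordinate, re-sampled independently from its own marginal, gives a configuration distributed as $X$ itself by the product structure) yields $\pr\brk{X\notin S}\le(N/\gamma)\pr\brk{X\notin\Gamma}$. Applying standard McDiarmid~\cite{McDiarmid} to $\tilde f$ gives $\pr\brk{|\tilde f(X)-\Erw\tilde f(X)|>t}\le 2\exp\bc{-t^2/(2N\tilde c^2)}$, and a triangle inequality that absorbs the small mean shift $|\Erw f(X)-\Erw\tilde f(X)|$ into the tail error produces exactly the stated bound.

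The principal technical obstacle lies in the construction step: $\tilde f$ must simultaneously coincide with $f$ on $S$ (so the error probability is controlled by $\pr\brk{X\notin\Gamma}$) and remain $\tilde c$-Lipschitz (rather than merely $c'$-Lipschitz) on all of $\Lambda^N$. Threading this needle requires the precise calibration $r=\lceil1/\gamma\rceil$, which is what converts a worst-case alternation of $c'$- and $c$-weighted steps along a Hamming geodesic into the convex combination $\tilde c=(1-\gamma)c+\gamma c'$ appearing in the exponent; verifying that infimal convolution with weight $\tilde c$ indeed preserves agreement with $f$ on $S$, rather than dipping below $f$ somewhere, is the delicate part. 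The resampling enlargement estimate in the final step is then a comparatively standard calculation.
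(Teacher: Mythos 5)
You should first note that the paper does not prove this lemma at all: it is quoted verbatim from Warnke's paper \cite{Lutz} and used as a black box, so there is no internal proof to compare against and your proposal has to stand on its own. It does not: it contains two independent gaps, each fatal.

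The first gap is the claim that $f$ restricted to the set $S$ of $r$-safe points is $c$-Lipschitz for the Hamming metric $d_H$. The $r$-safety of two endpoints $y,y'\in S$ only forces the first $r$ and the last $r$ steps of a Hamming geodesic joining them to stay inside $\Gamma$; the middle portion may leave $\Gamma$, and there condition (\ref{eqTL}) only gives $c'$ per step. Concretely, take $\Lambda=\cbc{0,1}$, let $\Gamma=\cbc{x:\sum_ix_i\leq r+1\mbox{ or }\sum_ix_i\geq N-r-1}$, and let $f(x)=g\bc{\sum_ix_i}$ where $g$ has slope $c$ near $0$ and $N$ and slope $c'$ in between. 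Then (\ref{eqTL}) holds, $0^N$ and $1^N$ are both $r$-safe, yet $f(1^N)-f(0^N)$ is roughly $c'N$, which exceeds $(c+\gamma(c'-c))N$ for small $\gamma$. Hence your infimal convolution dips strictly below $f$ at points of $S$, and the identity $\tilde f=f$ on $S$ --- the linchpin that transfers McDiarmid's bound from $\tilde f$ back to $f$ --- fails. The second gap is the estimate $\pr\brk{X\notin S}\leq(N/\gamma)\pr\brk{X\notin\Gamma}$. The event $X\notin S$ asks whether \emph{some} of the exponentially many, adversarially chosen points of the Hamming ball of radius $r$ around $X$ leaves $\Gamma$, whereas resampling a coordinate from its marginal only controls \emph{one random} such perturbation. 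With $\Gamma=\cbc{x:\sum_ix_i\leq1}$ and sparse independent Bernoulli coordinates one gets $\pr\brk{X\notin S}=1$ for every $r\geq2$ while $(N/\gamma)\pr\brk{X\notin\Gamma}$ is as small as one likes.

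More fundamentally, the interpolated constant $c+\gamma(c'-c)=(1-\gamma)c+\gamma c'$ does not arise from counting bad steps along a geodesic; it is a conditional-probability average. Warnke's argument runs the Doob martingale $\Erw\brk{f(X)\mid X_1,\ldots,X_k}$ and observes that on the event that $\pr\brk{X\notin\Gamma\mid X_1,\ldots,X_k}\leq\gamma$ for all $k$, the $k$-th martingale increment is a mixture of a contribution of size at most $c$ with conditional weight at least $1-\gamma$ and a contribution of size at most $c'$ with weight at most $\gamma$; Azuma--Hoeffding then gives the Gaussian term, and Markov's inequality applied to these $N$ conditional probabilities produces the additive $\frac N\gamma\pr\brk{X\notin\Gamma}$ term. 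A global Lipschitz surrogate with constant $c+\gamma(c'-c)$ agreeing with $f$ on a set of the required probability need not exist, so the reduction to plain McDiarmid is the wrong route to this particular inequality.
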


\begin{proof}[Proof of \Lem~\ref{Lemma_conc}]
The proof is based on \Lem~\ref{Lemma_Lutz}.
Of course, we can view $(\G,\SIGMA)$ as chosen from a product space $X_2,\ldots,X_N$ with $N=2n'$ where
$X_i$ is a $0/1$ vector of length $i-1$ whose components are independent $\Be(p')$ variables
for $2\leq i\leq n'$ and where $X_i\in\brk k$ is uniformly distributed for $i>\bink{n'}2$ (``vertex exposure'').
Let $\Gamma$ be the event that $|N^\omega(v)|\leq \lambda=n^{0.01}$ for all vertices $v$.
Then by \Lem~\ref{Lemma_acyclic} we have
	\begin{eqnarray}\label{eqLemma_conc1}
	\pr\brk{\Gamma}&\geq& 1-\exp(-\Omega(\ln^2n)).
	\end{eqnarray}
Furthermore, let $\cG'$ be the graph obtained from $G$ by removing all edges $e$ that
are incident with a vertex $v$ such that $|N_{\G}^\omega(v)|>\lambda$
and let 
	$$S'=\sum_v S_v(\G',\SIGMA)=\abs{\cbc{v\in\brk{n'}:(N_{\G'}^\omega(v),\SIGMA|_{N_{\G'}^\omega(v)},v)\in\cS}}.$$
If $\Gamma$ occurs, then $S=S'$.
Hence, (\ref{eqLemma_conc1}) implies that
	\begin{eqnarray}\label{eqLemma_conc2}
	\Erw[S']&=&\Erw[S]+o(1).
	\end{eqnarray}

Moreover, the random variable $S'=f(X_2,\ldots,X_N)$ satisfies~(\ref{eqTL}) with $c=\lambda$ and $c'=n'$.
Indeed, altering either the color of one vertex $u$ or its set of neighbors can only affect those vertices $v$
that are at distance at most $\omega$ from $u$, and in $\G'$ there are no more than $\lambda$ such vertices.
Thus, \Lem~\ref{Lemma_Lutz} applied with, say, $t=n^{2/3}$ and $\gamma=1/n$ and~(\ref{eqLemma_conc1}) yield
	\begin{eqnarray}\label{eqLemma_conc3}
	\pr\brk{|S'-\Erw[S']|>t}\leq\exp(-\Omega(\ln^2n))=o(1).
	\end{eqnarray}
Finally, the assertion follows from~(\ref{eqLemma_conc2}) and~(\ref{eqLemma_conc3}).
\end{proof}

\medskip
\noindent
{\sc Acknowledgment.}
We thank Guilhem Semerjian for helpful discussions and explanations regarding the articles~\cite{pnas,LenkaFlorent} and
Nick Wormald for pointing us to~\cite[\Thm~3.8]{McDiarmid}.

\end{document}